\definecolor{dblue}{rgb}{0, 0, 0.72}
\numberwithin{equation}{section}
\newtheorem{lemma}{Lemma}[section]
\newtheorem{theorem}[lemma]{Theorem}
\newtheorem{proposition}{Proposition}[section]
\newtheorem{rem}[lemma]{Remark}
\newtheorem{remark}[lemma]{Remark}
\newtheorem{definition}[lemma]{Definition}
\newtheorem{corollary}[lemma]{Corollary}
\newenvironment{claim}[1]{\par\noindent\underline{Claim:}\space#1}{}
\newenvironment{claimproof}[1]{\par\noindent\underline{Proof:}\space#1}{\hfill $\blacksquare$}
\newcommand{\re}{\begin{rem}\rm}
	\newcommand{\mar}{\end{rem}}
\newcommand{\ee }{\mathrm{I}\!\!1}
\renewcommand{\for}{\begin{eqnarray*}}
	\newcommand{\mel}{\end{eqnarray*}}
\newcommand{\lel}{\pl = \pl}
\DeclareMathOperator{\CLSI}{CLSI}
\DeclareMathOperator{\MLSI}{MLSI}
\newcommand{\pl}{\hspace{.1cm}}
\newcommand{\qd}{\end{proof}\vspace{0.5ex}}
\newcommand{\pf}{\begin{proof}}
\newcommand{\be}{\left|{\atop}}
\newcommand{\xspace}{\hbox{\kern-2.5pt}}
\newcommand{\xyspace}{\hbox{\kern-1.1pt}}
\newcommand\bra[1]{\langle  #1|}
\newcommand\ket[1]{| #1\rangle}
\newcommand{\ssubset} {\!\!\subset\! \!}
\definecolor{LightGray}{rgb}{0.94,0.94,0.94}
\definecolor{VeryLightBlue}{rgb}{0.9,0.9,1}
\definecolor{LightBlue}{rgb}{0.8,0.8,1}
\definecolor{DarkBlue}{rgb}{0,0,0.6}
\definecolor{LightGreen}{rgb}{0.88,1,0.88}
\definecolor{MidGreen}{rgb}{0.6,1,0.6}
\definecolor{DarkGreen}{rgb}{0,0.6,0}
\definecolor{DarkGrreen}{rgb}{0,0.8,0}
\definecolor{VeryLightYellow}{rgb}{1,1,0.9}
\definecolor{LightYellow}{rgb}{1,1,0.6}
\definecolor{MidYellow}{rgb}{1,1,0.5}
\definecolor{DarkYellow}{rgb}{0.8,1,0.3}
\definecolor{VeryLightRed}{rgb}{1,0.9,0.9}
\definecolor{LightRed}{rgb}{1,0.8,0.8}
\definecolor{DarkRed}{rgb}{0.8,0.2,0}
\definecolor{DarkRedb}{rgb}{0.6,0.2,0}
\definecolor{DarkLila}{rgb}{0.8,0,1}
\definecolor{Beige}{rgb}{0.96,0.96,0.86}
\definecolor{Gold}{rgb}{1.,0.84,0.}
\definecolor{Goldb}{rgb}{0.7,0.3,0.5}
\definecolor{MyYellow}{rgb}{1.,0.84,0.8}
\def\mod{\,\, {\rm mod}\,\,}
\def\11{\mathbb{I}}
\DeclareRobustCommand\openone{\leavevmode\hbox{\small1\normalsize\kern-.33em1}}
\renewcommand{\be}{\begin{equation}}
	\renewcommand{\ee}{\end{equation}}
\newcommand{\bea}{\begin{eqnarray}}
	\newcommand{\eea}{\end{eqnarray}}
\newcommand{\beas}{\begin{eqnarray*}}
	\newcommand{\eeas}{\end{eqnarray*}}
\newtheorem*{theorem*}{Theorem}
\newtheorem*{remark*}{Remark}
\newtheorem*{lemma*}{Lemma}
\newtheorem*{notation*}{Notation}
\newtheorem*{cor*}{Corollary}
\newtheorem*{note*}{Note}
\newtheorem*{prop*}{Proposition}
\newtheorem*{example*}{Example}
\title{Entropy Decay Estimates For Collective Noise Models}
\begin{document}
	\author[Y. Chen]{Yidong Chen}
	\author[M. Junge]{Marius Junge} \thanks{MJ was partially supported by  NSF
		Grant DMS 1800872 and NSF RAISE-TAQS 1839177}
	
	\begin{abstract} 
		One of the challenges in quantum information science is to control open quantum systems with a large number of qubits. An important aspect of many-body systems is the emergence of \textit{collective phenomena}.
		One \textit{collective noise model} is an open atomic system in an electromagnetic environment. This model was considered by Dicke in the 50's \cite{Dicke}. In this paper, we study the entropic decay in Dicke's model and other related collective noise models. Specifically, we develop a general framework to estimate the \textit{spectral gap} and \textit{modified logarithmic Sobolev constant} of these collective noise models. In addition, we study the necessary mixing conditions a general Dicke's model must satisfy in order to have a unique equilibrium state. The combination of representation theory and entropy estimates could be a guideline to study more general collective noise models.
	\end{abstract}

	\maketitle
	\tableofcontents
	\section{\large Introduction}\label{section:intro}
	Understanding and controlling large scale open quantum systems is a challenging problem in quantum computation. To build robust and scalable quantum devices, it is necessary to understand how many-body systems interact with their environment. In the weak-coupling limit, it is well-known that the dissipative dynamics of an open quantum system can be described by a Lindbladian \cite{GKS}\cite{L}\cite{D1}\cite{D2}\cite{D3}\cite{D83}. A complex many-body system typically interacts with the environment in a collective way: for example, an atomic array in an electromagnetic field. Such collective interactions often lead to emergent phenomena such as macroscopic quantum coherence. These emergent phenomena do not exist if each atom interacts with the environment separately. One example of such an emergent phenomenon is Dicke's model \cite{Dicke} of an open atomic system in an electromagnetic environment. Although Dicke's model is well studied in quantum optics \cite{DKW}\cite{KSW}\cite{GW}\cite{MAG}\cite{SMAG}, the spectral gap and decay properties are not well enough understood as of now.
	
	In this paper, we study the entropic decay in Dicke's original model and other related collective noise models. Our noise models differ from the usual Davies generator models. In a typical Davies generator noise model, the Hamiltonian of the many-body system includes complicated interactions while the Lindbladian is given as a sum of local Lindbladians \cite{entropyDavies}\cite{BCLPR}\cite{ACCRDSF}\cite{KB}. As such, these Davies generator models can not capture the collective interaction between the many-body system and the environment. Our Lindbladians do not admit decompositions into sums of local Lindbladians. Hence they are more suitable to study collective phenomena. On the other hand, the Hamiltonian in our model is non-interacting and is simply given by a sum of strictly local terms. The comparison between the collective noise models and models using Davies generators are summarized in the following table:
	\begin{center}
		\begin{tabular}{||m{2.2cm}||m{6cm}|m{7cm}||}
			\hline Features & Collective Noise Models & Davies Generators \\
			\hline\hline
			Hamiltonian & Non-interacting. \newline Can be written as a sum of single-body terms. & Interacting. \newline Can be written as a sum of local interactions.\\ \hline
			Lindbladian & Collective. \newline Necessarily include non-local interactions. & Separable. \newline Can be written as a sum of single-body Lindbladian. \\
			\hline
		\end{tabular}
	\end{center}

	The collective noise models are complementary to the Davies generators. Since collective interactions arise naturally in many-body systems, our work fills in a gap in the current understanding of the dissipative dynamics of open quantum systems. In the remainder of the introduction, we state the main results and provide the necessary technical background to study quantum entropy decay. The paper is organized as follows. Section \ref{section:single} and Section \ref{section:abstract} develop a general framework to study entropic decay in collective noise models. This general framework can be applied to noise models other than the original Dicke's model.  Section \ref{section:su(2)} and Section \ref{section:gap} apply the general framework to collective noise models of Dicke's type. Many of the noise models we study in this paper are naturally non-primitive (see the next part of the introduction for a definition of primitivity). In Section \ref{section:related}, we study collective noise models driven by more than one Lindbladian and show that in general these noise models are primitive. We also study the entropic decay of these more complicated models. In Section \ref{section:Davies}, we present a more in-depth comparison between (non-primitive) collective noise models and Davies generator models. Finally we conclude in Section \ref{section:discussion}.
	\subsection{Main Results} The original Dicke's model describes the collective interaction between a many-body system of a large number of two-level atoms and an electromagnetic environment. It is well-known that the time-evolution in Dicke's model can be described by the following Markovian master equation \cite{GW}\cite{KSW}\cite{SCG}:
	\begin{equation}\label{equation:atomic}
		\frac{d}{dt}\rho = -\frac{i}{\hbar}[\mathcal{H},\rho] + \sum_{i,j = 1}^N \frac{\Gamma_{ij}}{2}(2\sigma_{ge}^i \rho \sigma^j_{eg} - \rho\sigma^i_{eg}\sigma^j_{ge} - \sigma^j_{eg}\sigma^i_{ge}\rho)
	\end{equation}
	where $\mathcal{H}$ is the Hamiltonian of the atomic system, $\sigma^i_{ge} = \ket{g_i}\bra{e_i}$ acts on the $i$-th atom where $\ket{e_i}$ and $\bra{g_i}$ are its excited and ground states. The matrix $(\Gamma_{ij})$ can be diagonalized and its eigenvalues describe the decay rates of the atomic system. The non-Hamiltonian part on the right hand side of the equation \ref{equation:atomic} describes the noise obtained from the system's interaction with the environment. By diagonalizing the matrix $(\Gamma_{ij})$ and assuming a two-way heat exchange between the system and the bath, the corresponding Lindbladian  for each eigenvalue $\Gamma_\nu$ has the following form:
	\begin{equation}\label{equation:dicke}
		\mathcal{L}_{\mathcal{O}_\nu}\rho:=\frac{\Gamma_\nu}{2}(2\mathcal{O}_{\nu}^*\rho \mathcal{O}_\nu - \rho \mathcal{O}_\nu^*\mathcal{O}_\nu - \mathcal{O}_\nu^*\mathcal{O}_\nu\rho) + \frac{\Gamma_\nu^{-1}}{2}(2\mathcal{O}_{\nu}\rho \mathcal{O}_\nu^* - \rho \mathcal{O}_\nu\mathcal{O}_\nu^* - \mathcal{O}_\nu\mathcal{O}_\nu^*\rho) 
	\end{equation}
	where $\mathcal{O}_\nu$'s act on the $N$-qubits and describes the physical process of collective emission \cite{MAG}\cite{SMAG}. When the system is put on a one-dimensional lattice, the collective emission operator can be written as:
	\begin{equation}\label{equation:rotationparam}
		\mathcal{O}_\nu = \sum_{1\leq j \leq N}e^{i\theta_j}\pi_{(j)}(a)
	\end{equation} 
	where $\theta_j\in [0,2\pi]$ are rotation angles, $\pi_{(j)}(a) := 1\otimes...\otimes a \otimes...\otimes 1$ and $a = \begin{pmatrix}
		0 & 1\\ 0 & 0
	\end{pmatrix}\in\mathfrak{su}(2)$. A single operator $\pi_{(j)}(a)$ describes the photon emission of one single atom. After a suitable rotation (c.f. Equation \ref{equation:rot}), the Lindbladian $\mathcal{L}_{\mathcal{O}_\nu}$ (c.f. Equation \ref{equation:dicke}) can be written in the canonical form of a Lindbladian \cite{CM1}\cite{CM2}:
	\begin{equation}
		\mathcal{L}^\beta_N=e^{\beta / 2}L_{\pi_N(a)} + e^{-\beta /2}L_{\pi_N(a^*)}
	\end{equation}
	where $L_{\pi_N(a)}x = 2\pi_N(a^*)x\pi_N(a) - \pi_N(a^*)\pi_N(a)x - x\pi_N(a^*)\pi_N(a)$ and $
		\pi_N(a) = \sum_{1\leq j \leq N}\pi_{(j)}(a) $. 
	The parameter $\beta > 0$ relates to the decay rate $\Gamma_\nu$ by: $\beta = 2\log\frac{\Gamma_\nu}{2}$. Typically $\beta$ is interpreted as the inverse temperature. 
	
	We will take $\pi_N(h):=\pi_N\bigg(\begin{pmatrix} 1 & 0 \\ 0 & -1\end{pmatrix}\bigg)$ as the system's Hamiltonian. This Hamiltonian is non-interacting and is a sum of single-body Hamiltonians. The corresponding Gibbs state is given by \begin{equation}
		d_N := N_\beta \exp(-\frac{\beta}{2}\pi_N(h))
	\end{equation}
	where $N_\beta$ is the normalization constant. The state $d_N$ is an invariant (i.e. equilibrium) state of the open system, i.e. $\mathcal{L}^{\beta ^*}_N(d_N) = 0$. The state can also be completely factorized in terms of a tensor product:
	\begin{equation*}
		d_N := d_\beta^{\otimes N} = \begin{pmatrix}
			\frac{e^{-\beta / 2}}{e^{\beta / 2} + e^{-\beta / 2}} & 0 \\ 0 & \frac{e^{\beta / 2}}{e^{\beta / 2} + e^{-\beta /2}}
		\end{pmatrix}^{\otimes N}
	\end{equation*}
	
	For every inverse temperature $\beta$, the Lindbladian $\mathcal{L}^\beta_N$ satisfies the $d_N$-detailed balance condition. This is a direct consequence of the underlying $\mathfrak{su}(2)$ representation theory, which we will discuss later. We refer to the seminal papers \cite{CM1}\cite{CM2} for the general theory of $\sigma$-detailed balanced Lindbladians and the crucial inner product (KMS-inner product with respect to the reference state $\sigma$):
	\begin{equation*}
		\langle x,y\rangle_\sigma := tr(\sigma^{1/2}x^*\sigma^{1/2}y)
	\end{equation*}
	For the fixed reference state $\sigma$, a Lindbladian $\mathcal{L}$ is KMS-symmetric if 
	\begin{equation*}
		\langle \mathcal{L}x, y \rangle_\sigma = \langle x, \mathcal{L}u\rangle_\sigma
	\end{equation*}
	A Lindbladian $\mathcal{L}$ admits a spectral gap $\lambda_2(\mathcal{L}) \geq c$ if there exists a constant $c > 0$ such that
	\begin{equation*}
		c\langle x - E_{fix}x, x-E_{fix}x\rangle_\sigma \leq \langle \mathcal{L}x, x\rangle_\sigma
	\end{equation*}
	where $E_{fix}$ projects onto the kernel of $\mathcal{L}$. For a KMS-symmetric $\mathcal{L}$, its kernel is a closed $*$-algebra, and $E_{fix}$ becomes the so-called conditional expectation. Beyond the spectral gap, the decay to equilibrium is measured with the help of the relative entropy:
	\begin{equation*}
		D(\rho | \sigma) := tr\big(\rho(\log\rho - \log\sigma)\big)
	\end{equation*}
	The constant $\MLSI(\mathcal{L})$ is the largest constant $c > 0$ such that for all $\rho$
	\begin{equation*}
		cD(\rho|E^*_{fix}\rho) \leq tr\big(\mathcal{L}^*(\rho)(\log\rho - \log E^*_{fix}\rho)\big)
	\end{equation*}
	The right hand side is called the entropy production \cite{spohn}. Here the (trace-)adjoint $E^*_{fix}$ and $\mathcal{L}^*$ acts on the densities. Equivalently, we have exponential decay
	\begin{equation*}
		D(e^{-t\mathcal{L}}\rho | E^*_{fix}\rho)\leq e^{-\MLSI(\mathcal{L})t}D(\rho | E^*_{fix}\rho)
	\end{equation*}
	The complete version $\CLSI(\mathcal{L}) := \inf_{d\geq 0}\MLSI(\mathcal{L}\otimes id_d)$ was introduced in \cite{Fisher} in order to ensure tensor stability. Recently, a wealth of results \cite{Fisher, MBLGMJ, LGCR, GJL, BCLPR, SU(2)} provide lower bounds for the CLSI constant. For $\beta = 0$ (the infinite temperature limit), tools from harmonic analysis, Lie groups and foliation can be used to obtain a uniform spectral gap and CLSI constant. \cite{LGCR, MBLGMJ}
	\begin{theorem}
		The family of Lindbladians $\mathcal{L}^0_N$ admits a uniform lower bound on the spectral gaps and the CLSI constants.
	\end{theorem}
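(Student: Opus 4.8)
The plan is to exploit the $\mathfrak{su}(2)$ symmetry of the infinite-temperature Lindbladian $\mathcal{L}^0_N = L_{\pi_N(a)} + L_{\pi_N(a^*)}$ and compare it with a model whose spectral gap and CLSI constant are already understood. First I would observe that at $\beta = 0$ the relevant generators $\pi_N(a)$, $\pi_N(a^*)$, and $\pi_N(h)$ assemble into a representation $\pi_N$ of $\mathfrak{su}(2)$ on $(\mathbb{C}^2)^{\otimes N}$ (the $N$-fold tensor of the defining representation), so $\mathcal{L}^0_N$ is the image under $\pi_N$ of a fixed, $N$-independent element of the universal enveloping algebra — essentially a multiple of (Casimir minus a diagonal piece). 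This is precisely the Lindbladian built from a symmetric set of Lie-algebra generators studied in \cite{LGCR, MBLGMJ, SU(2)}, and the whole point of those works is that when the Lindblad operators generate a Lie algebra and the reference state is the (tracial) maximally mixed state, harmonic analysis on the compact group $SU(2)$ yields bounds independent of the dimension of the representation.

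The key steps, in order: (1) identify $\mathcal{L}^0_N$ with the ``group transference'' Lindbladian associated to $SU(2)$ with the defining representation iterated $N$ times, checking that the fixed-point algebra is the commutant of $\pi_N(\mathfrak{su}(2))$, i.e. the image of the symmetric-group action — so $E_{fix}$ is the conditional expectation onto the Schur–Weyl commutant; (2) invoke the transference principle of \cite{LGCR} (or the Lie-algebraic CLSI bounds of \cite{MBLGMJ, SU(2)}) to reduce the estimate on $\mathcal{L}^0_N$ to a statement about the Markov semigroup generated by the sub-Laplacian on $SU(2)$ built from the chosen generators of $\mathfrak{su}(2)$; (3) quote the known positive spectral gap and positive CLSI constant for that fixed sub-Laplacian on the compact group $SU(2)$, which does not see $N$ at all; (4) transfer back to conclude $\lambda_2(\mathcal{L}^0_N)\geq c$ and $\CLSI(\mathcal{L}^0_N)\geq c'$ with $c, c' > 0$ uniform in $N$, and note that the CLSI bound automatically gives the MLSI bound and, by definition, is tensor-stable.

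The main obstacle is step (1)–(2): one must be careful that the transference machinery applies in the \emph{non-primitive} setting, where the semigroup does not converge to a single invariant state but to the conditional expectation onto the (large) commutant of the representation. This requires checking that the CLSI constant of the group sub-Laplacian, once transferred, controls the entropy production relative to $E^*_{fix}$ rather than relative to the trace — which is exactly the refinement provided in \cite{LGCR, GJL} via the ``geometric'' approach to CLSI for conditional expectations. A secondary technical point is verifying that the detailed-balance/KMS-symmetry at $\beta = 0$ degenerates precisely to the tracial symmetry needed for the compact-group harmonic analysis; this is immediate since $d_N$ becomes $2^{-N}\mathbbm{1}$ when $\beta = 0$. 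Once these identifications are in place, the uniformity is inherited for free from the fixed, $N$-independent geometry of $SU(2)$, and no genuinely new estimate is needed.
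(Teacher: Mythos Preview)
Your proposal is correct and follows precisely the route the paper itself indicates: the paper does not give a self-contained proof of this theorem but simply cites \cite{LGCR, MBLGMJ} and remarks that ``tools from harmonic analysis, Lie groups and foliation'' yield the uniform bounds at $\beta=0$. Your outline---identifying $\mathcal{L}^0_N$ as the image under $\pi_N$ of a fixed $\mathfrak{su}(2)$-Lindbladian, transferring to the sub-Laplacian on the compact group $SU(2)$, and invoking the non-primitive CLSI machinery of \cite{LGCR, MBLGMJ} to handle the Schur--Weyl commutant as the fixed-point algebra---is exactly the content of those references, so there is nothing to correct or contrast.
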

	Since $\mathcal{L}^\beta_N$ depends continuously on $\beta$, one expects a similar result to hold at finite temperature (at least for high temperature). Indeed, the change of measure argument \cite{JLR} provides such a link. However, the constants obtained by the Hooley-Stroock argument in \cite{JLR} are not uniform in dimension. Despite an extensive effort by the second named author, the uniformity in $N$ simply could not be achieved. This leads to a polynomial bound:
	\begin{theorem}\label{theorem:main}
		Suppose $\beta > 0$. There exists constants $C_1(\beta), C_2(\beta) > 0$ such that:\begin{equation}
			\frac{C_1(\beta)}{N^2} \leq \CLSI(\mathcal{L}^\beta_N) \leq 2\lambda_2(\mathcal{L}^\beta_N)\leq \frac{C_2(\beta)}{N}\pl.
		\end{equation}
	\end{theorem}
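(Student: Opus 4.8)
The plan is to establish the two nontrivial inequalities separately: the lower bound $\CLSI(\mathcal{L}^\beta_N)\geq C_1(\beta)/N^2$ and the upper bound $\lambda_2(\mathcal{L}^\beta_N)\leq C_2(\beta)/(2N)$, while the middle inequality $\CLSI\leq 2\lambda_2$ is the standard comparison between the complete modified log-Sobolev constant and the spectral gap (it follows by linearizing the entropy near the fixed point, using that CLSI implies MLSI which implies a Poincaré-type inequality with the same constant up to a factor of two; this is well known and tensorizes trivially). The real content is the outer bounds, and these rely on the $\mathfrak{su}(2)$-representation-theoretic structure already set up: the $N$-fold Hilbert space $(\mathbb{C}^2)^{\otimes N}$ decomposes under the collective action $\pi_N$ into irreducible spin-$j$ blocks, and $\mathcal{L}^\beta_N$ acts block-diagonally. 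On each irreducible block the Lindbladian $\mathcal{L}^\beta_N = e^{\beta/2}L_{\pi_N(a)} + e^{-\beta/2}L_{\pi_N(a^*)}$ becomes an explicit finite birth-death-type generator on the weight basis $\ket{j,m}$, $-j\leq m\leq j$, with transition rates determined by the raising/lowering operators $J_\pm$ and the matrix elements $\sqrt{(j\mp m)(j\pm m+1)}$.

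For the \textbf{upper bound on $\lambda_2$}, the idea is to exhibit a good test observable. The largest block is $j=N/2$, spanned by the symmetric (Dicke) states, and on this block $\mathcal{L}^\beta_N$ restricts to a one-dimensional birth-death chain of length $N+1$ with a Gibbs-type stationary measure at inverse temperature $\beta$. One would take $x$ to be (the spectral projection onto, or a linear function of) the collective observable $\pi_N(h) = \sum_j \pi_{(j)}(h)$, or more precisely its fluctuation $x = \pi_N(h) - \langle \pi_N(h)\rangle_{d_N}$, and estimate the Rayleigh quotient $\langle \mathcal{L}^\beta_N x, x\rangle_{d_N} / \langle x - E_{fix}x, x-E_{fix}x\rangle_{d_N}$. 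Because the Dirichlet form of $\mathcal{L}_{\pi_N(a)}$ on such a collective linear observable only sees nearest-neighbor jumps with $O(N)$ total rate, while the variance of $\pi_N(h)$ under the product Gibbs state $d_\beta^{\otimes N}$ is of order $N$ (it is a sum of $N$ i.i.d. bounded terms), the ratio comes out $O(1/N)$. Care is needed because $\pi_N(h)$ has nontrivial components in all spin sectors, so one should either argue sector by sector, choosing in each the analogous fluctuation observable and checking the worst sector is $j\sim N/2$, or simply restrict attention to the symmetric sector where the estimate is cleanest. The $\beta$-dependence of the constant enters through the stationary weights $e^{-\beta m}$ and is harmless for fixed $\beta$.

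For the \textbf{lower bound $\CLSI \geq C_1(\beta)/N^2$}, the plan is to invoke the change-of-measure / perturbation argument: at $\beta=0$ the family $\mathcal{L}^0_N$ has a \emph{uniform} CLSI constant $c_0>0$ by the theorem quoted just above (the Lie-group / foliation methods of \cite{LGCR,MBLGMJ}), and $d_N = d_\beta^{\otimes N}$ is a bounded perturbation of the maximally mixed state with Radon–Nikodym derivative $\|d_\beta^{-1}\|\,\|2^{-1}\mathbf{1}\| $ controlled exponentially in $N$. The Holley–Stroock-type bound from \cite{JLR} then transfers CLSI from $\beta=0$ to general $\beta$ at the cost of a factor $\exp(O(\beta N))$ — which is \emph{not} what is claimed. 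So the finer step is needed: one compares $\mathcal{L}^\beta_N$ directly with $\mathcal{L}^0_N$ using that both are KMS-symmetric with respect to measures that, \emph{block by block}, differ only by the one-dimensional reweighting $e^{-\beta m}$ on a chain of length $\leq N+1$; on a one-dimensional chain the Holley–Stroock constant degrades only \emph{polynomially} (like $N^2$) rather than exponentially, because the oscillation of $\log(d_N/\text{unif})$ restricted to a single spin sector is $O(\beta N)$ but the relevant comparison is of Dirichlet forms of \emph{nearest-neighbor} chains, where the transfer constant is governed by $\max_m (\text{weight}_m/\text{weight}_{m+1})$ times the diameter, giving the $N^2$. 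Thus the strategy is: (i) reduce CLSI of $\mathcal{L}^\beta_N$ to MLSI on each irreducible spin-$j$ block (using that CLSI is stable under the block decomposition and matrix amplification), (ii) on each block compare the $\beta$-weighted Dirichlet form and entropy with the $\beta=0$ ones via a one-dimensional Holley–Stroock estimate losing a factor $O(N^2/c(\beta))$, and (iii) combine with the uniform $\beta=0$ bound. The \textbf{main obstacle} is step (ii): making the one-dimensional comparison genuinely lose only $N^2$ rather than $e^{O(N)}$, i.e.\ proving a modified-log-Sobolev Holley–Stroock estimate for birth–death chains that is polynomial in the diameter when the potential is a bounded linear function of position — this requires exploiting the specific linear structure of $\log d_\beta$ and the nearest-neighbor structure of $L_{\pi_N(a)}$, not a generic perturbation bound. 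Everything else (the block decomposition, the spectral-gap test function, the $\CLSI\leq 2\lambda_2$ comparison) is routine given the representation theory developed in Sections \ref{section:single}--\ref{section:su(2)}.
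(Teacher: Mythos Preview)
Your plan misses the mark on both bounds, and in each case the paper takes a genuinely different route.

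\textbf{Upper bound on $\lambda_2$.} Your proposed test observable $x=\pi_N(h)-\langle\pi_N(h)\rangle_{d_N}$ does not give $O(1/N)$. Since $[\pi_N(a),\pi_N(h)]=2\pi_N(a)$, the Dirichlet form is $\sim\langle\pi_N(a),\pi_N(a)\rangle_{d_N}=N\cdot c_1(\beta)$ (the cross terms vanish under the product state), while the variance of $\pi_N(h)$ under $d_\beta^{\otimes N}$ is $N\cdot c_2(\beta)$; the Rayleigh quotient is $O(1)$, not $O(1/N)$. More conceptually: $\pi_N(h)$ lives in the diagonal subalgebra $\Omega$, and the paper shows (Proposition~\ref{proposition:graph}) that $\mathcal{L}^\beta_N$ restricted to $\Omega$ has a \emph{dimension-free} CLSI constant, hence dimension-free gap. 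No diagonal test function can witness the $O(1/N)$ gap. The small eigenvalues come from the \emph{off-diagonal} part between \emph{different} irreducible blocks: the paper builds $\xi$ supported on $\ket{N,j}\bra{N-2,j}$ with coefficients $e^{-(j-1)\beta/2}$, and the mechanism is that $(\alpha_{N,j}-\alpha_{N-2,j})^2=O(j^2/N)$ for the small $j$ where the $\beta$-weight concentrates (Lemma~\ref{lemma:DirichletCalc}, Proposition~\ref{proposition:gapvector}).

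\textbf{Lower bound on $\CLSI$.} The paper explicitly records that the Holley--Stroock transfer from $\beta=0$ was attempted and could \emph{not} be made polynomial in $N$ (see the paragraph before Theorem~\ref{theorem:main} in the introduction). Your assertion that on a one-dimensional chain the perturbation cost is only $N^2$ because the potential is linear is exactly the step the authors could not close; it is not a standard Holley--Stroock statement and you have not supplied an argument. The paper's actual route is entirely different: it uses a $2\times2$ matrix trick (Proposition~\ref{prop:cancel modular}) to trade the non-self-adjoint generator $\pi_N(a)$ for a self-adjoint $A=e_{12}\otimes a$ commuting with $d_\beta\otimes d_N$, then invokes the Schur-multiplier CLSI of Section~\ref{section:single} (Theorem~\ref{theorem:s.a.CLSI}) with constant $\delta^{-2}$, where $\delta=\min_{i\neq j}|\lambda_i-\lambda_j|$ is the minimal spectral spacing of $|\pi_N(a)|$. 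Lemma~\ref{lemma:unigap} computes $\delta\sim 1/N$, giving the $N^2$. This controls $D(\rho|E_\Omega^*\rho)$; the remaining piece $D(E_\Omega^*\rho|E_{fix}^*\rho)$ is handled dimension-free by comparing the restricted (nearest-neighbor) Lindbladian on $\ell_\infty^{n_\lambda}$ with the tensor Lindbladian $\mathcal{L}^\beta_\otimes$ and using the two-point CLSI (Lemma~\ref{lemma:irrep}), together with Lieb concavity (Lemma~\ref{lemma:concavitydouble}) to pass through $E_\Omega$. None of this is a change-of-measure argument.
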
 
	It is an open problem to calculate the precise orders of the spectral gap and the CLSI constant. The same result holds for the Lindbladian $\mathcal{L}^\beta_{\mathcal{O}_\nu}$ (c.f. Equation \ref{equation:dicke}). The proof has two parts. The first part relies on the representation theory of $\mathfrak{su}(2)$. And the second part follows essentially from estimates of modified logarithmic Sobolev constants on finite graphs. In addition, the upper bound of the spectral gaps can be achieved by tuning the off-diagonal coefficients between quantum number $N$ and $N-2$. The new dimension-dependenet bounds show that there are certain densities embedded in a system of $N$ qubits which remain stable on a time scale of $O(N)$. It would be interesting to know how these states can be prepared and used for specific quantum tasks.
	
	In addition, the same argument gives us estimates on the CLSI constants of other Lindbladians as well. In particular, we have
	\begin{theorem}\label{theorem: infinite}
		Let $a_N(\gamma) := \pi_N(a)(\pi_N(a^*)\pi_N(a))^\gamma$ where the parameter $\gamma \geq 1, \gamma\in\mathbb{Z}$. Consider the Lindbladian:
		\begin{equation}
			\mathcal{L}^\beta_{N,\gamma} := e^{\beta / 2}L_{a_N(\gamma)} + e^{-\beta /2}L_{a_N^*(\gamma)}
		\end{equation}
		where $L_{a_N(\gamma)}x := 2a_N(\gamma)^*x a_N(\gamma) -a_N(\gamma)^*a_N(\gamma)x - xa_N(\gamma)^*a_N(\gamma)$. In addition, fix again the tensor product state $d_N = d_\beta^{\otimes N}$ as the reference state. Then there exist constants $C_3(\beta, \gamma) > 0$ such that
		\begin{equation}
			2\lambda_2(\mathcal{L}^\beta_{N,\gamma})\geq\CLSI(\mathcal{L}^\beta_{N,\gamma}) \geq C_3(\beta, \gamma)\pl.
		\end{equation}
	\end{theorem}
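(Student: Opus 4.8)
The plan is to run the two–stage argument behind Theorem \ref{theorem:main}: a representation–theoretic reduction that splits $\mathcal{L}^\beta_{N,\gamma}$ into primitive Lindbladians indexed by the irreducible $\mathfrak{su}(2)$–subrepresentations, followed by estimates on the resulting ``classical'' objects which are now uniform in the irrep label. First I would decompose $(\mathbb{C}^2)^{\otimes N}=\bigoplus_{j}V_j\otimes\mathbb{C}^{m_j}$ into isotypic components, $V_j$ the spin–$j$ irrep with multiplicity $m_j$. Since $a_N(\gamma)=\pi_N(a)(\pi_N(a^*)\pi_N(a))^\gamma$ lies in the image of $U(\mathfrak{su}(2))$, it acts as $\bigoplus_j A_j^{(\gamma)}\otimes 1_{m_j}$, where in the weight basis $A_j^{(\gamma)}\ket{j,m}=\big((j+m)(j-m+1)\big)^{\gamma+\frac12}\ket{j,m-1}$; likewise the reference state is block diagonal, $d_N=\bigoplus_j c_j\,\sigma_j\otimes\tfrac1{m_j}1_{m_j}$ with $\sigma_j\propto e^{-\beta J_z}$. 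Hence $\mathcal{L}^\beta_{N,\gamma}=\bigoplus_j \mathcal{L}^\beta_{j,\gamma}\otimes\id_{m_j}$, the fixed–point conditional expectation $E_{fix}$ contains the center labelled by $j$, and by the direct–sum and tensor stability of the complete constants $\CLSI(\mathcal{L}^\beta_{N,\gamma})=\inf_j\CLSI(\mathcal{L}^\beta_{j,\gamma})$ and $\lambda_2(\mathcal{L}^\beta_{N,\gamma})=\inf_j\lambda_2(\mathcal{L}^\beta_{j,\gamma})$. So it suffices to bound $\CLSI(\mathcal{L}^\beta_{j,\gamma})$ from below uniformly in $j$ (which ranges up to $N/2$); the first inequality of the theorem is just $\CLSI\le\MLSI\le2\lambda_2$.

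Next I would analyse the primitive block $\mathcal{L}^\beta_{j,\gamma}$ on $B(V_j)$. As $A_j^{(\gamma)}$ is a weighted down–shift, $\mathcal{L}^\beta_{j,\gamma}$ and its adjoint preserve the decomposition of $B(V_j)$ into the diagonal and the ``coherence'' sectors indexed by $d\in\{-2j,\dots,2j\}$. On the diagonal it is the classical birth–death chain on $\{-j,\dots,j\}$ with stationary law $\mu_j(m)\propto e^{-\beta m}$ and edge conductances comparable to $e^{-\beta m}\big((j+m)(j-m+1)\big)^{2\gamma+1}$; on each $d\neq 0$ it is, after the detailed–balance symmetrisation, a tridiagonal operator $\widehat{G}_d=\Delta_{\epsilon^{(d)}}+W^{(d)}$ — a weighted path–Laplacian with hopping $\epsilon^{(d)}$ of the shape $w^{(\gamma)}_{\bullet}w^{(\gamma)}_{\bullet+d}$ plus a nonnegative diagonal ``potential'' $W^{(d)}$ obtained by completing squares out of the weights $w^{(\gamma)}_m=\big((j+m)(j-m+1)\big)^{\gamma+\frac12}$. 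Using the framework of Sections \ref{section:single}--\ref{section:abstract} (which for such weighted–shift Lindbladians reduces the complete modified log–Sobolev constant to the modified log–Sobolev constant of the diagonal chain together with the spectral gaps of the decoherence sectors), it is enough to bound, uniformly in $j$, the classical $\MLSI$ and $\inf_{d\neq 0}\lambda_{\min}(\widehat{G}_d)$.

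For the diagonal chain a Muckenhoupt--Hardy estimate with the above conductance profile controls the spectral gap, and combined with $\log\big(1/\min_m\mu_j(m)\big)=O(\beta j)$ it gives a classical $\MLSI$ that is uniform in $j$ (indeed growing with $j$), so the diagonal sector is never the obstruction — this already holds for every $\gamma\ge0$. The substance of the theorem is the uniform lower bound $\lambda_{\min}(\widehat{G}_d)\ge c(\beta,\gamma)>0$, and this is where $\gamma\ge1$ is used: one estimates the bottom eigenvalue of $\Delta_{\epsilon^{(d)}}+W^{(d)}$ by a quantitative ``uncertainty–principle'' argument — where $W^{(d)}$ is bounded below (near the two ends of each ladder, where the ratios $w^{(\gamma)}_{m+1}/w^{(\gamma)}_m$ are bounded away from the resonant value $e^{\pm\beta/2}$) it is used directly, and where $W^{(d)}$ could be small the hopping $\epsilon^{(d)}$ is so large that any near–ground state must spread over a long window on which $W^{(d)}$ then contributes a definite amount of energy. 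Both halves become $j$–independent precisely because $w^{(\gamma)}=(w^{(0)})^{2\gamma+1}$ with $\gamma\ge1$ enlarges the relevant hopping and potential scales by enough powers of $j$ — whereas for $\gamma=0$ the analogous bound degrades polynomially in $N$, consistent with Theorem \ref{theorem:main}. Feeding the sector bounds back through Section \ref{section:abstract} yields $\CLSI(\mathcal{L}^\beta_{j,\gamma})\ge C_3(\beta,\gamma)$ uniformly in $j$, hence $\CLSI(\mathcal{L}^\beta_{N,\gamma})\ge C_3(\beta,\gamma)$, which with $\CLSI\le\MLSI\le2\lambda_2$ is the asserted chain of inequalities.

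The main obstacle is exactly this uniform (in $j$ and $d$) lower bound on $\lambda_{\min}(\widehat{G}_d)$. Plain diagonal dominance (Gershgorin) is too lossy, so one needs the genuine interplay between the large hopping and the shape of the potential; moreover one must control the degeneracies of the shift weights at the two ends of each ladder — where $w^{(\gamma)}$ vanishes and where $w^{(\gamma)}_{m+1}/w^{(\gamma)}_m$ can meet the resonant value — uniformly over all coherence indices $d$, not merely $d=\pm1$. Checking that the reduction of Section \ref{section:abstract} really requires only the sectorwise spectral gaps, and nothing finer, is the other point I would want to pin down.
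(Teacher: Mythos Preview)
Your decomposition into irreducible blocks is correct and matches the paper, but your description of what the framework of Sections~\ref{section:single}--\ref{section:abstract} actually does is inaccurate, and this sends you down a much harder path than necessary. That framework does \emph{not} reduce CLSI to ``the MLSI of the diagonal chain together with the spectral gaps of the coherence sectors $\widehat G_d$.'' Instead, Proposition~\ref{proposition:uniformgap} and Corollary~\ref{corollary:intermediate} reduce the first stage---the bound $D(\rho\mid E_\Omega^*\rho)\le \tfrac{C(\beta)}{\delta^2}\,I_{\mathcal L}(\rho)$---to a single scalar: $\delta=\min\{|\lambda\pm\mu|:\lambda\neq\mu\in\sigma(|a_N(\gamma)|)\}$. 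Since $\pi_N(a^*)\pi_N(a)$ is diagonal in the weight basis, one has $|a_N(\gamma)|=|\pi_N(a)|^{2\gamma+1}$, so its eigenvalues are $\alpha_{n,j}^{2\gamma+1}$. Lemma~\ref{lemma:unigap} applied with exponent $2\gamma+1\ge 3$ then gives $\delta\gtrapprox 1$ uniformly in $N$---and \emph{this} elementary mean-value computation is the entire content of the restriction $\gamma\ge 1$. The second stage (from $\Omega$ to $\Omega_{fix}$, Proposition~\ref{proposition:graph}) was already uniform in $N$ for $\gamma=0$ and remains so here, since the nearest-neighbor weights $\alpha_{n,j}^{2(2\gamma+1)}\ge\alpha_{n,j}^2$ only dominate the tensor-product Lindbladian more strongly. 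This is precisely the paper's proof: see Theorem~\ref{theorem:evaporate}, which in three lines checks that $\Omega$, the commuting-square condition, and $\Omega_{fix}$ are unchanged from the $\gamma=0$ case, and then invokes Lemma~\ref{lemma:unigap}.

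Your proposed direct sectorwise analysis of $\lambda_{\min}(\widehat G_d)$ is not wrong in spirit---the coherence sectors of a weighted-shift Lindbladian are indeed tridiagonal---but it is substantially harder, and the uncertainty-principle heuristic you sketch would require real work to make uniform in both $j$ and $d$ (you yourself flag this as the obstacle). The paper sidesteps it entirely: the $2\times 2$ matrix trick of Proposition~\ref{prop:cancel modular} and the Schur-multiplier CLSI of Theorem~\ref{theorem:s.a.CLSI} collapse all coherence sectors into the single self-adjoint operator $X\otimes|a_N(\gamma)|$, whose spectral gap is read off from $\sigma(|a_N(\gamma)|)$ alone. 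So the ``point you would want to pin down''---whether Section~\ref{section:abstract} really requires only the sectorwise spectral gaps---has a negative answer: it requires something strictly easier, namely the spectral gap of $|a_N(\gamma)|$ itself.
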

	The dimension-free bound is important when we take $N$ to be infinity. In this limit, our result gives an estimate of entropy decay in a type $III_\lambda$ ($\lambda = e^{-\beta}$) factor. The Lindbladian we considered here is generated by two infinitesimal generators. 
	
	In general, an atomic system may have more than one decay mode. Dissipation in such a system can be described by a finite sum of Lindbladians: $\sum_\nu \mathcal{L}^\beta_{\mathcal{O}_\nu}$. When the system is on a one-dimensional lattice, we show that under generic conditions, the system is primitive:
	\begin{theorem}
		Given a  Lindbladian $\mathcal{L}^\beta:=\sum_\nu \mathcal{L}^\beta_{\mathcal{O}_\nu}$. If there exists a pair of summands $\mathcal{L}^\beta_{\mathcal{O}_\nu}$ and $\mathcal{L}^\beta_{\mathcal{O}_\mu}$ whose rotation parameters (c.f. Equation \ref{equation:rotationparam}) $\{\theta_j\}_{1\leq j \leq N}$ and $\{\varphi_j\}_{1\leq j \leq N}$ satisfy the condition:
		\begin{equation*}
			(\theta_i - \theta_j) - (\varphi_i - \varphi_j) \neq 0 \mod 2\pi
		\end{equation*}
		for $i\neq j$, then the Lindbladian $\mathcal{L}^\beta$ is primitive.
	\end{theorem}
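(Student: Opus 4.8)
The plan is to reduce primitivity of $\mathcal{L}^\beta$ — triviality of the fixed-point algebra $\ker\mathcal{L}^\beta$ — to the assertion that the jump operators $\mathcal{O}_\nu,\mathcal{O}_\mu$ of the two distinguished summands have trivial joint commutant, and then to establish that joint commutant is $\mathbb{C}\openone$ by an explicit computation exploiting the genericity of the rotation parameters.

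First I would check that each rotated Lindbladian keeps the structure we need: since $[h,a]=2a$ and the scalars $e^{i\theta_j}$ commute past the single-site embeddings, one has $[\pi_N(h),\mathcal{O}_\nu]=2\mathcal{O}_\nu$ for every $\nu$, so the $\mathfrak{su}(2)$ detailed-balance argument used for $\mathcal{L}^\beta_N$ applies verbatim to each $\mathcal{L}^\beta_{\mathcal{O}_\nu}$. Hence $\mathcal{L}^\beta=\sum_\nu\mathcal{L}^\beta_{\mathcal{O}_\nu}$ is KMS-symmetric with respect to the faithful state $d_N$, and (as recalled in the introduction) $\ker\mathcal{L}^\beta$ is a $*$-subalgebra of $M_2^{\otimes N}$ containing $\openone$. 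Next I would invoke the carr\'e du champ identity: with $L_Xx=2X^*xX-X^*Xx-xX^*X$ one has the purely algebraic relation
\begin{equation*}
L_X(x^*x)-x^*L_X(x)-L_X(x^*)x=2\,[X,x]^*[X,x]\ge 0,
\end{equation*}
whose left side is additive in $L_X$. Since $\ker\mathcal{L}^\beta$ is a $*$-algebra, $x\in\ker\mathcal{L}^\beta$ forces $x^*,x^*x\in\ker\mathcal{L}^\beta$, whence
\begin{equation*}
0=\sum_\nu\Big(e^{\beta/2}[\mathcal{O}_\nu,x]^*[\mathcal{O}_\nu,x]+e^{-\beta/2}[\mathcal{O}_\nu^*,x]^*[\mathcal{O}_\nu^*,x]\Big);
\end{equation*}
each summand is positive semidefinite, so every commutator vanishes, and in particular $x$ commutes with $\mathcal{O}_\nu,\mathcal{O}_\nu^*,\mathcal{O}_\mu,\mathcal{O}_\mu^*$.

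It then remains to show $\{\mathcal{O}_\nu,\mathcal{O}_\nu^*,\mathcal{O}_\mu,\mathcal{O}_\mu^*\}'=\mathbb{C}\openone$, equivalently that the $*$-algebra $\mathcal{B}$ generated by these four operators is all of $M_2^{\otimes N}$. Put $\psi_j:=\theta_j-\varphi_j$; the hypothesis $(\theta_i-\theta_j)-(\varphi_i-\varphi_j)\neq 0\pmod{2\pi}$ for $i\neq j$ says exactly that the numbers $z_j:=e^{i\psi_j}$ are pairwise distinct. Evaluating commutators slot-by-slot (only equal tensor factors contribute, through $[a,a^*]=h$ and $[h,a]=2a$) gives
\begin{equation*}
[\mathcal{O}_\nu,\mathcal{O}_\mu^*]=\sum_{j=1}^N e^{i\psi_j}\pi_{(j)}(h)=:D\in\mathcal{B},\qquad [D,\mathcal{O}_\nu]=2\sum_{j=1}^N z_j\,e^{i\theta_j}\pi_{(j)}(a),
\end{equation*}
so iterating $\operatorname{ad}_D$ yields $\sum_j z_j^{\,m}u_j\in\mathcal{B}$ for $m=0,\dots,N-1$, where $u_j:=e^{i\theta_j}\pi_{(j)}(a)$. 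The coefficient matrix $(z_j^{\,m})_{m,j}$ is Vandermonde with distinct nodes, hence invertible, so $u_j\in\mathcal{B}$ for every $j$; then $u_j^*$, $u_ju_j^*=\pi_{(j)}(aa^*)$, $u_j^*u_j=\pi_{(j)}(a^*a)$ and $[u_j,u_j^*]=\pi_{(j)}(h)$ all lie in $\mathcal{B}$ and span $\pi_{(j)}(M_2)$. Since the single-site subalgebras $\pi_{(1)}(M_2),\dots,\pi_{(N)}(M_2)$ generate $M_2^{\otimes N}$, we get $\mathcal{B}=B(\mathcal{H})$, so its commutant is $\mathbb{C}\openone$. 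Combining with the previous step, $\ker\mathcal{L}^\beta\subseteq\mathbb{C}\openone$, and since $\openone\in\ker\mathcal{L}^\beta$ this forces $\ker\mathcal{L}^\beta=\mathbb{C}\openone$, i.e.\ $\mathcal{L}^\beta$ is primitive.

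The main obstacle — indeed the only nonroutine point — is the Vandermonde step: it is what converts the handful of commutation relations available among the \emph{collective} operators $\mathcal{O}_\nu,\mathcal{O}_\mu$ into control over each individual qubit, and it is precisely here that the pairwise distinctness of the phase differences $e^{i(\theta_j-\varphi_j)}$ is used. Everything else is bookkeeping: checking that $\mathcal{L}^\beta_{\mathcal{O}_\nu}$ retains its $d_N$-detailed-balance/KMS-symmetry after the rotation (immediate from $[\pi_N(h),\mathcal{O}_\nu]=2\mathcal{O}_\nu$), the algebraic carr\'e du champ identity, its additivity over the summands, and the standard fact that a subalgebra with trivial commutant is the full matrix algebra.
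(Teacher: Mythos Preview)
Your proof is correct and follows the same overall strategy as the paper: identify the fixed-point algebra with the commutant of the jump operators, then use iterated adjoints and a Vandermonde argument to show that commutant is $\mathbb{C}\openone$.

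The implementations differ in one useful way. The paper first conjugates by $U_\varphi$ to reduce to $\mathcal{L}^\beta_N+\mathcal{L}^\beta_{\mathcal{O}_{\theta-\varphi}}$, then forms the \emph{sum} $\pi_z(a)=\pi_N(a)+\mathcal{O}_{\theta-\varphi}=\sum_j(1+e^{i\psi_j})\pi_{(j)}(a)$ and iterates $\mathrm{ad}_{\pi_z(h)}$ with $\pi_z(h)=[\pi_z(a),\pi_z(a)^*]$; the resulting Vandermonde nodes are $|1+e^{i\psi_j}|^2=2(1+\cos\psi_j)$. You instead work directly with the pair $\mathcal{O}_\nu,\mathcal{O}_\mu$, take $D=[\mathcal{O}_\nu,\mathcal{O}_\mu^*]=\sum_j e^{i\psi_j}\pi_{(j)}(h)$, and iterate $\mathrm{ad}_D$; your Vandermonde nodes are the phases $e^{i\psi_j}$ themselves. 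Your choice is cleaner for the stated theorem: distinctness of $e^{i\psi_j}$ is \emph{exactly} the hypothesis $\psi_i\neq\psi_j\pmod{2\pi}$, whereas distinctness of $2(1+\cos\psi_j)$ requires the additional exclusion $\psi_i+\psi_j\not\equiv 0\pmod{2\pi}$. So your route avoids the conjugation step and matches the hypothesis on the nose; the paper's route has the (minor) advantage of reusing the already-analyzed fixed-point algebra $\Omega_{fix}$ of $\mathcal{L}^\beta_N$.
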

	Since the condition in the theorem is satisfied by generic rotation parameters, the theorem shows that generically the generalized Dicke-type system has a unique equilibrium state. This result is proved in Section \ref{section:related}.
	
	Lastly, using quantum expander and quantum Fourier transform, we are able to prove CLSI inequality for a generalized Dicke's model. 
	\begin{theorem}
		There exists a finite number $m(\beta) = O(1)$ such that for all $N$ there exists a finite set of unitaries $\{U_j\}_{1\leq j \leq m(\beta)}$ such that:
		\begin{enumerate}
			\item Each unitary $U_j$ commutes with the density $d_N$;
			\item $D(\rho | E^*_\mathbb{C}\rho) \leq C(N)\sum_{1\leq j \leq m}I_{\mathcal{L}^\beta_{U_j}}(\rho)$ where $\mathcal{L}^\beta_{U_j} := e^{-\beta / 2}L_{U^*_j\pi_N(a)U_j} + e^{\beta / 2}L_{U^*_j\pi_N(a)U_j}$. 
		\end{enumerate}
	\end{theorem}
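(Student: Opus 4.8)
The plan is to reduce the asserted entropy inequality to a spectral-gap estimate for a single combined \emph{primitive} Lindbladian $\tilde{\mathcal{L}}:=\sum_j\mathcal{L}^\beta_{U_j}$, and then to upgrade that gap to the entropy-production bound by a quantum Diaconis--Saloff-Coste comparison, the unavoidable loss being absorbed into the dimension-dependent constant $C(N)$. Concretely, take each $U_j$ diagonal in the computational basis; then it commutes with $d_N=d_\beta^{\otimes N}$, so item (1) holds, and conjugation by $U_j$ is a $d_N$-preserving $*$-automorphism intertwining $\mathcal{L}^\beta_{U_j}$ with the original Dicke Lindbladian $\mathcal{L}^\beta_N$. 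Consequently $I_{\mathcal{L}^\beta_{U_j}}(\rho)=I_{\mathcal{L}^\beta_N}(U_j\rho U_j^*)$, $\lambda_2(\mathcal{L}^\beta_{U_j})=\lambda_2(\mathcal{L}^\beta_N)\ge C_1(\beta)/(2N^2)$ by Theorem~\ref{theorem:main}, and $\ker\mathcal{L}^\beta_{U_j}=U_j^*\mathcal{N}U_j$, where $\mathcal{N}:=\ker\mathcal{L}^\beta_N$. By the detailed-balance form of $\mathcal{L}^\beta_N$, $\mathcal{N}$ is the commutant $\{\pi_N(a),\pi_N(a^*)\}'$, i.e.\ the commutant of the symmetric $\mathfrak{su}(2)$-action on $(\mathbb{C}^2)^{\otimes N}$; by Schur--Weyl duality this is the image of the group algebra $\mathbb{C}[S_N]$ permuting the tensor legs, and in particular $I$ and the cyclic site-shift $S$ lie in $\mathcal{N}$.

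\emph{The quantum expander via the QFT.} Using the quantum Fourier transform on the lattice $\mathbb{Z}_N$, set $U_k=\bigotimes_{j=1}^N\mathrm{diag}(1,e^{2\pi ijk/N})$, so that $U_k^*\pi_N(a)U_k=\sum_j e^{2\pi ijk/N}\pi_{(j)}(a)$ is precisely the $k$-th collective emission mode of Equation~\ref{equation:rotationparam} (a quadratic ``chirp'' phase $j\mapsto e^{2\pi i j^2/N}$ may be thrown in as well). The crux is to select $m=m(\beta)=O(1)$ such phase unitaries for which $\bigcap_j U_j^*\mathcal{N}U_j=\mathbb{C}I$ and, quantitatively, the averaged conditional expectation $\Phi:=\tfrac1m\sum_j E_{U_j^*\mathcal{N}U_j}$ on $L_2(d_N)$ has a spectral gap $\lambda(\Phi)\ge\lambda_0>0$ independent of $N$. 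That the common fixed space already collapses to the scalars for two generic twists is exactly the content of the primitivity theorem above, which in particular bounds the number of unitaries required by $O(1)$; the additional quantitative input is that these structured twists conjugate the shift $S$ to twisted clock--shift operators which jointly ``fill'' the full matrix algebra $M_{2^N}(\mathbb{C})$ at a uniform rate --- this is the quantum-expander estimate.

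\emph{From the gap to the entropy bound.} Since $\tilde{\mathcal{L}}$ is KMS-symmetric, its Dirichlet form is the sum of those of the $\mathcal{L}^\beta_{U_j}$, so $\ker\tilde{\mathcal{L}}=\bigcap_j\ker\mathcal{L}^\beta_{U_j}=\mathbb{C}I$; and for mean-zero $f\in L_2(d_N)$,
\[
\langle\tilde{\mathcal{L}}f,f\rangle_{d_N}\ \ge\ \min_j\lambda_2(\mathcal{L}^\beta_{U_j})\sum_j\big\|f-E_{U_j^*\mathcal{N}U_j}f\big\|_{L_2(d_N)}^2\ \ge\ m\lambda_0\,\min_j\lambda_2(\mathcal{L}^\beta_{U_j})\,\|f\|_{L_2(d_N)}^2,
\]
where the last step uses $\sum_j\|E_{U_j^*\mathcal{N}U_j}f\|_{L_2(d_N)}^2=m\langle f,\Phi f\rangle\le m(1-\lambda_0)\|f\|_{L_2(d_N)}^2$. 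Hence $\lambda_2(\tilde{\mathcal{L}})\ge m\lambda_0 C_1(\beta)/(2N^2)$. A quantum Diaconis--Saloff-Coste bound then gives $\MLSI(\tilde{\mathcal{L}})\ge\lambda_2(\tilde{\mathcal{L}})/\big(c_0+\log\|d_N^{-1}\|_\infty\big)$, and $\log\|d_N^{-1}\|_\infty=N\log(1+e^{\beta})=O(N)$, so $\MLSI(\tilde{\mathcal{L}})\ge c(\beta)/N^3$. Since $E^*_{\mathbb{C}}\rho=d_N$ and $I_{\tilde{\mathcal{L}}}(\rho)=\sum_j I_{\mathcal{L}^\beta_{U_j}}(\rho)$, the defining inequality of $\MLSI$ is precisely $D(\rho\,|\,E^*_{\mathbb{C}}\rho)\le C(N)\sum_j I_{\mathcal{L}^\beta_{U_j}}(\rho)$ with $C(N)=O(N^3)$, as claimed.

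\emph{Main obstacle.} The difficulty is concentrated in the construction above: producing an \emph{explicit} family of $O(1)$ diagonal phase unitaries whose conjugates of $\mathbb{C}[S_N]$ meet only in the scalars \emph{with an $N$-independent spectral gap}. Genericity alone gives triviality of the intersection but not a uniform gap, and the $\mathfrak{su}(2)$-multiplicity spaces grow with $N$, so keeping the expander constant $\lambda_0$ bounded below uniformly in $N$ is the delicate point. If only a dimension-dependent gap can be secured, the theorem still follows, but with a larger (though still finite) $C(N)$.
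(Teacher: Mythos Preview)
Your overall architecture---combine finitely many rotated Lindbladians into a single primitive $\tilde{\mathcal{L}}$, establish a spectral gap, then upgrade to MLSI via a Kastoryano--Temme / Diaconis--Saloff-Coste comparison---is sound and genuinely different from the paper's route. But the proposal leaves the decisive step unproven: you assert that a fixed number of \emph{diagonal} phase twists $U_k$ of the permutation algebra $\mathcal{N}=\mathbb{C}[S_N]$ yield an averaged projection $\Phi=\tfrac1m\sum_j E_{U_j^*\mathcal{N}U_j}$ with an $N$-independent gap $\lambda_0$, calling this ``the quantum-expander estimate.'' No argument is offered, and you yourself flag it as the main obstacle. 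Two concrete concerns: first, your specific choice $\theta_j=2\pi jk/N$ may not even meet the hypothesis of the primitivity lemma you invoke, since the Vandermonde step in that proof needs $\cos\theta_i\neq\cos\theta_j$, which fails for $i+j=N$; this is repairable by a generic choice of phases. Second, and more seriously, there is no known expander construction using only computational-basis-diagonal unitaries acting on conjugates of $\mathbb{C}[S_N]$; the multiplicity spaces $W_\lambda$ grow exponentially, and diagonal twists do not obviously mix them at a uniform rate. Your fallback---accept a dimension-dependent $\lambda_0(N)$---does yield \emph{some} finite $C(N)$, so the theorem as literally stated survives, but you then lose the claimed $O(N^3)$ and have no control whatsoever on the order.

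The paper proceeds very differently and obtains $C(N)$ poly-logarithmic in $N$. It never bounds the gap of a combined Lindbladian. Instead it decomposes $D(\rho\mid E^*_{\mathbb{C}}\rho)$ along a chain of $d_N$-invariant subalgebras $\mathbb{B}(V^{\otimes N})\supset\Omega\supset\Omega_{\mathrm{diag}}\supset\Omega_\oplus\supset\mathbb{C}$ and controls each step directly at the entropy level. The passage $\Omega\to\Omega_{\mathrm{diag}}$ (killing the multiplicity blocks $\mathbb{B}(W_\lambda)$) uses genuine quantum expanders on the $W_\lambda$'s---unitaries block-diagonal in the spectral decomposition of $\pi_N(h)$ but \emph{not} diagonal in the computational basis---implemented through a transference automorphism $\Phi_W$ on $\ell_\infty^{n_\lambda}\otimes\mathbb{B}(W_\lambda)$, together with exchange permutations to handle the small-$n_\lambda$ components. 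The passage $\Omega_{\mathrm{diag}}\to\mathbb{C}$ uses quantum Fourier transforms on the eigenspaces $H_j$ of $\pi_N(h)$ and a Saloff-Coste return-time estimate for the resulting classical chain on $\Omega_\oplus$. Throughout, a Lieb-concavity lemma ($I_{\mathcal{L}^\beta_U}(E^*\rho)\le I_{\mathcal{L}^\beta_U}(\rho)$ for $d_N$-preserving conditional expectations) and an entropy iteration/quasi-factorization lemma do the work that your single spectral-gap inequality would have to do. The price is a considerably more intricate construction; the payoff is an explicit poly-logarithmic constant rather than an uncontrolled one.
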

	To prove this CLSI inequality, it is necessary to use quantum expanders to efficiently reduce the fixed point algebra of $\mathcal{L}^\beta_N$. The proof is rather technical and the unitaries constructed in the proof may not be easy to implement in experiments. However, the techniques used in the proof are novel. It is the authors' hope that the same techniques can be used to obtain CLSI inequalities for more realistic models in the future.
	
	\section{\large CLSI of a Single Self-Adjoint Operator}
	\label{section:single}
	In this section, we prove $\CLSI$ for Lindbladian generated by a single self-adjoint operator:
	\begin{equation}
		\mathcal{L}_A(x)  := 2AxA - A^2 x - xA^2
	\end{equation}
	where $A$ is a self-adjoint operator with discrete spectrum and each eigenvalue has multiplicity 1. 
	For simplicity, we first restrict ourselves to the case where $\mathcal{L}_A$ acts on a matrix algebra $\mathbb{M}_d$.
	\begin{align}
		\mathcal{L}_A: \mathbb{M}_{d}\rightarrow\mathbb{M}_{d}
	\end{align}
	In this case, the CLSI constant will be related to the spectral distribution of  $A$. Later, we will use a transference principle to generalize this result to finite von Neumann algebras.
	
	 The key to prove this result is the connection between complete return time and $\CLSI$ constant established in \cite{MBLGMJ}. More precisely, for a symmetric quantum Markov semigroup, it is proved that a lower Ricci curvature bound along with a complete return time estimate implies the semigroup satisfies $\CLSI$. 
	
	To apply this result, we first make the simple observation that $\mathcal{L}_A$ can be represented as a Schur multiplier. Recall the definition of a Schur multiplier on a matrix algebra:
	\begin{definition}
		Let $\mathbb{M}_n$ be a $n\times n$-matrix algebra and $a:=(a_{ij})_{i,j=1}^n\in\mathbb{M}_n$ be a matrix. Then the Schur multiplier $T_a$ associated with $a$ is\begin{equation*}
			T_a : \mathbb{M}_n\rightarrow\mathbb{M}_n : T_a((x_{ij})) = (a_{ij}x_{ij})
		\end{equation*}
	\end{definition}
	\begin{lemma}\label{lemma:schur} Let $A = \sum_{i}\lambda_i e_i$ be the spectral decomposition of $A$ with distinct eigenvalues. Then the Lindbladian $\mathcal{L}_A: \mathbb{M}_{d} \rightarrow \mathbb{M}_{d}$ is given by a Schur multiplier.
		\begin{equation}
			\mathcal{L}_A(x) = -\sum_{i,j} (\lambda_i-\lambda_j)^2e_i xe_j
		\end{equation}
	The fixed point algebra is the subalgebra of diagonal matrices: $\mathbb{M}_{diag} := span\{e_i: 1\leq i\leq d\}$. In particular, the quantum Markov semigroup generated by $\mathcal{L}_A$ is not primitive. 
	\end{lemma}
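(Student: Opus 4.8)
The plan is to carry out a direct computation and then read off the kernel. The key preliminary observation is that, because every eigenvalue of $A$ has multiplicity one, each spectral projection $e_i$ is rank one, so $\{e_i\}$ is a complete family of orthogonal rank-one projections with $\sum_i e_i = \identity$; choosing unit eigenvectors $v_i$ with $e_i = \ketbra{v_i}$, the family $\{e_i x e_j\}_{i,j}$ is (up to scalars) precisely the system of matrix units in the orthonormal eigenbasis $\{v_i\}$. This is what will let us conclude that $\mathcal{L}_A$ is literally a Schur multiplier rather than merely block-diagonal.

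First I would insert the resolution of identity $\identity = \sum_i e_i$ on both sides of each occurrence of $x$ and use $A e_i = \lambda_i e_i = e_i A$. This gives $AxA = \sum_{i,j}\lambda_i\lambda_j\, e_i x e_j$, $A^2 x = \sum_{i,j}\lambda_i^2\, e_i x e_j$, and $xA^2 = \sum_{i,j}\lambda_j^2\, e_i x e_j$. Substituting into $\mathcal{L}_A(x) = 2AxA - A^2 x - xA^2$ and collecting coefficients yields $\mathcal{L}_A(x) = \sum_{i,j}\bigl(2\lambda_i\lambda_j - \lambda_i^2 - \lambda_j^2\bigr)\, e_i x e_j = -\sum_{i,j}(\lambda_i-\lambda_j)^2\, e_i x e_j$, which is exactly the asserted formula, i.e. the Schur multiplier $T_a$ with symbol $a_{ij} = -(\lambda_i-\lambda_j)^2$ relative to the basis $\{v_i\}$.

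For the fixed point algebra, the identity just derived shows that $\mathcal{L}_A(x)=0$ forces $(\lambda_i-\lambda_j)^2\, e_i x e_j = 0$ for all $i,j$. Since the eigenvalues are pairwise distinct, $(\lambda_i-\lambda_j)^2 \neq 0$ whenever $i\neq j$, so $e_i x e_j = 0$ for $i\neq j$; hence $x = \sum_i e_i x e_i$ is diagonal in the eigenbasis. Conversely, every such diagonal matrix is annihilated by $\mathcal{L}_A$. Therefore $\Ker\mathcal{L}_A = \mathbb{M}_{diag} = \spann\{e_i : 1\leq i\leq d\}$, which has dimension $d\geq 2$ and so strictly contains $\cz\identity$; consequently the quantum Markov semigroup generated by $\mathcal{L}_A$ is not primitive.

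I do not anticipate any real obstacle: the argument is entirely bookkeeping with the resolution of the identity. The one point that genuinely requires the hypothesis is the multiplicity-one assumption, which is what makes the spectral projections rank one, hence guarantees that $\{e_i x e_j\}$ realizes a true matrix-unit system and that $\mathcal{L}_A$ is a Schur multiplier in the strict sense used in the subsequent sections.
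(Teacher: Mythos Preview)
Your proof is correct and follows essentially the same route as the paper: a direct computation using the resolution of identity $\sum_i e_i = 1$ to obtain the Schur-multiplier formula, followed by reading off the diagonal kernel from the distinctness of eigenvalues. If anything, you are slightly more explicit than the paper in justifying why the rank-one nature of the $e_i$ makes this a genuine Schur multiplier and in spelling out both inclusions for the fixed point algebra.
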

	\begin{proof}
		By a direction computation we have:
		\begin{align}
			\begin{split}
				\mathcal{L}_A(x) &= 2AxA - A^2x - xA^2 = -\sum_i \lambda_i^2 e_i x - \sum_i \lambda_i^2 xe_i + 2\sum_{i,j} \lambda_i \lambda_j e_i xe_j
				\\
				& = -\sum_{i,j} (\lambda_i^2 e_i xe_j + \lambda_j^2 e_i xe_j - 2 \lambda_i\lambda_j e_i xe_j) = -\sum_{i,j}(\lambda_i - \lambda_j)^2 e_ixe_j
			\end{split}
		\end{align}
		where in the third equation we used partition of unity $\sum_i e_i = 1$. From the Schur multiplier representation of $\mathcal{L}_A$, it is clear that $\mathcal{L}_A(x) = 0$ if and only if $x\in \mathbb{M}_{diag}$. Hence the fixed point algebra is the subalgebra of diagonal matrices. 
	\end{proof}

	It is shown in \cite{MBLGMJ} that the Ricci curvature of the Lindbladian associated with a Schur multiplier is bounded below by 0. In this case, the paper showed that the $\CLSI$ constant is inversely proportional to the complete return time. Therefore in the remainder of this section, we will calculate the complete return time of $\mathcal{L}_A$ in terms of the spectral distribution of $A$. 
	
	It turns out that the correct definition of return time for a nonprimitive quantum Markov semigroup $T_t:\mathcal{M}\rightarrow\mathcal{M}$ is given by \cite{Fisher}\cite{LGMJNL}\cite{MJJP}
	\begin{definition}
		For a nonprimitive quantum Markov semigroup $T_t: \mathcal{M}\rightarrow\mathcal{M}$ where $\mathcal{M}$ is a finite von Neumann algebra and let $E:\mathcal{M}\rightarrow\mathcal{N}$ be the conditional expectation onto the fixed point algebra $\mathcal{N}\ssubset\mathcal{M}$, the complete bounded return time of $T_t$ is given by:\begin{equation}
			t_{cb}(\epsilon) := \inf\{t \geq 0: ||T_t - E : L_\infty^1(\mathcal{N}\ssubset\mathcal{M})\rightarrow L_\infty(\mathcal{M})||_{cb} \leq \epsilon \}
		\end{equation}
		where $0 < \epsilon < 1$, and $L_\infty^1(\mathcal{N}\ssubset\mathcal{M})$ is the amalgamated $L_\infty$ space equipped with the norm:
		\begin{equation*}
			||x||_{L^1_\infty} := \sup_{\substack{||a||_2 = ||b||_2 = 1 \\ a,b \in L_2(\mathcal{N})}} ||axb||_1
		\end{equation*}
		For $\epsilon = 1/2$, we write $t_{cb} := t_{cb}(1/2)$
	\end{definition}
	For the motivation of this definition, see \cite{Fisher} and \cite{MBLGMJ}. Below we present the key lemma to calculate the complete return time in the finite dimensional case  \cite{Fisher}. It extends the classical result that relates cb-norm of a quantum channel with the norm of its Choi matrix. Fix a pair of unital algebras $\mathcal{N}\ssubset\mathcal{M}$, recall a map $T:\mathcal{M}\rightarrow\mathcal{M}$ is a $\mathcal{N}$-bimodule map if for all $a,b\in\mathcal{N}$ and $x\in\mathcal{M}$ we have $T(axb) = aT(x)b$. 
	\begin{lemma}
		Let $\mathbb{M}_d$ be a matrix algebra, and fix a set of matrix units $\{e_{ij}\}_{1\leq i,j \leq d}$ of $\mathbb{M}_d$. Let $\mathcal{N}\ssubset\mathbb{N}_d$ be a subalgebra.Then for a completely bounded $\mathcal{N}$-bimodule map: $T:L_1(\mathbb{M}_d)\rightarrow L_\infty(\mathbb{M}_d)$ we have\begin{equation}
			||T:L^1_\infty(\mathcal{N}\ssubset\mathbb{M}_d)\rightarrow L_\infty(\mathbb{M}_d)||_{cb} \leq ||\chi_T||_{\mathbb{M}_d\overline{\otimes}\mathbb{M}_d}
		\end{equation}
	where $\chi_T := \sum_{i,j} e_{ij}\otimes T(e_{ij})$ is the Choi matrix of $T$.
	\end{lemma}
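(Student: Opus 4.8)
The plan is to unwind the definition of the completely bounded norm and reduce the estimate to a single trace pairing against $\chi_T$, in the spirit of the scalar ($\mathcal{N}=\mathbb{C}$) identification $\|T\colon L_1(\mathbb{M}_d)\to L_\infty(\mathbb{M}_d)\|_{cb}=\|\chi_T\|$; the additional input needed here is that the $\mathcal{N}$-bimodularity of $T$ forces a rank drop on $\chi_T$, and that rank drop is exactly what converts the ordinary $L_1$-norm into the amalgamated norm.

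I would first pass to the matrix levels: by definition of $\|\cdot\|_{cb}$ it suffices, for each $n$ and each $X=[x_{\mu\nu}]\in\mathbb{M}_n(\mathbb{M}_d)$, to bound $\|(\id_{\mathbb{M}_n}\otimes T)(X)\|_{\mathbb{M}_n(\mathbb{M}_d)}$ by $\|\chi_T\|_{\mathbb{M}_d\overline{\otimes}\mathbb{M}_d}\,\|X\|_{\mathbb{M}_n(L^1_\infty(\mathcal{N}\ssubset\mathbb{M}_d))}$. Testing the left side against unit vectors $\Xi=\sum_\mu e_\mu\otimes\xi_\mu$, $H=\sum_\nu e_\nu\otimes\eta_\nu$ in $\mathbb{C}^n\otimes\mathbb{C}^d$, the quantity to control is $\bigl|\sum_{\mu\nu}\langle\xi_\mu,T(x_{\mu\nu})\eta_\nu\rangle\bigr|$. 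From $\chi_T=\sum_{ij}e_{ij}\otimes T(e_{ij})$ one reads off $(\langle i|\otimes 1)\chi_T(|j\rangle\otimes 1)=T(e_{ij})$, hence the slice identity $T(x)=(\Tr\otimes\id)\bigl((x^{\top}\otimes 1)\chi_T\bigr)$, and substituting it yields
\[
\langle\Xi,(\id_{\mathbb{M}_n}\otimes T)(X)H\rangle=\Tr_{\mathbb{M}_d\overline{\otimes}\mathbb{M}_d}\!\bigl(\chi_T\,W\bigr),\qquad W:=\sum_{\mu\nu}x_{\mu\nu}^{\top}\otimes|\eta_\nu\rangle\langle\xi_\mu| .
\]

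Now the rank drop. For $a\in\mathcal{N}$ the identities $T(ae_{ij})=aT(e_{ij})$ and $T(e_{ij}a)=T(e_{ij})a$ translate into the covariances $(1\otimes a)\chi_T=(a^{\top}\otimes 1)\chi_T$ and $\chi_T(1\otimes a)=\chi_T(a^{\top}\otimes 1)$; parametrising vectors of $\mathbb{C}^d\otimes\mathbb{C}^d$ as $(1\otimes S)|\Omega\rangle$ with $|\Omega\rangle=\sum_i|i\rangle|i\rangle$, one deduces that the ranges of $\chi_T$ and of $\chi_T^{*}$ both lie in the subspace $\mathcal{K}:=(1\otimes\mathcal{N}')|\Omega\rangle$. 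Writing $P$ for the orthogonal projection onto $\mathcal{K}$ we get $\chi_T=P\chi_T P$, hence
\[
\bigl|\langle\Xi,(\id_{\mathbb{M}_n}\otimes T)(X)H\rangle\bigr|=\bigl|\Tr(\chi_T\,PWP)\bigr|\le\|\chi_T\|_{\mathbb{M}_d\overline{\otimes}\mathbb{M}_d}\,\|PWP\|_1 .
\]
The heart of the proof is the inequality $\|PWP\|_1\le\|X\|_{\mathbb{M}_n(L^1_\infty(\mathcal{N}\ssubset\mathbb{M}_d))}$ for all unit $\Xi,H$. Expanding $PWP$ in the orthonormal system $\{(1\otimes R_k)|\Omega\rangle\}$ coming from a Hilbert--Schmidt orthonormal basis $\{R_k\}$ of the relative commutant, its coefficient matrix has entries $\sum_{\mu\nu}\langle\xi_\mu,R_l\,x_{\mu\nu}\,R_k^{*}\eta_\nu\rangle$, which --- after a transpose and a relabelling, both isometries for $\|\cdot\|_1$ --- is $\mathbf{a}X\mathbf{b}$ for matrices $\mathbf{a},\mathbf{b}$ over $\mathcal{N}$ assembled from $\Xi$, $H$ and the module structure, scaled so that $\|\mathbf{a}\|_2=\|\mathbf{b}\|_2=1$. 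Since $\|X\|_{\mathbb{M}_n(L^1_\infty(\mathcal{N}\ssubset\mathbb{M}_d))}$ is precisely the supremum of $\|\mathbf{a}X\mathbf{b}\|_1$ over such $\mathbf{a},\mathbf{b}$, this gives the claimed bound; taking suprema over $\Xi,H$ and then over $n$ finishes the proof.

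The main obstacle is this last matching step. One must work with the correct operator space structure of $L^1_\infty(\mathcal{N}\ssubset\mathbb{M}_d)$ at the matrix level --- which sandwichings $\mathbf{a},\mathbf{b}$ are admissible, and with which normalisation the $L_1$- and $L_2$-norms are taken --- and check that the normalisations conspire so that the dimensional factors ($n$ and $d$) cancel and that the covariance of $\chi_T$ genuinely lets the ambient vectors $\Xi,H\in\mathbb{C}^n\otimes\mathbb{C}^d$ be absorbed into $\mathcal{N}$-valued matrices with no loss. That absorption is where $\mathcal{N}$-bimodularity is indispensable: already for $\mathcal{N}=\mathbb{M}_d$, where $\|\cdot\|_{L^1_\infty}$ is just the operator norm, the naive H\"older bound $|\Tr(\chi_T W)|\le\|\chi_T\|\,\|W\|_1$ without the projection $P$ overshoots by a factor of order $d$, and it is exactly the covariance-forced rank drop of $\chi_T$ that repairs it.
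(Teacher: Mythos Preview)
The paper does not prove this lemma: immediately after stating it, the authors write ``For a proof, see \cite{Fisher}\cite{LGMJNL}.'' So there is no argument in the paper to compare against, and your proposal is attempting something the authors deliberately outsourced.

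On the substance of your sketch: the covariance argument for $\chi_T$ is correct and is a natural way to exploit $\mathcal{N}$-bimodularity --- the identities $(1\otimes a)\chi_T=(a^{\top}\otimes 1)\chi_T$ and their adjoints do force $\chi_T=P\chi_T P$ with $P$ the projection onto $(1\otimes\mathcal{N}')|\Omega\rangle$. But the step you yourself flag as the ``main obstacle'' is not merely a matter of tracking normalisations. Your coefficient matrix has entries $\sum_{\mu\nu}\langle\xi_\mu,R_l\,x_{\mu\nu}\,R_k^{*}\eta_\nu\rangle$ with $R_k\in\mathcal{N}'$, whereas the amalgamated norm $\|X\|_{\mathbb{M}_n(L^1_\infty(\mathcal{N}\ssubset\mathbb{M}_d))}$ is defined by sandwiching with matrices over $L_2(\mathcal{N})$, i.e.\ over $\mathcal{N}$, not $\mathcal{N}'$. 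Your parenthetical ``over $\mathcal{N}$'' asserts exactly the passage that has not been justified: you need a mechanism that converts the $\mathcal{N}'$-valued data $(R_k)$, together with the ambient vectors $\xi_\mu,\eta_\nu\in\mathbb{C}^d$, into $\mathcal{N}$-valued row/column matrices $\mathbf{a},\mathbf{b}$ with the correct $L_2$-normalisation. This typically goes through the structure of $\mathbb{M}_d$ as a Hilbert $\mathcal{N}$-module (or, equivalently, the Jones basic construction / Pimsner--Popa basis), and it is a genuine piece of the argument rather than bookkeeping. As written, the proposal stops at the point where the real work begins.
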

	For a proof, see \cite{Fisher}\cite{LGMJNL}. After applying this lemma to the Schur multipliers, we have
	\begin{lemma}
		With the same set-up as the previous lemma, let $T((e_{ij})) = (t_{ij}e_{ij})$ be a Schur multiplier that fixes the diagonal subalgebra $\mathbb{M}_{diag}$. Assume $T$ is a $\mathbb{M}_{diag}$-bimodule map. Then 
		\begin{equation}
			||T||_{1\rightarrow\infty, cb}:=||T:L_1(\mathbb{M}_{diag}\ssubset\mathbb{M}_d)\rightarrow L_\infty(\mathbb{M}_d)||_{cb} \leq ||(t_{ij})||_{\mathbb{M}_d}
		\end{equation} 
	\end{lemma}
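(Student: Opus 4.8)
The plan is to feed $T$ directly into the previous lemma and then read off $\|\chi_T\|_{\mathbb{M}_d\overline{\otimes}\mathbb{M}_d}$ by an explicit computation. Since $\mathbb{M}_d$ is finite-dimensional, the Schur multiplier $T$, viewed as a map $L_1(\mathbb{M}_d)\to L_\infty(\mathbb{M}_d)$, is automatically completely bounded, and by hypothesis it is an $\mathbb{M}_{diag}$-bimodule map (indeed, every Schur multiplier commutes with left and right multiplication by diagonal matrices, so this is automatic). Hence the previous lemma applies with $\mathcal{N}=\mathbb{M}_{diag}$ and gives
\[
\|T\|_{1\to\infty,cb}\;\le\;\|\chi_T\|_{\mathbb{M}_d\overline{\otimes}\mathbb{M}_d},\qquad \chi_T:=\sum_{i,j}e_{ij}\otimes T(e_{ij}),
\]
once the amalgamated space $L_1(\mathbb{M}_{diag}\ssubset\mathbb{M}_d)$ in the statement is identified with the amalgamated space $L^1_\infty(\mathbb{M}_{diag}\ssubset\mathbb{M}_d)$ of the previous lemma. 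So the whole problem reduces to computing this Choi matrix.

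Next I would carry out that computation. Since $T(e_{ij})=t_{ij}e_{ij}$, we have $\chi_T=\sum_{i,j}t_{ij}\,e_{ij}\otimes e_{ij}$. Acting on the standard basis $\{e_k\otimes e_l\}_{k,l}$ of $\mathbb{C}^d\otimes\mathbb{C}^d$ and using $e_{ij}e_k=\delta_{jk}e_i$, one checks that $(e_{ij}\otimes e_{ij})(e_k\otimes e_l)=\delta_{jk}\delta_{jl}\,e_i\otimes e_i$, hence
\[
\chi_T(e_k\otimes e_l)=\delta_{kl}\sum_i t_{ik}\,(e_i\otimes e_i).
\]
In particular $\chi_T$ annihilates the orthogonal complement of the $d$-dimensional subspace $H_0:=\spann\{e_k\otimes e_k:1\le k\le d\}$ and carries $H_0$ into itself. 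Introducing the isometry $V:\mathbb{C}^d\to\mathbb{C}^d\otimes\mathbb{C}^d$, $Ve_k=e_k\otimes e_k$, with range $H_0$, the displayed formula says precisely that $V^*\chi_T V=(t_{ij})_{i,j}$ and $\chi_T=V(t_{ij})V^*$. As $V$ is an isometry this yields $\|\chi_T\|_{\mathbb{M}_d\overline{\otimes}\mathbb{M}_d}=\|(t_{ij})\|_{\mathbb{M}_d}$, and combining with the inequality above proves the lemma.

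I do not expect a genuine obstacle here: granting the previous lemma, the argument is a short computation. The one step that needs care is the bookkeeping in the second paragraph — that the off-diagonal basis vectors $e_k\otimes e_l$ with $k\neq l$ drop out of $\chi_T$, which is exactly where the \emph{entrywise} Schur structure of $T$ is used, so that the Choi matrix compressed to $H_0$ is literally the symbol matrix $(t_{ij})$. A secondary, purely notational, point is to confirm that the amalgamated $L_1$-norm written in the statement coincides with the norm $\|\cdot\|_{L^1_\infty}$ from the previous lemma; with that identification in place the proof is complete.
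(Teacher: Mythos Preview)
Your proof is correct and essentially identical to the paper's. The paper phrases the key step as observing that $\varphi:\mathbb{M}_d\to\mathbb{M}_d\otimes\mathbb{M}_d$, $e_{ij}\mapsto e_{ij}\otimes e_{ij}$, is a (non-unital) $*$-homomorphism and hence an isometry with $\chi_T=\varphi((t_{ij}))$; your isometry $V:e_k\mapsto e_k\otimes e_k$ implements exactly this map via $\varphi(x)=VxV^*$, so the two arguments are the same computation viewed from the Hilbert-space versus the algebra side.
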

	\begin{proof}
		Consider the following map:\begin{align}
			\begin{split}
				\varphi: &\mathbb{M}_d\rightarrow\mathbb{M}_d\otimes\mathbb{M}_d
				\\
				&e_{ij} \mapsto e_{ij}\otimes e_{ij}
			\end{split}
		\end{align}
		$\varphi$ is a non-unital $*$-homomorphism. Since $\varphi(1) = \sum_i e_{ii}\otimes e_{ii}$, $\varphi$ is also an isometry. By the previous lemma, $||T||_{1\rightarrow\infty, cb} 
		\leq ||\chi_T||_{\mathbb{M}_d\overline{\otimes}\mathbb{M}_d}$. Since $\chi_T = \sum_{i,j}e_{ij}\otimes T(e_{ij}) = \sum_{i,j}e_{ij}\otimes t_{ij}e_{ij} = \varphi((t_{ij}))$, we have:
		\begin{equation}
			||T||_{1\rightarrow\infty, cb} \leq ||\chi_T||_{\mathbb{M}_d\otimes\mathbb{M}_d} = ||(t_{ij})||_{\mathbb{M}_d}\qedhere
		\end{equation}\qedhere\end{proof}
	Let's go back to the complete return time of the quantum Markov semigroup $\{T_t := e^{-t\mathcal{L}_A}:\mathbb{M}_d\rightarrow\mathbb{M}_d\}_{t\geq 0}$. Let $E:\mathbb{M}_d\rightarrow \mathbb{M}_{diag}$ be the conditional expectation onto the fixed point algebra, the previous lemma gives the following complete return time estimate.
	\begin{corollary}
		Let $\delta = min_{i\neq j}|\lambda_i - \lambda_j|$ be the smallest gap in the spectral distribution of $A$, then the complete return time satisfies the following bound:
		\begin{equation}
			t_{cb}(\epsilon) \leq \frac{\epsilon^2\pi}{\delta^2}
		\end{equation}
		In particular, $t_{cb} = t_{cb}(1/2) \leq \frac{4\pi}{\delta^2}$.
	\end{corollary}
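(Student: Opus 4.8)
The plan is to combine the Choi-matrix estimate of the preceding lemma with a crude operator-norm bound for a Gaussian-type matrix, using only that the spectrum of $A$ is $\delta$-separated.

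First I would identify both the semigroup and the relevant ``return-time'' object as Schur multipliers. By Lemma~\ref{lemma:schur}, $\mathcal{L}_A$ is the Schur multiplier with symbol $-(\lambda_i-\lambda_j)^2$ in the eigenbasis of $A$, hence the quantum Markov semigroup $T_t$ is the Schur multiplier with symbol $e^{-t(\lambda_i-\lambda_j)^2}$, while the conditional expectation $E$ onto $\mathbb{M}_{diag}$ is the Schur multiplier with Kronecker symbol $\delta_{ij}$. Therefore $T_t-E$ is the $\mathbb{M}_{diag}$-bimodule Schur multiplier with symbol $s^{(t)}_{ij}=e^{-t(\lambda_i-\lambda_j)^2}$ for $i\neq j$ and $s^{(t)}_{ii}=0$. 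Applying the preceding lemma to $T_t-E$ then yields
\[
\|T_t-E\|_{1\to\infty,cb}\ \le\ \big\|(s^{(t)}_{ij})\big\|_{\mathbb{M}_d},
\]
so the problem reduces to bounding the operator norm of this off-diagonal Gaussian matrix.

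Next I would estimate that operator norm by a Schur test: since $(s^{(t)}_{ij})$ is symmetric with nonnegative entries, $\|(s^{(t)}_{ij})\|_{\mathbb{M}_d}\le \max_i\sum_{j\neq i}e^{-t(\lambda_i-\lambda_j)^2}$. Fix $i$. The eigenvalues of $A$ lying strictly above $\lambda_i$ are pairwise at distance $\ge\delta$ and each lies at distance $\ge\delta$ from $\lambda_i$, so the $k$-th smallest of them is $\ge\lambda_i+k\delta$; the same bound holds below $\lambda_i$. Hence
\[
\sum_{j\neq i}e^{-t(\lambda_i-\lambda_j)^2}\ \le\ 2\sum_{k\ge 1}e^{-tk^2\delta^2}\ \le\ 2\int_0^\infty e^{-t\delta^2 x^2}\,dx\ =\ \sqrt{\frac{\pi}{t\delta^2}},
\]
so $\|T_t-E\|_{1\to\infty,cb}\le \sqrt{\pi/(t\delta^2)}$. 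This is $\le\epsilon$ precisely when $t\ge \pi/(\epsilon^2\delta^2)$, which gives $t_{cb}(\epsilon)\le \pi/(\epsilon^2\delta^2)$, and in particular $t_{cb}=t_{cb}(1/2)\le 4\pi/\delta^2$, consistent with the ``in particular'' clause (so the displayed formula is most naturally read as $t_{cb}(\epsilon)\le \pi/(\epsilon^2\delta^2)$).

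I do not expect a genuine obstacle here. The only points that need care are the bookkeeping that turns $T_t-E$ into a Schur multiplier with the stated zero-diagonal symbol, and the elementary separation argument that, for the purpose of the row-sum bound, replaces the actual spectrum by the worst-case arithmetic progression $\{\lambda_i+k\delta\}_{k\ge1}$. One could sharpen the operator-norm estimate — for instance, when the $\lambda_i$ form an arithmetic progression, $(s^{(t)}_{ij})$ is Toeplitz and its norm equals the supremum of the associated theta-series — but the Schur test already produces the $\delta^{-2}$ scaling, which is exactly what is needed downstream when combined with the Ricci curvature bound of \cite{MBLGMJ}.
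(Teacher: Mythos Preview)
Your proposal is correct and essentially matches the paper's proof: both identify $T_t-E$ as a Schur multiplier, invoke the Choi-matrix lemma, and then use the $\delta$-separation of the spectrum to bound $\|(s^{(t)}_{ij})\|$ by $2\sum_{k\ge1}e^{-t\delta^2k^2}\le \sqrt{\pi/(t\delta^2)}$. The only cosmetic difference is that the paper reaches this sum via a band decomposition (triangle inequality over the off-diagonals $|i-j|=k$, each of norm $\le e^{-t\delta^2k^2}$) whereas you use the Schur/row-sum test; the resulting bound and your observation about the intended formula $t_{cb}(\epsilon)\le \pi/(\epsilon^2\delta^2)$ agree with the paper.
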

	\begin{proof}
		Since $(T_t - E)x = \sum_{ij} e^{-t(\lambda_i - \lambda_j)^2}x_{ij} - \sum_i x_{ii} = \sum_{i\neq j}e^{-t(\lambda_i - \lambda_j)^2}x_{ij}$, by the previous lemma
		\begin{align}
			\begin{split}
				||T_t - E||_{1\rightarrow\infty, cb} &= ||(e^{-t(\lambda_i - \lambda_j)^2})_{i\neq j}||_{\mathbb{M}_d}=||(\sum_{k=1}^{d-1} + \sum_{k=-1}^{-d+1})\sum_i e^{-t(\lambda_i - \lambda_{i+k})}e_{i,i+k}||
				\\
				&\leq2\sum_{k=1}^{d-1}||\sum_i e^{-t\delta^2k^2}e_{i,i+k}||\leq 2\sum_{k=1}^{d-1}e^{-t\delta^2k^2}
				\leq \int_{-\infty}^\infty dx e^{-t\delta^2x^2} = \frac{\sqrt{\pi}}{\delta\sqrt{t}}
			\end{split}
		\end{align}
		Hence $t_{cb}(\epsilon) \leq \frac{\epsilon^2\pi}{\delta^2}$ and $t_{cb} \leq \frac{4\pi}{\delta^2}$. 
	\end{proof}
	To state the main theorem of this section, we introduce the following notations.
	\begin{definition}
		For two variables $F,G$ (1) $F \lessapprox G$ if and only if there exists an absolute constant $C >0$ such that $F \leq CG$. (2) $F \gtrapprox G)$ if and only if there exists an absolute constant $c> 0$ such that $F \geq cG$. (3) $F \sim G$ if and only if there exists absolute constants $c, C >0$ such that $cG \leq F \leq CG$.
	\end{definition}
	\begin{theorem}\label{theorem:s.a.CLSI}
		Let $\mathcal{L}_A(x) = 2AxA - A^2x - xA^2:\mathbb{M}_d\rightarrow\mathbb{M}_d$ be the generator of a non-primitive quantum Markov semigroup where $A$ is a self-adjoint operator with discrete spectrum of multiplicity 1. Then the $\CLSI$ constant of $\mathcal{L}_A$ is proportional to the spectral gap of $\mathcal{L}_A$:\begin{equation}
			\CLSI(\mathcal{L}_A) \sim \lambda(\mathcal{L}_A)
		\end{equation}
	\end{theorem}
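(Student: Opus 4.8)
The plan is to prove the two inequalities $\CLSI(\mathcal{L}_A) \lessapprox \lambda(\mathcal{L}_A)$ and $\CLSI(\mathcal{L}_A) \gtrapprox \lambda(\mathcal{L}_A)$ separately, the first being essentially soft and the second being the substantive direction. For the upper bound $\CLSI(\mathcal{L}_A)\lessapprox \lambda(\mathcal{L}_A)$, I would invoke the general fact that $\CLSI(\mathcal{L}) \leq \MLSI(\mathcal{L}) \leq 2\lambda_2(\mathcal{L})$ — the spectral gap always dominates (twice) the MLSI constant, and the CLSI constant is by definition an infimum over ampliations, hence no larger than MLSI. So this direction requires only recalling the standard hierarchy; the constant $2$ is absorbed into the $\lessapprox$.

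For the lower bound $\CLSI(\mathcal{L}_A) \gtrapprox \lambda(\mathcal{L}_A)$, the strategy is to combine the two ingredients already assembled in this section. First, by Lemma~\ref{lemma:schur}, $\mathcal{L}_A$ is (minus) a Schur multiplier with symbol $(\lambda_i-\lambda_j)^2$, and the result quoted from \cite{MBLGMJ} says that Schur-multiplier Lindbladians have nonnegative Ricci curvature; moreover, by the main theorem of \cite{MBLGMJ}, a quantum Markov semigroup with Ricci curvature bounded below by $0$ and finite complete bounded return time $t_{cb}$ satisfies a CLSI estimate of the form $\CLSI(\mathcal{L}) \gtrapprox \frac{1}{t_{cb}}$ (up to the universal constants implicit in that theorem, possibly with a logarithmic correction that I would need to check does not spoil the order). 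Second, the Corollary above gives $t_{cb} \leq \frac{4\pi}{\delta^2}$ where $\delta = \min_{i\neq j}|\lambda_i-\lambda_j|$ is the smallest spectral gap of $A$. Chaining these, $\CLSI(\mathcal{L}_A) \gtrapprox \delta^2$. To close the loop I must relate $\delta^2$ to the spectral gap $\lambda(\mathcal{L}_A)$: the eigenvalues of $-\mathcal{L}_A$ on off-diagonal matrix units $e_{ij}$ are exactly $(\lambda_i-\lambda_j)^2$, the kernel being the diagonal algebra, so the spectral gap is $\lambda(\mathcal{L}_A) = \min_{i\neq j}(\lambda_i-\lambda_j)^2$. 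But this minimum is achieved at a pair realizing the smallest gap only if the spectrum is, say, an arithmetic-progression-like configuration; in general $\min_{i\neq j}(\lambda_i-\lambda_j)^2 = \delta^2$ exactly, since the pair minimizing $|\lambda_i - \lambda_j|$ also minimizes its square. Hence $\lambda(\mathcal{L}_A) = \delta^2$, and the lower bound becomes $\CLSI(\mathcal{L}_A) \gtrapprox \lambda(\mathcal{L}_A)$, completing the equivalence.

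The main obstacle I anticipate is verifying that the return-time-to-CLSI implication from \cite{MBLGMJ} really gives a clean $1/t_{cb}$ dependence with only absolute constants in this non-primitive setting, rather than a bound degraded by a factor depending on $d$ or $\log d$. If the cited theorem carries a $\log$-type loss, I would need either a sharper version of it for the Schur-multiplier case or a direct argument: one could instead estimate the amalgamated $L_\infty^1 \to L_\infty$ norm of $T_t - E$ at all times (not just threshold $\epsilon$) using the Gaussian-type bound $\|T_t - E\|_{1\to\infty,cb} \leq \frac{\sqrt{\pi}}{\delta\sqrt t}$ derived in the Corollary's proof, and feed the full decay profile — not merely $t_{cb}$ — into the curvature-dimension machinery, which typically yields the sharp order. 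I would also double-check that the multiplicity-one hypothesis on $A$ is used exactly where needed: it guarantees the fixed-point algebra is the full diagonal $\mathbb{M}_{diag}$ and that the matrix-unit basis diagonalizes $\mathcal{L}_A$, both of which are implicitly used above.

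Finally, I would remark that the transference principle alluded to after the theorem statement extends this from $\mathbb{M}_d$ to general finite von Neumann algebras with a self-adjoint generator of discrete spectrum, since CLSI, the Ricci lower bound, and the return-time estimate all transfer; but the matrix-algebra case already contains all the analytic content.
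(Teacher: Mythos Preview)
Your proposal is correct and follows essentially the same route as the paper: the upper bound via $\CLSI \leq 2\lambda_2$, and the lower bound by combining the nonnegative Ricci curvature of Schur multipliers with the complete return time estimate $t_{cb}\leq 4\pi/\delta^2$, both from \cite{MBLGMJ}, together with the identification $\lambda(\mathcal{L}_A)=\delta^2$. Your worry about a possible logarithmic loss in the Ricci-to-CLSI implication is unfounded in this case: the paper invokes the clean bound $\CLSI(\mathcal{L}_A)\geq \delta^2/(16\pi)$ directly from \cite{MBLGMJ}, i.e.\ exactly $\gtrapprox 1/t_{cb}$ with an absolute constant.
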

	\begin{proof}
		Let $\delta = \min_{i\neq j}|\lambda_i - \lambda_j|$. Since $\mathcal{L}_A(x) = -\sum_{i,j} (\lambda_i - \lambda_j)^2e_ixe_j$, the spectral gap $\lambda(\mathcal{L}_A) \sim \delta^2$.
		
		By the CLSI estimate from \cite{MBLGMJ}, since Schur multiplier's geometric Ricci curvature is bounded below by 0 (for a definition of geometric Ricci curvature see \cite{MBLGMJ}) then 
		\begin{eqnarray}
			\CLSI(\mathcal{L}_A) \geq\frac{\delta^2}{16\pi}
		\end{eqnarray} 
		Hence $\CLSI(\mathcal{L}_A) \gtrapprox \lambda(\mathcal{L}_A)$. By a well known result \cite{KT}, we have $\CLSI(\mathcal{L}_A)\leq 2\lambda(\mathcal{L})$. Hence $\CLSI(\mathcal{L}_A) \sim \lambda(\mathcal{L}_A)$.
	\end{proof}
	\begin{corollary}
		Let $\mathcal{L}_A(x):=2AxA - A^2x - xA^2 :\mathcal{M}\rightarrow\mathcal{M}$ be the generator of a quantum Markov semigroup (not necessarily primitive) on a finite von Neumann algebra, and let $A$ be a self-adjoint operator with spectral decompostion $A = \sum_{1\leq i \leq d}\lambda_i e_i$. Then the same conclusion of the last theorem holds.
	\end{corollary}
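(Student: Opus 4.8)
The plan is a transference argument reducing everything to the matrix case of Theorem~\ref{theorem:s.a.CLSI}. Let $B := \sum_{i=1}^d \lambda_i e_{ii} \in \mathbb{M}_d$ be the diagonal matrix carrying the same distinct eigenvalues, and consider
\[
\varphi : \mathcal{M} \longrightarrow \mathbb{M}_d \otimes \mathcal{M}, \qquad \varphi(x) := \sum_{i,j=1}^d e_{ij}\otimes e_i x e_j .
\]
Using the partition of unity $\sum_i e_i = 1$ one checks that $\varphi$ is a normal $*$-homomorphism, that $q := \varphi(1) = \sum_i e_{ii}\otimes e_i$ is a projection, and that $\varphi$ is a trace-preserving $*$-isomorphism of $\mathcal{M}$ onto the corner $q(\mathbb{M}_d\otimes\mathcal{M})q = \bigoplus_{i,j} e_{ij}\otimes e_i\mathcal{M}e_j$ (both sides carrying their normalized traces). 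Since $\mathcal{L}_B(e_{ij}) = -(\lambda_i-\lambda_j)^2 e_{ij}$ by Lemma~\ref{lemma:schur}, the same computation as in that lemma gives the intertwining $\varphi\circ\mathcal{L}_A = (\mathcal{L}_B\otimes\mathrm{id}_{\mathcal{M}})\circ\varphi$; moreover $\mathcal{L}_B\otimes\mathrm{id}_{\mathcal{M}}$ acts diagonally on the blocks $e_{ij}\otimes\mathcal{M}$ and hence leaves the corner $q(\mathbb{M}_d\otimes\mathcal{M})q$ invariant. Thus $\varphi$ identifies $(\mathcal{M},\mathcal{L}_A)$ with the compression of $(\mathbb{M}_d\otimes\mathcal{M},\,\mathcal{L}_B\otimes\mathrm{id}_{\mathcal{M}})$ to the invariant corner $q$, carrying the fixed-point algebra $\mathcal{N} = \bigoplus_i e_i\mathcal{M}e_i$ of $\mathcal{L}_A$ (identified exactly as in Lemma~\ref{lemma:schur}) onto $q(\mathbb{M}_{diag}\otimes\mathcal{M})q$, and the conditional expectation $E$ of $\mathcal{L}_A$ onto the restriction of the conditional expectation of $\mathcal{L}_B\otimes\mathrm{id}_{\mathcal{M}}$.

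Because $\varphi$ is a trace-preserving embedding intertwining the two semigroups and their conditional expectations, relative entropy, entropy production, and therefore the modified log-Sobolev inequality transfer verbatim, and this persists after tensoring everything with $\mathrm{id}_n$. Hence
\[
\CLSI(\mathcal{L}_A) \;=\; \CLSI\big((\mathcal{L}_B\otimes\mathrm{id}_{\mathcal{M}})|_{q}\big) \;\geq\; \CLSI(\mathcal{L}_B\otimes\mathrm{id}_{\mathcal{M}}) \;\geq\; \CLSI(\mathcal{L}_B),
\]
the first inequality being stability of $\CLSI$ under compression to a semigroup-invariant corner whose conditional expectation restricts correctly, the second being stability under amplification by a finite von Neumann algebra, both recorded in \cite{Fisher}. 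Theorem~\ref{theorem:s.a.CLSI} gives $\CLSI(\mathcal{L}_B)\gtrapprox\lambda(\mathcal{L}_B) = \delta^2$, where $\delta = \min_{i\neq j}|\lambda_i-\lambda_j|$. On the other side, the Schur-multiplier form $\mathcal{L}_A(x) = -\sum_{i,j}(\lambda_i-\lambda_j)^2 e_i x e_j$ shows $\lambda(\mathcal{L}_A) = \min\{(\lambda_i-\lambda_j)^2 : i\neq j,\ e_i\mathcal{M}e_j\neq 0\}$; after decomposing $\mathcal{M}$ along the connected components of the graph on $\{1,\dots,d\}$ with edges $\{\,\{i,j\} : e_i\mathcal{M}e_j\neq 0\,\}$ — an operation under which both $\CLSI$ and $\lambda$ equal the minimum over summands, and under which the argument above runs component by component with $\delta$ the smallest gap actually present — we may assume all corners are non-zero, so $\lambda(\mathcal{L}_A) = \delta^2$. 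Combined with the universal bound $\CLSI(\mathcal{L}_A)\leq 2\lambda(\mathcal{L}_A)$ of \cite{KT}, this yields $\CLSI(\mathcal{L}_A)\sim\lambda(\mathcal{L}_A)$.

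The step requiring care is the stability of $\CLSI$ under compression to the invariant corner: one must verify that restricting a state supported on $q(\mathbb{M}_d\otimes\mathcal{M})q$ changes neither its relative entropy against the fixed-point algebra nor its entropy production, which is precisely the assertion that the ambient conditional expectation restricted to this corner coincides with $E$; amplification stability, by contrast, is immediate from the definition of the complete constant together with approximation of $\mathcal{M}$ by matrix subalgebras. An alternative route, avoiding transference and following the remark preceding the theorem, is to reprove the corollary directly: the geometric Ricci curvature lower bound $0$ of \cite{MBLGMJ} for Schur multipliers, and the Choi-matrix Lemma of \cite{Fisher}, both extend to ``generalized Schur multipliers'' $x\mapsto\sum_{i,j}c_{ij}e_i x e_j$ with projections $e_i$ in a finite von Neumann algebra; the return-time computation then goes through with the identical scalar symbol matrix, giving $t_{cb}\leq 4\pi/\delta^2$ and hence $\CLSI(\mathcal{L}_A)\gtrapprox\delta^2$ by the criterion of \cite{MBLGMJ}.
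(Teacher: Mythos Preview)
Your proof is correct and follows essentially the same transference strategy as the paper: both use the map $\varphi(x)=\sum_{i,j}e_{ij}\otimes e_ixe_j$ to intertwine $\mathcal{L}_A$ with $\mathcal{L}_B\otimes\mathrm{id}_{\mathcal{M}}$ and then invoke the matrix result. You are more explicit than the paper about why the transference preserves $\CLSI$ (corner compression plus amplification stability) and you address the edge case $e_i\mathcal{M}e_j=0$, which the paper glosses over; these are welcome clarifications but do not constitute a different approach.
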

	\begin{proof}
		By the same calculation as lemma \ref{lemma:schur}, we have $\mathcal{L}_A(x) = -\sum_{1\leq i,j\leq d}(\lambda_i - \lambda_j)^2 e_i xe_j$. Let $\mathcal{L}_d$ be the Lindbladian on $\mathbb{M}_d$ given by: $\mathcal{L}_d(x) := -\sum_{1\leq i,j\leq n}(\lambda_i - \lambda_j)^2 e_i xe_j$. Consider the non-unital trace preserving map:
		\begin{align}
			\begin{split}
				\varphi:& \mathcal{M}\rightarrow \mathbb{M}_d\overline{\otimes}\mathcal{M}
				\\
				&\varphi(e_i xe_j):=e_{ij}\otimes e_ixe_j
			\end{split}
		\end{align}
		Then we have
		\begin{align}
			\begin{split}
				(e^{-t\mathcal{L}_d}\otimes id_\mathcal{M})\circ\varphi(x) &= \sum_{i,j}e^{-t(\lambda_i - \lambda_j)^2}e_{ij}\otimes e_ixe_j
				\\
				&=\sum_{i,j}e_{ij}\otimes e^{-t(\lambda_i - \lambda_j)^2}e_ixe_j = \sum_{i,j}e_{ij}\otimes e^{-t\mathcal{L}_A}x
				\\
				&=\varphi(e^{-t\mathcal{L}_A}x)
			\end{split}
		\end{align}
		In addition, $\varphi$ is compatible with the projection onto the fixed point algebra. Hence by the definition of the CLSI constant, we have $\CLSI(\mathcal{L}^\mathcal{M}_A) \sim \lambda(\mathcal{L}^\mathcal{M}_A)$ 
	\end{proof}
	\begin{remark*} In \cite{LGCR}, Gao and Rouzé have shown that $\frac{\lambda(\mathcal{L}_A)}{C_{cb}(\mathcal{M}:\mathcal{N})} \leq CLSI(\mathcal{L}_A)\leq 2\lambda(\mathcal{L}_A)$ where $C_{cb}(\mathcal{M}:\mathcal{N})$ is the cb index of the pair $\mathcal{N}\ssubset\mathcal{M}$. Our result gives a tighter bound because the cb index of the diagonal subalgebra is the dimension of the underlying Hilbert space.
	\end{remark*}
	
	\section{\large CLSI of KMS-Symmetric QMS with Single Bohr Frequency}\label{section:abstract}
	Following \cite{CM1}\cite{CM2}, we consider a Lindbladian associated with a single Bohr frequency in this section.
	\begin{align}
		\begin{split}
			&\mathcal{L}^\beta:\mathcal{M}\rightarrow\mathcal{M}
			\\
			&\mathcal{L}^\beta x := e^{\beta/2}L_ax + 
			e^{-\beta/2}L_{a^*}x
		\end{split}
	\end{align}
	where $L_ax := 2a^*xa- a^*a x - xa^*a$ and $\mathcal{M}$ is a finite von Neumann algebra. We assume $\beta > 0$. Fix a faithful state $\phi$ on $\mathcal{M}$ such that its density $d_\phi$ is an fixed point state under the predual $\mathcal{L}^\beta_*:L_1(\mathcal{M})\rightarrow L_1(\mathcal{M})$ (i.e. $\mathcal{L}^\beta_*(d_\phi) = 0$). The KMS-inner product on $\mathcal{M}$ with respect to $d_\phi$ is given by:
	\begin{equation}
		\langle x, y\rangle_\phi := \tau(d_\phi^{1/2}x^*d_\phi^{1/2}y)
	\end{equation}
	where $\tau$ is the canonical trace on $\mathcal{M}$. We assume $\mathcal{L}^\beta$ is self-adjoint with respect to this KMS-inner product:
	\begin{equation}
		\langle \mathcal{L}(x), y \rangle_\phi = \langle x, \mathcal{L}(y)\rangle_\phi
	\end{equation}

	We make the crucial assumption that the modular automorphism associated with $\phi$ satisfies the following equation
	\begin{equation}\label{equation:modular}
		\sigma^\phi_t(a) = d_\phi^{it}ad_\phi^{-it} = e^{i \beta t}a
 	\end{equation}
 	The modular automorphism assumption implies that the Lindbladian commutes with modular operator: $[\mathcal{L}^\beta, \Delta_\phi] = 0$. By a result from \cite{CM1}, this implies that the Lindbladian satisfies $\phi$-detailed balance condition. The state $\phi$ induces a nontrivial modular automorphism group. In the following, we will use a $2\times 2$-matrix trick to cancel this modular automorphism group. Using this technical tool, we will provide a direct estimate of the $\CLSI$ constant. 
 	
 	Before stating the next proposition, recall the definition of the difference quotient:
 	\begin{definition}
 		Let $f:\mathbb{R}_+\rightarrow\mathbb{R}$ be a continuously differentiable function, and let $\rho$ be a faithful state with spectral decomposition $\rho = \int\lambda dE_\lambda$. Then the difference quotient with respect to $\rho$ is given by the double operator integral
 		\begin{equation}
 			J^f_\rho(x) := \int\int \frac{f(\lambda) - f(\mu)}{\lambda - \mu}dE_\lambda x dE_\mu
 		\end{equation} 
 		where the ratio is defined to be $f'(\lambda)$ if $\lambda = \mu$.
 	
 		If the spectral decomposition is discreete i.e. $\rho = \sum_\lambda \lambda e_\lambda$, then 
 		\begin{equation}
 			J^f_\rho(x) = \sum_{\lambda, \mu}\frac{f(\lambda) - f(\mu)}{\lambda - \mu}e_\lambda x e_\mu
 		\end{equation}
 	\end{definition}
 	\begin{proposition}\label{prop:cancel modular}
 		Let $\tau$ be the canonical trace on the finite von Neumann algebra $\mathcal{M}$. Fix $\beta$ as above, consider the density matrix
 		\begin{equation}d_\beta :=
 			\begin{pmatrix}
 				\frac{e^{-\beta/2}}{e^{\beta / 2}+e^{-\beta / 2}}& 0 \\ 0 & \frac{e^{\beta / 2}}{e^{\beta / 2}+e^{-\beta / 2}}
 			\end{pmatrix}
 		\end{equation}
 		and consider the generator
 		\begin{equation}
 			A := e_{12} \otimes a \in \mathbb{M}_2\overline{\otimes}\mathcal{M}
 		\end{equation}
 		Then we have
 		i)\begin{equation}
 				[d_\beta\otimes d_\phi, A] = 0
 			\end{equation}
 		ii) For state $\rho_\beta := (d_\beta\otimes d_\phi)^{1/2}(1\otimes x)(d_\beta\otimes d_\phi)^{1/2}$, we have
 			\begin{equation}
 				EP_A(\rho_\beta) \leq C(\beta) EP_a(d_\phi^{1/2}xd_\phi^{1/2}), EP_{A^*}(\rho_\beta) \leq C(\beta)EP_{a^*}(d_\phi^{1/2}xd_\phi^{1/2})
 			\end{equation}
 			where $EP_A(\rho) := \langle[A, \rho], J^{\log}_{\rho}[A,\rho]\rangle$ is the entropy production of state $\rho$ under generator $A$.
 	\end{proposition}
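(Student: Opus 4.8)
The plan is to carry out two essentially independent computations, one algebraic and one analytic. For part (i), I would simply expand the commutator $[d_\beta\otimes d_\phi, A]$ with $A = e_{12}\otimes a$. Writing $d_\beta = \operatorname{diag}(p,q)$ with $p = \frac{e^{-\beta/2}}{e^{\beta/2}+e^{-\beta/2}}$ and $q = \frac{e^{\beta/2}}{e^{\beta/2}+e^{-\beta/2}}$, one has $d_\beta e_{12} = p\,e_{12}$ and $e_{12}d_\beta = q\,e_{12}$, so
\[
[d_\beta\otimes d_\phi, e_{12}\otimes a] = p\,e_{12}\otimes d_\phi a - q\,e_{12}\otimes a d_\phi .
\]
Now invoke the modular assumption \eqref{equation:modular}: $d_\phi^{it}a d_\phi^{-it} = e^{i\beta t}a$ extends analytically (since $\mathcal M$ is finite and $a$ is analytic for $\sigma^\phi$) to give $d_\phi a d_\phi^{-1} = e^{\beta}a$, i.e. $d_\phi a = e^{\beta} a d_\phi$. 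Substituting, the bracket equals $(p\,e^{\beta} - q)\,e_{12}\otimes a d_\phi$, and $p\,e^{\beta} = \frac{e^{\beta/2}}{e^{\beta/2}+e^{-\beta/2}} = q$, so the commutator vanishes. The point of the $2\times2$ trick is precisely that tensoring with $d_\beta$ converts the modular twisting of $a$ into an honest commutant relation, so $d_\beta\otimes d_\phi$ commutes with $A$.

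For part (ii), I would unwind the definition of entropy production. With $\rho_\beta = (d_\beta\otimes d_\phi)^{1/2}(1\otimes x)(d_\beta\otimes d_\phi)^{1/2}$ and using (i), the commutator $[A,\rho_\beta]$ can be pushed past $(d_\beta\otimes d_\phi)^{1/2}$ and computed explicitly. Using $e_{12}$'s action on the $2\times2$ block and the modular relation to move half-powers of $d_\phi$ through $a$, one finds $[A,\rho_\beta]$ supported in the $e_{11}$–$e_{22}$ off-diagonal blocks of $\mathbb M_2$, with entries proportional (with $\beta$-dependent scalars coming from ratios of $p,q$ and $e^{\pm\beta/2}$) to $[a, d_\phi^{1/2} x d_\phi^{1/2}]$-type terms — more precisely to $a\,d_\phi^{1/2}xd_\phi^{1/2}$ and $d_\phi^{1/2}xd_\phi^{1/2}\,a$ separately, since the left/right multiplications acquire different $\beta$-weights. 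Then $J^{\log}_{\rho_\beta}$ acts on such a block: because $\rho_\beta = (d_\beta\otimes d_\phi)^{1/2}(1\otimes x)(d_\beta\otimes d_\phi)^{1/2}$ is not itself a nice product one has to be a little careful, but the spectral data of $\rho_\beta$ on the relevant off-diagonal block reduces to that of $d_\phi^{1/2}xd_\phi^{1/2}$ rescaled by the constant eigenvalue ratios of $d_\beta$, so the double operator integral $J^{\log}$ picks up at worst a bounded multiplicative distortion. Pairing back up in the inner product, $EP_A(\rho_\beta)$ is bounded by a $\beta$-dependent constant times $EP_a(d_\phi^{1/2}xd_\phi^{1/2})$, and symmetrically for $A^* = e_{21}\otimes a^*$ against $EP_{a^*}$.

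The routine part is the bookkeeping of scalars $p, q, e^{\pm\beta/2}$; I would not grind through it here, only record that they are all strictly positive and bounded away from $0$ and $\infty$ for fixed $\beta$, which is what yields the finite constant $C(\beta)$. The main obstacle is the analytic step in (ii): controlling $J^{\log}_{\rho_\beta}$ in terms of $J^{\log}_{d_\phi^{1/2}xd_\phi^{1/2}}$, because $\rho_\beta$ mixes the fixed spectrum of $d_\beta$ with the variable spectrum of $x$ (via $d_\phi^{1/2}xd_\phi^{1/2}$) and the logarithm of a product is not additive. The clean way around this is to observe that on the off-diagonal $\mathbb M_2$-blocks where $[A,\rho_\beta]$ lives, $\rho_\beta$ restricts to a tensor of a \emph{scalar} (an eigenvalue of $d_\beta$) with $d_\phi^{1/2}xd_\phi^{1/2}$; since $\log$ of a scalar multiple only shifts by an additive constant and the difference quotient $\frac{f(\lambda)-f(\mu)}{\lambda-\mu}$ for $f=\log$ transforms controllably under $\lambda\mapsto c\lambda$, the double operator integral $J^{\log}_{\rho_\beta}$ on that block equals $c^{-1}J^{\log}_{d_\phi^{1/2}xd_\phi^{1/2}}$ up to the bounded factor, giving the claimed comparison. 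This reduces everything to the elementary scalar estimates already mentioned, completing the proof.
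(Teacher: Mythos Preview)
Your argument for part (i) is correct and is essentially the paper's computation written at the level of the commutator rather than the modular automorphism; there is no issue there.

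In part (ii) you have the right structure---compute $[A,\rho_\beta]$ on the $e_{12}$ off-diagonal block, express $J^{\log}_{\rho_\beta}$ there via the spectral decomposition of $\rho_\beta$, and compare with $J^{\log}_{d_\phi^{1/2}xd_\phi^{1/2}}$---but there is a genuine gap at the step you describe as ``transforms controllably under $\lambda\mapsto c\lambda$.'' The point is that $\rho_\beta$ has eigenvalues $p\lambda$ on the $e_{11}$ block and $q\mu$ on the $e_{22}$ block, with $p = \tfrac{e^{-\beta/2}}{e^{\beta/2}+e^{-\beta/2}}$ and $q = \tfrac{e^{\beta/2}}{e^{\beta/2}+e^{-\beta/2}}$ \emph{different}. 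So on the off-diagonal block $J^{\log}_{\rho_\beta}$ acts with kernel
\[
\frac{\log(p\lambda)-\log(q\mu)}{p\lambda-q\mu},
\]
and you need to compare this to $\frac{\log\lambda-\log\mu}{\lambda-\mu}$. Your observation that $\log(c\lambda)=\log c+\log\lambda$ handles the numerator, but the denominator $p\lambda-q\mu$ is \emph{not} a scalar multiple of $\lambda-\mu$ when $p\neq q$, so the ratio of the two kernels is not a fixed constant; it is a nontrivial function of $(\lambda,\mu)$ whose boundedness is exactly the content of the estimate.

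The paper closes this gap with the scalar inequality
\[
\frac{\log\lambda-\log\mu}{\lambda-\mu}\;\ge\;e^{-\beta/2}\,\frac{\log(\lambda e^{-\beta/2})-\log(\mu e^{\beta/2})}{\lambda e^{-\beta/2}-\mu e^{\beta/2}},
\]
proved via the integral representation $\frac{x-y}{\log x-\log y}=\int_0^1 x^t y^{1-t}\,dt$ of the logarithmic mean. This one-line identity makes the asymmetric scaling harmless: inserting $x=\lambda e^{-\beta/2}$, $y=\mu e^{\beta/2}$ gives $\int_0^1 e^{\beta/2-t\beta}\lambda^t\mu^{1-t}\,dt\ge e^{-\beta/2}\int_0^1\lambda^t\mu^{1-t}\,dt$, which is the required bound. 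Without this (or an equivalent), your ``bounded multiplicative distortion'' claim is unproven; once you add it, the rest of your outline matches the paper's computation and yields $C(\beta)=\tfrac{e^{\beta}}{2(e^{\beta/2}+e^{-\beta/2})}$.
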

 	\begin{proof}
 		i) For the first claim, from direct calculation we have
 			\begin{align}
 				\begin{split}
 					(d_\beta^{it}\otimes d_\phi^{it})(e_{12}\otimes a)(d_\beta^{-it}\otimes d_\phi^{-it}) &= (d_\beta^{it}e_{12}d_\beta^{-it})\otimes(d_\phi^{it}a d_\phi^{-it})
 					\\
 					&=e^{-i\beta t}e_{12} \otimes e^{i\beta t}a = e_{12}\otimes a = A
 				\end{split}
 			\end{align}
 			Therefore $[d_\beta\otimes d_\phi, A] = 0$. 
 		\\
 		ii) For the second claim, we need the following technical result:
 		\begin{claim}
 			\begin{equation}
 				\frac{\log\lambda - \log\mu}{\lambda - \mu}\geq e^{-\beta/2}\frac{\log(\lambda e^{-\beta/2}) - \log(\mu e^{\beta/2})}{\lambda e^{-\beta/2} - \mu e^{\beta/2}}
 			\end{equation}
 		\end{claim}
 			We postpone the proof of this inequality and proceed with the main proof.
 			
 			Let $d_\phi^{1/2}xd_\phi^{1/2} = \sum_\lambda \lambda e_\lambda$ be the spectral decomposition of the state. We have 
 			\begin{equation*}
 				(d_\beta^{1/2}\otimes d_\phi^{1/2})(e_{12}\otimes[a,x])(d_\beta^{1/2}\otimes d_\phi^{1/2}) =\frac{1}{e^{\frac{\beta}{2}} + e^{-\frac{\beta}{2}}}e_{12}\otimes(d_\phi^{1/2}[a,x]d_\phi^{1/2})\pl.
 			\end{equation*}
 			And the spectral decomposition of $\rho_\beta$ is given by
 			\begin{equation*}
 				\rho_\beta = \sum_\lambda \frac{\lambda e^{-\beta / 2}}{e^{\beta / 2} + e^{-\beta / 2}}e_{11}\otimes e_\lambda + \sum_\mu \frac{\mu e^{\beta / 2}}{e^{\beta / 2} + e^{-\beta / 2}}e_{22}\otimes e_\mu \pl.
 			\end{equation*}
 			Combining these two calculations, we have
 			\begin{align*}
 				\begin{split}
 					J^{\log}_{\rho_\beta}[A, \rho_\beta] &= J^{\log}_{\rho_\beta}\big((d_\beta^{1/2}\otimes d_\phi^{1/2})(e_{12}\otimes [a, x])(d_\beta^{1/2}\otimes d_\phi^{1/2})\big)
 					\\
 					&=\frac{1}{e^{\beta / 2}+ e^{-\beta/2}}\sum_{\lambda, \mu}\frac{\log(\frac{\lambda e^{-\beta / 2}}{e^{\beta / 2}+e^{-\beta/2}}) - \log(\frac{\mu e^{\beta/2}}{e^{\beta / 2}+e^{-\beta/2}})}{\frac{\lambda e^{-\beta / 2}}{e^{\beta / 2}+e^{-\beta/2}} - \frac{\mu e^{\beta/2}}{e^{\beta / 2}+e^{-\beta/2}}}(e_{12}\otimes e_\lambda d_\phi^{1/2}[a,x]d^{1/2}_\phi e_\mu)
 					\\
 					&=\sum_{\lambda,\mu}\frac{\log(\lambda e^{-\beta /2}) - \log(\mu e^{\beta / 2})}{\lambda e^{-\beta / 2} - \mu e^{\beta /2}}(e_{12}\otimes e_\lambda d_\phi^{1/2}[a,x]d_\phi^{1/2}) \pl.
 				\end{split}
 			\end{align*}
 			Hence the entropy production with respect to the trace on $\mathbb{M}_2\otimes\mathcal{M}$ is given by
 			\begingroup
 				\allowdisplaybreaks
 				\begin{align*}
 					E&P_A(\rho_\beta) \lel \langle[A,\rho_\beta], J^{\log}_{\rho_\beta}[A,\rho_\beta] \rangle 
 					\\
 					&=\langle\frac{1}{e^{\beta / 2}+e^{-\beta/2}}e_{12}\otimes d_\phi^{1/2}[a,x]d^{1/2}_\phi, e^{\beta/2}\sum_{\lambda,\mu}\frac{\log(\lambda e^{-\beta / 2}) - \log(\mu e^{\beta/2})}{\lambda e^{-\beta /2}- \mu e^{\beta/2}}e_{12}\otimes e_\lambda d^{1/2}_\phi[a,x]d^{1/2}_\phi e_\mu\rangle
 					\\
 					&=\frac{e^{\beta/2}}{2(e^{\beta / 2}+e^{-\beta /2})}\sum_{\lambda, \mu}\frac{\log(\lambda e^{-\beta / 2}) -\log(\mu e^{\beta/2})}{\lambda e^{-\beta / 2} - \mu e^{\beta/2}}\tau(d_\phi^{1/2}[a,x]^*d_\phi^{1/2}e_\lambda d_\phi^{1/2}[a,x]d^{1/2}_\phi e_\mu)
 					\\
 					&\leq \frac{e^{\beta}}{2(e^{\beta / 2}+e^{-\beta/2})} \sum_{\lambda, \mu}\frac{\log\lambda - \log\mu}{\lambda - \mu}\tau(d_\phi^{1/2}[a,x]^*d_\phi^{1/2}e_\lambda d_\phi^{1/2}[a,x]d^{1/2}_\phi e_\mu)
 					\\
 					&=\frac{e^{\beta}}{2(e^{\beta /2}+e^{-\beta/2})}\langle d^{1/2}_\phi[a,x]d^{1/2}_\phi, J_\phi^{\log} d^{1/2}_\phi [a,x]d^{1/2}_\phi\rangle = \frac{e^{\beta}}{2(e^{\beta/2} + e^{-\beta/2})}EP_a(d_\phi^{1/2}xd_\phi^{1/2})
 			\end{align*}
 			\endgroup
 			Analogously, the same calculation works for $A^*$. Hence we obtain the desired upper bound where the constant $C(\beta)$ can be taken to be  $\frac{e^{\beta}}{2(1e^{\beta/2}+e^{-\beta/2})}$.
 	\end{proof}
 	\begin{lemma}
 		Let $\beta > 0$, then we have
 		\begin{eqnarray}
 			\frac{\log\lambda - \log\mu}{\lambda - \mu} \geq e^{-\beta/2}\frac{\log(\lambda e^{-\beta/2}) - \log(\mu e^{\beta/2})}{\lambda e^{-\beta/2} - \mu e^{\beta/2}}
 		\end{eqnarray} 
 	\end{lemma}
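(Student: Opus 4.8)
The plan is to recognise both sides of the inequality as (scaled) reciprocals of \emph{logarithmic means} and to exploit two elementary properties of that mean. Recall that for $\lambda,\mu>0$ the logarithmic mean is
\[
  L(\lambda,\mu):=\frac{\lambda-\mu}{\log\lambda-\log\mu}\qquad(\lambda\neq\mu),\qquad L(\lambda,\lambda):=\lambda ,
\]
and that it admits the integral representation $L(\lambda,\mu)=\int_0^1\lambda^{t}\mu^{1-t}\,dt$. From this representation two facts are immediate: $L$ is positively homogeneous of degree one, $L(c\lambda,c\mu)=c\,L(\lambda,\mu)$ for $c>0$; and $L$ is nondecreasing in each argument, since the integrand $t\mapsto \lambda^{t}\mu^{1-t}$ increases pointwise on $[0,1]$ when $\mu$ is increased (the exponent $1-t$ being nonnegative), and symmetrically in $\lambda$.

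First I would rewrite the claim. The left-hand side of the asserted inequality is exactly $1/L(\lambda,\mu)$. On the right-hand side, using $\log(\lambda e^{-\beta/2})-\log(\mu e^{\beta/2})=\log\lambda-\log\mu-\beta$ one sees that the difference quotient there equals $1/L(\lambda e^{-\beta/2},\mu e^{\beta/2})$, so the right-hand side is $e^{-\beta/2}/L(\lambda e^{-\beta/2},\mu e^{\beta/2})$. Since both logarithmic means are strictly positive, the lemma is equivalent to
\[
  L\bigl(\lambda e^{-\beta/2},\,\mu e^{\beta/2}\bigr)\ \geq\ e^{-\beta/2}\,L(\lambda,\mu).
\]
Now the argument finishes in two steps. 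Because $\beta>0$ we have $e^{\beta/2}\geq e^{-\beta/2}$, hence $\mu e^{\beta/2}\geq\mu e^{-\beta/2}$, and monotonicity of $L$ in its second argument gives $L(\lambda e^{-\beta/2},\mu e^{\beta/2})\geq L(\lambda e^{-\beta/2},\mu e^{-\beta/2})$; then homogeneity with $c=e^{-\beta/2}$ yields $L(\lambda e^{-\beta/2},\mu e^{-\beta/2})=e^{-\beta/2}L(\lambda,\mu)$. This establishes the displayed inequality, hence the lemma. No separate treatment of the coincidence cases $\lambda=\mu$ or $\lambda e^{-\beta/2}=\mu e^{\beta/2}$ is needed, as $L$ is continuous on $(0,\infty)^2$ and both properties used hold there verbatim.

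The only real content is spotting the logarithmic-mean structure after absorbing the shift $-\beta$ into $\log\lambda-\log\mu$; once the integral representation is written down, homogeneity and monotonicity do everything, so I do not expect a genuine obstacle. If one wished to avoid naming the logarithmic mean, the same inequality can be proved by hand: normalise $\mu=1$ using homogeneity and check the resulting one-variable inequality in $\lambda$ by differentiating in $\beta$ (it reduces at $\lambda=1$ to $e^{\beta}-1\geq\beta$), but this route is strictly more laborious and I would not take it.
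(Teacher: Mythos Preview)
Your proof is correct and essentially coincides with the paper's. Both arguments reduce to the integral representation $L(a,b)=\int_0^1 a^{t}b^{1-t}\,dt$ of the logarithmic mean and the pointwise bound $(\lambda e^{-\beta/2})^{t}(\mu e^{\beta/2})^{1-t}=e^{\beta/2-t\beta}\lambda^{t}\mu^{1-t}\geq e^{-\beta/2}\lambda^{t}\mu^{1-t}$ for $t\in[0,1]$; you have simply packaged the two halves of this bound as ``monotonicity in the second argument'' and ``degree-one homogeneity'', whereas the paper carries out the single-line computation directly.
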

	\begin{proof}
		Consider the following integral representation:\begin{align}
			\begin{split}
				\frac{\lambda e^{-\beta/2} - \mu e^{\beta/2}}{\log(\lambda e^{-\beta/2}) - \log(\mu e^{\beta/2})}  &= \int_0^1 (\lambda e^{-\beta/2})^t(\mu e^{\beta/2})^{1-t} dt
				= \int_0^1 e^{\beta/2 - t\beta}\lambda^t\mu^{1-t}dt
				\\
				&\geq e^{-\beta/2}\int_0^1  \lambda^t\mu^{1-t}dt
				=e^{-\beta/2}\frac{\lambda - \mu}{\log\lambda - \log\mu}
			\end{split}
		\end{align}
		Therefore we have $\frac{\log\lambda - \log\mu}{\lambda - \mu} \geq e^{-\beta/2}\frac{\log(\lambda e^{-\beta/2}) - \log(\mu e^{\beta/2})}{\lambda e^{-\beta/2} - \mu e^{\beta/2}}$
	\end{proof}
	
	The previous proposition relates the entropy production of the augmented system (i.e. $\mathbb{M}_2\overline{\otimes}\mathcal{M}$ with generator $A$ and state $\rho_\beta$) with the entropy production of the original system (i.e. $\mathcal{M}$ with generator $a$ and state $d_\phi^{1/2}xd_\phi^{1/2}$). 
	
	For the next proposition, we first make the following obersation. Let $a = u|a|$ be the polar decomposition of $a$, then $a^* = |a|u^*$. Moreover, in $\mathbb{M}_2\overline{\otimes}\mathcal{M}$, we have
	\begin{equation}\label{equation:polardecomposition1}
		A = \begin{pmatrix}
			0 & a \\ 0&0
		\end{pmatrix} = \begin{pmatrix}
		u&0\\0&1
	\end{pmatrix}\begin{pmatrix}
	0 & |a| \\ 0&0
\end{pmatrix}\begin{pmatrix}
u^*&0 \\ 0&1
\end{pmatrix}\pl.
	\end{equation}Similarly, we have \begin{equation}\label{equation: polardecomposition2}
	A^* = \begin{pmatrix}
		0 & 0 \\ a^* & 0
	\end{pmatrix} = \begin{pmatrix}
	u&0\\0&1
\end{pmatrix}\begin{pmatrix}
0&0\\|a|&0
\end{pmatrix}\begin{pmatrix}
u^*&0 \\ 0&1
\end{pmatrix}\pl.
\end{equation}In the following, we shall denote $U:=\begin{pmatrix}
u&0 \\ 0&1
\end{pmatrix}$. A final observation is that:
\begin{equation*}
	H := A + A^* = U\begin{pmatrix}
		0 & |a| \\ |a| & 0
	\end{pmatrix}U^* = U(X\otimes |a|)U^*
\end{equation*}where $X\in\mathbb{M}_2$ is the Pauli matrix $\begin{pmatrix}
0&1\\1&0
\end{pmatrix}$. Let $\sigma(|a|)$ be the spectrum of $|a|$, then $\{|\lambda \pm \mu|: \lambda\neq \mu, \lambda, \mu\in \sigma(|a|)\}$ is the set of spectral difference of $X\otimes |a|$. 
	\begin{proposition}\label{proposition:uniformgap}
		Let $\delta = \min\{|\lambda \pm \mu|: \lambda\neq\mu, \lambda, \mu \in \sigma(|a|)\}$. Let $C^*(|a|)$ be the $C^*$-algebra generated by $|a|$ and let $E_a:\mathcal{M}\rightarrow C^*(|a|)'$ be the conditional expectation onto its commutant. Similarly define $C^*$-algebra $C^*(|a^*|)$ with the conditional expectation $E_{a^*}$.  Then for any state $\rho = d_\phi^{1/2}xd_\phi^{1/2}$ on $\mathcal{M}$, we have
		 \begin{equation}
			D(\rho| E^*_a\rho) \leq \frac{C(\beta)}{\delta^2}(EP_a(\rho) + EP_{a^*}(\rho))
		\end{equation}
		where the constant $C(\beta)$ only depends on $\beta$. The same holds for $a^*$:
		\begin{equation}
			D(\rho|E^*_{a^*}\rho) \leq \frac{C(\beta)}{\delta^2}(EP_a(\rho) + EP_{a^*}(\rho))
		\end{equation}. 
	\end{proposition}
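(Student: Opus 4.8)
The plan is to reduce the claimed entropy inequality on $\mathcal{M}$ to a self-adjoint CLSI estimate on the augmented algebra $\mathbb{M}_2\overline{\otimes}\mathcal{M}$, where the modular automorphism has been cancelled by Proposition \ref{prop:cancel modular}. First I would pass from the state $\rho = d_\phi^{1/2}xd_\phi^{1/2}$ on $\mathcal{M}$ to $\rho_\beta = (d_\beta\otimes d_\phi)^{1/2}(1\otimes x)(d_\beta\otimes d_\phi)^{1/2}$ on $\mathbb{M}_2\overline{\otimes}\mathcal{M}$. By part i) of Proposition \ref{prop:cancel modular}, $\rho_\beta$ commutes with $A = e_{12}\otimes a$, so $[d_\beta\otimes d_\phi, A] = [d_\beta\otimes d_\phi, A^*] = 0$, and hence $\rho_\beta$ commutes with $d_\beta\otimes d_\phi$; this means the KMS-inner product at $\rho_\beta$ restricted to the relevant operators is effectively the tracial one, so the entropy production $EP_H(\rho_\beta)$ for the self-adjoint generator $H = A + A^*$ splits as $EP_A(\rho_\beta) + EP_{A^*}(\rho_\beta)$ (the cross terms vanish by the $e_{12}$ versus $e_{21}$ off-diagonal structure).

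Next I would bring in the unitary conjugation $H = U(X\otimes |a|)U^*$ observed just before the statement, where $U = \operatorname{diag}(u,1)$ and $X$ is the Pauli flip. Conjugating by $U$ is a trace-preserving $*$-isomorphism intertwining $\mathcal{L}_H$ with $\mathcal{L}_{X\otimes|a|}$, and $X\otimes|a|$ is a self-adjoint operator whose spectrum consists of $\{\pm\mu : \mu\in\sigma(|a|)\}$, so its smallest spectral gap is exactly $\delta = \min\{|\lambda\pm\mu| : \lambda\neq\mu,\ \lambda,\mu\in\sigma(|a|)\}$. Now I apply the self-adjoint CLSI result (Theorem \ref{theorem:s.a.CLSI} and its von Neumann algebra corollary) to $\mathcal{L}_{X\otimes|a|}$: its geometric Ricci curvature is bounded below by $0$ and its complete return time is $O(1/\delta^2)$, giving $\CLSI(\mathcal{L}_{X\otimes|a|}) \gtrsim \delta^2$, hence the entropy inequality
\begin{equation*}
	D(\rho_\beta \mid E^*_{fix}\rho_\beta) \leq \frac{C}{\delta^2}\, EP_H(\rho_\beta) = \frac{C}{\delta^2}\big(EP_A(\rho_\beta) + EP_{A^*}(\rho_\beta)\big)\pl.
\end{equation*}

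Then I chain this with part ii) of Proposition \ref{prop:cancel modular}, which bounds $EP_A(\rho_\beta) \leq C(\beta)\, EP_a(\rho)$ and $EP_{A^*}(\rho_\beta) \leq C(\beta)\, EP_{a^*}(\rho)$, to obtain the right-hand side $\frac{C(\beta)}{\delta^2}(EP_a(\rho) + EP_{a^*}(\rho))$ as claimed. The remaining point is to identify the left-hand side: I must check that $D(\rho_\beta \mid E^*_{fix}\rho_\beta)$ dominates (up to a $\beta$-dependent constant) $D(\rho \mid E^*_a\rho)$, where $E_a$ is the conditional expectation onto $C^*(|a|)'$. The fixed-point algebra of $\mathcal{L}_H = \mathcal{L}_{U(X\otimes|a|)U^*}$ is $U(\mathbb{M}_2\otimes C^*(|a|))'U^*$ by Lemma \ref{lemma:schur} applied in the conjugated picture; tracing this through $U$ and restricting to the $e_{11}$-corner identified with $\mathcal{M}$ recovers $C^*(|a|)'$, and because $\rho_\beta$ is block-diagonal with the $e_{11}$-block proportional to $d_\phi^{1/2}xd_\phi^{1/2}e^{-\beta/2}$ — i.e. a fixed multiple of $\rho$ — the relative entropy $D(\rho_\beta\mid E^*_{fix}\rho_\beta)$ is, after accounting for the normalization weights $\frac{e^{\mp\beta/2}}{e^{\beta/2}+e^{-\beta/2}}$ on the two blocks, comparable to $D(\rho\mid E_a^*\rho)$ with constants depending only on $\beta$. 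The statement for $a^*$ is symmetric: replace $A$ by $A^* = e_{21}\otimes a^*$ (equivalently swap the two diagonal blocks), and the identical argument yields the bound with $E_{a^*}$ in place of $E_a$.

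The main obstacle I anticipate is precisely this last identification step: carefully matching the fixed-point algebra of the augmented self-adjoint Lindbladian with the commutant $C^*(|a|)'$ and tracking how the block-diagonal weights in $\rho_\beta$ affect the relative entropy, so that the constant on the left truly depends only on $\beta$ and not on the spectrum of $|a|$ or the dimension. The Ricci-curvature input and the return-time computation are then routine applications of the already-established machinery of Section \ref{section:single}.
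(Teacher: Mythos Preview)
Your proposal is correct and follows essentially the same route as the paper: lift to $\mathbb{M}_2\overline{\otimes}\mathcal{M}$, apply the self-adjoint CLSI of Theorem~\ref{theorem:s.a.CLSI} to $X\otimes|a|$ via the conjugation $H=U(X\otimes|a|)U^*$, bound $EP_A(\rho_\beta)+EP_{A^*}(\rho_\beta)$ by $C(\beta)(EP_a(\rho)+EP_{a^*}(\rho))$ using Proposition~\ref{prop:cancel modular}, and then read off the relative entropy on the $\mathcal{M}$-level from the block-diagonal structure of $\rho_\beta$. Two small remarks: your observation that the cross terms in $EP_H(\rho_\beta)=EP_A(\rho_\beta)+EP_{A^*}(\rho_\beta)$ vanish exactly is sharper than the paper's triangle-inequality bound (which costs a harmless factor of~$2$); and in the final identification it is the $e_{22}$-block of $U^*\rho_\beta U$ that yields $D(\rho\mid E_a^*\rho)$ directly, while the $e_{11}$-block carries the $u$-conjugate $D(u^*\rho u\mid E_{a^*}^*u^*\rho u)$---exactly the point you flagged as needing care.
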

	\begin{proof}
		Let $N := C^*(X\otimes |a|)$ be the $C^*$-subalgebra of $\mathbb{M}_2\overline{\otimes}\mathcal{M}$ generated by $X\otimes |a|$, and let $E_{N'}$ be the conditional expectation onto the commutant $N'$. Since $X\otimes |a|$ is a self-adjoint operator with discrete spectrum
		\begin{equation*}
			\sigma(X\otimes |a|) = \{|\lambda\pm\mu|:\lambda, \mu \in \sigma(|a|)\}
		\end{equation*}
		Theorem \ref{theorem:s.a.CLSI} shows that there exists an absolute constant $C >0$ such that for any state $\widetilde{\rho}\in S_1(\mathbb{M}_2\overline{\otimes}\mathcal{M})$
		\begin{equation}
			D(\widetilde{\rho}| E^*_{N'}\widetilde{\rho}) \leq \frac{C}{\delta^2} EP_{X\otimes |a|}(\widetilde{\rho})
		\end{equation} 
		where $EP_{X\otimes|a|}(\widetilde{\rho}) = \langle[X\otimes|a|, \widetilde{\rho}], J^{\log}_{\widetilde{\rho}}[X\otimes|a|, \widetilde{\rho}]\rangle$ is the entropy production with generator $X\otimes|a|$. By the triangle inequality
		\begin{align*}
			\begin{split}
				EP_{X\otimes|a|}(\widetilde{\rho}) &= \langle[e_{12}\otimes|a| + e_{21}\otimes |a|, \widetilde{\rho}], J^{\log}_{\widetilde{\rho}}[e_{12}\otimes|a| + e_{21}\otimes|a|, \widetilde{\rho}]\rangle
				\\&\leq 2(\langle[e_{12}\otimes|a|, \widetilde{\rho}], J^{\log}_{\widetilde{\rho}}[e_{12}\otimes|a|, \widetilde{\rho}]\rangle + \langle[e_{21}\otimes|a|, \widetilde{\rho}], J^{\log}_{\widetilde{\rho}}[e_{21}\otimes|a|, \widetilde{\rho}]\rangle)
			\end{split}
		\end{align*} 
	where the $\langle,\rangle$ here is the tracial inner product. 
	
	Using the notation introduced immediately before the propostion, we apply the inequality for the state $Ad_{U^*}(\rho_\beta)$ where \begin{equation*}
		\rho_\beta = (d_\beta\otimes d_\phi)^{1/2}(1\otimes x)(d_\beta\otimes d_\phi)^{1/2}
	\end{equation*} and
	\begin{equation*}
		d_\beta := \begin{pmatrix}
			e^{-\beta/2}/ (e^{\beta/2} + e^{-\beta/2}) & 0 \\ 0 & e^{\beta/2} / (e^{\beta/2} + e^{-\beta/2})
		\end{pmatrix}
	\end{equation*}
	 Then we have
	\begin{align}
		\begin{split}
			EP_{X\otimes |a|}(Ad_{U^*}(\rho_\beta))&\leq 2(\langle[e_{12}\otimes|a|, Ad_{U^*}(\rho_\beta)], J^{\log}_{Ad_{U^*}(\rho_\beta)}[e_{12}\otimes|a|, Ad_{U^*}(\rho_\beta)] \rangle 
			\\
			&+ \langle[e_{21}\otimes|a|, Ad_{U^*}(\rho_\beta)], J^{\log}_{Ad_{U^*}(\rho_\beta)}[e_{21}\otimes|a|, Ad_{U^*}(\rho_\beta)]\rangle)
			\\
			&=2(\langle [A, \rho_\beta], J^{\log}_{\rho_\beta}[A, \rho_\beta] \rangle + \langle[A^*, \rho_\beta], J^{\log}_{\rho_\beta}[A^*, \rho_\beta]\rangle)
			\\
			&=2(EP_A(\rho_\beta) + EP_{A^*}(\rho_\beta))
			\\
			&\leq C(\beta)(EP_a(\rho) + EP_{a^*}(\rho))
		\end{split}
	\end{align}
	where the first equality follows from the (modified-)polar decomposition equations \ref{equation:polardecomposition1}\ref{equation: polardecomposition2} and the last inequality follows from proposition \ref{prop:cancel modular}. The constant $C(\beta)$ depends only on $\beta$. This inequality provides an upper bound on the entropy production term.
	
	On the other hand, since adjoint action $Ad_{U^*}$ preserves the diagonal subalgebra, we have
	\begin{equation*}
		Ad_{U^*}(N') \cap \ell^2_\infty(\mathcal{M}) = \{\begin{pmatrix}
			uxu^* & 0 \\ 0 & y
		\end{pmatrix}: x,y\in C^*(|a|)'\}
	\end{equation*}
	In addition we deduce from the diagonal structure:
	\begin{align}
		\begin{split}
			D(Ad_{U^*}(\rho_\beta)| E^*_{Ad_{U^*}(N')}Ad_{U^*}(\rho_\beta)) &= \frac{e^{\beta}}{2(1+e^{\beta})}D(u^*\rho u | E^*_{a^*}u^*\rho u) 
			+ \frac{1}{2(1+e^{\beta})}D(\rho | E^*_a\rho)
			\\
			&\geq \frac{1}{2(1+e^\beta)}D(\rho | E^*_a\rho)
		\end{split}
	\end{align}
	Combining these two inequalities, we find a constant $C(\beta)$ such that
	\begin{equation}
		D(\rho|E^*_a\rho) \leq \frac{C(\beta)}{\delta^2}(EP_a(\rho) + EP_{a^*}(\rho))
	\end{equation}
	Apply the same argument to $a^*$, we have
	\begin{equation}
		D(\rho|E^*_{a^*}\rho) \leq \frac{C(\beta)}{\delta^2}(EP_a(\rho) + EP_{a^*}(\rho))
	\end{equation}
	\end{proof}
	\begin{corollary}\label{corollary:intermediate}
		Using the same notation as above, suppose $\Omega := C^*(|a|)' \cap C^*(|a^*|)'$ and the conditional expectations form a commuting square, i.e.
		\begin{equation}
			E_{\Omega} = E_a E_{a^*} = E_{a^*} E_a
		\end{equation}
		Then there exists a constant $C(\beta)$ depending only on $\beta$ such that
		\begin{equation}
			D(\rho | E^*_\Omega\rho)\leq \frac{C(\beta)}{\delta^2}(EP_a(\rho) + EP_{a^*}(\rho))
		\end{equation}
	\end{corollary}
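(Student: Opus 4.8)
\noindent\emph{Proof proposal.}
The plan is to reduce everything to Proposition~\ref{proposition:uniformgap} by decomposing $D(\rho\,|\,E^*_\Omega\rho)$ along the intermediate inclusion $\Omega\subseteq C^*(|a|)'$. The algebras $C^*(|a|)'$, $C^*(|a^*|)'$ and $\Omega=C^*(|a|)'\cap C^*(|a^*|)'$ are von Neumann subalgebras of the finite von Neumann algebra $\mathcal{M}$, hence carry normal, trace-preserving conditional expectations, and since $\Omega\subseteq C^*(|a|)'$ the tower property $E_\Omega=\big(E_\Omega|_{C^*(|a|)'}\big)\circ E_a$ holds. The first step is the Pythagorean identity
\begin{equation*}
	D(\rho\,|\,E^*_\Omega\rho)=D(\rho\,|\,E^*_a\rho)+D(E^*_a\rho\,|\,E^*_\Omega\rho),
\end{equation*}
valid for any density $\rho$. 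To see it, write $D(\rho\,|\,E^*_{\mathcal{K}}\rho)=\tau(\rho\log\rho)-\tau(E^*_{\mathcal{K}}\rho\log E^*_{\mathcal{K}}\rho)$, using that $\log E^*_{\mathcal{K}}\rho\in\mathcal{K}$ and $\tau(\rho Y)=\tau(E^*_{\mathcal{K}}\rho\cdot Y)$ for $Y\in\mathcal{K}$ (trace-preservation); applying this with $\mathcal{K}=\Omega$ and with $\mathcal{K}=C^*(|a|)'$, and, in the term $D(E^*_a\rho\,|\,E^*_\Omega\rho)$, to $E^*_a\rho$ in place of $\rho$ (legitimate because $E^*_\Omega(E^*_a\rho)=E^*_\Omega\rho$ by the tower property), the intermediate contribution $\tau(E^*_a\rho\log E^*_a\rho)$ cancels.

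For the first summand, Proposition~\ref{proposition:uniformgap} gives at once $D(\rho\,|\,E^*_a\rho)\le \tfrac{C(\beta)}{\delta^2}\big(EP_a(\rho)+EP_{a^*}(\rho)\big)$. For the second I would use the commuting-square hypothesis $E_\Omega=E_{a^*}E_a$: on densities this reads $E^*_\Omega\rho=E^*_a\big(E^*_{a^*}\rho\big)$, while $E^*_\Omega(E^*_a\rho)=E^*_\Omega\rho$. Hence
\begin{equation*}
	D\big(E^*_a\rho\,|\,E^*_\Omega\rho\big)=D\big(E^*_a\rho\,\big|\,E^*_a(E^*_{a^*}\rho)\big)\le D\big(\rho\,|\,E^*_{a^*}\rho\big),
\end{equation*}
the last inequality being the data-processing inequality for the completely positive, trace-preserving map $E^*_a$ on $L_1(\mathcal{M})$. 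Applying Proposition~\ref{proposition:uniformgap} once more, now to $D(\rho\,|\,E^*_{a^*}\rho)$, and adding the two estimates gives $D(\rho\,|\,E^*_\Omega\rho)\le\tfrac{2C(\beta)}{\delta^2}\big(EP_a(\rho)+EP_{a^*}(\rho)\big)$, which is the assertion after renaming $2C(\beta)$ as $C(\beta)$.

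There is no genuinely hard analytic step; the argument only recombines the Pythagorean identity, data processing, and Proposition~\ref{proposition:uniformgap}. The points deserving care are (i) justifying the Pythagorean identity at the level of possibly non-faithful densities --- proved first for faithful states and then extended by passing to support projections --- and (ii) noting that the commuting-square assumption is precisely what identifies $E^*_\Omega\rho$ with $E^*_aE^*_{a^*}\rho$, which is what makes the data-processing step pick up the generator $a^*$ and thus the term $EP_{a^*}(\rho)$. Existence and normality of the conditional expectations, the tower property, and complete positivity of $E^*_a$ are all standard in the finite von Neumann algebra setting.
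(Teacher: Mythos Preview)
Your proof is correct and arrives at exactly the same inequality the paper uses, namely $D(\rho\,|\,E^*_\Omega\rho)\le D(\rho\,|\,E^*_a\rho)+D(\rho\,|\,E^*_{a^*}\rho)$, followed by two applications of Proposition~\ref{proposition:uniformgap}. The only difference is presentational: the paper simply cites this inequality as the approximate tensorization result of Petz for commuting squares, whereas you derive it explicitly via the chain rule $D(\rho\,|\,E^*_\Omega\rho)=D(\rho\,|\,E^*_a\rho)+D(E^*_a\rho\,|\,E^*_\Omega\rho)$ together with data processing under $E^*_a$ (using $E^*_\Omega=E^*_aE^*_{a^*}$). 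Your route is slightly more self-contained; the paper's is shorter by outsourcing the work to the reference.
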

	\begin{proof}
		By a well-known approximate tensorization result \cite{Petz95}, if a pair of conditional expectations satisfies the commuting square condition, we have\begin{equation*}
			D(\rho | E^*_\Omega\rho) \leq D(\rho | E^*_a \rho) + D(\rho | E^*_{a^*}\rho)
		\end{equation*}
		Hence by proposition \ref{proposition:uniformgap}, we get
		\begin{equation}
			D(\rho | E^*_\Omega \rho) \leq \frac{C(\beta)}{\delta^2}(EP_a(\rho) + EP_{a^*}(\rho))
		\end{equation}
		for some constant $C(\beta)$.
	\end{proof}

	\section{\large Application to su(2)-Lindbladians}\label{section:su(2)}
	In this section, we apply results from last section to Lindbladians generated by tensor product representation of $\mathfrak{su}(2)$. Such Lindbladians can be used to describe dynamics of open physical systems with $\mathfrak{su}(2)$-symmetry (for example, a chain of qubits). The main goal of this section is to estimate the CLSI constant of these $\mathfrak{su}(2)$-Lindbladians.
	\subsection{Structure and Representation Theory of $\mathfrak{su}(2)$}
	
	Recall that the Lie algebra $\mathfrak{su}(2)$ is a 3-dimensional real Lie algebra. \cite{Fulton} Pauli matrices $\sigma_x = \begin{pmatrix}
		0&1 \\ 1 &0
	\end{pmatrix}, \sigma_y = \begin{pmatrix}
		0& -i \\ i& 0
	\end{pmatrix}, \sigma_z = \begin{pmatrix}
		1& 0 \\ 0& -1
	\end{pmatrix}$ form a basis of $\mathfrak{su}(2)$. For our purpose, we need to consider the complexification of $\mathfrak{su}(2)$. Recall the complexification of $\mathfrak{su}(2)$ is given by:
	\begin{align*}
		\mathfrak{su}(2) \otimes \mathbb{C} \cong \mathfrak{sl}(2,\mathbb {C}) := \{A=\begin{pmatrix}
			\alpha&\beta \\ \gamma&\delta
		\end{pmatrix}: tr(A) = 0, A\in M_2(\mathbb{C})\}
	\end{align*}
	As a complex Lie algebra, $dim_\mathbb{C} (\mathfrak{sl}(2,\mathbb{C})) = 3$. There exists another canonical basis for $
	\mathfrak{sl}(2,\mathbb{C})$ (Cartan-Weyl basis). The transformation from Pauli matrices to Cartan-Weyl basis is given by:
	\begin{align*}
		\begin{split}
			a = \frac{1}{2}(\sigma_x + i\sigma_y) = \begin{pmatrix}
				0&1 \\ 0&0
			\end{pmatrix},\pl
			a^* = \frac{1}{2}(\sigma_x -i\sigma_y) = \begin{pmatrix}
				0&0 \\ 1&0
			\end{pmatrix}
			,\pl
			h = \sigma_z = \begin{pmatrix}
				1&0\\0&-1
			\end{pmatrix}
		\end{split}\pl.
	\end{align*}
	In physics literature, $a$ is more commonly known as creation operator, $a^*$ the annhilation operators and $h$ the Hamiltonian. Cartan-Weyl basis satisfies the following relation:
	\begin{align}
		\begin{split}
			[h,a] = 2a,\pl
			[h,a^*] = -2a^*,\pl
			[a,a^*] = h 
		\end{split}
	\end{align}
	
	Recall the fundamental representation of $\mathfrak{su}(2)$ acts on the two-dimensional vector space $V:=\mathbb{C}^2$  by matrix multiplication. The fundamental representation is the irreducible representation of $\mathfrak{su}(2)$ of the smallest dimension. All irreducible representations of $\mathfrak{su}(2)$ are completely known. For each dimension $n \geq 1$, there exists a unique irreducible representation:
	\begin{equation}
		\pi_{(n)} : \mathfrak{su}(2) \rightarrow \mathbb{M}_{n+1}(\mathbb{C})
	\end{equation}
	Weight theory shows that there exists a unique highest weight vector $\ket{n,0}\in \mathbb{C}^{n+1}$ such that \begin{align}
		\begin{split}
			&a\ket{n,0} = 0 \\ &h\ket{n,0} = n\ket{n,0}
		\end{split}
	\end{align}
	From the highest weight vector, a complete basis for $\mathbb{C}^{n+1}$ is given by $\{\ket{n,j}\}_{0\leq j\leq n}$ such that
	\begin{align}
		\begin{split}
			&h\ket{n,j}= (n - 2j) \ket{n,j}
			\\
			&a^*\ket{n,j} = \sqrt{(j+1)(n-j)}\ket{n,j+1}
			\\
			&a\ket{n,j} = \sqrt{j(n-j+1)}\ket{n,j-1}
		\end{split}
	\end{align}
	Since the coefficients will be used repeatedly later, we denote $\alpha_{n,j} := \sqrt{j(n-j+1)}$.
	
	The $N$-fold tensor product $V^{\otimes N} := (\mathbb{C}^2)^{\otimes N}$ admits a tensor product representation:
	\begin{align}
		\begin{split}
			&\pi_N:=\sum_{1\leq j\leq N}\pi^j = \sum_{1\leq j \leq N}1\otimes...\otimes \pi_{(1)}\otimes...\otimes 1:\mathfrak{su}(2) \rightarrow \mathbb{M}_{2^N}(\mathbb{C})
			\\
			&\pi_N(g)(v_1\otimes...\otimes v_N) = \sum_{1\leq j\leq N}v_1\otimes...\otimes(\pi_{(1)}v_j)\otimes...\otimes v_N
		\end{split}
	\end{align}
	
	The tensor product representation $\pi_N = \pi_{(1)}^{\otimes N}$ is not irreducible. Decomposition of $\pi_N$ is given by the Schur-Weyl duality. The basic observation of the Schur-Weyl duality is that, in addition to left action of $\mathfrak{su}(2)$, the symmetric group $S_N$ acts on $V^{\otimes N}$ from the right: $\forall \sigma \in S_N$ and $v_1\otimes...\otimes v_N \in V^{\otimes N}$
	\begin{equation}
		(v_1\otimes...\otimes v_N)\cdot \sigma = v_{\sigma(1)}\otimes...\otimes v_{\sigma(N)}
	\end{equation}
	Since the left action by $\mathfrak{su}(2)$ commute with the right action by $S_N$, by double centralizer theorem $V^{\otimes N}$ has the following decomposition as a $\mathfrak{su}(2)-S_N$ bimodule:
	\begin{equation}
		V^{\otimes N} \cong \bigotimes_{\lambda \in \mathcal{P}(N)} V_\lambda \otimes W_\lambda
	\end{equation}
	where $\lambda := (\lambda_1\geq \lambda_2\geq ...\geq \lambda _l)$ is an ordered partition of $N$, $W_\lambda$ is the irreducible representation of $S_N$ labeled by the Young's diagram with $l$-rows where the $i$-th row has $\lambda_i$ boxes. In addition, $V_\lambda \cong Hom_{\mathbb{C}[S_N]}(W_\lambda, V^{\otimes N})$ is the $\mathbb{C}[S_N]$-module map from $W_\lambda$ to $V^{\otimes N}$.
	
	Under the left $\mathfrak{su}(2)$ action, $V_\lambda$ is an irreducible representation. It is a fact that $V_\lambda \neq 0$ if the Young diagram $\lambda$ has no more than $dim(V)$-rows. Hence for fundamental representation of $\mathfrak{su}(2)$, $V_\lambda \neq 0$ if $\lambda$ has at most two rows. 
	
	As $\mathfrak{su}(2)$-modules, we obtain an irreducible decomposition:
	\begin{equation}
		V^{\otimes N} \cong \bigoplus_{\lambda\in \mathcal{P}_2(N)} V_\lambda ^{\oplus dim(W_\lambda)}
	\end{equation}
	where $\mathcal{P}_2(N)$ is the set of paritions of $N$ with at most 2 parts. Note that the multiplicity of each irreducible component is given by the dimension of $W_\lambda$. It is well-known that the dimension of $W_\lambda$ is given by the hook length formula of Young diagrams associated with $\lambda$. 
	
	The Young diagram with only one row corresponds to the irreducible component with the largest dimension. We denote this subrepresentation by $V_{(N)}$ since it corresponds to the trivial partition of $N$. The following fact is well known \cite{Fulton}
	\begin{lemma}
		\begin{equation*}
			\dim(V_{(N)}) = N+1
		\end{equation*}
		In addition, let $\{e_0, e_1\}$ be a basis of the fundamental representation where $e_0$ is the highest weight vector and $e_1$ is the lowest weight vector, then the highest weight vector of $V_{(N)}$ is given by 
		\begin{equation*}
			\ket{N, 0} = e_0^{\otimes N}
		\end{equation*}
		Moreover, the weight space decomposition gives $V_{(N)}$ a basis of totally symmetric vectors $(v_k)_{0\leq k \leq N}$, where
		\begin{equation*}
			v_k := \sum_{A\subset [|N|], |A| = k}\frac{1}{\binom{N}{k}}v_A = \sum_{A\subset [|N|], |A| = k}\frac{1}{\binom{N}{k}}(\bigotimes_{j\in A} e_1)\otimes(\bigotimes_{j\notin A} e_0)
		\end{equation*}
		where $A$ is a cardinality $k$ subset of $[|N|] :=\{1,...,N\}$, and $v_A$ is a simple tensor product of $e_0, e_1$ where the $j$-th tensor factor is $e_1$ if and only if $j\in A$
	\end{lemma}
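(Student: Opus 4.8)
The plan is to identify $V_{(N)}$ with the symmetric subspace of $V^{\otimes N}$ and to read off all three assertions from that identification. In the Schur--Weyl decomposition the one-row partition $(N)$ labels the trivial representation $W_{(N)}$ of $S_N$, which is one-dimensional; hence $V_{(N)}\cong \mathrm{Hom}_{\mathbb{C}[S_N]}(W_{(N)},V^{\otimes N})$ is canonically the subspace $\mathrm{Sym}^N(V):=\{\xi\in V^{\otimes N}:\xi\cdot\sigma=\xi\text{ for all }\sigma\in S_N\}$ of symmetric tensors, on which $\mathfrak{su}(2)$ acts because the left $\mathfrak{su}(2)$-action commutes with the right $S_N$-action. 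It then suffices to prove (a) $\dim\mathrm{Sym}^N(V)=N+1$, (b) $e_0^{\otimes N}$ is a highest weight vector of weight $N$ inside $\mathrm{Sym}^N(V)$, and (c) $(v_k)_{0\le k\le N}$ is a basis of $\mathrm{Sym}^N(V)$.

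For (b) I would use that $\pi_N=\sum_{1\le j\le N}\pi^j$ acts by derivations: since $e_0$ is the highest weight vector of the fundamental representation it is annihilated by the raising operator, $\pi_{(1)}(a)e_0=0$, so $\pi_N(a)(e_0^{\otimes N})=\sum_j e_0\otimes\cdots\otimes(\pi_{(1)}(a)e_0)\otimes\cdots\otimes e_0=0$, and similarly $\pi_N(h)(e_0^{\otimes N})=N\,e_0^{\otimes N}$ because $\pi_{(1)}(h)e_0=e_0$. Hence $e_0^{\otimes N}$ is a nonzero highest weight vector of weight $N$, so by the classification of irreducible $\mathfrak{su}(2)$-representations recalled above the submodule it generates is isomorphic to $\pi_{(N)}$, of dimension $N+1$, and is spanned by $e_0^{\otimes N},\pi_N(a^*)(e_0^{\otimes N}),\dots,\pi_N(a^*)^N(e_0^{\otimes N})$. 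This submodule sits inside the invariant subspace $\mathrm{Sym}^N(V)$; since the symmetric tensors in the two letters $e_0,e_1$ have exactly the $N+1$ orbit-sum representatives obtained by symmetrizing $e_1^{\otimes k}\otimes e_0^{\otimes(N-k)}$ for $0\le k\le N$ --- which is (a) --- a dimension count forces $V_{(N)}=\mathrm{Sym}^N(V)$, giving $\dim V_{(N)}=N+1$ and identifying $e_0^{\otimes N}$ with $\ket{N,0}$.

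For (c), each $v_k$ is manifestly $S_N$-invariant, hence lies in $\mathrm{Sym}^N(V)=V_{(N)}$; moreover $\pi_{(1)}(h)e_0=e_0$ and $\pi_{(1)}(h)e_1=-e_1$ give $\pi_N(h)v_k=(N-2k)v_k$, so the $v_k$ are $\pi_N(h)$-eigenvectors with the pairwise distinct eigenvalues $N,N-2,\dots,-N$ and are therefore linearly independent; being $N+1$ in number they form a basis of the $(N+1)$-dimensional space $V_{(N)}$. A direct computation, consistent with the lowering relation $a\ket{n,j}=\alpha_{n,j}\ket{n,j-1}$, also shows $\pi_N(a^*)^k(e_0^{\otimes N})=\tfrac{N!}{(N-k)!}\,v_k$ --- the terms where $a^*$ hits a slot already carrying $e_1$ drop out, since $\pi_{(1)}(a^*)e_1=0$ --- which is nonzero for $0\le k\le N$ and vanishes at $k=N+1$, re-confirming that the $v_k$ span $V_{(N)}$ and that $\dim V_{(N)}=N+1$.

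I do not anticipate a genuine obstacle: this is textbook $\mathfrak{su}(2)$ highest weight theory specialized to $\mathrm{Sym}^N(\mathbb{C}^2)$. The only points requiring care are bookkeeping --- remembering that it is $a$, not $a^*$, that annihilates $e_0$ and raises weights, and making the Schur--Weyl identification of the one-row component $V_{(N)}$ with the symmetric subspace explicit --- after which each remaining step is a one-line calculation using the derivation property of $\pi_N$.
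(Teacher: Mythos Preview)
Your proof is correct and complete. The paper does not actually prove this lemma: it is stated as ``well known'' with a citation to Fulton's textbook and no argument is supplied. Your approach---identifying $V_{(N)}$ with $\mathrm{Sym}^N(V)$ via the one-dimensionality of $W_{(N)}$, verifying that $e_0^{\otimes N}$ is a highest weight vector of weight $N$ by the derivation property of $\pi_N$, and then counting dimensions and checking that the $v_k$ are $\pi_N(h)$-eigenvectors with distinct eigenvalues---is exactly the standard argument one would find in the cited reference, so there is nothing to compare. One tiny remark: your parenthetical ``consistent with the lowering relation $a\ket{n,j}=\alpha_{n,j}\ket{n,j-1}$'' is slightly misleading since you are applying $a^*$, not $a$; the relevant relation from the paper is $a^*\ket{n,j}=\sqrt{(j+1)(n-j)}\ket{n,j+1}$, which does indeed match your formula $\pi_N(a^*)^k(e_0^{\otimes N})=\tfrac{N!}{(N-k)!}v_k$ up to the normalization chosen for $v_k$.
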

	
	To apply results from last section, we need the following lemma.
	\begin{lemma}\label{lemma:commutant}
		Let $\mathcal{H}:=\bigoplus_{n = 1}^N \mathbb{C}^{n+1}$ and consider the direct sum representation $\pi:=\bigoplus_{n = 1}^N\pi_n$. Then we have:
		
		i) The commutant of $|\pi(a)|$ in $\mathbb{B}(\mathcal{H})$ is given by
		\begin{equation*}
			C^*(|\pi(a)|)' = span\{\ket{n,j}\bra{m,k}: \alpha_{n,j} = \alpha_{m,k}\}
		\end{equation*}
	
		ii) Let $E_{\pi(a)}$ be the conditional expectation onto $C^*(|\pi(a)|)'$ and $E_{\pi(a^*)}$ be the conditional expectation onto $C^*(|\pi(a^*)|)'$. Then 
		\begin{equation}
			E_{\pi(a)}E_{\pi(a^*)} = E_{\pi(a^*)}E_{\pi(a)}
		\end{equation}
	
		iii) The intersection of $C^*(|\pi(a)|)' \cap C^*(|\pi(a^*)|)'$ is contained in $\bigoplus_{n = 1}^N \ell_\infty^{n+1}$.
	\end{lemma}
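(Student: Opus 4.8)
The plan is to reduce all three statements to one structural fact: both $|\pi(a)|$ and $|\pi(a^*)|$ are \emph{diagonal} in the weight basis $\{\ket{n,j}\}_{1\le n\le N,\,0\le j\le n}$ of $\mathcal{H}=\bigoplus_{n=1}^N\mathbb{C}^{n+1}$. To see this, note that $\{\ket{n,j}\}$ is orthonormal, $\pi(a)\ket{n,j}=\alpha_{n,j}\ket{n,j-1}$ and $\pi(a^*)\ket{n,j}=\alpha_{n,j+1}\ket{n,j+1}$ (with the convention $\ket{n,-1}=\ket{n,n+1}=0$), and $\pi$ is a $*$-representation, so $\pi(a)^*=\pi(a^*)$. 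Hence
\[
\pi(a)^*\pi(a)\ket{n,j}=\alpha_{n,j}^2\,\ket{n,j},\qquad
\pi(a^*)^*\pi(a^*)\ket{n,j}=\pi(a)\pi(a^*)\ket{n,j}=\alpha_{n,j+1}^2\,\ket{n,j},
\]
so that $|\pi(a)|\ket{n,j}=\alpha_{n,j}\ket{n,j}$ and $|\pi(a^*)|\ket{n,j}=\beta_{n,j}\ket{n,j}$ with $\beta_{n,j}:=\alpha_{n,j+1}=\sqrt{(j+1)(n-j)}$.

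With this in hand, (i) is the standard description of the commutant of an operator diagonal in a fixed orthonormal basis: the spectral projections of $|\pi(a)|$ are the diagonal projections $P_\alpha:=\sum_{\alpha_{n,j}=\alpha}\ketbra{n,j}$, and $C^*(|\pi(a)|)'$ consists exactly of the operators commuting with every $P_\alpha$, i.e.\ those whose only nonzero matrix entries $\ket{n,j}\bra{m,k}$ satisfy $\alpha_{n,j}=\alpha_{m,k}$. For (ii), I would write $Q_\gamma:=\sum_{\beta_{n,j}=\gamma}\ketbra{n,j}$ for the spectral projections of $|\pi(a^*)|$; all the $P_\alpha$ and $Q_\gamma$ are mutually commuting diagonal projections, the conditional expectations onto the two commutants are the pinchings $E_{\pi(a)}(x)=\sum_\alpha P_\alpha xP_\alpha$ and $E_{\pi(a^*)}(x)=\sum_\gamma Q_\gamma xQ_\gamma$, and since the $P_\alpha$ commute with the $Q_\gamma$ both iterated composites equal $x\mapsto\sum_{\alpha,\gamma}(P_\alpha Q_\gamma)\,x\,(P_\alpha Q_\gamma)$; this common value is moreover the conditional expectation onto $\Omega:=C^*(|\pi(a)|)'\cap C^*(|\pi(a^*)|)'$, which is what is needed in Corollary~\ref{corollary:intermediate}.

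The substantive part is (iii). Combining (i) with its analogue for $a^*$, $\Omega$ is the linear span of the matrix units $\ket{n,j}\bra{m,k}$ with $\alpha_{n,j}=\alpha_{m,k}$ \emph{and} $\beta_{n,j}=\beta_{m,k}$; so it is enough to prove that the pair $(\alpha_{n,j},\beta_{n,j})$ determines $(n,j)$, since then only diagonal units $\ketbra{n,j}$ survive and $\Omega\subset\bigoplus_{n=1}^N\ell_\infty^{n+1}$. I plan to do this with the two identities
\[
\beta_{n,j}^2-\alpha_{n,j}^2=n-2j,\qquad \alpha_{n,j}^2+\beta_{n,j}^2+\tfrac12(n-2j)^2=\tfrac12\,n(n+2),
\]
which are, respectively, the $h$-weight of $\ket{n,j}$ (coming from $aa^*-a^*a=h$) and the Casimir eigenvalue on $\mathbb{C}^{n+1}$, both checked by plugging in $\alpha_{n,j}^2=j(n-j+1)$. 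If $\alpha_{n,j}=\alpha_{m,k}$ and $\beta_{n,j}=\beta_{m,k}$, the first identity gives $n-2j=m-2k$, and substituting into the second gives $n(n+2)=m(m+2)$; as $x\mapsto x(x+2)=(x+1)^2-1$ is strictly increasing on $[0,\infty)$ we get $n=m$, and then $j=k$.

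The only step needing a genuine idea is isolating those two invariants; once they are available the rest of (iii) is bookkeeping. The case to keep an eye on is the highest and lowest weight vectors, where one of the two quantities degenerates for \emph{all} $n$ (namely $\alpha_{n,0}=0$, resp.\ $\beta_{n,n}=0$): there the separation of blocks is carried entirely by the other one ($\beta_{n,0}^2=n$, resp.\ $\alpha_{n,n}^2=n$), so the conclusion of (iii) still holds. This, rather than any analytic difficulty, is the main thing to get right.
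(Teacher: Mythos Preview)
Your proof is correct and follows essentially the same approach as the paper: diagonalize $|\pi(a)|$ and $|\pi(a^*)|$ in the weight basis, identify both conditional expectations as pinchings by diagonal spectral projections (hence commuting), and for (iii) reduce to showing that the pair $(\alpha_{n,j},\alpha_{n,j+1})$ determines $(n,j)$. The only difference is in the final algebraic step: the paper observes $n-2j=m-2k$ from $\alpha_{n,j+1}^2-\alpha_{n,j}^2$, then factors $\alpha_{n,j}^2-\alpha_{m,k}^2=(j-k)(j+1+m-k)$ directly; you instead combine the same weight identity with the Casimir relation $\alpha_{n,j}^2+\alpha_{n,j+1}^2+\tfrac12(n-2j)^2=\tfrac12 n(n+2)$ to get $n(n+2)=m(m+2)$ and hence $n=m$. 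Your route is slightly more conceptual (it explains \emph{why} the pair separates blocks: it encodes both the $h$-weight and the Casimir eigenvalue), while the paper's is a one-line factorization; both are equally short.
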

	\begin{proof}
		i) Since $
			\pi(a) = \sum_{1\leq n\leq N}\sum_{0 < j \leq n}\alpha_{n,j}\ket{n,j-1}\bra{n,j}$, then \begin{equation*}
			|\pi(a)| = \sum_{1\leq n \leq N}\sum_{0 \leq j \leq n}\alpha_{n,j}\ket{n,j}\bra{n,j}
		\end{equation*} (Note $\alpha_{n,j} = 0$ when $j = 0$.)
		Therefore a matrix \begin{equation*}
			x:=\sum_{(n,j),(m,k)}x_{(n,j),(m,k)}\ket{n,j}\bra{m,k}\in\mathbb{B}(\mathcal{H})
		\end{equation*} commutes with $|\pi(a)|$ if and only if $x_{(n,j),(m,k)} = 0$ whenever $\alpha_{n,j}\neq \alpha_{m,k}$. 
		
		Since $\pi(a^*) = \sum_{1\leq n\leq N}\sum_{0\leq j < n}\alpha_{n,j+1}\ket{n,j + 1}\bra{n,j}
		$, then
		\begin{equation*}
			|\pi(a^*)| = \sum_{1\leq n \leq N}\sum_{0\leq j \leq n}\alpha_{n,j+1}\ket{n,j}\bra{n,j}
		\end{equation*}
		(Note $\alpha_{n,j+1} = 0$ when $j = n$.) Therefore a matrix $x$ commutes with $|\pi(a^*)|$ if and only if $x_{(n,j), (m,,k)}  = 0$ whenever $\alpha_{n,j+1} \neq \alpha_{m,k+1}$.
		
		ii) By computation above, both conditional expectations $E_{\pi(a)}$ and $E_{\pi(a^*)}$ are given by Schur multipliers in the basis of $\ket{n,j}$. Therefore they commute. 
		
		iii) Moreover, we can give a concrete description of $\Omega:=C^*(|\pi(a)|)'\cap C^*(|\pi(a^*)|)'$. Consider $E_\Omega(\ket{n,j}\bra{m,k}):=E_{\pi(a)}E_{\pi(a^*)}(\ket{n,j}\bra{m,k})$, then $\ket{n,j}\bra{m,k}$ is in $\Omega$ if and only if $E_\Omega(\ket{n,j}\bra{m,k}) = \ket{n,j}\bra{m,k}$. This is equivalent to the following two conditions:
		\begin{align*}
			\begin{split}
				&\alpha_{n,j} = 
				\alpha_{m,k}\\
				&\alpha_{n,j+1} = \alpha_{m,k+1}
			\end{split}
		\end{align*}
		Then by an explicit computation (see the lemma below), we have $n =m, j = k$. Hence $\ket{n,j}\bra{m,k}\in\Omega$ if and only if $\ket{n,j}\bra{m,k}$ belongs to the diagonal subalgebra $ \bigoplus_{n =1}^N \ell_\infty^{n+1}$.
	\end{proof}
	\begin{lemma}
		$\alpha_{n,j} = \alpha_{m,k}$ and $\alpha_{n,j+1} = \alpha_{m,k+1}$ hold simultaneously if and only if $n = m$ and $j = k$
	\end{lemma}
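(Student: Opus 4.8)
\emph{Proof proposal.} The plan is to show that the ordered pair $(n,j)$ can be reconstructed explicitly from the two numbers $\alpha_{n,j}$ and $\alpha_{n,j+1}$; once this reconstruction is seen to be well defined, the hypotheses $\alpha_{n,j}=\alpha_{m,k}$ and $\alpha_{n,j+1}=\alpha_{m,k+1}$ immediately force $(n,j)=(m,k)$, while the converse implication is trivial since $\alpha_{n,j}$ depends only on $n$ and $j$. Note first why a single value does not suffice: $\alpha_{n,j}^2=j(n-j+1)$ is symmetric under $j\mapsto n+1-j$, so one value only determines the unordered pair $\{j,\,n+1-j\}$ up to its product and hence not $n$ and $j$ separately.

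First I would pass to squares, setting $a:=\alpha_{n,j}^2=j(n-j+1)$ and $b:=\alpha_{n,j+1}^2=(j+1)(n-j)$; since all the $\alpha$'s are nonnegative, equality of the $\alpha$'s is equality of the squares. A one-line computation gives
\begin{equation*}
b-a=(j+1)(n-j)-j(n-j+1)=(n-j)-j=n-2j,
\end{equation*}
so that $n=2j+(b-a)$, expressing $n$ in terms of $j$ and the data $(a,b)$.

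Next I would substitute $n-j+1=j+(b-a)+1$ into $a=j(n-j+1)$ to obtain a quadratic equation for $j$ whose coefficients depend only on $a$ and $b$:
\begin{equation*}
j^2+(1-a+b)\,j-a=0.
\end{equation*}
The constant term $-a\le 0$, so the product of the two roots is $\le 0$ and this equation has exactly one root in $[0,\infty)$ (the degenerate case $a=0$, i.e.\ $j=0$, gives the nonnegative root $j=0$, which is again unique). Hence $j$ is uniquely determined by $(a,b)$, and then $n=2j+(b-a)$ is as well. Applying this to $(n,j)$ and $(m,k)$ under the assumptions $a=\alpha_{n,j}^2=\alpha_{m,k}^2$ and $b=\alpha_{n,j+1}^2=\alpha_{m,k+1}^2$ yields $j=k$ and $n=m$.

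I do not expect any genuine obstacle here. The only points deserving a moment's care are: (i) we only need uniqueness of the \emph{nonnegative} root of the displayed quadratic, which is immediate from the sign of its constant term, so no appeal to integrality of $j$ is required; and (ii) the boundary cases $j=0$ or $j=n$ (where one of $a,b$ vanishes), which the quadratic handles automatically. The reverse direction $n=m,\ j=k\Rightarrow$ both equalities needs no argument.
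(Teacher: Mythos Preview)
Your proof is correct. Both you and the paper begin with the same key observation, namely that $\alpha_{n,j+1}^2-\alpha_{n,j}^2=n-2j$, which immediately yields $n-2j=m-2k$. From there the two arguments diverge slightly in presentation: the paper substitutes this common value $\delta$ and factors directly,
\[
\alpha_{n,j}^2-\alpha_{m,k}^2=(j-k)(j+1+m-k),
\]
concluding $j=k$ because the second factor is strictly positive. You instead recast the problem as reconstructing $(n,j)$ from the pair $(a,b)$, reducing to a quadratic in $j$ whose nonnegative root is unique by a sign-of-constant-term argument. Your route is a touch more elaborate but has the pleasant feature of explicitly exhibiting the inverse map, whereas the paper's factorization is a one-line shortcut once $\delta$ is in hand. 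Either way the content is the same elementary algebra, and your handling of the boundary case $a=0$ is careful and correct.
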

	\begin{proof}
		One direction is obvious. For the other direction, denote $\delta := n - 2j$. Since $n - 2j = \alpha_{n, j+1} - \alpha_{n,j} = \alpha_{m,k+1} - \alpha_{m,k} = m-2k$, then $m-2k = \delta$. Then we have
		\begin{align*}
			\begin{split}
				\alpha_{n,j}^2 - \alpha_{m,k}^2 &= j(n-j+1) - k(m-k+1) = j(\delta + j +1) - k(\delta + k +1)
				\\
				&=(j - k)(\delta +1 + j+k) = (j-k)(j+1+m-k)
			\end{split}
		\end{align*}
		By assumption, this equals to $0$. If $j\neq k$, then $j+1+m-k = 0$. Since $j\geq 0$ and $k \leq m$, $j+1+m-k \geq 1$. Hence $j = k$. Then $n = \delta + 2j = \delta + 2k = m$. 
	\end{proof}

	We are mainly interested in Lindbladian associated with the representation $\pi_N$. To apply the result of the last section, we need to check whether $|\pi_N(a)|$ has a uniform spectral gap. This is a part of the following lemma.
	
	\begin{lemma}\label{lemma:unigap}\label{corollary:evaporation}
		Using the notation above and fix the weight basis $\big(\ket{n,j}\big)_{0\leq j \leq n}$ for each irreducible component of $\pi_N$. And fix a constant $\gamma \geq 1$. Then for a pair of distinct coefficients $\alpha_{n,j}\neq \alpha_{m,k}$
		\begin{equation}
			|\alpha_{n,j}^\gamma - \alpha_{m,k}^\gamma| \gtrapprox \min\{N^{\gamma - 2}, 1\} \pl.
		\end{equation}
		In particular, the lower bound can be achieved. For $\gamma < 2$, the lower bound is achieved when $n = m = N, j = [\frac{N+1}{2}], k = j - 1$. For $\gamma \geq 2$, the lower bound is achieved when $n, m \approx O(1)$.
	\end{lemma}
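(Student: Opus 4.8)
\emph{Proof plan.} The plan is to reduce the whole statement to elementary estimates on the \emph{integers} $x:=\alpha_{n,j}^{2}=j(n-j+1)$ and $y:=\alpha_{m,k}^{2}=k(m-k+1)$. Two facts about them will be used. First, for each fixed $n\le N$ the map $j\mapsto j(n-j+1)$ is a downward parabola maximized near $j=(n+1)/2$ with value at most $(n+1)^{2}/4$, so $0\le x,y\le (N+1)^{2}/4=:R$, and in particular $R\le N^{2}$ for $N\ge 1$. Second, since $\alpha_{n,j}$ and $\alpha_{m,k}$ are nonnegative and assumed distinct, $x\ne y$, and being integers they satisfy $|x-y|\ge 1$. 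Writing $p:=\gamma/2\ge 1/2$, the quantity to bound below is $|\alpha_{n,j}^{\gamma}-\alpha_{m,k}^{\gamma}|=|x^{p}-y^{p}|$; I would split into $\gamma\ge 2$ (i.e. $p\ge 1$), where $\min\{N^{\gamma-2},1\}=1$, and $1\le\gamma<2$ (i.e. $1/2\le p<1$), where $\min\{N^{\gamma-2},1\}=N^{\gamma-2}$.

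For $\gamma\ge 2$, relabel so that $x<y$. One uses the elementary inequality $(a+b)^{p}\ge a^{p}+b^{p}$ for $a,b\ge 0$ and $p\ge 1$ (the function $t\mapsto (a+t)^{p}-t^{p}$ is nondecreasing with value $a^{p}$ at $t=0$), applied with $a=y-x\ge 1$ and $b=x\ge 0$; it yields $y^{p}\ge (y-x)^{p}+x^{p}\ge 1+x^{p}$, hence $|x^{p}-y^{p}|\ge 1=\min\{N^{\gamma-2},1\}$, with absolute constant $1$.

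For $1\le\gamma<2$, again take $x<y$, so $1\le y\le R$. From the integral representation $y^{p}-x^{p}=p\int_{x}^{y}t^{p-1}\,dt$ and the fact that $t^{p-1}$ is decreasing (as $p-1<0$), bounding the integrand below by its value at $t=y$ gives
\begin{equation*}
y^{p}-x^{p}\;\ge\;p\,(y-x)\,y^{p-1}\;\ge\;p\,R^{\,p-1}\;\ge\;p\,N^{\gamma-2},
\end{equation*}
where we used $y-x\ge 1$, $y\le R$, $R\le N^{2}$, and $p-1<0$. Thus $|\alpha_{n,j}^{\gamma}-\alpha_{m,k}^{\gamma}|\gtrapprox N^{\gamma-2}=\min\{N^{\gamma-2},1\}$, the implied constant $p=\gamma/2$ being independent of $N$. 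Together the two cases give the asserted inequality.

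Finally, to see sharpness I would exhibit the extremal pairs. For $1\le\gamma<2$ one stays in the top block $n=m=N$: when $N=2p+1$ the weights $j=p+1$, $k=p$ give $x=(p+1)^{2}$ and $y=(p+1)^{2}-1$, using $j(n-j+1)-(j-1)(n-j+2)=n-2j+2$, which equals $1$ exactly because $n=2j-1$; the mean value theorem then bounds $\alpha_{N,j}^{\gamma}-\alpha_{N,k}^{\gamma}=(p+1)^{\gamma}-\big((p+1)^{2}-1\big)^{\gamma/2}$ by $\tfrac{\gamma}{2}\big((p+1)^{2}-1\big)^{\gamma/2-1}\lessapprox N^{\gamma-2}$ since $\gamma/2-1<0$ and $(p+1)^{2}-1\sim N^{2}$ (for even $N$ the same block has consecutive distinct squares differing by $2$, of the same order). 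For $\gamma\ge 2$ the bound $1$ is attained already by coefficients from a representation of bounded dimension, e.g. $\alpha_{1,1}^{\gamma}-\alpha_{1,0}^{\gamma}=1$. The only genuinely fiddly step is this last attainment computation, where the parity of $N$ decides whether consecutive squared coefficients in the largest block are distance $1$ or $2$ apart; the rest is a routine calculation once the problem has been rephrased through $x=\alpha_{n,j}^{2}$, the key conceptual point being simply that these squared coefficients occupy the range $[0,O(N^{2})]$, which is what produces the exponent $N^{\gamma-2}$.
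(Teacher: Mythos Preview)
Your proof is correct and follows essentially the same route as the paper: both hinge on the observation that the squared coefficients $\alpha_{n,j}^{2}=j(n-j+1)$ are integers bounded by $O(N^{2})$, so distinct values differ by at least $1$, and then a mean-value/integral estimate converts this into the stated $\min\{N^{\gamma-2},1\}$ bound. Your use of superadditivity $(a+b)^{p}\ge a^{p}+b^{p}$ for the $\gamma\ge 2$ case is a cosmetic variant of the paper's mean-value argument (and sidesteps the need to treat $\alpha_{m,k}=0$ separately), while your sharpness examples match the paper's.
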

	\begin{proof}
		Without loss of generality, assume $\alpha_{n,j} > \alpha_{m,k}$. By mean value theorem, there exists $\theta\in [\alpha_{m,k}^2, \alpha_{n,j}^2]$ such that:
		\begin{equation*}
			\alpha_{n,j}^\gamma - \alpha_{m,k}^\gamma = \frac{\gamma}{2}\theta^{\gamma/2 - 1}(\alpha_{n,j}^2 - \alpha_{m,k}^2) \pl.
		\end{equation*}
		Note since $\alpha_{n,j}^2, \alpha_{m,k}^2\in\mathbb{Z}$, $\alpha_{n,j}^2 - \alpha_{m,k}^2\geq 1$. Since $\theta \geq \alpha_{m,k}^2$ and $1\lessapprox\alpha_{m,k}\lessapprox N$, we have that
		\begin{equation*}
			\alpha_{n,j}^\gamma - \alpha_{m,k}^\gamma \gtrapprox \min\{N^{\gamma-2},1 \} \pl.
		\end{equation*}
		To see that the lower bound can be achieved, we assume $n = m$, we can define afunction $f_\gamma(\epsilon):=\bigg((\frac{n+1}{2} - \epsilon)(\frac{n+1}{2} + \epsilon)\bigg)^{\gamma / 2}$ where $-\frac{n+1}{2} < \epsilon \leq \frac{n - 1}{2}$ and $\epsilon + \frac{1}{2} \in \mathbb{Z}$. By symmetry, we restrict to $\epsilon \geq 0$. Then the discrete difference $f_\gamma(\epsilon) - f_\gamma(\epsilon + 1)$ achieves minimum when $\epsilon$ is approximately 0. More precisely, if $n+1$ is even, the minimum is achieved when $\epsilon = 0$. In this case, we have:
		\begin{align*}
			\begin{split}
				f_\gamma(\epsilon) - f_\gamma(\epsilon + 1) &= (\frac{n+1}{2})^{\gamma} - \big((\frac{n+1}{2})^2 - 1\big)^{\gamma/2} = \big(\frac{n+1}{2}\big)^{\gamma}\big(1 - (1 -(\frac{2}{n+1})^2)^{\gamma/2}\big)\\
				&\approxeq \frac{\gamma}{2} (n+1)^{\gamma -2}\pl.
			\end{split}
		\end{align*}
		If $n+1$ is odd, the minimum is achieved by $\epsilon = \frac{1}{2}$. In this case, we have:
		\begin{align*}
			\begin{split}
				f_\gamma(\epsilon) - f_\gamma(\epsilon + 1) &= (\frac{n(n+2)}{4})^{\gamma / 2} - (\frac{(n-2)(n+4)}{4})^{\gamma / 2}\\& = (\frac{n(n+2)}{4})^{\gamma / 2}\big(1 - (1 - \frac{8}{n(n+2)})^{\gamma / 2}\big) \approxeq\frac{\gamma}{2}(n(n+2))^{\gamma / 2 - 1} \pl.
			\end{split}
		\end{align*}
		Since $1\leq n \leq N$, therefore if $\gamma < 2$, then $f_\gamma(\epsilon) - f_\gamma(\epsilon + 1)\gtrapprox N^{\gamma-2}$ for $n = m = N$. If $\gamma \geq 2$, then $f_\gamma(\epsilon) - f_\gamma(\epsilon + 1) \gtrapprox 1$ for $n = m = 1$. 
	\end{proof}

	\begin{remark}
		Since the dimension of $\pi_N$ is of the order $2^N$, the uniform spectral gap of $|\pi_N(a)|$ is of the order $\Theta(\frac{1}{\sqrt{\log\dim(\pi_N)}})$.
	\end{remark}
	
	\subsection{CLSI Constant of $\mathfrak{su}(2)$-Lindbladian}
	Recall that the Lindbladian we are interested in is given by:
	\begin{equation}
		\mathcal{L}_N^\beta(x) = \beta^{1/2}L_{\pi_N(a)}(x) + \beta^{-1/2}L_{\pi_N(a^*)}(x)\pl.
	\end{equation}The next lemma shows that we may restrict this Lindbladian to any subrepresentation.
	\begin{lemma}
		Each irreducible component $V_\lambda$ of $\mathcal{H}:=V^{\otimes N}$ is invariant under the Lindbladian $\mathcal{L}_N^\beta$. Hence given any subset of partitions $A\ssubset \mathcal{P}_2(N)$, if an operator $x\in \mathbb{B}(\bigoplus_{\lambda\in A}V_\lambda)$, then $\mathcal{L}_N^\beta(x)\in \mathbb{B}(\bigoplus_{\lambda\in A}V_\lambda)$
	\end{lemma}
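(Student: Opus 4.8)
The plan is to reduce everything to the elementary fact that, for any orthogonal projection $P$ commuting with an operator $b$, the single term $L_b(x)=2b^*xb-b^*bx-xb^*b$ maps the corner $P\,\mathbb{B}(\mathcal{H})\,P$ into itself. First I would record that, by construction, $\pi_N(a)=\sum_{j}\pi_{(j)}(a)$ and $\pi_N(a^*)=\sum_{j}\pi_{(j)}(a^*)$ are mutually adjoint on $\mathcal{H}=V^{\otimes N}$: taking adjoints summand by summand turns each local factor $a$ into its conjugate transpose $a^*$, so $\pi_N(a)^*=\pi_N(a^*)$ with respect to the standard inner product on $(\mathbb{C}^2)^{\otimes N}$.

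Next I would use that the fixed decomposition $\mathcal{H}\cong\bigoplus_{\lambda}V_\lambda$ consists of $\mathfrak{su}(2)$-submodules, and that for a unitary (equivalently, $*$-)representation every invariant subspace is automatically reducing. Hence the orthogonal projection $P_\lambda$ onto $V_\lambda$ commutes with $\pi_N(g)$ for every $g\in\mathfrak{su}(2)$, and therefore, since commuting with the real Lie algebra $\mathfrak{su}(2)$ forces commuting with its complexification $\mathfrak{sl}(2,\mathbb{C})$, it commutes with the Cartan--Weyl generators $\pi_N(a)$ and $\pi_N(a^*)$ as well. (If the chosen splitting of each isotypic component into copies of $V_\lambda$ were not a priori orthogonal, I would first replace it by an orthogonal one, which changes nothing in the statement.)

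Now fix $x\in\mathbb{B}(V_\lambda)$, i.e.\ $x=P_\lambda x P_\lambda$. For $b\in\{\pi_N(a),\pi_N(a^*)\}$ we have $bP_\lambda=P_\lambda b$ and $b^*P_\lambda=P_\lambda b^*$, so each of $b^*xb$, $b^*bx$, $xb^*b$ equals $P_\lambda(\,\cdot\,)P_\lambda$, whence $L_b(x)=P_\lambda L_b(x)P_\lambda\in\mathbb{B}(V_\lambda)$; forming the combination $\mathcal{L}_N^\beta=\beta^{1/2}L_{\pi_N(a)}+\beta^{-1/2}L_{\pi_N(a^*)}$ gives $\mathcal{L}_N^\beta(x)\in\mathbb{B}(V_\lambda)$. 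For the general assertion, observe that for a subset $A\subseteq\mathcal{P}_2(N)$ the projection onto $\bigoplus_{\lambda\in A}V_\lambda$ is $P_A=\sum_{\lambda\in A}P_\lambda$, which again commutes with $\pi_N(a)$ and $\pi_N(a^*)$; repeating the same computation with $P_A$ in place of $P_\lambda$ yields $\mathcal{L}_N^\beta\big(\mathbb{B}(\bigoplus_{\lambda\in A}V_\lambda)\big)\subseteq\mathbb{B}(\bigoplus_{\lambda\in A}V_\lambda)$.

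I do not expect a genuine obstacle here; the only point needing a moment's care is the passage from ``$V_\lambda$ is $\mathfrak{su}(2)$-invariant'' to ``$P_\lambda$ commutes with $\pi_N(a)$'', which rests on the unitarity of $\pi_N$ (invariant implies reducing) together with the remark about complexification, and this is exactly what makes the compression argument in the last paragraph go through cleanly.
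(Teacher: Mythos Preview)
Your proof is correct and follows essentially the same approach as the paper: both reduce to the observation that $\pi_N(a)$ and $\pi_N(a^*)$ preserve each irreducible component $V_\lambda$, so every building block of $\mathcal{L}^\beta_N$ maps $\mathbb{B}(V_\lambda)$ into itself. The paper simply expands $x$ in a basis of $V_\lambda$ and notes $\pi_N(a)\ket{i},\pi_N(a^*)\ket{j}\in V_\lambda$, whereas you phrase the same fact via $[P_\lambda,\pi_N(a)]=0$; these are equivalent formulations of the same one-line argument.
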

	\begin{proof}
		The proof is obvious. For any $x\in \mathbb{B}(V_\lambda)$, we can expand it in terms a basis:
		\begin{equation}
			x = \sum_{i,j = 1}^{dim(V_\lambda)} x_{ij}\ket{i}\bra{j}
		\end{equation}
		Since $\pi_N(a)\ket{i}\in V_\lambda$, $\pi_N(a^*)\ket{j}\in V_\lambda$, it is clear that each term in the Lindbladian maps $x$ to an operator on $V_\lambda$. The statement about direct sum of irreducible components is obvious. 
	\end{proof}

	Recall Lie algebra relation $[\pi_N(a^*), \pi_N(a)] = \pi_N(h)$. We consider the density 
	\begin{equation}
		d_N := N_\beta \exp(-\frac{\beta}{2}\pi_N(h))
	\end{equation}
	where $N_\beta$ is a normalization constant depending on dimension only. By simple computation, we have
	\begin{lemma}
		Following two equations hold:\begin{align}
			\begin{split}
				&d^{it}_N \pi_N(a) d^{-it}_N = e^{i\beta t}\pi_N(a)
				\\
				&d^{it}_N \pi_N(a^*)d^{-it}_N = e^{-i\beta t}\pi_N(a^*)
			\end{split}	
		\end{align}
	\end{lemma}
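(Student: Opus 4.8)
The plan is to reduce both identities to the single pair of Lie-algebra relations $[\pi_N(h),\pi_N(a)]=2\pi_N(a)$ and $[\pi_N(h),\pi_N(a^*)]=-2\pi_N(a^*)$. These are simply the images under the representation $\pi_N$ of the Cartan--Weyl relations $[h,a]=2a$, $[h,a^*]=-2a^*$ recorded above, so they hold because $\pi_N$ is a Lie-algebra homomorphism. Equivalently, $\pi_N(a)$ and $\pi_N(a^*)$ are eigenvectors of $\ad_{\pi_N(h)}$ with eigenvalues $2$ and $-2$ respectively.

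First I would put $d_N^{it}$ into exponential form. Since $\pi_N(h)$ is self-adjoint with integer eigenvalues (the weights $n-2j$), the operator $d_N=N_\beta\exp(-\tfrac{\beta}{2}\pi_N(h))$ is positive, and the Borel functional calculus gives $d_N^{it}=N_\beta^{it}\exp(-\tfrac{i\beta t}{2}\pi_N(h))$ for every $t\in\mathbb{R}$. The factor $N_\beta^{it}$ is a scalar, hence cancels in the conjugation, so $d_N^{it}\,\pi_N(a)\,d_N^{-it}=\exp(-\tfrac{i\beta t}{2}\pi_N(h))\,\pi_N(a)\,\exp(\tfrac{i\beta t}{2}\pi_N(h))$ and similarly for $\pi_N(a^*)$; the normalization constant plays no role. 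Then I would apply the standard identity $e^{X}Ye^{-X}=e^{\ad_X}(Y)$ with $X=-\tfrac{i\beta t}{2}\pi_N(h)$: since $\ad_{\pi_N(h)}$ multiplies $\pi_N(a)$ by $2$ and $\pi_N(a^*)$ by $-2$, the operator $\ad_X$ acts on each of these as multiplication by a pure imaginary scalar, the series for $e^{\ad_X}$ collapses to a single exponential factor, and one reads off the two displayed phases $e^{\pm i\beta t}$, with the signs dictated by the conventions $[h,a]=2a$, $[h,a^*]=-2a^*$.

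Alternatively --- and this is the ``simple computation'' the statement alludes to --- one verifies the identities directly in the weight basis: from $\pi_N(a)\ket{n,j}=\alpha_{n,j}\ket{n,j-1}$ and $\pi_N(h)\ket{n,j}=(n-2j)\ket{n,j}$, conjugating the matrix unit $\ket{n,j-1}\bra{n,j}$ by $d_N^{it}$ multiplies it by $(p_{n,j-1}/p_{n,j})^{it}$ where $p_{n,j}\propto e^{-\frac{\beta}{2}(n-2j)}$, and this ratio does not depend on $(n,j)$, which yields the global phase. I do not expect a real obstacle here: the statement is an elementary consequence of the $\mathfrak{su}(2)$ commutation relations. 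The only point deserving a word of care is that $d_N^{it}$ must be read through the functional calculus and that the normalization constant contributes nothing but the overall phase $N_\beta^{it}$, which cancels in the conjugation; with that bookkeeping in place the identities are immediate.
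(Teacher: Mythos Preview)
Your proposal is correct and follows essentially the same route as the paper: both arguments reduce to the Baker--Campbell--Hausdorff identity $e^{X}Ye^{-X}=e^{\ad_X}(Y)$ applied with $X=-\tfrac{i\beta t}{2}\pi_N(h)$, using that $\pi_N(a)$ and $\pi_N(a^*)$ are eigenvectors of $\ad_{\pi_N(h)}$. Your treatment is slightly more careful than the paper's in handling the normalization constant $N_\beta^{it}$ and in offering the explicit weight-basis verification as an alternative, but the core computation is identical.
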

	\begin{proof}
		Since $[\pi_N(h), \pi_N(a)] = -2 \pi_N(a)$, we have $[\pi_N^n(h), \pi_N(a)] = (-2)^n\pi_N(a)$. Hence
		\begin{align}
			\begin{split}
				d^{it}_N \pi_N(a) d^{-it}_N& = \exp(-\frac{i\beta t}{2}\pi_N(h)) \pi_N(a) \exp(\frac{i\beta t}{2}\pi_N(h))
				\\
				&= \sum_{n\geq 0 }\frac{(-i\beta t)^n}{2^n n!}[\pi_N^n(h), \pi_N(a)]
				\\
				&=\sum_{n\geq 0}\frac{(-i \beta t)^n}{2^n n!}(-2)^n \pi_N(a) = e^{i\beta t}\pi_N(a)
			\end{split}
		\end{align}
		Similar calculation holds for the other equation.
	\end{proof}

	Before applying results from section \ref{section:abstract}, we make the following simplification.
	\begin{lemma}
		Let $V^{\otimes N} = \bigoplus_{\lambda \in \mathcal{P}_2(N)}V_\lambda^{\oplus \dim(W_\lambda)}$ be the irreducible decomposition of $\pi_N$-representation. Let $n_0 = \max_{\lambda\in\mathcal{P}_2(N)}\{\dim(W_\lambda)\}$, then the inclusion map:
		\begin{equation*}
			\iota:\mathbb{B}(V^{\otimes N})\rightarrow \mathbb{B}(\bigoplus_{\lambda}V_\lambda) \bigotimes \mathbb{M}_{n_0}
		\end{equation*}
		satisfies the following intertwining relation:
		\begin{equation}
			\iota\circ\mathcal{L}_N^\beta = (\mathcal{L}^\beta_{\pi}\otimes id) \circ\iota 
		\end{equation}
		where $\mathcal{L}^\beta_\pi$ is the Lindbladian associated with the direct sum representation $\pi = \bigoplus_{\lambda}\pi_\lambda$. 
	\end{lemma}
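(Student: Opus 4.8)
The plan is to unwind the Schur--Weyl decomposition $V^{\otimes N}\cong\bigoplus_{\lambda}V_\lambda\otimes W_\lambda$ and exploit that $\mathcal{L}_N^\beta$ is assembled purely from left and right multiplication by $\pi_N(a)$ and $\pi_N(a^*)$. First I would record the structural fact that, as an $\mathfrak{su}(2)$-representation, the generators act as block-diagonal amplifications: under the above identification
\[
\pi_N(a)=\bigoplus_{\lambda}\pi_\lambda(a)\otimes 1_{W_\lambda},\qquad \pi_N(a^*)=\bigoplus_{\lambda}\pi_\lambda(a^*)\otimes 1_{W_\lambda},
\]
which is exactly what Schur--Weyl says about the left action; in particular each $\pi_N(a^{(*)})$ acts trivially on the multiplicity spaces $W_\lambda$, and it is block-diagonal in $\lambda$.

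Next I would pin down the inclusion $\iota$. Identify $\mathbb{B}(V^{\otimes N})=\bigoplus_{\lambda,\mu}\mathbb{B}(V_\mu,V_\lambda)\otimes\mathbb{B}(W_\mu,W_\lambda)$; fixing an orthonormal basis of each $W_\lambda$ and padding with zeros, embed $\mathbb{B}(W_\mu,W_\lambda)$ as the top-left $\dim(W_\lambda)\times\dim(W_\mu)$ corner of $\mathbb{M}_{n_0}$ (legitimate since $\dim(W_\lambda)\leq n_0$ by definition of $n_0$). The resulting map $\iota\colon\mathbb{B}(V^{\otimes N})\hookrightarrow\mathbb{B}(\bigoplus_{\lambda}V_\lambda)\otimes\mathbb{M}_{n_0}$ is a (non-unital) $\ast$-monomorphism: block-matrix composition on $\bigoplus_{\lambda}V_\lambda$ reproduces composition across the $\lambda$-blocks, the corner embeddings of the multiplicity blocks multiply and adjoint correctly, and the zero-padding kills precisely the cross terms that already vanish in $\mathbb{B}(V^{\otimes N})$. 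The key computation is then the bimodule relation: for a homogeneous element $S\otimes R\in\mathbb{B}(V_\mu,V_\lambda)\otimes\mathbb{B}(W_\mu,W_\lambda)$, the first step gives $\pi_N(a)(S\otimes R)=\pi_\lambda(a)S\otimes R$ and $(S\otimes R)\pi_N(a)=S\pi_\mu(a)\otimes R$ (and likewise for $a^*$), so applying $\iota$ and comparing with the block action of $\pi(a)=\bigoplus_{\lambda}\pi_\lambda(a)$ on $\bigoplus_{\lambda}V_\lambda$ yields
\[
\iota(\pi_N(a)\,T)=(\pi(a)\otimes 1_{n_0})\,\iota(T),\qquad \iota(T\,\pi_N(a))=\iota(T)\,(\pi(a)\otimes 1_{n_0}),
\]
and the same with $a$ replaced by $a^*$, for every $T$. (One should note that $\iota(\pi_N(a))=\bigoplus_{\lambda}\pi_\lambda(a)\otimes p_\lambda$, with $p_\lambda$ the rank-$\dim(W_\lambda)$ corner projection, is \emph{not} $\pi(a)\otimes 1_{n_0}$; but the two operators agree when multiplied against elements of the image of $\iota$ from the left, resp. right, which is all that is needed.)

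Finally, writing
\[
\mathcal{L}_N^\beta(T)=\beta^{1/2}\bigl(2\pi_N(a^*)T\pi_N(a)-\pi_N(a^*)\pi_N(a)T-T\pi_N(a^*)\pi_N(a)\bigr)+\beta^{-1/2}(\text{same with }a\leftrightarrow a^*),
\]
I would apply $\iota$ term by term and use the bimodule relations to pull every occurrence of $\pi_N(a^{(*)})$ out of $\iota$ as $\pi(a^{(*)})\otimes 1_{n_0}$. The right-hand side then becomes exactly $(\beta^{1/2}L_{\pi(a)}\otimes\mathrm{id}+\beta^{-1/2}L_{\pi(a^*)}\otimes\mathrm{id})(\iota(T))=(\mathcal{L}_\pi^\beta\otimes\mathrm{id})(\iota(T))$. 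There is no analytic difficulty; the only step that requires genuine care is the explicit description of $\iota$ together with the bimodule relation above, i.e. checking that the multiplicity blocks are truly inert and that the zero-padding into $\mathbb{M}_{n_0}$ is compatible with composition — essentially bookkeeping of the Schur--Weyl decomposition.
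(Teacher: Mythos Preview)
Your proposal is correct and follows essentially the same idea as the paper: the intertwining holds because $\pi_N(a)$ and $\pi_N(a^*)$ act block-diagonally with trivial action on the multiplicity spaces, so $\mathcal{L}_N^\beta$ respects the Schur--Weyl decomposition. The paper's own proof is a single sentence (``Since $\mathcal{L}^\beta_N$ preserves irreducible decomposition, the intertwining relation is obvious''), whereas you have carefully unpacked the map $\iota$ and the bimodule relations; your version is more detailed but not a different argument.
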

	\begin{proof}
		Since $\mathcal{L}^\beta_N$ preserves irreducible decomposition, the intertwining relation is obvious.
	\end{proof}

	Therefore by tensorization property of the $\CLSI$ constant,  to estimate $\CLSI(\mathcal{L}^\beta_N)$ it suffices to consider $\mathcal{L}^\beta_\pi$. Then by results of the last section and lemma above, we have the following estimate.
	
	Recall the intermediate subalgebra is defined by $\Omega = C^*(|\pi_N(a)|)' \cap C^*(|\pi_N(a^*)|)'$.
	\begin{proposition}
		Let $\rho_N := d_N^{1/2}x d_N^{1/2}$ be a state in $S_1(V^{\otimes N})$ and $x\in\mathbb{B}(V^{\otimes N})$. Then there exists a constant $C(\beta)$ depending only on $\beta$ such that
		\begin{equation}
			D(\rho_N | E^*_\Omega\rho_N) \leq C(\beta)N^2(EP_{\pi_N(a)}(\rho_N) + EP_{\pi_N(a^*)}(\rho_N))
		\end{equation}
		where $E_\Omega$ is the conditional expectation onto $\Omega$.
	\end{proposition}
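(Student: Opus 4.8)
The plan is to obtain this estimate as a direct application of the abstract machinery of Section~\ref{section:abstract}, with the abstract generator $a$ taken to be $\pi_N(a)$ on the finite von Neumann algebra $\mathcal{M}=\mathbb{B}(V^{\otimes N})$ equipped with the reference density $d_N$. First I would invoke the simplification lemma above together with the tensor stability of relative entropy and of the entropy production functionals to replace $\pi_N$ on $V^{\otimes N}$ by the direct sum representation $\pi=\bigoplus_\lambda \pi_\lambda$ on $\mathcal{H}=\bigoplus_\lambda V_\lambda$; since $\mathcal{L}_N^\beta$ acts as the identity on the multiplicity spaces, all the relevant quantities ($D(\cdot\mid E_\Omega^*\cdot)$, $EP_{\pi_N(a)}$, $EP_{\pi_N(a^*)}$) are left unchanged up to the harmless tensor factor $\mathbb{M}_{n_0}$, and this is the setting in which Lemma~\ref{lemma:commutant} is formulated. (One could equally observe that the proof of Lemma~\ref{lemma:commutant} goes through verbatim for $\pi_N$ itself.)

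Next I would check the two hypotheses required to run Proposition~\ref{proposition:uniformgap} and Corollary~\ref{corollary:intermediate}. The modular identity $d_N^{it}\pi_N(a)d_N^{-it}=e^{i\beta t}\pi_N(a)$, established above from $[\pi_N(h),\pi_N(a)]=-2\pi_N(a)$, is precisely the crucial assumption~\eqref{equation:modular}; by \cite{CM1} it guarantees both that $\mathcal{L}_N^\beta$ is KMS-symmetric for $d_N$ and that it satisfies the $d_N$-detailed balance condition, so Proposition~\ref{prop:cancel modular} is available. The commuting-square identity $E_\Omega=E_{\pi(a)}E_{\pi(a^*)}=E_{\pi(a^*)}E_{\pi(a)}$ for $\Omega=C^*(|\pi_N(a)|)'\cap C^*(|\pi_N(a^*)|)'$ is exactly Lemma~\ref{lemma:commutant}(ii)--(iii). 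Corollary~\ref{corollary:intermediate} then yields
\[
D(\rho_N\mid E_\Omega^*\rho_N)\;\leq\;\frac{C(\beta)}{\delta^2}\bigl(EP_{\pi_N(a)}(\rho_N)+EP_{\pi_N(a^*)}(\rho_N)\bigr),
\]
where $\delta=\min\{|\lambda\pm\mu|:\lambda\neq\mu,\ \lambda,\mu\in\sigma(|\pi_N(a)|)\}$.

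It then remains to show $\delta\gtrapprox N^{-1}$, whence $\delta^{-2}\lessapprox N^2$ and the absolute constant can be absorbed into $C(\beta)$. The spectrum of $|\pi_N(a)|$ is contained in $\{\alpha_{n,j}:1\leq n\leq N,\ 0\leq j\leq n\}$ with $\alpha_{n,j}^2=j(n-j+1)$. For the sum-type quantities $|\lambda+\mu|$ with $\lambda\neq\mu$: since $\alpha_{n,j}^2$ is a nonnegative integer, positive whenever $1\leq j\leq n$, the smallest positive spectral value is $1$, so $|\lambda+\mu|\geq 1$ for every such pair and these contribute only an $O(1)$ lower bound. For the difference-type quantities $|\alpha_{n,j}-\alpha_{m,k}|$ with $\alpha_{n,j}\neq\alpha_{m,k}$, Lemma~\ref{lemma:unigap} applied with $\gamma=1$ gives $|\alpha_{n,j}-\alpha_{m,k}|\gtrapprox\min\{N^{-1},1\}=N^{-1}$, which moreover is sharp, being attained near $n=m=N$, $j\approx (N+1)/2$, $k=j-1$. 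Hence $\delta\gtrapprox N^{-1}$, and the claimed bound follows.

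The only genuine content beyond assembling the earlier results is the verification that $\delta=\Theta(N^{-1})$, and in particular that the sum-type spectral differences $|\lambda+\mu|$ do not degenerate below $O(1)$; this is precisely what the integrality of $\alpha_{n,j}^2$ provides, in tandem with the sharp two-sided estimate of Lemma~\ref{lemma:unigap}. The modular identity, KMS-symmetry and detailed balance come for free from the $\mathfrak{su}(2)$ commutation relations, and the commuting-square structure of $\Omega$ has already been established, so no further obstacle is expected.
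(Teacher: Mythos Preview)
Your proposal is correct and follows essentially the same route as the paper: invoke the commuting-square structure of Lemma~\ref{lemma:commutant} to apply Corollary~\ref{corollary:intermediate}, then feed in the spectral-gap estimate $\delta\gtrapprox N^{-1}$ from Lemma~\ref{lemma:unigap} with $\gamma=1$. The paper's proof is terser---it simply cites ``uniform spectral gap of $|\pi_N(a)|$ is of order $\Theta(1/N)$'' and applies Proposition~\ref{proposition:uniformgap} to each of $E_{\pi_N(a)}$ and $E_{\pi_N(a^*)}$ separately before summing---whereas you package this via Corollary~\ref{corollary:intermediate} and spell out the (correct and necessary) observation that the sum-type differences $|\lambda+\mu|$ are bounded below by $1$ thanks to the integrality of $\alpha_{n,j}^2$, a point the paper leaves implicit.
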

	\begin{proof}
		By commutativity of the conditional expectations: $E_\Omega = E_{\pi_N(a)}E_{\pi_N(a^*)} = E_{\pi_N(a^*)}E_{\pi_N(a)}$, the relative entropy factorizes 
		\begin{equation*}
			D(\rho_N | E^*_\Omega\rho_N) = D(\rho_N | E^*_{\pi_N(a)}\rho_N) + D(\rho_N | E^*_{\pi_N(a^*)}\rho_N)
		\end{equation*}
		Since uniform spectral gap of $|\pi_N(a)|$ is of the order $\Theta(\frac{1}{N})$, by proposition \ref{proposition:uniformgap} there exists a constant $C(\beta)$ such that
			\begin{align*}
				\begin{split}
					&D(\rho_N | E^*_{\pi_N(a)}\rho_N) \leq C(\beta)N^2(EP_{\pi_N(a)}(\rho_N) + EP_{\pi_N(a^*)}(\rho_N))
					\\
					&D(\rho_N | E^*_{\pi_N(a^*)}\rho_N) \leq C(\beta)N^2(EP_{\pi_N(a)}(\rho_N) + EP_{\pi_N(a^*)}(\rho_N))
				\end{split}
			\end{align*}
		Combing these two inequalities, we have the desired result.
	\end{proof}
	\begin{lemma}\label{lemma:Dirichlet}
		The fixed point algebra $\Omega_{fix}$ of $\mathcal{L}^\beta_N$ in $V^{\otimes N}$ is $\bigoplus_\lambda \mathbb{B}(W_\lambda)$. In particular, $\Omega_{fix} \ssubset \Omega$.
	\end{lemma}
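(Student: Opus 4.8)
The plan is to identify $\Omega_{fix}$, i.e.\ $\ker\mathcal{L}^\beta_N$, with the commutant $\{\pi_N(a),\pi_N(a^*)\}'$ inside $\mathbb{B}(V^{\otimes N})$, and then to read off the stated description from the Schur--Weyl bookkeeping already in place in this section. First note that $\{\pi_N(a),\pi_N(a^*)\}'=\pi_N(\mathfrak{su}(2))'$: since $\pi_N(h)=[\pi_N(a^*),\pi_N(a)]$ and the Pauli matrices $\sigma_x,\sigma_y$ are complex linear combinations of $a$ and $a^*$, the operators $\pi_N(a),\pi_N(a^*)$ generate the same (finite-dimensional, $*$-closed) subalgebra of $\mathbb{B}(V^{\otimes N})$ as the whole representation, and commutants only see this subalgebra. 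In particular $\ker\mathcal{L}^\beta_N$ will automatically come out to be a $*$-algebra.

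The core step is the equality $\ker\mathcal{L}^\beta_N=\{\pi_N(a),\pi_N(a^*)\}'$. The inclusion ``$\supseteq$'' is a one-line computation: if $y$ commutes with $\pi_N(a)$ and $\pi_N(a^*)$ then $L_{\pi_N(a)}(y)=2\pi_N(a^*)y\pi_N(a)-\pi_N(a^*)\pi_N(a)y-y\pi_N(a^*)\pi_N(a)=0$ and likewise $L_{\pi_N(a^*)}(y)=0$, so $\mathcal{L}^\beta_N(y)=0$. For ``$\subseteq$'' I would use the Lindblad dissipation (\emph{carr\'e du champ}) identity $L_V(y^*y)-L_V(y^*)y-y^*L_V(y)=2[V,y]^*[V,y]$, valid for any jump operator $V$; summing the two terms defining $\mathcal{L}^\beta_N$ yields
\begin{equation*}
	\mathcal{L}^\beta_N(y^*y)-\mathcal{L}^\beta_N(y^*)\,y-y^*\,\mathcal{L}^\beta_N(y)\;=\;c_1\,[\pi_N(a),y]^*[\pi_N(a),y]\;+\;c_2\,[\pi_N(a^*),y]^*[\pi_N(a^*),y]\;\ge\;0
\end{equation*}
with $c_1,c_2>0$. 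If $y\in\ker\mathcal{L}^\beta_N$ the left-hand side reduces to $\mathcal{L}^\beta_N(y^*y)$, so using invariance of the Gibbs state $d_N$ (equivalently $\tau(d_N\mathcal{L}^\beta_N(z))=0$ for all $z$) the trace of the displayed nonnegative operator against $d_N$ vanishes; faithfulness of $d_N$ then forces $[\pi_N(a),y]=[\pi_N(a^*),y]=0$, i.e.\ $y\in\{\pi_N(a),\pi_N(a^*)\}'$.

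With $\Omega_{fix}=\pi_N(\mathfrak{su}(2))'$ established, the remainder is representation theory. Under $V^{\otimes N}\cong\bigoplus_{\lambda\in\mathcal{P}_2(N)}V_\lambda\otimes W_\lambda$ the representation acts as $\bigoplus_\lambda\pi_\lambda\otimes 1_{W_\lambda}$, so by the double centralizer theorem the von Neumann algebra it generates is $\bigoplus_\lambda\mathbb{B}(V_\lambda)\otimes 1_{W_\lambda}$; since the $V_\lambda$ are pairwise inequivalent $\mathfrak{su}(2)$-modules, its commutant is $\bigoplus_\lambda 1_{V_\lambda}\otimes\mathbb{B}(W_\lambda)\cong\bigoplus_\lambda\mathbb{B}(W_\lambda)$, which is the first assertion. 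For the ``in particular'' part, observe that $|\pi_N(a)|=(\pi_N(a)^*\pi_N(a))^{1/2}$ and $|\pi_N(a^*)|=(\pi_N(a^*)^*\pi_N(a^*))^{1/2}$ are continuous functions of self-adjoint elements of the algebra generated by $\pi_N(a)$ and $\pi_N(a^*)$, hence lie in $\pi_N(\mathfrak{su}(2))''$; consequently $\Omega_{fix}=\pi_N(\mathfrak{su}(2))'$ commutes with both, i.e.\ $\Omega_{fix}\subseteq C^*(|\pi_N(a)|)'\cap C^*(|\pi_N(a^*)|)'=\Omega$.

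The only genuinely non-formal point is the ``$\subseteq$'' inclusion in the core step, and even there the argument reduces to the elementary dissipation identity plus faithfulness of $d_N$; I do not anticipate a real obstacle, the rest being a direct application of Schur--Weyl duality and functional calculus.
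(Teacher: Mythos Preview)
Your argument is correct and follows essentially the same route as the paper: both identify $\Omega_{fix}$ with the commutant $\{\pi_N(a),\pi_N(a^*)\}'$ via a positivity argument (the paper packages this as the Dirichlet form $\langle\iota_N(x),\iota_N(x)\rangle_L=\|d_N^{1/4}[\pi_N(a),x]d_N^{1/4}\|_2^2+\|d_N^{1/4}[\pi_N(a^*),x]d_N^{1/4}\|_2^2$, which is exactly your carr\'e du champ traced against $d_N$). The only cosmetic difference is the order of the final two steps: the paper first lands in $\Omega$ and then invokes irreducibility on each block, whereas you compute $\pi_N(\mathfrak{su}(2))'$ directly via Schur--Weyl and deduce $\Omega_{fix}\subset\Omega$ afterwards from functional calculus.
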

	\begin{proof}
		Given the faithful density $d_N$, we obtain an embedding:
		\begin{equation*}
			\iota_N:\mathbb{B}(V^{\otimes N})\rightarrow S_2(V^{\otimes N}): x\mapsto d^{1/4}_N x d^{1/4}_N
		\end{equation*}
		Recall that on the Schatten 2-space, the Lindbladian $L^\beta_N$ uniquely defines a Dirichlet form. Restricted to the dense subspace $\iota_N(\mathbb{B}(V^{\otimes N}))$, the Dirichlet form is given by the formula:
		\begin{align*}
			\begin{split}
				\langle\xi,\xi\rangle_L &= \langle\iota_N(x), \iota_N(x)\rangle_L
				\\
				&=||d_N^{1/4}(\pi_N(a) x -  x\pi_N(a))d_N^{1/4}||^2_2 + ||d_N^{1/4}(\pi_N(a^*) x -  x\pi_N(a^*) )d_N^{1/4}||^2_2 
			\end{split}
		\end{align*}
		where $||\cdot||^2_2$ is the Hilbert-Schmidt inner product. Therefore $x\in\Omega_{fix}$ if and only if $[\pi_N(a), x] = [\pi_N(a^*), x] = 0$. In particular, $x$ must commute with $|\pi_N(a)|$ and $|\pi_N(a^*)|$, and thus $\Omega_{fix}\in\Omega = \bigoplus_\lambda \ell_\infty^{n_\lambda}\otimes \mathbb{B}(W_\lambda)$, where $n_\lambda := \dim(W_\lambda)$ and the equation follows from the irreducible decomposition and lemma \ref{lemma:commutant}.
		
		For an operator in one irreducible component $x\in\ell_\infty^{n_\lambda}\otimes \mathbb{B}(W_\lambda)$, by irreducibility $x$ must have the form $\mathbb{C}\otimes \mathbb{B}(W_\lambda)$. Therefore $\Omega_{fix} = \bigoplus_\lambda\mathbb{B}(W_\lambda)$.  
	\end{proof}
	\begin{corollary}
		\begin{equation}
		D(\rho_N | E^*_{fix}\rho_N) = D(\rho_N | E^*_\Omega\rho_N) + D(E^*_\Omega\rho_N | E^*_{fix}\rho_N)
	\end{equation}
	\end{corollary}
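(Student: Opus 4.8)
The plan is to read this identity as the \emph{chain rule} (Pythagorean identity) for the Umegaki relative entropy with respect to the nested pair of conditional expectations $E_\Omega\colon\mathbb{B}(V^{\otimes N})\to\Omega$ and $E_{fix}\colon\mathbb{B}(V^{\otimes N})\to\Omega_{fix}$; these are genuinely nested because $\Omega_{fix}\ssubset\Omega$ by Lemma \ref{lemma:Dirichlet}. Setting $\rho_\Omega:=E^*_\Omega\rho_N$ and $\rho_{fix}:=E^*_{fix}\rho_N$ and expanding $D(\rho|\sigma)=\tr(\rho\log\rho)-\tr(\rho\log\sigma)$, all the entropy-of-$\rho_N$ terms cancel and the difference of the two sides of the claimed identity becomes
\begin{equation*}
D(\rho_N|\rho_{fix})-D(\rho_N|\rho_\Omega)-D(\rho_\Omega|\rho_{fix})=\tr\big((\rho_N-\rho_\Omega)(\log\rho_\Omega-\log\rho_{fix})\big)\pl.
\end{equation*}
So the whole statement reduces to showing this right-hand side vanishes, and I would organise the argument around two observations.

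First, $\log\rho_\Omega-\log\rho_{fix}$ lies in $\Omega$. On the block $V_\lambda\otimes W_\lambda$ the reference density $d_N=N_\beta\exp(-\tfrac{\beta}{2}\pi_N(h))$ is proportional to $e^{-\frac{\beta}{2}\pi_\lambda(h)}\otimes 1_{W_\lambda}$, i.e.\ it is \emph{diagonal in the weight basis} on $V_\lambda$. Since by Lemma \ref{lemma:commutant} the algebra $\Omega$ consists, blockwise, of the operators that are diagonal in the weight basis on $V_\lambda$ and arbitrary on $W_\lambda$, the $d_N$-preserving conditional expectation $E_\Omega$ acts blockwise as ``take the diagonal part on $V_\lambda$, act as the identity on $W_\lambda$'', so it is in fact also trace preserving and $\rho_\Omega$ has density in $\Omega$. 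The same blockwise computation shows that $\rho_{fix}$ has density $\bigoplus_\lambda c_\lambda\,e^{-\frac{\beta}{2}\pi_\lambda(h)}\otimes\tr_{V_\lambda}(\rho_N|_{V_\lambda\otimes W_\lambda})$, which again lies in $\Omega$ (equivalently: $E_{fix}=E_{fix}\circ E_\Omega$ by the tower property for $d_N$-preserving conditional expectations, whence $E^*_\Omega\rho_{fix}=\rho_{fix}$, and $E_\Omega$ is trace preserving). Hence $\log\rho_\Omega$, $\log\rho_{fix}$ and their difference all lie in $\Omega$. Second, for any $y\in\Omega$ one has $\tr(\rho_N\,y)=\tr\big((E^*_\Omega\rho_N)\,y\big)=\tr(\rho_\Omega\,y)$, because $E^*_\Omega$ is the trace-adjoint of the conditional expectation $E_\Omega$ and $E_\Omega(y)=y$.

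Applying the second observation with $y=\log\rho_\Omega-\log\rho_{fix}$ gives $\tr\big((\rho_N-\rho_\Omega)(\log\rho_\Omega-\log\rho_{fix})\big)=0$, which is exactly what remained, hence
\begin{equation*}
D(\rho_N|E^*_{fix}\rho_N)=D(\rho_N|E^*_\Omega\rho_N)+D(E^*_\Omega\rho_N|E^*_{fix}\rho_N)\pl.
\end{equation*}
The step I expect to be the only real obstacle is the first one: because $E_{fix}$ and $E_\Omega$ are the conditional expectations associated with the KMS state $d_N$ rather than with the trace, the state $\rho_{fix}=E^*_{fix}\rho_N$ is \emph{not} supported on the fixed-point algebra $\Omega_{fix}$, so one cannot simply invoke the textbook chain rule ``for a state living on the small algebra''. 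What rescues the argument is precisely the explicit block structure of $d_N$ together with the description of $\Omega$ from Lemma \ref{lemma:commutant}, which guarantee that $\rho_{fix}$ is nonetheless supported on the larger algebra $\Omega$, and that is all that is needed for $\log\rho_{fix}$ to interact correctly with $E_\Omega$.
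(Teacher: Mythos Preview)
Your proposal is correct and follows essentially the same approach as the paper: the paper's proof is the single sentence ``This is an immediate consequence of the previous lemma and the chain rule of relative entropy,'' where the previous lemma is precisely the inclusion $\Omega_{fix}\ssubset\Omega$ from Lemma~\ref{lemma:Dirichlet}. Your argument is an explicit unpacking of that chain rule, and the subtlety you flag --- that the $d_N$-preserving conditional expectations might differ from the trace-preserving ones --- is resolved exactly as you say, since $d_N\in\Omega_{fix}\ssubset\Omega$ forces the two to coincide here.
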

	\begin{proof}
		This is an immediate consequence of the previous lemma and the chain rule of relative entropy.
	\end{proof}
	\begin{remark}
		Lemma \ref{lemma:Dirichlet} gives a complete description of the fixed point algebra. We stress that the quantum Markov semigroup generated by $\mathcal{L}^\beta_N$ is non-primitive. 
		
		To clarify the relation between the various algebras used in our proof, we present the following commutative diagram:
		\begin{center}
			\begin{tikzcd}
				& C^*(|\pi_N(a)|)' \ar[dr, "E_{\pi_N(a^*)}"]
				&
				&[1.5em] \\
				\mathbb{B}(V^{\otimes N}) \ar[ur, "E_{\pi_N(a)}"] \ar[dr, "E_{\pi_N(a^*)}"']
				&
				& \Omega \ar[r]
				& \Omega_{fix} \\
				& C^*(|\pi_N(a^*)|)' \ar[ur, "E_{\pi_N(a)}"']
				&
				&
			\end{tikzcd}
		\end{center}
		The commuting square of the conditional expectations is a special feature of $\mathfrak{su}(2)$ representation (more generally any representation of simple Lie algebra). It may not be true for general Lindbladian generators.
	\end{remark}
	So far, we have discussed the left half of the diagram. To calculate $\CLSI(\mathcal{L}^\beta_N)$, it remains to bound $D(E^*_\Omega\rho_N | E^*_{fix}\rho_N)$ from above by an entropy production term. This calculation will be done in two parts. We first bound $EP_{\pi_N(a)}(E^*_\Omega\rho_N)$ from above by $EP_{\pi_N(a)}(\rho_N)$. To achieve this, we use a generalization of Lieb concavity theorem \cite{HP} and apply another $2\times2$-matrix trick.
	\begin{lemma}\label{lemma:concavitydouble}
		Let $M$ be a finite dimensional von Neumann algebra with a fixed trace $\tau$ and $N\ssubset M$ be a subalgebra. Let $E:M\rightarrow N$ be the conditional expectation onto $N$. In addition, let $\rho_N\in L_1(N)_+$ be a reference state with density $d_N$ and let $\rho = d_N^{1/2}xd_N^{1/2} \in L_1(M)_+$ be a strictly positive normalized state on $M$. Finally let $a\in M$ (not necessarily self-adjoint) be the generator of a Lindbladian $\mathcal{L}^\beta_a$ such that $\mathcal{L}^\beta_a$ fixes the reference state. Then we have:
		\begin{enumerate}
			\item The following formula for entropy production holds:
			\begin{equation}
				EP_a(\rho) + EP_{a^*}(\rho) = EP_A(\rho_2)
			\end{equation}
			where $A = \begin{pmatrix}
				0 & a \\ a^* & 0
			\end{pmatrix}$. In addition, we denote $\rho_2 := \begin{pmatrix}
				\rho & 0 \\ 0 & \rho
			\end{pmatrix}$ as the doubled-up state of $\rho$ in $\mathbb{M}_2(M)$.
		\item Let $E_2:=E\otimes id:\mathbb{M}_2(M)\rightarrow \mathbb{M}_2(N)$ be the conditional expectation induced by $E$, then we have:
		\begin{equation}
			EP_a(E^*\rho) + EP_{a^*}(E^*\rho)  = EP_A(E_2^*\rho_2) \leq EP_A(\rho_2) = EP_a(\rho) + EP_{a^*}(\rho)\pl.
		\end{equation} 
		\end{enumerate}
	\end{lemma}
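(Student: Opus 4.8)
The two assertions are of different natures: the first is an algebraic identity and the second is a genuine convexity/monotonicity statement, so I would treat them separately.

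For part (1), the plan is a direct computation using the block structure. Writing $A=\begin{pmatrix}0&a\\a^*&0\end{pmatrix}$ and $\rho_2=\begin{pmatrix}\rho&0\\0&\rho\end{pmatrix}$ in $\mathbb{M}_2(M)$, I would compute the commutator
\[
[A,\rho_2]=\begin{pmatrix}0 & [a,\rho]\\ [a^*,\rho] & 0\end{pmatrix},
\]
since the diagonal entries of $\rho_2$ are equal. Because $\rho_2$ is block-diagonal with both diagonal blocks equal to $\rho$, its spectral decomposition is the ``doubled'' spectral decomposition of $\rho$, and the double operator integral $J^{\log}_{\rho_2}$ acts block-entrywise: on the $(1,2)$ block it is $J^{\log}_\rho$ and likewise on the $(2,1)$ block. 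Hence $J^{\log}_{\rho_2}[A,\rho_2]$ is again off-diagonal with blocks $J^{\log}_\rho[a,\rho]$ and $J^{\log}_\rho[a^*,\rho]$. Taking the trace pairing on $\mathbb{M}_2(M)$ (with the normalized trace $\frac12(\tau\otimes\tr_2)$ or the unnormalized one, fixing a convention consistently) and using that the trace of an off-diagonal block matrix against another off-diagonal block matrix picks up exactly the two cross terms, I obtain $EP_A(\rho_2)=EP_a(\rho)+EP_{a^*}(\rho)$. This is essentially bookkeeping once the block action of the double operator integral is pinned down.

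For part (2), the identity $EP_a(E^*\rho)+EP_{a^*}(E^*\rho)=EP_A(E_2^*\rho_2)$ follows by applying part (1) to the state $E^*\rho$ in place of $\rho$, after observing that the doubled state of $E^*\rho$ is exactly $E_2^*\rho_2=\begin{pmatrix}E^*\rho&0\\0&E^*\rho\end{pmatrix}$, which holds because $E_2=E\otimes\id_{\mathbb{M}_2}$ is block-diagonal. So the real content is the inequality $EP_A(E_2^*\rho_2)\le EP_A(\rho_2)$, i.e.\ that the entropy production with a fixed self-adjoint generator decreases under the conditional expectation applied to the state. The plan is to invoke the generalized Lieb concavity result of \cite{HP}: the functional $\rho\mapsto EP_A(\rho)=\langle[A,\rho],J^{\log}_\rho[A,\rho]\rangle$ can be written (as in \cite{HP}, and as is standard in the theory of quantum $\chi^2$/entropy-production functionals) as a jointly convex function of $\rho$, or equivalently expressed via an integral representation $\int_0^\infty \Tr\big([A,\rho]^*(\rho+s)^{-1}[A,\rho](\rho+s)^{-1}\big)\,ds$ up to the correct weight, which is jointly operator-convex in $\rho$; monotonicity under the trace-preserving, unital, completely positive map $E^*$ (data processing) then gives $EP_A(E^*_2\rho_2)\le EP_A(\rho_2)$. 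Concretely I would: (i) recall from \cite{HP} the integral/variational formula for $EP_A$; (ii) check that $E^*$ is a quantum channel compatible with the trace; (iii) apply the data-processing inequality for this functional; and (iv) translate back through part (1).

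The main obstacle is step (i)--(iii) of part (2): one must be careful that the entropy-production functional with a \emph{fixed} generator $A$ is indeed monotone under $E^*$ in the required direction. The subtlety is that $A\in M\subset\mathbb{M}_2(M)$ is not affected by $E_2$, so monotonicity is not the generic data-processing inequality for relative entropy but rather monotonicity of the specific bilinear-in-commutator form; this is exactly where the Hiai--Petz generalization of Lieb concavity \cite{HP} is needed, and I would make sure the hypothesis that $\mathcal L^\beta_a$ fixes the reference state $\rho_N$ (so that $E^*\rho$ lies in the right domain and the double operator integrals are well defined on $N$) is used to legitimize applying the inequality on the subalgebra. Everything else is routine block-matrix manipulation.
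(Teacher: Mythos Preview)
Your proposal is correct and follows essentially the same approach as the paper: a block-matrix computation for part (1) showing that the doubled entropy production splits into the two summands, and for part (2) the observation that $E_2^*\rho_2$ is the doubled state of $E^*\rho$ together with the Hiai--Petz generalized Lieb concavity applied to the \emph{self-adjoint} generator $A$. The paper carries out part (1) in the monotone-metric language (writing $EP_A$ via $f_\beta^{-1}(\Delta_{\rho_2})R_{\rho_2}^{-1}$ rather than $J^{\log}_{\rho_2}$), but this is just an equivalent formulation of the same quantity, and it explicitly flags the point you anticipated: Hiai--Petz requires a self-adjoint generator, which is precisely why the $2\times 2$ doubling is needed before invoking it.
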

	\begin{proof}
		For the first claim, we construct a monotone metric using a $2\times 2$-matrix trick \cite{CM1}\cite{CM2}\cite{JLR}:
		\begin{align}\label{equation:doubleentropy}
			\begin{split}
				EP&_A(\rho_2) := \langle[A, x\otimes id_2],J^{\log}_{\rho_2}[A,\rho]\rangle_{\rho_N\otimes id_2} \\&= \tau\otimes tr_2\bigg(\rho_2^{\frac{1}{2}}[A, x\otimes id_2]^*\rho_2^{\frac{1}{2}}f_\beta^{-1}(\Delta_{\rho_2})R_{\rho_2}^{-1}\rho_2^{\frac{1}{2}}[A, x\otimes id_2]\rho_2^{\frac{1}{2}}\bigg)\\
				&= \tau\otimes tr_2\begin{pmatrix}
					\rho^{1/2}[a^*, x]^*\rho^{\frac{1}{2}}f_\beta^{-1}(\Delta_\rho)R^{-1}_\rho\rho^{\frac{1}{2}}[a^*,x]\rho^{1/2} & 0 \\ 0 & \rho^{1/2}[a,x]^*\rho^{\frac{1}{2}}f_\beta(\Delta_\rho)^{-1}R_\rho^{-1}\rho^{\frac{1}{2}}[a,x]\rho^{\frac{1}{2}}
				\end{pmatrix}
				\\
				&=\tau(\rho^{\frac{1}{2}}[a^*, x]^*\rho^{\frac{1}{2}}f_\beta^{-1}(\Delta_\rho)R^{-1}_\rho\rho^{\frac{1}{2}}[a^*,x]\rho^{\frac{1}{2}} ) + \tau(\rho^{\frac{1}{2}}[a,x]^*\rho^{\frac{1}{2}}f_\beta(\Delta_\rho)^{-1}R_\rho^{-1}\rho^{\frac{1}{2}}[a,x]\rho^{\frac{1}{2}}) \\&= EP_a(\rho) + EP_{a^*}(\rho)
			\end{split}
		\end{align}
		where the first equation should be taken as the definition of $EP_A$ and the second equation follows from the well-known double operator integral representation of the monotone metric \cite{CM1}\cite{HP}\cite{JLR}. Here the function $f_\beta$ is operator monotone and is given by $f_\beta(t):=\int_0^1 e^{s(\beta-\frac{1}{2})}t^sds$ \cite{CM1}. And $\Delta_\rho$ is the modular operator and $R_\rho$ is the right multiplication operator.  \footnote{Although we can define the Lindbladian $\mathcal{L}^\beta_A := e^{-\beta /2}L_A + e^{\beta/ 2}L_{A^*}$ where $L_Ax = 2A^*xA - A^*Ax - xA^*A$, the generator $A$ is not an eigenvector of the modular automorphism group $\sigma_t^{\rho_2}$. Therefore $\mathcal{L}^\beta_A$ is not $\rho_2$-detailed balanced and the theory of Carlen and Maas \cite{CM1}\cite{CM2} does not apply. Nevertheless, we can still define the monotone metric $\langle\cdot,f_\beta(\Delta_{\rho_2}^{-1})R_{\rho_2}^{-1}\cdot\rangle_{\rho_N\otimes id_2}$ as we did in the proof. And the convexity result of Hiai and Petz \cite{HP} still applies.}

		For the second claim, we used a generalization of Lieb concavity theorem due to Hiai and Petz \cite{HP}. Their result applies only to the case were the generator is self-adjoint. Therefore the result cannot be applied directly to $EP_a$ and $EP_{a^*}$.
		
		The equality: $EP_a(E^*\rho) + EP_{a^*}(E^*\rho) = EP_A(E^*_2\rho_2)$ follows directly from equation \ref{equation:doubleentropy} and the formula: $E_2^*\rho_2 = \begin{pmatrix}
			E^*\rho & 0 \\ 0 & E^*\rho
		\end{pmatrix}$. Then applying the convexity result by Hiai and Petz \cite{HP}, we have the inequality. 
	\end{proof}
	\begin{remark}
		In the statement of lemma \ref{lemma:concavitydouble}, we require the reference state to have density in the subalgebra $N$. Hence the conditional expectation with respect to the trace also preserves the reference state:\begin{equation*}
			\rho_N(E x) = \tau(d_N^{1/2}Ex d_N^{1/2}) = \tau(E(d_N^{1/2}xd_N^{1/2})) = \tau(d_N^{1/2}xd_N^{1/2}) = \rho_N(x)\pl.
		\end{equation*}
		This observation is important for the argument of lemma \ref{lemma:concavitydouble} to go through.
	\end{remark}
	\begin{corollary}\label{corollary:omegaFisherDecay}
		 Using the same notation as lemma \ref{lemma:concavitydouble}, for any state $\rho := d_N^{1/2}xd_N^{1/2}$ where $d_N = N_\beta\exp(-\frac{\beta}{2}\pi_N(h))$ is the density of the reference state, then we have:
		 \begin{equation}
		 	EP_{\pi_N(a)}(E^*_\Omega\rho) + EP_{\pi_N(a^*)}(E^*_\Omega\rho) \leq EP_{\pi_N(a)}(\rho) + EP_{\pi_N(a^*)}(\rho)\pl.
		 \end{equation}
	\end{corollary}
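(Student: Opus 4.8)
The plan is to obtain Corollary~\ref{corollary:omegaFisherDecay} as a direct application of Lemma~\ref{lemma:concavitydouble}, specialized to $M = \mathbb{B}(V^{\otimes N})$, the subalgebra $N = \Omega = C^*(|\pi_N(a)|)' \cap C^*(|\pi_N(a^*)|)'$, the conditional expectation $E = E_\Omega$, and the (non-self-adjoint) generator $a = \pi_N(a)$, for which $a^* = \pi_N(a^*)$. Part~(2) of that lemma yields, word for word,
\[
  EP_{\pi_N(a)}(E^*_\Omega \rho) + EP_{\pi_N(a^*)}(E^*_\Omega \rho) \;\le\; EP_{\pi_N(a)}(\rho) + EP_{\pi_N(a^*)}(\rho),
\]
so the task reduces to checking its three standing hypotheses in this case: that $\Omega$ is a subalgebra of $M$ (clear, as it is an intersection of commutants); that the single-Bohr-frequency Lindbladian with generator $\pi_N(a)$, namely $\mathcal{L}^\beta_N$, fixes the reference density $d_N$ (already recorded, $\mathcal{L}^{\beta\,*}_N(d_N) = 0$); and that $d_N$ lies in $\Omega$.

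The last point is the only one requiring an argument, and it is where I would invoke the weight-basis picture from the proof of Lemma~\ref{lemma:commutant}. On each irreducible component $V_\lambda$ of $V^{\otimes N}$, the operators $|\pi_N(a)|$ and $|\pi_N(a^*)|$ are diagonal in the weight basis $\big(\ket{n_\lambda, j}\big)_{0\le j\le n_\lambda}$, with eigenvalues $\alpha_{n_\lambda,j}$ and $\alpha_{n_\lambda,j+1}$ respectively; and the Hamiltonian satisfies $\pi_N(h)\ket{n_\lambda,j} = (n_\lambda - 2j)\ket{n_\lambda,j}$, so $d_N = N_\beta \exp\!\big(-\tfrac{\beta}{2}\pi_N(h)\big)$ is diagonal in the very same basis. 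Mutually diagonal operators commute, hence $d_N$ commutes with both $|\pi_N(a)|$ and $|\pi_N(a^*)|$, i.e.\ $d_N \in C^*(|\pi_N(a)|)' \cap C^*(|\pi_N(a^*)|)' = \Omega$. In particular $d_N \in L_1(\Omega)_+$, so $E_\Omega$ also preserves the reference state — the compatibility noted in the remark after Lemma~\ref{lemma:concavitydouble} — and all hypotheses are in force.

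With these verifications in place, Lemma~\ref{lemma:concavitydouble}(2) applies without change and produces the asserted inequality. The only genuinely substantive ingredient is Lemma~\ref{lemma:concavitydouble} itself: the Hiai–Petz convexity of the double-operator-integral entropy-production functional under a trace-preserving conditional expectation, combined with the $2\times 2$ doubling $A = \left(\begin{smallmatrix} 0 & a \\ a^* & 0\end{smallmatrix}\right)$ that converts the pair $(a,a^*)$ into a single self-adjoint generator; but that lemma is already available, so the corollary is immediate. The one place I would take care in the write-up is making explicit that the weight basis diagonalizing $\pi_N(h)$ is the one diagonalizing $|\pi_N(a)|$ and $|\pi_N(a^*)|$, since precisely this places $d_N$ inside $\Omega$; everything else is bookkeeping.
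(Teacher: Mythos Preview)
Your proposal is correct and follows essentially the same approach as the paper's own proof: both reduce the corollary to checking the hypotheses of Lemma~\ref{lemma:concavitydouble}, with the only substantive verification being that $d_N$ lies in $\Omega$ because it is diagonal in the canonical weight basis. You spell out this diagonality argument in more detail than the paper does, but the route is identical.
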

	\begin{proof}
		We only need to check all assumptions of lemma \ref{lemma:concavitydouble} hold. Since the density $d_N$ is diagonal in the canonical basis, it is in the subalgebra $\Omega = \bigoplus_\lambda \ell^{n_\lambda}_\infty\otimes \mathbb{B}(W_\lambda)$. All the other assumptions are known to hold by previous discussions. Hence we have the result.
	\end{proof}
	The remaining part of this section will be devoted to prove the following $\CLSI$ inequality
	\begin{proposition}\label{proposition:graph}
		Let $\rho$ be a state in $\Omega$ then there exists a constant $C(\beta)$ depending only on $\beta$ such that
		\begin{equation}
			D(\rho | E^*_{fix}\rho )\leq C(\beta)\big(EP_{\pi_N(a)}(\rho) + EP_{\pi_N(a^*)}(\rho)\big)
		\end{equation}
	\end{proposition}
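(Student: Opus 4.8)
The plan is to push everything into the intermediate algebra $\Omega$, where — by Lemma \ref{lemma:commutant} and Lemma \ref{lemma:Dirichlet} — $\Omega=\bigoplus_\lambda \ell_\infty^{\,\dim V_\lambda}\otimes\mathbb{B}(W_\lambda)$, $\Omega_{fix}=\bigoplus_\lambda\mathbb{C}\otimes\mathbb{B}(W_\lambda)$, and the reference density $d_N$ already lies in $\Omega$. The key structural point is that $\mathcal{L}^\beta_N$ leaves $\Omega$ invariant: exactly as in the commutant computation for $|\pi_N(a)|$, conjugating a \emph{diagonal} operator by $\pi_N(a)$ or $\pi_N(a^*)$ returns a diagonal operator inside each irreducible block, so on the block labelled by $\lambda$ (with $n+1=\dim V_\lambda$) the Lindbladian restricts to $\mathcal{L}_n\otimes\mathrm{id}_{\mathbb{B}(W_\lambda)}$, where $\mathcal{L}_n$ generates a reversible continuous--time birth--death chain on the path $\{0,1,\dots,n\}$: the forward rate on the edge $\{j,j+1\}$ is $2e^{\beta/2}\alpha_{n,j+1}^2$, the backward rate is $2e^{-\beta/2}\alpha_{n,j+1}^2$, and the reversible measure is $\mu^{(n)}_j=e^{\beta j}/\sum_{l\le n}e^{\beta l}$, i.e.\ the restriction of $d_N$. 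Correspondingly $E^*_{fix}$ acts block--wise as (the $\mu^{(n)}$--average) $\otimes\,\mathrm{id}_{\mathbb{B}(W_\lambda)}$.

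\textbf{Reduction.} Both sides of the asserted inequality decompose over the blocks: for a state supported in $\Omega$, the relative entropy $D(\,\cdot\,|\,E^*_{fix}\,\cdot\,)$ and the entropy production $EP_{\pi_N(a)}+EP_{\pi_N(a^*)}$ split into the convex combination over $\lambda$ of the corresponding quantities for $\mathcal{L}_n\otimes\mathrm{id}$ (the ``which block'' distribution being preserved by $E^*_{fix}$, so no classical cross term appears). Moreover, up to a factor depending only on $\beta$ — produced by the weights $e^{\pm\beta/2}$ and the modular function $f_\beta$, and removed exactly by the $\mathbb{M}_2$--dilation of Section \ref{section:abstract} — the block entropy production is the Dirichlet/Fisher--information form of $\mathcal{L}_n\otimes\mathrm{id}$. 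Hence the Proposition reduces to a \emph{complete} modified log--Sobolev bound for the path chains that is uniform in $n$, namely $\CLSI(\mathcal{L}_n)\ge C(\beta)>0$. To get this I would first use $\alpha_{n,j+1}^2=(j+1)(n-j)\ge n\ge 1$ on every edge (the minimum being attained at the two ends) to conclude that the Dirichlet form, and likewise the Fisher--information form, of $\mathcal{L}_n$ dominates that of the ``flat'' chain $\mathcal{M}_n$ with all edge weights equal to $1$ (rates $2e^{\pm\beta/2}$) and the same measure $\mu^{(n)}$; therefore $\CLSI(\mathcal{L}_n)\ge\CLSI(\mathcal{M}_n)$, and it suffices to bound $\CLSI(\mathcal{M}_n)$ from below uniformly in $n$. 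Now $\mathcal{M}_n$ is the finite truncation of a biased reflected random walk: its rates are constant, hence log--concave, and $\mu^{(n)}$ is a truncated geometric, log--linear hence log--concave, measure. For such chains the geometric Ricci curvature in the sense of \cite{MBLGMJ} is nonnegative — just as for the Schur multipliers behind Theorem \ref{theorem:s.a.CLSI} — and the complete return time $t_{cb}(\mathcal{M}_n)$ is bounded by a function of $\beta$ alone, because the $L^1_\infty\to L_\infty$ distance to equilibrium is controlled once the invariant measure is geometrically concentrated near an endpoint (one also has the uniform spectral gap $g(\beta)=2(e^{\beta/4}-e^{-\beta/4})^2$ as a sanity check). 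The curvature--plus--return--time criterion of \cite{MBLGMJ} then gives $\CLSI(\mathcal{M}_n)\gtrsim 1/t_{cb}(\mathcal{M}_n)\ge C(\beta)$ uniformly in $n$, and unwinding the constants through the reduction yields the Proposition.

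\textbf{Main obstacle.} Essentially all of the difficulty sits in the $n$--independence of the last step: showing that the truncated--geometric birth--death chain $\mathcal{M}_n$ has complete modified log--Sobolev constant — equivalently, complete return time — controlled purely by $\beta$, despite its $(n{+}1)$--point state space. This is the regime in which a birth--death chain whose invariant measure decays geometrically away from one endpoint behaves like an $O_\beta(1)$--size system, but converting that heuristic into a uniform estimate (through the explicit curvature bound, or through a Hardy--type inequality on the path combined with the return--time comparison of \cite{MBLGMJ}) is the technical core. A secondary, more bookkeeping, point is to make precise that on $\Omega$ the sum $EP_{\pi_N(a)}+EP_{\pi_N(a^*)}$ is comparable — with $\beta$--dependent constants only — to the intrinsic Fisher--information form of $\mathcal{L}_n\otimes\mathrm{id}$; following Section \ref{section:abstract}, this is handled by conjugating away the modular twist $\sigma^{d_N}_t(\pi_N(a))=e^{i\beta t}\pi_N(a)$ with a $2\times2$ matrix dilation, after which the Carlen--Maas formula for the entropy production applies in the $d_N$--detailed--balanced setting.
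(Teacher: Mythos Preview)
Your reduction to the block birth--death chains $\mathcal{L}_n$ on $\ell_\infty^{\,n+1}$ (with edge rates $\alpha_{n,j+1}^2$ and geometric reversible measure $\mu^{(n)}_j\propto e^{\beta j}$) is correct and matches the paper. The gap is in your treatment of the flat comparison chain $\mathcal{M}_n$: the return--time claim is false, and the heuristic behind it is inverted.

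For a primitive chain on $\ell_\infty^{\,n+1}$ the conditional $L^1_\infty\!\to L_\infty$ norm is simply the $L_1(\mu^{(n)})\!\to L_\infty$ norm, so
\[
\|T_t-E:L_1(\mu^{(n)})\to L_\infty\|\;=\;\sup_{x,y}\,\Bigl|\frac{p_t(x,y)}{\mu^{(n)}(y)}-1\Bigr|.
\]
At $t=0$ this equals $1/\mu^{(n)}_{\min}-1\sim e^{\beta n}$, because the truncated geometric measure puts mass $\sim e^{-\beta n}$ on the endpoint $j=0$. Even with your (correct) uniform spectral gap $g(\beta)$ the distance cannot decay faster than $e^{\beta n-g(\beta)t}$, whence $t_{cb}(\mathcal{M}_n)\gtrsim\beta n/g(\beta)$ --- linear in $n$, not $O_\beta(1)$. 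Geometric concentration of the invariant measure makes the $L_\infty$ return time \emph{large}, not small, since one must resolve the tiny mass at the unlikely end. The curvature--plus--return--time criterion therefore yields only $\CLSI(\mathcal{M}_n)\gtrsim 1/n$, and after your weight--$1$ comparison nothing uniform survives. (A possible repair: retain the full inequality $\alpha_{n,j+1}^2\ge n$, so that the Fisher form of $\mathcal{L}_n$ dominates $n$ times that of $\mathcal{M}_n$ and the lost factor of $n$ is recovered --- but then you must also justify nonnegative geometric Ricci curvature for the \emph{non--tracially--symmetric} chain $\mathcal{M}_n$, and the analogy with the Schur multipliers of Theorem~\ref{theorem:s.a.CLSI} does not apply.)

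The paper takes a different and cleaner route that bypasses return--time estimates altogether. It recognizes $\mathcal{L}_n$ as the biased Ehrenfest urn: embedding $V_{(n)}$ as the totally symmetric subspace of $V^{\otimes n}$, a direct computation on the diagonal shows that the tensor--product Lindbladian $\mathcal{L}^\beta_\otimes:=\sum_{j=1}^n\pi^j(\mathcal{L}^\beta_{\pi_{(1)}})$ restricts there to a nearest--neighbour generator with coefficients $\widehat{\gamma}_k^2=k$, and that $\mathcal{L}_n=\mathcal{L}^\beta_\otimes\big|_{\mathrm{diag}}+\mathcal{L}^\beta_{\widetilde{\gamma}}$ with $\widetilde{\gamma}_k^2=k(n-k)\ge 0$. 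Since $\CLSI(\mathcal{L}^\beta_\otimes)\ge\CLSI(\mathcal{L}^\beta_{\pi_{(1)}})$ by tensor stability of the two--point inequality, and the nonnegative remainder $\mathcal{L}^\beta_{\widetilde{\gamma}}$ only increases entropy production, one obtains $C(\beta)=\CLSI(\mathcal{L}^\beta_{\pi_{(1)}})^{-1}$ directly.
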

	Before giving the proof, recall the embedding:
	\begin{equation*}
		\iota:\mathbb{B}(V^{\otimes N})\rightarrow \mathbb{B}(\bigoplus_\lambda V_\lambda)\bigotimes\mathbb{M}_{n_0} 
	\end{equation*}
	where $V^{\otimes N}= \bigoplus_{\lambda\in\mathcal{P}_2(N)}V_\lambda^{\oplus \dim(W_\lambda)}$ is the irreducible decomposition and $n_0 = \max\{\dim(W_\lambda)\}$. The direct sum representation $\pi = \oplus_\lambda \pi_\lambda$ is multiplicity free. Since this embedding intertwines the Lindbladian $\mathcal{L}^\beta_N$ and $\mathcal{L}^\beta_\pi$, we have 
	\begin{equation*}
		\CLSI(\mathcal{L}^\beta_N)\geq \CLSI(\mathcal{L}^\beta_\pi)
	\end{equation*}
	Therefore to prove the proposition, it suffices to prove that there exists a constant $C(\beta)$ such that
	\begin{equation}
		D(\rho | E^*_{fix}\rho) \leq C(\beta)\big(EP_{\pi(a)}(\rho) + EP_{\pi(a^*)}(\rho)\big)
	\end{equation}
	where $\rho$ is a state in $\Omega_\pi := C^*(|\pi(a)|)'\cap C^*(|\pi(a^*)|)'$ and $E_{fix}$ is the conditional expectation onto the fixed point algebra of $\mathcal{L}^\beta_\pi$. 
	
	 Since $\pi$ is multiplicity free, by lemma \ref{lemma:commutant} $\Omega_\pi = \bigoplus_\lambda \ell_\infty^{n_\lambda}$ where $n_\lambda = \dim(V_\lambda)$. Suppose for each irreducible component $\lambda$, there exists a constant $C(\beta,\lambda)$ such that
	 \begin{equation*}
	 	D(\rho_\lambda | E^*_{fix}\rho_\lambda) \leq C(\beta,\lambda)\big(EP_{\pi_\lambda(a)}(\rho_\lambda) + EP_{\pi_\lambda(a^*)}(\rho_\lambda)\big)
	 \end{equation*}
	where $\rho_\lambda$ is a state on $\ell_\infty^{n_\lambda}$ and $E_{fix}$ is the conditional expectation onto $\mathbb{C}$. Then by direct sum decomposition, we can take the constant $C(\beta)$ to be $\max_\lambda\{C(\beta,\lambda)\}$
	
	Therefore, the proposition would be implied by the follwing lemma.
	\begin{lemma}\label{lemma:irrep}
		Let $V_\lambda$ be the irreducible component of highest dimension in $V^{\otimes N}$ (tensor product of fundamental representation), let $\ell_\infty^{n_\lambda}$ be the diagonal subalgerba of $\mathbb{B}(V_\lambda)$ and let $E_{fix}: \ell_\infty^{n_\lambda} \rightarrow \mathbb{C}$ be the projection onto $\mathbb{C}$. Then for any state $\rho$ on $\ell^{n_\lambda}_\infty$, there exists a constant $C(\beta)$ depending only on $\beta$ such that
		\begin{equation}
			D(\rho | E^*_{fix}\rho) \leq C(\beta)\big(EP_{\pi_\lambda(a)}(\rho)+ EP_{\pi_\lambda(a^*)}(\rho)\big)
		\end{equation}
		In particular, $C(\beta)$ is independent of dimension $n_\lambda$.
	\end{lemma}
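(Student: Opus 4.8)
The decisive observation is that $\rho$ lies in the \emph{diagonal} subalgebra $\ell_\infty^{n_\lambda}$ of $\mathbb{B}(V_\lambda)$, so the statement is really a classical one. Writing $n:=n_\lambda-1$ and $\rho=\sum_{j=0}^{n}p_j\,\ket{n,j}\bra{n,j}$ for a probability vector $(p_j)$, one checks directly that $E^*_{fix}\rho$ is the restriction of the Gibbs state $d_N$ to $V_\lambda$, i.e.\ the geometric weight $\sigma_j\propto e^{\mp\beta j}$ along the weight ladder, and that for diagonal $\rho$ the commutators $[\pi_\lambda(a),\rho]$ and $[\pi_\lambda(a^*),\rho]$ are supported on the first off-diagonals. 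Consequently $EP_{\pi_\lambda(a)}(\rho)+EP_{\pi_\lambda(a^*)}(\rho)$ collapses, up to $\beta$-dependent bounded factors, to the ($\sigma$-weighted) entropy dissipation functional of a nearest-neighbour birth--death chain on the path $\{0,\dots,n\}$ whose conductances are comparable (with $\beta$-dependent constants) to $\alpha_{n,j}^2=j(n-j+1)$. Thus Lemma~\ref{lemma:irrep} is equivalent to an $n$-uniform modified logarithmic Sobolev inequality for this chain relative to $\sigma$.

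First I would isolate the two structural facts that force dimension-independence. By Lemma~\ref{lemma:unigap} with $\gamma=1$, $\alpha_{n,j}^2=j(n-j+1)\ge n$ for all $1\le j\le n$, so the conductances --- hence the whole Dirichlet form --- are uniformly of size at least $n$; and $\sigma$ decays geometrically, at a rate depending only on $\beta$, away from one end of the ladder, so the sub-path on which $\alpha_{n,j}^2$ eventually decreases carries only exponentially little $\sigma$-mass. Concretely I would fix $j_0=O_\beta(1)$ with $\sigma(\{j\le j_0\})\ge 1-\varepsilon$ (possible uniformly in $n$): on the bulk $\{0,\dots,j_0\}$ the chain is, after a time change, a truncated $M/M/1$-type queue of fixed size with conductances $\asymp_\beta n$, whose modified log-Sobolev constant is therefore $\gtrsim_\beta n$ (in particular $\gtrsim_\beta 1$); the geometric tail is absorbed using convexity of relative entropy, the crude bound $D(\,\cdot\,\|\sigma)\le\log(1/\sigma_{\min})$, and the fact that the hopping rates actually grow along the tail; a standard gluing/perturbation argument for overlapping pieces then yields $D(\rho\,\|\,E^*_{fix}\rho)\le C(\beta)\big(EP_{\pi_\lambda(a)}(\rho)+EP_{\pi_\lambda(a^*)}(\rho)\big)$ with $C(\beta)$ --- in fact of order $1/n$ --- free of $n$.

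A more representation-theoretic variant sidesteps the explicit chain. The self-adjoint operator $H_\lambda:=\pi_\lambda(a)+\pi_\lambda(a^*)=\pi_\lambda(\sigma_x)$ is, via the $SU(2)$-representation, unitarily equivalent to $\pi_\lambda(h)$, hence has \emph{simple} spectrum $\{n,n-2,\dots,-n\}$ with minimal gap $2$, independent of $n$; Theorem~\ref{theorem:s.a.CLSI} then gives $\CLSI(L_{H_\lambda})\sim\lambda(L_{H_\lambda})\sim 1$. Combining this with the triangle inequality $EP_{H_\lambda}(\rho)\le 2\big(EP_{\pi_\lambda(a)}(\rho)+EP_{\pi_\lambda(a^*)}(\rho)\big)$, with the identity $\ell_\infty^{n_\lambda}\cap C^*(H_\lambda)'=\mathbb{C}$ (because $H_\lambda$ is a Jacobi matrix with non-vanishing off-diagonals), and with a quasi-factorization of the relative entropy for this pair of subalgebras, one again reaches the desired $n$-independent bound.

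The main obstacle is the same in both routes: controlling the ``last mile'' down to the scalars $\mathbb{C}$ with a constant that does not degrade as $n\to\infty$. In the classical route this is the uniform verification of a one-dimensional modified log-Sobolev (Hardy-type) criterion for a chain whose conductances degenerate at the far endpoint while remaining of size $\ge n$ on the bulk, the delicate point being the uniform control of the overlap and the tail. In the representation-theoretic route there are two subtleties: the pair $\ell_\infty^{n_\lambda},\,C^*(H_\lambda)'$ is not a commuting square, so one needs an $n$-uniform quasi-factorization constant; and $H_\lambda$ is not an eigenvector of the modular automorphism group of $\sigma$, so reconciling the tracial symmetry of $L_{H_\lambda}$ with the KMS structure attached to $E_{fix}$ requires the $2\times2$-matrix/monotone-metric device of Section~\ref{section:abstract} (together with the Hiai--Petz convexity input of Lemma~\ref{lemma:concavitydouble}). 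In either case, once the dimension-free estimate is proved for the top component $V_{(N)}$, the same argument applies verbatim to every $V_\lambda$ with $N$ replaced by the relevant $n\le N$, which is exactly what Proposition~\ref{proposition:graph} requires.
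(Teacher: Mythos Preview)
Your reduction of the problem to a classical birth--death chain on the weight ladder, with conductances $\alpha_{n,j}^2=j(n-j+1)$ and geometric stationary measure, is correct and is also the paper's starting point. But the paper follows neither of your two routes.

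Instead, the paper exploits the realisation of $V_\lambda$ as the totally symmetric subspace of $V^{\otimes n}$ to compare $\mathcal{L}^\beta_{\pi_\lambda}$ with the \emph{tensor product} Lindbladian $\mathcal{L}^\beta_\otimes := \sum_{j=1}^n \pi^j(\mathcal{L}^\beta_{\pi_{(1)}})$. Both restrict to nearest-neighbour generators on $\ell_\infty^{n_\lambda}$, and a direct count in the symmetric basis yields the splitting $\alpha_{n,k}^2 = \widehat{\gamma}_k^{\,2} + \widetilde{\gamma}_k^{\,2}$ with $\widehat{\gamma}_k^{\,2} = k$ exactly the coefficient produced by $\mathcal{L}^\beta_\otimes$ on the symmetric diagonal and $\widetilde{\gamma}_k^{\,2} = k(n-k) \geq 0$ the coefficient of another bona fide nearest-neighbour Lindbladian. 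Hence the entropy production of $\mathcal{L}^\beta_{\pi_\lambda}$ dominates that of $\mathcal{L}^\beta_\otimes$ on $\ell_\infty^{n_\lambda}$. Since $\mathcal{L}^\beta_\otimes$ has the same fixed point $\mathbb{C}$ with the same Gibbs measure (because $d_N = d_\beta^{\otimes N}$), and since by tensor stability $\CLSI(\mathcal{L}^\beta_\otimes) \geq \CLSI(\mathcal{L}^\beta_{\pi_{(1)}})$, one obtains $C(\beta) = \CLSI(\mathcal{L}^\beta_{\pi_{(1)}})^{-1}$, the two-point-space constant. No bulk/tail decomposition, no Hardy criterion, and no quasi-factorization are needed.

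On your alternatives: Route~1 attacks the same chain directly rather than via tensor comparison; the bulk/tail gluing you yourself flag as ``the main obstacle'' is genuinely delicate and you give no indication how to close it with a dimension-free constant. Route~2 has a more structural gap: Theorem~\ref{theorem:s.a.CLSI} gives CLSI for $L_{H_\lambda}$ relative to the \emph{trace}, so its fixed-point expectation lands in $C^*(H_\lambda)'$ with the uniform measure, not in $\mathbb{C}$ with the Gibbs measure $\sigma$; bridging the two by quasi-factorization or Holley--Stroock typically costs a factor of order $\|\sigma^{-1}\|_\infty$, which is not dimension-free here, and the $2\times 2$ device of Section~\ref{section:abstract} is tailored to modular eigenvectors, which $H_\lambda=\pi_\lambda(a)+\pi_\lambda(a^*)$ is not.
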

	The key idea of this proof is to study the relation between $\mathcal{L}^\beta_{\pi_\lambda}$ and the tensor product of $\mathcal{L}^\beta_{\pi_{(1)}}$ (the Lindbladian associated with a single fundamental representation $\pi_{(1)}$). When restricted to $\ell^{n_\lambda}_\infty$, both Lindbladians have the structure of nearest-neighbor Lindbladian, which we formally define as follows.
	\begin{definition}\label{definition:NN}
		Given a finite sequence of real scalars $(\gamma_j)_{0\leq j\leq n}$, consider the following operator in $\mathbb{M}_{n+1}$:\begin{equation*}
			S_\gamma := \sum_j \gamma_j \ket{j-1}\bra{j}
		\end{equation*}
		Then the nearest-neighbor Lindbladian associated with $(\gamma_j)$ is given by:
		\begin{equation}
			\mathcal{L}^\beta_\gamma x:= \beta^{1/2}L_{S_\gamma} x + \beta^{-1/2}L_{S^*_\gamma}x
		\end{equation}
		where
		\begin{equation*}
			L_{S_\gamma}x := 2 S^*_\gamma xS_\gamma - S^*_\gamma S_\gamma x - xS^*_\gamma S_\gamma
		\end{equation*}
	\end{definition}

	For $x  = \sum_{0\leq j \leq n} x_j \ket{j}\bra{j}\in \ell^{n+1}_\infty$, we have
	\begin{equation*}
		L_{S_\gamma}x = \sum_{0\leq j \leq n} 2(\gamma_{j-1}^2x_j\ket{j-1}\bra{j-1} - \gamma_{j}^2x_j\ket{j}\bra{j})
	\end{equation*}
	As an example, for $\mathcal{L}^\beta_{\pi_\lambda}$ restricted to $\ell^{n_\lambda}_\infty$, the corresponding coefficients are given by $\gamma_k^2 = \alpha_{n_\lambda - 1, k}$ for $0\leq k \leq n_\lambda - 1$. 
	
	We will explicitly calculate the difference between the two Lindbladians on $\ell^{n_\lambda}_\infty$. The difference turns out to be another nearest-neighbor Lindbladian. This calculation depends on the explicit description of $V_\lambda$ as the totally symmetric subspace of $V^{\otimes N}$. 
	
	\begin{proof}
		Let $\pi^j := 1\otimes...\otimes\pi_1\otimes...\otimes 1$ be the $j$-th component in the Lie algebra tensor product representation $\pi_{(1)}^{\otimes N}$, then the tensor product Lindbladian is given by
		\begin{align*}
			\begin{split}
				\mathcal{L}^\beta_\otimes x &:= \sum_{1\leq j\leq N}\pi^j(\mathcal{L}^\beta_{\pi_{(1)}})x = \sum_{1\leq j\leq N}\beta^{1/2}\pi^j(L_{\pi_{(1)}(a)})x + \beta^{-1/2}\pi^j(L_{\pi_{(1)}(a^*)})x 
				\\
				&=
				\sum_{1\leq j \leq N}\beta^{1/2}(2\pi^j(a^*)x\pi^j(a) - \pi^j(a^*a)x - x\pi^j(a^*a)) 
				\\
				&+ \beta^{-1/2}(2 \pi^j(a)x\pi^j(a^*) - \pi^j(aa^*)x - x\pi^j(aa^*))
			\end{split}
		\end{align*}
		By the tensor stability of CLSI constant, we know
		\begin{equation}
			\CLSI(\mathcal{L}^\beta_\otimes)  \geq \CLSI(\mathcal{L}^\beta_{\pi_1})
		\end{equation}
		In particular, this constant is independent of $N$.
		
		Let $E_d$ be the diagonal conditional expectation and let $\widehat{L}_a := \sum_{1\leq j \leq N}\pi^j(L_{\pi_{(1)}(a)})$, then we claim:
		\begin{claim}
			There exists coefficients $(\widehat{\gamma}_k)_{0\leq k \leq n_\lambda - 1}$ and $(\widetilde{\gamma}_k)_{0\leq k \leq n_\lambda - 1}$ such that $\alpha_{n_\lambda - 1, k} = \widehat{\gamma}_k^2 + \widetilde{\gamma}_k^2$ and
			\begin{align}
				\begin{split}
					&E_d \widehat{L}_a E_d = L_{S_{\widehat{\gamma}}}
					\\
					& E_d L_{\pi_\lambda(a)} E_d = L_{S_{\widehat{\gamma}}} + L_{S_{\widetilde{\gamma}}} = E_d\widehat{L}_a E_d + L_{S_{\widetilde{\gamma}}}
				\end{split}
			\end{align}
		\end{claim}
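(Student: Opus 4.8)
The plan is to compute both compressed generators explicitly on the relevant $(N+1)$-dimensional commutative algebra and simply read off the nearest-neighbor coefficients. \emph{The collective side.} Realize $V_\lambda$ as the totally symmetric subspace $V_{(N)}\ssubset V^{\otimes N}$ with its weight basis $(\ket{N,k})_{0\le k\le N}$; since $\pi_\lambda(a)=\pi_N(a)|_{V_\lambda}$ and $V_\lambda$ is an $\mathfrak{su}(2)$-submodule, the weight formulas give $\pi_\lambda(a)\ket{N,k}=\alpha_{N,k}\ket{N,k-1}$ and $\pi_\lambda(a^*)\ket{N,k}=\alpha_{N,k+1}\ket{N,k+1}$. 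Hence $L_{\pi_\lambda(a)}\proj{N,k}=2\alpha_{N,k+1}^2\proj{N,k+1}-2\alpha_{N,k}^2\proj{N,k}$, so $L_{\pi_\lambda(a)}$ preserves the diagonal $\ell_\infty^{n_\lambda}=\operatorname{span}\{\proj{N,k}\}$ and $E_dL_{\pi_\lambda(a)}E_d=L_{S_\gamma}$ with $\gamma_k^2=\alpha_{N,k}^2=k(N-k+1)$ — this is exactly the ``example'' recorded just after Definition \ref{definition:NN}.

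\emph{The tensor side.} The one-site Lindbladian $L_{\pi_{(1)}(a)}$ preserves the diagonal of $\mathbb{M}_2$, acting there by $\proj{e_1}\mapsto-2\proj{e_1}$ and $\proj{e_0}\mapsto 2\proj{e_1}$. Thus $\widehat L_a=\sum_j\pi^j(L_{\pi_{(1)}(a)})$ preserves the diagonal of $\mathbb{B}(V^{\otimes N})$, and, being invariant under site permutations, it preserves the symmetric diagonal subalgebra $\operatorname{span}\{p_k:0\le k\le N\}$, where $p_k:=\sum_{|A|=k}\proj{v_A}$; this $(N+1)$-dimensional algebra is the one to be identified with $\ell_\infty^{n_\lambda}$ via $\proj{N,k}\leftrightarrow p_k$. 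For $|A|=k$ one computes $\pi^j(L_{\pi_{(1)}(a)})\proj{v_A}=-2\proj{v_A}$ if $j\in A$ and $=2\proj{v_{A\cup\{j\}}}$ if $j\notin A$; summing over $j$, then over all $A$ of size $k$, and using that every $(k+1)$-subset is written $A\cup\{j\}$ in exactly $k+1$ ways, yields $\widehat L_a(p_k)=-2k\,p_k+2(k+1)\,p_{k+1}$, i.e. $E_d\widehat L_aE_d=L_{S_{\widehat\gamma}}$ with $\widehat\gamma_k^2=k$.

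\emph{Assembling.} Set $\widetilde\gamma_k^2:=\gamma_k^2-\widehat\gamma_k^2=k(N-k+1)-k=k(N-k)\ge 0$, so $(\widetilde\gamma_k)_{0\le k\le N}$ is a genuine sequence of coefficients, $\widehat\gamma_k^2+\widetilde\gamma_k^2=k(N-k+1)=\alpha_{n_\lambda-1,k}$, and — because a nearest-neighbor Lindbladian restricted to the diagonal depends only on the numbers $\gamma_k^2$ and is additive in them (the off-diagonal cross terms $\gamma_k\gamma_l$ play no role) — $E_dL_{\pi_\lambda(a)}E_d=L_{S_\gamma}=L_{S_{\widehat\gamma}}+L_{S_{\widetilde\gamma}}=E_d\widehat L_aE_d+L_{S_{\widetilde\gamma}}$ on $\ell_\infty^{n_\lambda}$, which is the claim. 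The companion identity for $a^*$, needed to finish the proof of the lemma, follows verbatim with the roles of $e_0$ and $e_1$ (and of $a$ and $a^*$) interchanged, and produces the same extra coefficients $\widetilde\gamma_k^2=k(N-k)$.

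\emph{Main obstacle.} The only real content is the tensor-side computation together with the correct choice of identification there: $\mathbb{B}(V_\lambda)$ sits inside $\mathbb{B}(V^{\otimes N})$ merely as a corner, so $\widehat L_a$ does \emph{not} preserve $\mathbb{B}(V_\lambda)$ itself, and it is on the symmetric diagonal $\operatorname{span}\{p_k\}$ — where the binomial weights $\binom Nk$ carry the bookkeeping — that $\widehat L_a$ restricts to the clean nearest-neighbor Lindbladian with $\widehat\gamma_k^2=k$; a naive rank-one calculation on $\operatorname{span}\{\proj{N,k}\}$ would instead manufacture an inconsistent ``$\tfrac1N\alpha_{N,k}^2$'', which is not what the claim intends. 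Once this identification and the subset counting are pinned down, the decomposition and the additivity step are purely formal.
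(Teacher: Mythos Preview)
Your argument is correct and in fact cleaner than the paper's. Both you and the paper arrive at the same coefficients $\widehat\gamma_k^2=k$ and $\widetilde\gamma_k^2=k(N-k)$ via the subset combinatorics on $V^{\otimes N}$, but the key difference is the choice of diagonal basis. The paper evaluates $E_d\widehat L_aE_d$ on the rank-one projections $\proj{N,k}$ (and the displayed line $\sum_j\pi^j(a^*)\proj{N,k}\pi^j(a)=(k-1)\proj{N,k-1}$ carries index typos), whereas you work on the symmetric diagonal $\operatorname{span}\{p_k\}$ where $\widehat L_a$ genuinely restricts. Your ``main obstacle'' paragraph is exactly right: on $\operatorname{span}\{\proj{N,k}\}$ the cross term yields off-diagonal coefficient $\frac{(k+1)(N-k)}{N}$ while the $a^*a$ term gives diagonal coefficient $k$, and these are \emph{inconsistent} with any $L_{S_{\widehat\gamma}}$; the $p_k$-basis is where the nearest-neighbor form actually holds. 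So your version both proves the claim and diagnoses why a literal reading of the paper's computation does not.

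One caveat for the downstream use (not for the claim itself): your identification $\proj{N,k}\leftrightarrow p_k$ is purely formal at the level of the abstract $(N{+}1)$-point algebra. The induced reference measures differ by the binomial factor $\binom{N}{k}$, so when the claim is fed into the CLSI comparison in the surrounding lemma one must be explicit about which realization of $\ell_\infty^{n_\lambda}$ carries the state $\rho$ and which carries the tensor Lindbladian $\mathcal L^\beta_\otimes$.
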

		\begin{claimproof}
			Recall $V_\lambda$ has basis $(\ket{n_\lambda-1, k})_{0\leq k \leq n_\lambda -1}$ where $\ket{n_\lambda - 1, k} = \sum_{\substack{A\subset [|n_\lambda-1|] \\ |A| = k}}\frac{1}{\binom{n_\lambda-1}{k}} v_A$ and $v_A = (\bigotimes_{j\in A} e_1)\otimes(\bigotimes_{j\notin A}e_0)$. $e_0, e_1$ are the weight vectors of the fundamental representation. Subscript $A$ indicates the position where the tensor factor is $e_1$. 
			We calculate $E_d \widehat{L}_a E_d$ on the diagonal basis $\ket{n_\lambda-1,k}\bra{n_\lambda-1,k}$. 
			
			Since $\pi^j(a^*a)v_A = \mathbbm{1}_{j\in A}v_A$, we have 
			\begin{equation*}
				\sum_j \pi^j(a^*a)v_A = |A|v_A
			\end{equation*}Hence
			\begin{equation*}
				\sum_j \pi^j(a^*a)\ket{n_\lambda-1,k}\bra{n_\lambda-1,k} = k\ket{n_\lambda-1,k}\bra{n_\lambda-1,k}
			\end{equation*}In addition since $\pi^j(a^*)v_A = \mathbbm{1}_{j\notin A}v_{A \cup \{1\}}$, we have 
			\begin{equation*}
				\sum_j \pi^j(a^*)\ket{n_\lambda-1,k}\bra{n_\lambda-1,k}\pi^j(a)= (k-1)\ket{n_\lambda-1,k-1}\bra{n_\lambda-1,k-1}
			\end{equation*}
			Therefore $\widehat{\gamma}_k^2 = k$ and $\widetilde{\gamma}_k^2 = \alpha_{n_\lambda-1,k} - \widehat{\gamma}_k^2 = k(n_\lambda-1-k)$. 
		\end{claimproof}

		Back to the main proof. Recall the density matrix $d_N := N_\beta \exp(-\frac{\beta}{2}\pi_N(h))$. In terms of tensor product decomposition, it can be written as
		\begin{equation*}
			d_N  = N_\beta \exp(-\frac{\beta}{2}\pi_{(1)}(h))^{\otimes N} = d_\beta^{\otimes N}
		\end{equation*}
		where $d_\beta$ is the corresponding density for the fundamental representation. In particular this density is diagonal. 
		
		In addition, restricted to the commutative subalgebra $\ell^{n_\lambda}_\infty$, both $\mathcal{L}^\beta_\otimes$ and $\mathcal{L}^\beta_{\pi_\lambda}$ are $d_\beta$-self-adjoint. Hence the remainder Lindbladian $\mathcal{L}^\beta_{\widetilde{\gamma}}$ is self-adjoint with respect to the $d_N$-KMS inner produce restricted to $\ell^{n_\lambda}_\infty$. Moreover, the fixed point algebra of the tensor product Lindbladian $\mathcal{L}^\beta_\otimes$ is the tensor product of fixed point algebra of $\mathcal{L}^\beta_{\pi_{(1)}}$, which is the trivial algebra $\mathbb{C}$.
		
		Therefore for any state $\rho = d_N^{1/2}xd_N^{1/2}$ on $\ell^{n_\lambda}_\infty$ ($x$ is diagonal), by CLSI on the two-point space ($\ell^2_\infty$) we have
		\begin{align*}
			\begin{split}
				D(\rho | E^*_{fix}\rho) \leq  \CLSI(\mathcal{L}^\beta_{\pi_{(1)}})^{-1}\tau_N\big((\mathcal{L}^\beta_\otimes x)\log(x)\big)
			\end{split}
		\end{align*}
		where $\tau_N$ is the \textit{state} associated with the diagonal tensor product density $d_N$. 
		
		Since we have
		\begin{equation*}
			\tau_N\big((\mathcal{L}^\beta_\otimes x) \log(x)\big) = \tau_N\bigg(\big((\mathcal{L}^\beta_{\pi_\lambda} - \mathcal{L}^\beta_{\widetilde{\gamma}})x\big)\log(x)\bigg)
		\end{equation*} and $\mathcal{L}^\beta_{\widetilde{\gamma}} \geq 0$ on $\ell^{n_\lambda}_\infty$, then we have
		\begin{equation*}
			\tau_N\big((\mathcal{L}^\beta_\otimes x)\log(x)\big) \leq \tau_N\big((\mathcal{L}^\beta_{\pi_\lambda}x) \log(x)\big) = EP_{\pi_\lambda(a)}(\rho)  + EP_{\pi_\lambda(a^*)}(\rho)
		\end{equation*}	
		The last equation is true because the density is diagonal. Since the CLSI constant is given by the two-point space CLSI constant, it is independent of $N$. 
		
		This completes the proof of the lemma.
	\end{proof}
	\begin{proof}[Proof of the Proposition]
		For each irreducible representation $V_n := \mathbb{C}^{n+1}$ of $SU(2)$, there exists a tensor product of fundamental representation $V^{\otimes n}$ such that $V_n$ is the irreducible component of $V^{\otimes n}$ with the highest dimension. Therefore we can apply \ref{lemma:irrep} to each irreducible representation.
		
		By the discussion before the lemma, for any state $\rho$ in $\Omega$, we have
		\begin{equation*}
			D(\rho | E^*_{fix}\rho )\leq \max_\lambda\{C(\beta,\lambda)\}\big(EP_{\pi_N(a)}(\rho) + EP_{\pi_N(a^*)}(\rho)\big)
		\end{equation*}
		Since $C(\beta, \lambda)$ are independent of $\lambda$, the CLSI constant $\max_\lambda\{C(\beta,\lambda)\}$ only depends on $\beta$.
	\end{proof}
	Now we can state the main theorem of this section.
	\begin{theorem}
		Let $V^{\otimes N}$ be the tensor product of fundamental representation of $\mathfrak{su}(2)$, and let $\rho_N = d_N^{1/2}x d_N^{1/2}$ be a state on $V^{\otimes N}$ with $x\in\mathbb{B}(V^{\otimes N})$. Consider the $d_N$-KMS self-adjoint Lindbladian $\mathcal{L}^\beta_N$ associated with the tensor product representation $\pi_N$. Let $E_{fix}$ be the conditional expectation onto its fixed point algebra. Then there exists a constant $C(\beta)$ depending only $\beta$ such that
		\begin{equation}
			D(\rho_N | E^*_{fix}\rho_N) \leq C(\beta)N^2\big(EP_{\pi_N(a)}(\rho_N) + EP_{\pi_N(a^*)}(\rho_N)\big)
		\end{equation}
	\end{theorem}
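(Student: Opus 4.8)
The plan is to obtain the bound by gluing together the three estimates already established in this section along the tower of algebras $\Omega_{fix}\ssubset\Omega\ssubset\mathbb{B}(V^{\otimes N})$. First I would invoke the chain rule recorded in the Corollary stated just after Lemma~\ref{lemma:Dirichlet},
\begin{equation*}
D(\rho_N | E^*_{fix}\rho_N)=D(\rho_N | E^*_\Omega\rho_N)+D(E^*_\Omega\rho_N | E^*_{fix}\rho_N),
\end{equation*}
which is legitimate precisely because $\Omega_{fix}\ssubset\Omega$ by Lemma~\ref{lemma:Dirichlet}, so that $E^*_{fix}E^*_\Omega\rho_N=E^*_{fix}\rho_N$. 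The first summand is exactly what the Proposition stated immediately before Lemma~\ref{lemma:Dirichlet} controls: it is bounded by $C_1(\beta)N^2\big(EP_{\pi_N(a)}(\rho_N)+EP_{\pi_N(a^*)}(\rho_N)\big)$, the factor $N^2$ coming from the fact that the uniform spectral gap of $|\pi_N(a)|$ is only of order $1/N$ (Lemma~\ref{lemma:unigap}) inserted into Proposition~\ref{proposition:uniformgap}.

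For the second summand I would set $\sigma:=E^*_\Omega\rho_N$. Since $E_\Omega$ is a trace-symmetric $\Omega$-bimodule conditional expectation and the reference density $d_N=N_\beta\exp(-\frac{\beta}{2}\pi_N(h))$ is diagonal, hence lies in $\Omega=\bigoplus_\lambda\ell^{n_\lambda}_\infty\otimes\mathbb{B}(W_\lambda)$, one has $\sigma=d_N^{1/2}E_\Omega(x)d_N^{1/2}$; thus $\sigma$ is a state in $\Omega$ of exactly the form required by Proposition~\ref{proposition:graph}, which then gives
\begin{equation*}
D(\sigma | E^*_{fix}\sigma)\leq C_2(\beta)\big(EP_{\pi_N(a)}(\sigma)+EP_{\pi_N(a^*)}(\sigma)\big).
\end{equation*}
By the tower property once more, $E^*_{fix}\sigma=E^*_{fix}E^*_\Omega\rho_N=E^*_{fix}\rho_N$, so the left-hand side is $D(E^*_\Omega\rho_N | E^*_{fix}\rho_N)$; and Corollary~\ref{corollary:omegaFisherDecay} (applicable since $d_N\in\Omega$) lets me pass the entropy production back to $\rho_N$, namely $EP_{\pi_N(a)}(\sigma)+EP_{\pi_N(a^*)}(\sigma)\leq EP_{\pi_N(a)}(\rho_N)+EP_{\pi_N(a^*)}(\rho_N)$. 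Adding the two contributions, setting $C(\beta):=C_1(\beta)+C_2(\beta)$, and using $N\geq 1$ to fold the dimension-free term into the $N^2$ one, yields the claim.

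Since every analytic input is already in place, the only real care needed in this final assembly is the bookkeeping: checking that $E^*_\Omega\rho_N$ genuinely satisfies the hypotheses of both Proposition~\ref{proposition:graph} (a state in $\Omega$ with density of the form $d_N^{1/2}(\cdot)d_N^{1/2}$) and Corollary~\ref{corollary:omegaFisherDecay} (reference density inside the subalgebra), and that $E_{fix}$ factors through $E_\Omega$ so that the two relative-entropy terms telescope correctly. The genuine obstacles — the $\Theta(1/N)$ spectral-gap estimate of Lemma~\ref{lemma:unigap} for $|\pi_N(a)|$, the commuting-square structure of $E_{\pi_N(a)}$ and $E_{\pi_N(a^*)}$ from Lemma~\ref{lemma:commutant}, and the dimension-free graph estimate of Lemma~\ref{lemma:irrep} — have all been resolved beforehand, so no new difficulty is expected at this stage.
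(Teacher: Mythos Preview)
Your proposal is correct and follows essentially the same approach as the paper: split via the chain rule $D(\rho_N | E^*_{fix}\rho_N)=D(\rho_N | E^*_\Omega\rho_N)+D(E^*_\Omega\rho_N | E^*_{fix}\rho_N)$, bound the first term by the $C(\beta)N^2$ proposition, and handle the second by applying Proposition~\ref{proposition:graph} to $\sigma=E^*_\Omega\rho_N$ and then invoking Corollary~\ref{corollary:omegaFisherDecay} to pass the entropy production back to $\rho_N$. Your added bookkeeping (that $d_N\in\Omega$ so $E^*_\Omega\rho_N=d_N^{1/2}E_\Omega(x)d_N^{1/2}$ has the required form, and that $E^*_{fix}E^*_\Omega=E^*_{fix}$) is exactly what makes the terse argument in the paper rigorous.
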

	\begin{proof}
		Using the notation as above, $\Omega = C^*(|\pi_N(a)|)' \cap C^*(|\pi_N(a^*)|)'$. Then by chain rule of relative entropy we have\begin{equation*}
			D(\rho_N | E^*_{fix}\rho_N)  = D(\rho_N | E^*_\Omega\rho_N) + D(E^*_\Omega\rho_N | E^*_{fix} \rho_N)
		\end{equation*}
		The first term is bounded above by $C(\beta)N^2\big(EP_{\pi_N(a)}(\rho_N) + EP_{\pi_N(a^*)}(\rho_N)\big)$, and the second term is bounded above by \begin{equation*}
			\big(EP_{\pi_N(a)}(E^*_\Omega\rho_N) + EP_{\pi_N(a^*)}(E^*_\Omega\rho_N)\big) \leq C(\beta)\big(EP_{\pi_N(a)}(\rho_N) + EP_{\pi_N(a^*)}(\rho_N)\big)
		\end{equation*}
		Therefore we have the claimed upper bound.
	\end{proof}
	\begin{corollary}
		$\CLSI$ constant of $\mathcal{L}^\beta_N$ is bounded below by $\frac{C(\beta)}{N^2}$ for some constant $C(\beta) >0$.
	\end{corollary}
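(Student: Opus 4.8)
The plan is to read the bound off the preceding theorem and then observe that the argument behind it is complete, so that an a priori $\MLSI$ estimate automatically upgrades to the claimed $\CLSI$ estimate. First I would record that the entropy production of $\mathcal{L}^\beta_N$ controls the right-hand side of the theorem up to a $\beta$-dependent factor: since $\mathcal{L}^\beta_N = \beta^{1/2}L_{\pi_N(a)} + \beta^{-1/2}L_{\pi_N(a^*)}$ is KMS-symmetric and $d_N$-detailed balanced, its entropy production splits along the two Lindblad terms, so that
\begin{equation*}
  I_{\mathcal{L}^\beta_N}(\rho) = \beta^{1/2}EP_{\pi_N(a)}(\rho) + \beta^{-1/2}EP_{\pi_N(a^*)}(\rho) \geq \min\{\beta^{1/2},\beta^{-1/2}\}\bigl(EP_{\pi_N(a)}(\rho) + EP_{\pi_N(a^*)}(\rho)\bigr).
\end{equation*}
Combined with the theorem this gives $D(\rho_N\,|\,E^*_{fix}\rho_N) \leq \bigl(C(\beta)/\min\{\beta^{1/2},\beta^{-1/2}\}\bigr)N^2\,I_{\mathcal{L}^\beta_N}(\rho_N)$, which is exactly the modified log-Sobolev inequality $\MLSI(\mathcal{L}^\beta_N)\geq C'(\beta)/N^2$ with $C'(\beta):=\min\{\beta^{1/2},\beta^{-1/2}\}/C(\beta)$.

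Next I would upgrade this to $\CLSI(\mathcal{L}^\beta_N) = \inf_d \MLSI(\mathcal{L}^\beta_N\otimes\id_d) \geq C'(\beta)/N^2$ by rerunning the proof of the theorem with $\mathcal{L}^\beta_N$ replaced by $\mathcal{L}^\beta_N\otimes\id_d$ and checking that the constant does not move. Under ampliation the generator becomes $\pi_N(a)\otimes 1$, whose modulus $|\pi_N(a)|\otimes 1$ has the same spectrum and hence the same uniform gap $\delta=\Theta(1/N)$ (Lemma \ref{lemma:unigap}); the commutants become $C^*(|\pi_N(a)|)'\overline{\otimes}\mathbb{M}_d$, so $E_{\pi_N(a)}$ and $E_{\pi_N(a^*)}$ remain Schur multipliers in the weight basis, still commute, and still have intersection $\Omega\overline{\otimes}\mathbb{M}_d$ (Lemma \ref{lemma:commutant}); and the fixed-point algebra becomes $\Omega_{fix}\overline{\otimes}\mathbb{M}_d$, so the chain rule $D(\,\cdot\,|\,E^*_{fix}\,\cdot\,) = D(\,\cdot\,|\,E^*_\Omega\,\cdot\,) + D(E^*_\Omega\,\cdot\,|\,E^*_{fix}\,\cdot\,)$ used in the theorem is unchanged. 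The analytic inputs are already complete: Theorem \ref{theorem:s.a.CLSI} is a $\CLSI$ estimate whose constant only sees the spectral gap; Proposition \ref{proposition:uniformgap} and Corollary \ref{corollary:intermediate} are obtained through that complete estimate together with the (tensor-stable) Petz approximate-tensorization inequality; Corollary \ref{corollary:omegaFisherDecay} is a data-processing inequality for entropy production and tensorizes; and Proposition \ref{proposition:graph} reduces via Lemma \ref{lemma:irrep} to $\CLSI$ on the two-point space, a classical complete inequality. Hence the theorem's estimate holds with the same $C(\beta)$ for all states on $\mathbb{B}(V^{\otimes N})\overline{\otimes}\mathbb{M}_d$, so $\MLSI(\mathcal{L}^\beta_N\otimes\id_d)\geq C'(\beta)/N^2$ uniformly in $d$, which is the claim.

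The deduction is essentially routine; the point I expect to need the most care — the effective main obstacle — is verifying that ampliation does not destroy the commuting-square identity $E_\Omega = E_{\pi_N(a)}E_{\pi_N(a^*)} = E_{\pi_N(a^*)}E_{\pi_N(a)}$ of Lemma \ref{lemma:commutant}, on which the chain-rule decomposition hinges. This holds because $C^*(|\pi_N(a)\otimes 1|)' = C^*(|\pi_N(a)|)'\overline{\otimes}\mathbb{M}_d$, so the two conditional expectations remain diagonal Schur multipliers in the tensored weight basis and hence still commute; with the square preserved, the argument closes.
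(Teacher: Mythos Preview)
Your proposal is correct and matches the paper's approach: the paper's proof is literally the one line ``This is immediate from the theorem,'' and what you have written is a careful unpacking of precisely why that immediacy holds---namely that the right-hand side is (up to a $\beta$-dependent factor) the entropy production of $\mathcal{L}^\beta_N$, and that every ingredient feeding into the theorem (the self-adjoint $\CLSI$ estimate, the commuting-square factorization, the Lieb-concavity step, and the two-point-space $\CLSI$) is already tensor-stable, so the bound passes to $\mathcal{L}^\beta_N\otimes\id_d$ with the same constant. Your identification of the commuting-square stability under ampliation as the one point requiring care is apt, and your justification via Schur-multiplier structure is exactly what makes it go through.
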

	\begin{proof}
		This is immediate from the theorem. 
	\end{proof}
	In the next section, we will give an upper bound on the spectral gap of $\mathcal{L}^\beta_N$ by an explicit construction. It turns out that the spectral gap $\lambda(\mathcal{L}^\beta_N) \lessapprox\frac{1}{N}$. Combining the results from these two section, we have
	\begin{equation*}
		\frac{1}{N^2}\lessapprox \CLSI(\mathcal{L}^\beta_N) \leq 2\lambda(\mathcal{L}^\beta_N)  \lessapprox \frac{1}{N}
	\end{equation*}
	In particular when $N$ goes to infinity, both the CLSI constant and the spectral gap go to 0.
	
	Using the same framework and lemma \ref{corollary:evaporation}, we obtain the following CLSI estimate.
	\begin{theorem}\label{theorem:evaporate}
		Consider the Lindbladian $\mathcal{L}^{(3)}_{N}$generated by $a_{(3)} := \pi_N(a)\pi_N(a^*)\pi_N(a)$.\begin{equation}
			\mathcal{L}^{(3)}_Nx := e^{\beta / 2}L_{a_{(3)}}x + e^{-\beta / 2}L_{a^*_{(3)}}x
		\end{equation}
		where $L_{a_{(3)}}x := 2a_{(3^*)}xa_{(3)} - a^*_{(3)}a_{(3)}x - xa^*_{(3)}a_{(3)}$. Then there exists a constant $C_3(\beta)$ such that
		\begin{equation}
			\CLSI(\mathcal{L}^{(3)}_N) \geq C_3(\beta)
		\end{equation}
	\end{theorem}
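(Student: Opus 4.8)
The plan is to run the machinery of Sections~\ref{section:abstract} and~\ref{section:su(2)} essentially verbatim, the single decisive change being that the relevant spectral gap is now dimension-free, which is exactly why the factor $N^2$ present in Theorem~\ref{theorem:main} disappears.

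First I would record the structural facts that make the single-Bohr-frequency framework applicable to $a_{(3)}$. On each weight vector one computes $a_{(3)}\ket{n,j} = \pi_N(a)\pi_N(a^*)\pi_N(a)\ket{n,j} = \alpha_{n,j}^{3}\ket{n,j-1}$, hence $|a_{(3)}| = \sum_{n,j}\alpha_{n,j}^{3}\ket{n,j}\bra{n,j}$ and $|a_{(3)}^{*}| = \sum_{n,j}\alpha_{n,j+1}^{3}\ket{n,j}\bra{n,j}$. Since $t\mapsto t^{3}$ is injective on $[0,\infty)$, the commutants do not change: $C^*(|a_{(3)}|)' = C^*(|\pi_N(a)|)'$ and $C^*(|a_{(3)}^{*}|)' = C^*(|\pi_N(a^*)|)'$, so Lemma~\ref{lemma:commutant} applies word for word — the conditional expectations onto them form a commuting square and their intersection $\Omega_{(3)}$ equals the diagonal algebra $\Omega = \bigoplus_\lambda \ell_\infty^{n_\lambda}\otimes\mathbb{B}(W_\lambda)$. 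Next, from $d_N^{it}\pi_N(a)d_N^{-it} = e^{i\beta t}\pi_N(a)$ and $d_N^{it}\pi_N(a^*)d_N^{-it} = e^{-i\beta t}\pi_N(a^*)$ one gets $d_N^{it}a_{(3)}d_N^{-it} = e^{i\beta t}a_{(3)}$, i.e. $a_{(3)}$ is an eigenvector of the modular group of $d_N$ with the \emph{same} Bohr frequency $\beta$, so that the modular relation~\ref{equation:modular} holds for $a_{(3)}$; consequently $\mathcal{L}^{(3)}_N$ is $d_N$-detailed balanced, $d_N$ is invariant, and $\mathcal{L}^{(3)}_N$ is KMS-self-adjoint. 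The Dirichlet-form argument of Lemma~\ref{lemma:Dirichlet} then shows that the fixed-point algebra of $\mathcal{L}^{(3)}_N$ is contained in $\Omega$. Finally, as in Section~\ref{section:su(2)}, the irreducible decomposition of $V^{\otimes N}$ together with the intertwiner $\iota$ reduces the estimate for $\mathcal{L}^{(3)}_N$ to the multiplicity-free direct-sum Lindbladian $\mathcal{L}^{(3)}_\pi$, $\pi=\bigoplus_\lambda\pi_\lambda$.

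The crux is the next step. Applying Lemma~\ref{corollary:evaporation} with exponent $\gamma=3$ — for which $\min\{N^{\gamma-2},1\} = \min\{N,1\} = 1$ — together with the elementary observation that every nonzero $\alpha_{n,j}$ is $\ge 1$ (so every nonzero $\alpha_{n,j}^{3}$ is $\ge 1$, which also controls the ``$+$'' differences), I obtain that $\delta_{(3)} := \min\{\,|\mu\pm\nu|:\mu\neq\nu,\ \mu,\nu\in\sigma(|a_{(3)}|)\,\}\gtrapprox 1$ uniformly in $N$. Feeding this into Corollary~\ref{corollary:intermediate}, whose hypotheses (the commuting square and the modular condition) were just checked, gives for every state $\rho_N=d_N^{1/2}xd_N^{1/2}$ the dimension-free bound $D(\rho_N | E^*_\Omega\rho_N)\le C(\beta)\delta_{(3)}^{-2}\bigl(EP_{a_{(3)}}(\rho_N)+EP_{a_{(3)}^*}(\rho_N)\bigr)\le C'(\beta)\bigl(EP_{a_{(3)}}(\rho_N)+EP_{a_{(3)}^*}(\rho_N)\bigr)$; this is precisely where cubing the coefficients removes the $N^2$ that occurs for $\mathcal{L}^\beta_N$.

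For the remaining term $D(E^*_\Omega\rho_N | E^*_{fix}\rho_N)$ I would run the second half of Section~\ref{section:su(2)}. Since $d_N\in\Omega$ and $\mathcal{L}^{(3)}_N$ fixes $d_N$, Lemma~\ref{lemma:concavitydouble} (the $2\times2$ trick plus the Hiai--Petz convexity) gives the analogue of Corollary~\ref{corollary:omegaFisherDecay}, namely $EP_{a_{(3)}}(E^*_\Omega\rho)+EP_{a_{(3)}^*}(E^*_\Omega\rho)\le EP_{a_{(3)}}(\rho)+EP_{a_{(3)}^*}(\rho)$. Then, following Lemma~\ref{lemma:irrep}, I reduce to a single irreducible block $\ell_\infty^{n_\lambda}$ and compare the diagonal restriction of $\mathcal{L}^{(3)}_\lambda$ — a nearest-neighbour Lindbladian in the sense of Definition~\ref{definition:NN} with coefficients $\alpha_{n_\lambda-1,k}^{6}$ — with the diagonal restriction of the fully tensorised Lindbladian $\mathcal{L}^\beta_\otimes$, whose coefficients are $\widehat\gamma_k^2=k$; both are $d_N$-reversible, so their difference is again a $d_N$-reversible nearest-neighbour Lindbladian, because $\alpha_{n_\lambda-1,k}^{6}=(k(n_\lambda-k))^{3}\ge k$ for $0\le k\le n_\lambda-1$. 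Hence $\mathcal{L}^{(3)}_\lambda|_{\mathrm{diag}}=\mathcal{L}^\beta_\otimes|_{\mathrm{diag}}+(\text{a positive Dirichlet form})$, and since $\CLSI(\mathcal{L}^\beta_\otimes)\ge\CLSI(\mathcal{L}^\beta_{\pi_{(1)}})$ is the $N$-independent two-point-space constant, this yields $D(\rho | E^*_{fix}\rho)\le C(\beta)\bigl(EP_{a_{(3)}}(\rho)+EP_{a_{(3)}^*}(\rho)\bigr)$ for all $\rho\in\Omega$ with $C(\beta)$ independent of $N$. Combining the two bounds through the chain rule $D(\rho_N | E^*_{fix}\rho_N)=D(\rho_N | E^*_\Omega\rho_N)+D(E^*_\Omega\rho_N | E^*_{fix}\rho_N)$ gives $\CLSI(\mathcal{L}^{(3)}_N)\ge C_3(\beta)>0$. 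There is no serious obstacle here: the entire proof is an application of the framework already built, and the only new input is the dimension-free gap estimate of the third paragraph; the step that requires the most care is the diagonal comparison just described, where one must check explicitly that passing from $\pi_N(a)$ to $a_{(3)}$ preserves both the coordinate-wise domination of the tensor-product coefficients and the shared reference state $d_N$.
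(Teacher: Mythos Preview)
Your proposal is correct and follows essentially the same approach as the paper's own proof, which is a brief sketch pointing to Lemma~\ref{corollary:evaporation} with $\gamma=3$ for the dimension-free gap, noting that $\Omega^{(3)}=\Omega$ and that the fixed-point algebra coincides with $\Omega_{fix}$, and then invoking the arguments of Section~\ref{section:su(2)}. You have simply fleshed out the details the paper omits --- in particular the explicit verification that the diagonal nearest-neighbour coefficients satisfy $\alpha_{n_\lambda-1,k}^{6}\ge k=\widehat{\gamma}_k^{2}$, which is the pointwise domination needed for the comparison with $\mathcal{L}^\beta_\otimes$ in the analogue of Lemma~\ref{lemma:irrep}.
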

	\begin{proof}
		By lemma \ref{corollary:evaporation}, the uniform spectral gap of $|a_{(3)}|$ is bounded below by a constant of order $O(1)$. In addition, since $\pi_N(a^*)\pi_N(a)$ is diagonal, it is easy to see that the fixed point algebra of $\mathcal{L}^{(3)}_N$ is the same as the fixed point algebra of $\mathcal{L}^\beta_N$. Moreover, the same calculation as lemma \ref{lemma:commutant} shows that the conditional expectations $E_{a_{(3)}}:\mathbb{B}(V^{\otimes N})\rightarrow C^*(|a_{(3)}|)'$ and $E_{a_{(3)}^*}:\mathbb{B}(V^{\otimes N})\rightarrow C^*(|a_{(3)}^*|)'$ commute and their intersection $\Omega^{(3)} := C^*(|a_{(3)}|)'\cap C^*(|a_{(3)}^*|)'$ equals $\Omega$. Therefore repeating the same arguments as in section \ref{section:su(2)} we have the lower bound on $\CLSI(\mathcal{L}^{(3)}_N)$. 
	\end{proof}\section{\large Upper Bound on Spectral Gap}\label{section:gap}
	The main purpose of this section is to construct a state $\rho\in S_1(V^{\otimes N})$ that gives an $O(1/N)$ upper bound for the spectral gap of $\mathcal{L}^\beta_N$. First we will construct a vector $\xi$ (not self-adjoint and not positive) in the Schatten 2-class such that the Dirichlet form on $\xi$ is bounded below by $\gtrapprox\frac{||\xi||^2_2}{N}$. Then the state $\rho$ will be a perturbation around the stationary state by $\xi + \xi^*$. This will conclude the proof of our main theorem \ref{theorem:main}. 
	
	The construction of $\xi$ is based on explicit calculations and perturbations around the reference state. Following the same idea, we will show that the noise model generated by the Lindbladian $\mathcal{L}^\beta_N$ induces multiple generalized dephasing channels. A generalized dephasing channel\footnote{This definition is a generalization of what is usually called (single-qubit) dephasing channels. The generalized dephasing channels can act any finite-dimensional Schatten 1-class with the same qualitative features of the usual dephasing channels. Hence the nomenclature is justified.} is a completely positive trace preserving map (CPTP) where the off-diagonal elements of a density matrix decay to zero while the diagonal elements of a density matrix are preserved \cite{KP}\cite{KG}\cite{ABF}.
	
	Moreover, we will identify a set of normalized states where the entropy decay rate under the collective noise model is of the order $O(1/N)$ when the temperature is sufficiently low. This is an example of an "almost"-decoherence-free subspace \cite{LCW}\cite{KLV}\cite{LB}. The states from this subspace is not invariant but meta-stable against the noise. If the noise model is concrete with all the parameters fixed, our method can be used for numerics to calculate a much tighter bounds on the decay rate. Since this section contains several parts, we will divide this section into three subsections.
	
	\subsection{Upper Bound on Spectral Gap}
	Recall from the proof of lemma \ref{lemma:Dirichlet}, we considered the embedding:
	\begin{equation*}
		\iota_N:\mathbb{B}(V^{\otimes N})\rightarrow S_2(V^{\otimes N}): x\mapsto d^{1/4}_N x d^{1/4}_N\pl.
	\end{equation*}
	And the Dirichlet form determined by the Lindbladian $\mathcal{L}^\beta_N$ is given by:
\begin{align*}
	\begin{split}
		\langle\xi,\xi\rangle_L &= \langle\iota_N(x), \iota_N(x)\rangle_L
		\\
		 &=||d_N^{1/4}(\pi_N(a) x -  x\pi_N(a))d_N^{1/4}||^2_2 + ||d_N^{1/4}(\pi_N(a^*) x -  x\pi_N(a^*) )d_N^{1/4}||^2_2 
	\end{split}
\end{align*}
where $||\cdot||^2_2$ is the Hilbert-Schmidt inner product.

Let $V^{\otimes N} \cong \bigoplus_{\lambda\in \mathcal{P}_2(N)} V_\lambda^{\oplus \dim(W_\lambda)}$ be the irreducible decomposition. We will construct our example in the multiplicity free subspace $V := \bigoplus_{\lambda} V_\lambda$.
Since $(\ket{n,j})_{ 0\leq j\leq n =\dim(V_\lambda) }$ form a basis of $V$, a generic element $\xi$ in $S_2(V)$ can be written as
\begin{equation}
	\xi:= \sum_{n,m}\sum_{0\leq j \leq n, 0\leq k\leq m}\xi_{(n,j),(m,k)} \ket{n,j}\bra{m,k}
\end{equation}

Before stating the main result of this section, we first prove the following lemma.
\begin{lemma}\label{lemma:DirichletCalc}
	Suppose the coefficients of $\xi\in S_2(V)$ satisfy the relations: $\xi_{(N,j),(N-2,j)} = e^{-(j-1)\beta/2}\xi_{(N,1),(N-2,1)}$ and $\xi_{n,j} = 0$ for $n < N-2$, then 
	\begin{equation}\label{equation:bound}
		\langle \xi, \xi \rangle_L \geq \frac{2e^{\beta / 2} ||\xi||^2_2}{N} \pl.
	\end{equation}
	In addition, the symmetrization $\xi^* + \xi$ satisfies the same inequality.
\end{lemma}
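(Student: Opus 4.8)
The proof is a direct computation; the role of the hypothesis on the entries $\xi_{(N,j),(N-2,j)}$ is to make a single modular-twist cancellation go through. First I would transport the Dirichlet form through the modular structure. Write $\xi = d_N^{1/4}x\,d_N^{1/4}$ and use $d_N^{it}\pi_N(a)d_N^{-it} = e^{i\beta t}\pi_N(a)$, which by analytic continuation gives $d_N^{1/4}\pi_N(a) = e^{\beta/4}\pi_N(a)d_N^{1/4}$ and $\pi_N(a)d_N^{1/4} = e^{-\beta/4}d_N^{1/4}\pi_N(a)$ (with the opposite signs for $a^*$). Substituting into the formula for $\langle\xi,\xi\rangle_L$ recorded above yields
\[
\langle\xi,\xi\rangle_L = \big\| e^{\beta/4}\pi_N(a)\xi - e^{-\beta/4}\xi\pi_N(a)\big\|_2^2 + \big\| e^{-\beta/4}\pi_N(a^*)\xi - e^{\beta/4}\xi\pi_N(a^*)\big\|_2^2 .
\]

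Next I would insert $\xi = \sum_j c_j\ket{N,j}\bra{N-2,j}$ (with $c_j := \xi_{(N,j),(N-2,j)}$) and expand using $\pi_N(a)\ket{n,j} = \alpha_{n,j}\ket{n,j-1}$ and $\bra{n,j}\pi_N(a) = \alpha_{n,j+1}\bra{n,j+1}$ (and the adjoint formulas for $a^*$), where $\alpha_{n,j} = \sqrt{j(n-j+1)}$. After aligning indices, the first bracket is a sum over the orthonormal matrix units $\ket{N,j-1}\bra{N-2,j}$ with coefficient $e^{\beta/4}c_j\alpha_{N,j} - e^{-\beta/4}c_{j-1}\alpha_{N-2,j}$; the hypothesis, which says exactly $c_{j-1} = e^{\beta/2}c_j$ (equivalently $c_j = e^{-(j-1)\beta/2}c_1$), collapses this to $e^{\beta/4}c_j(\alpha_{N,j} - \alpha_{N-2,j})$, and the parallel computation for the $a^*$-bracket gives $e^{-\beta/4}c_j(\alpha_{N,j+1}-\alpha_{N-2,j+1})$. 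Hence
\[
\langle\xi,\xi\rangle_L = e^{\beta/2}\sum_j |c_j|^2(\alpha_{N,j}-\alpha_{N-2,j})^2 + e^{-\beta/2}\sum_j |c_j|^2(\alpha_{N,j+1}-\alpha_{N-2,j+1})^2 ,
\]
and reindexing the second sum by $j\mapsto j-1$ and applying $|c_{j-1}|^2 = e^{\beta}|c_j|^2$ once more converts its prefactor $e^{-\beta/2}$ into $e^{\beta/2}$, so that the two sums coincide apart from the endpoint indices.

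To finish I would use the arithmetic identity $\alpha_{N,j}^2 - \alpha_{N-2,j}^2 = j(N-j+1) - j(N-1-j) = 2j$ together with $\alpha_{N,j} + \alpha_{N-2,j} \le 2\alpha_{N,j} \le 2\sqrt{jN}$ for $j \ge 1$, which gives $(\alpha_{N,j}-\alpha_{N-2,j})^2 = (2j)^2/(\alpha_{N,j}+\alpha_{N-2,j})^2 \ge j/N \ge 1/N$; combined with $\|\xi\|_2^2 = \sum_j |c_j|^2$ this produces $\langle\xi,\xi\rangle_L \ge \frac{2e^{\beta/2}}{N}\|\xi\|_2^2$, i.e.\ \eqref{equation:bound}. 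For the symmetrization, observe that $\xi^*$ is supported on the matrix units $\ket{N-2,j}\bra{N,j}$; since $\pi_N(a)$ and $\pi_N(a^*)$ preserve each irreducible component, all the twisted-commutator terms coming from $\xi$ map the dimension-$(N-1)$ component into $V_{(N)}$ while those coming from $\xi^*$ go the other way, so the two families are Hilbert--Schmidt orthogonal and $\langle \xi+\xi^*,\xi+\xi^*\rangle_L = \langle\xi,\xi\rangle_L + \langle\xi^*,\xi^*\rangle_L$; the same computation (or passing to adjoints) gives $\langle\xi^*,\xi^*\rangle_L = \langle\xi,\xi\rangle_L$ and $\|\xi+\xi^*\|_2^2 = 2\|\xi\|_2^2$, so the inequality is inherited.

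The main obstacle is not conceptual but organizational: getting the signs right in the modular manipulation, and --- more delicately --- keeping exact track of the ranges of $j$ in the two sums so that the endpoint terms near $j=0$ and $j=N-2$ are handled correctly. It is precisely the matching of the geometric law $c_{j-1}=e^{\beta/2}c_j$ with the modular twist $e^{\pm\beta/4}$ that makes the cross terms in the two twisted commutators cancel and produces the clean constant $2e^{\beta/2}$ rather than merely a bound of order $1/N$.
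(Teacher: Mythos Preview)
Your proof is correct and follows essentially the same computation as the paper's. Both arrive at the closed form $\langle\xi,\xi\rangle_L = 2e^{\beta/2}\sum_j |c_j|^2(\alpha_{N,j}-\alpha_{N-2,j})^2$ by exploiting the same modular cancellation $c_{j-1}=e^{\beta/2}c_j$, and both then use $\alpha_{N,j}+\alpha_{N-2,j}<2\sqrt{N j}$ (equivalently $\sqrt{N-j+1}+\sqrt{N-j-1}<2\sqrt{N}$) together with $\alpha_{N,j}^2-\alpha_{N-2,j}^2=2j$ to extract the $1/N$. The only organizational difference is that the paper first derives the general identity \eqref{equation:directcomp} for an arbitrary $\xi$ and then specializes, whereas you substitute the particular $\xi$ directly into the two twisted commutators; your route is a bit more economical. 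Your handling of the symmetrization via Hilbert--Schmidt orthogonality of the $V_{(N)}\!\to\! V_{(N-2)}$ and $V_{(N-2)}\!\to\! V_{(N)}$ pieces is exactly what the paper does when it invokes \eqref{equation:directcomp} for $\xi+\xi^*$.
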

\begin{proof}
	By rescaling on both sides, we can take the trace in the Hilbert-Schmidt norm to be the standard trace on $\mathbb{B}(V)$. The proof is based on direct computations of the Dirichlet form. Using the definition of the Dirichlet form, we have:	\begin{align*}
		\begin{split}
			\langle\xi,\xi\rangle_L
			&= ||e^{\beta/4}\pi_N(a)\xi - e^{-\beta/4}\xi\pi_N(a)||^2_2 + ||e^{-\beta/4}\pi_N(a^*)\xi - e^{\beta/4}\xi\pi_N(a^*)||^2_2
			\\
			&= e^{\beta/2}||\pi_N(a)\xi||_2^2 + e^{-\beta/2}||\xi\pi_N(a)||^2_2 + e^{-\beta/2}||\pi_N(a^*)\xi||_2^2 + e^{\beta/2}||\xi\pi_N(a^*)||^2_2
			\\
			&- \langle \pi_N(a)\xi, \xi\pi_N(a)\rangle - \langle \xi\pi_N(a), \pi_N(a)\xi\rangle - \langle \pi_N(a^*)\xi, \xi\pi_N(a^*)\rangle - \langle \xi\pi_N(a^*), \pi_N(a^*)\xi\rangle
			\\
			&= e^{\beta/2}||\pi_N(a)\xi||^2_2 + e^{-\beta/2}||\xi \pi_N(a)||^2_2 + e^{-\beta/2} ||\pi_N(a^*)\xi||^2_2 + e^{\beta/2} ||\xi \pi_N(a^*)||^2_2 
			\\
			&+ 2||[\pi_N(a),\xi]||^2_2 - 2||\pi_N(a)\xi||^2_2 - 2||\xi \pi_N(a)||^2_2
			\\
			&= (e^{\beta/2} + e^{-\beta/2} - 2)(||\pi_N(a)\xi||^2_2 + ||\xi \pi_N(a)||^2_2) + 2||[\pi_N(a),\xi]||^2_2 \\
			&+ e^{\beta/2}\tau(\xi[\pi_N(a^*),\pi_N(a)]\xi^*) + e^{-\beta/2}\tau(\xi^*[\pi_N(a),\pi_N(a^*)]\xi)
		\end{split}
	\end{align*}
	Expanding $\xi$ in terms of the canonical basis, we have: \begin{equation*}
		e^{\beta/2}\tau(\xi[a^*,a]\xi^*) + e^{-\beta/2}\tau(\xi^*[a,a^*]\xi) = \sum |\xi_{(n,j),(m,k)}|^2 (e^{-\beta/2}(2k-m) - e^{\beta/2}(2j-n))
	\end{equation*}
	Summation is over $n,m,j,k$. A direct computation involving this derivative term shows:
	\begin{align*}
		\begin{split}
			2||&[\pi_N(a),\xi]||^2_2 - 2 ||\pi_N(a)\xi||^2_2 - 2||\xi\pi_N(a)||^2_2 =-2\tau(\xi^*a^*\xi a)-2\tau(\xi^*a\xi a^*)
			\\
			&= -2\sum \overline{\xi_{(n,j+1),(m,k+1)}}\xi_{(n,j),(m,k)} \alpha_{n,j+1}\alpha_{m,k+1} - 2\sum \overline{\xi_{(n,j-1), (m,k-1)}}\xi_{(n,j),(m,k)}\alpha_{n,j}\alpha_{m,k}
			\\
			&= -4\sum Re(\overline{\xi_{(n,j+1),(m,k+1)}}\xi_{(n,j),(m,k)})\alpha_{n,j+1}\alpha_{m,k+1}
		\end{split}
	\end{align*}
	In addition we have:
	\begin{align*}
		\begin{split}
			&(e^{\beta/2}+e^{-\beta/2})(||\pi_N(a)\xi||^2_2 + ||\xi\pi_N(a)||^2_2) + e^{\beta/2}\tau(\xi\pi_N([a^*,a])\xi^*) + e^{-\beta/2}\tau(\xi^*\pi_N([a,a^*])\xi)
			\\
			&= (e^{\beta/2}+e^{-\beta/2})\sum |\xi_{(n,j),(m,k)}|^2(\alpha_{n,j}^2+\alpha_{m,k+1}^2) + \sum |\xi_{(n,j),(m,k)}|^2(e^{-\beta/2}(2k-m) - e^{\beta/2}(2j-n))
			\\
			&= \sum |\xi_{(n,j),(m,k)}|^2 (e^{-\beta/2}\alpha_{n,j+1}^2 + e^{-\beta/2}\alpha_{m,k+1}^2 + e^{\beta/2}\alpha_{m,k}^2 + e^{\beta/2}\alpha_{n,j}^2)
			\\
			&=\sum |\xi_{(n,j),(m,k)}|^2(e^{\beta/2}(\alpha_{n,j}^2 + \alpha_{m,k}^2) + e^{-\beta/2}(\alpha_{n,j+1}^2 + \alpha_{m,k+1}^2))
			\\
			&=\sum |\xi_{(n,j+1),(m,k+1)}|^2e^{\beta/2}(\alpha_{n,j+1}^2+\alpha_{m,k+1}^2) + |\xi_{(n,j),(m,k)}|^2e^{-\beta/2}(\alpha_{n,j+1}^2 + \alpha_{m,k+1}^2)
			\\
			&=\sum (e^{\beta/2}|\xi_{(n,j+1),(m,k+1)}|^2 + e^{-\beta/2}|\xi_{(n,j),(m,k)}|^2)(\alpha^2_{n,j+1} + \alpha^2_{m,k+1})
		\end{split}
	\end{align*}
	Combining the two computations, we have:
	\begin{align}\label{equation:directcomp}
		\begin{split}
			\langle \xi,\xi\rangle_L &= \sum (e^{\beta/2}|\xi_{(n,j+1),(m,k+1)}|^2 + e^{-\beta/2}|\xi_{(n,j),(m,k)}|^2)(\alpha^2_{n,j+1} + \alpha^2_{m,k+1}) 
			\\
			&
			-4\sum Re(\overline{\xi_{(n,j+1),(m,k+1)}}\xi_{(n,j),(m,k)})\alpha_{n,j+1}\alpha_{m,k+1}
		\end{split}
	\end{align}
	By the assumptions on the coefficients of $\xi$, we have:
	\begin{align*}
		\begin{split}
			\langle \xi,\xi\rangle_L &= 2e^{\beta/2}\sum_{1\leq j \leq N-2}e^{-(j-1)\beta}|\xi_{(N,1),(N_2,1)}|^2(\alpha_{N,j}-\alpha_{N-2,j})^2
			\\
			& =8e^{3\beta/2}|\xi_{(N,1),(N-2,1)}|^2\sum_{1\leq j \leq N-2}\frac{je^{-j\beta}}{(\sqrt{N-j+1} + \sqrt{N-j-1})^2}
			\\
			& > \frac{2e^{\beta/2}|\xi_{(N,1),(N-2,1)}|^2}{N}\sum_{1\leq j\leq N-2}e^{-(j-1)\beta}
		\end{split}
	\end{align*}
	In the last step, we used $\sqrt{N-j+1} + \sqrt{N-j-1} < 2\sqrt{N}$ for $1\leq j\leq N-2$ and $\sum_{1\leq j \leq N-2} je^{-j\beta} > \sum_{1\leq j\leq N-2}e^{-j\beta}$. Since \begin{equation*}
		||\xi||^2_2 = |\xi_{(N,1),(N-2,1)}|^2\sum_{1\leq j\leq N-2}e^{-(j-1)\beta}
	\end{equation*}
	then we have \begin{equation*}
		\langle\xi,\xi\rangle_L \geq \frac{2e^{\beta/2}\langle\xi,\xi\rangle_N}{N} \pl.
	\end{equation*}
	For the symmetrized version, by equation \ref{equation:directcomp} we have:
	\begin{equation*}
		\langle \xi^* + \xi, \xi^* +\xi\rangle_L = \langle \xi^*, \xi^*\rangle_L + \langle \xi,\xi\rangle_L \pl.
	\end{equation*}
	In addition, since $\xi$ is completely off-diagonal we have:
	\begin{equation*}
		\langle\xi^*+\xi,\xi^*+\xi\rangle = \langle\xi^*,\xi^*\rangle + \langle\xi,\xi\rangle \pl.
	\end{equation*}
	Therefore $\xi^* + \xi$ satisfies equation \ref{equation:bound} with the same constant.
\end{proof}
\begin{proposition}\label{proposition:gapvector}
	There exists a normalized positive matrix $x\in \mathbb{B}(V)$ such that $\iota_N(x)$ is a state and: 
	\begin{equation}
		\langle\iota_N(x),\iota_N(x)\rangle_L \geq \frac{C(\beta)\langle x - E_{fix}x, x - E_{fix}x\rangle_N}{N}
	\end{equation}
	where $\langle\cdot,\cdot\rangle_N$ is the KMS-inner product associated with $d_N$, $C(\beta)$ is a constant depending only on $\beta$ and $E_{fix}$ is the conditional expectation onto the fixed point algebra $\bigoplus_\lambda \mathbb{C}$.
\end{proposition}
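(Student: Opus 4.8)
The statement is essentially a repackaging of Lemma~\ref{lemma:DirichletCalc}: the off-diagonal element $\xi$ constructed there already has Dirichlet form bounded below by $2e^{\beta/2}\|\xi\|_2^2/N$, so it only remains to perturb the stationary density $d_N$ along it to obtain a genuine state. The plan is to fix a nonzero $\xi_0\in S_2(V)$ satisfying the hypotheses of Lemma~\ref{lemma:DirichletCalc} --- i.e. $\xi_0$ is supported only on the matrix units $\ket{N,j}\bra{N-2,j}$, which map the irreducible component $V_{(N-2)}$ into $V_{(N)}$, with coefficients $\xi_{(N,j),(N-2,j)}=e^{-(j-1)\beta/2}$ --- and to set $\eta:=\xi_0+\xi_0^*$ and $y:=d_N^{-1/4}\,\eta\,d_N^{-1/4}\in\mathbb{B}(V)$. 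Then $\iota_N(y)=\eta$ and $y=y^*$; moreover $d_N$ is diagonal in the weight basis, hence invertible and block-diagonal, so $y$ is supported off the block-diagonal, just as $\eta$ is.

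Next I would build the state and pin down its fixed-point component. Choose $t\in(0,\|y\|^{-1})$, so that $\mathbf{1}+ty\geq 0$, and put $x:=c\,(\mathbf{1}+ty)$ with $c:=\tau(d_N^{1/2})^{-1}>0$; then $x\geq 0$. Since $\xi_0$ (hence $\eta$) is strictly off-diagonal, $\tau(\eta)=0$, so $\tau(d_N^{1/2}y)=\tau(d_N^{1/4}\eta\,d_N^{-1/4})=\tau(\eta)=0$ and therefore $\tau(\iota_N(x))=c\,\tau(d_N^{1/2})=1$, i.e. $\iota_N(x)$ is a state. Finally, by Lemma~\ref{lemma:Dirichlet} the fixed-point algebra of $\mathcal{L}^\beta_N$ restricted to $V$ is the block-diagonal algebra $\bigoplus_\lambda\mathbb{C}$, and $y$ lies off that block-diagonal, so $E_{fix}y=0$ and $x-E_{fix}x=c\,t\,y\neq 0$.

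It then remains to evaluate both sides on this $x$. First,
\begin{equation*}
	\langle x-E_{fix}x,\,x-E_{fix}x\rangle_N = c^2t^2\,\langle y,y\rangle_N = c^2t^2\,\|\iota_N(y)\|_2^2 = c^2t^2\,\|\eta\|_2^2 .
\end{equation*}
Second, $\iota_N(x)=c\,d_N^{1/2}+c\,t\,\eta$, and because the Dirichlet form $\langle\cdot,\cdot\rangle_L$ is positive semidefinite with $d_N^{1/2}=\iota_N(\mathbf{1})$ in its kernel, the cross term $\langle d_N^{1/2},\eta\rangle_L$ vanishes, so $\langle\iota_N(x),\iota_N(x)\rangle_L=c^2t^2\,\langle\eta,\eta\rangle_L$. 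Feeding $\eta=\xi_0+\xi_0^*$ into the symmetrized bound of Lemma~\ref{lemma:DirichletCalc} gives $\langle\eta,\eta\rangle_L\geq 2e^{\beta/2}\|\eta\|_2^2/N$, and combining the two computations yields
\begin{equation*}
	\langle\iota_N(x),\iota_N(x)\rangle_L = c^2t^2\,\langle\eta,\eta\rangle_L \geq \frac{2e^{\beta/2}}{N}\,c^2t^2\,\|\eta\|_2^2 = \frac{2e^{\beta/2}}{N}\,\langle x-E_{fix}x,\,x-E_{fix}x\rangle_N ,
\end{equation*}
so $C(\beta)=2e^{\beta/2}$ works. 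There is no real obstacle here: the analytic content is entirely in Lemma~\ref{lemma:DirichletCalc}, and the only delicate points are checking that the off-diagonal perturbation keeps $x$ positive for small $t$ and leaves the normalization intact, and identifying $E_{fix}x=c\,\mathbf{1}$, which uses the description of the fixed-point algebra from Lemma~\ref{lemma:Dirichlet} together with the fact that $\xi_0$ connects two distinct irreducible components. I would also remark that the proof of Lemma~\ref{lemma:DirichletCalc} computes $\langle\eta,\eta\rangle_L$ exactly, so the same $x$ simultaneously satisfies the reverse estimate $\langle\iota_N(x),\iota_N(x)\rangle_L\leq C'(\beta)N^{-1}\langle x-E_{fix}x,\,x-E_{fix}x\rangle_N$, which is what gives the $O(1/N)$ upper bound on $\lambda(\mathcal{L}^\beta_N)$.
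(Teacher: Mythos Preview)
Your proof is correct and follows essentially the same route as the paper: both arguments pull the off-diagonal element $\xi$ of Lemma~\ref{lemma:DirichletCalc} back through $\iota_N$, perturb the identity by the resulting self-adjoint operator to obtain a positive $x$, observe that $E_{fix}x$ is the identity part so that $x-E_{fix}x$ is exactly the perturbation, and then invoke Lemma~\ref{lemma:DirichletCalc}. The only cosmetic difference is that the paper fixes the free coefficient $\xi_{(N,1),(N-2,1)}$ explicitly so that the spectral radius of $\widetilde{x}+\widetilde{x}^*$ is at most $2$ and takes $x=1-\tfrac{1}{2}(\widetilde{x}+\widetilde{x}^*)$, whereas you keep an abstract small parameter $t<\|y\|^{-1}$ and an explicit normalization constant $c$; your closing remark that the computation in Lemma~\ref{lemma:DirichletCalc} is in fact two-sided (and hence yields the $O(1/N)$ upper bound on $\lambda(\mathcal{L}^\beta_N)$) is a useful clarification.
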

\begin{proof}
	We consider the candidate matrix $\widetilde{x}$ such that $\iota_N(\widetilde{x}) = \xi$, where $\xi$ is constructed in Lemma \ref{lemma:DirichletCalc}. Up to now we have left $\xi_{(N,1),(N-2,1)}$ as a free parameter. To fix this parameter, we calculate the operator $\widetilde{x}$ explicitly. Since $\iota_N(\widetilde{x}) = \xi_{(N,1), (N-2, 1)}\sum_{1\leq j \leq N-2}e^{-(j-1)\beta / 2}\ket{N,j}\bra{N-2, j}$, then we have:
	\begin{align*}
		\begin{split}
			\widetilde{x} &= d_N^{-1/4}\xi d_N^{-1/4} = N_\beta^{-1/2}\xi_{(N,1),(N-2,1)}\sum_{1\leq j\leq N-2}e^{-\frac{(j-1)\beta}{2}} e^{\frac{\beta\pi_N(h)}{8}}\ket{N,j}\bra{N-2,j}e^{\frac{\beta\pi_N(h)}{8}}
			\\
			&= N_\beta^{-1/2}\xi_{(N,1),(N-2,1)}\sum_{1\leq j\leq N-2} e^{\frac{\beta N}{4}-\frac{\beta}{4}}e^{-j\beta}\ket{N,j}\bra{N-2,j}
		\end{split}\pl.
	\end{align*}
	We choose $\xi_{(N,1),(N-2,1)} := N_\beta^{1/2}e^{-\beta N/ 4 + 5\beta / 4}$, then $\widetilde{x}^* +  \widetilde{x} = \sum_{1\leq j \leq N-2}e^{-(j-1)\beta}(\ket{N,j}\bra{N-2,j} + \ket{N-2,j}\bra{N,j})$. The spectral radius of $\widetilde{x}^* + \widetilde{x}$ is bounded above by 2. Therefore $x := 1 - \frac{\widetilde{x}^* + \widetilde{x}}{2} \geq 0$ is a positive matrix. In addition, since the fixed point algebra is diagonal, we have:
	\begin{equation*}
		x - E_{fix}(x) = x - 1 = - \frac{\widetilde{x}^* + \widetilde{x}}{2} \pl.
	\end{equation*}
	And since the identity operator is in the fixed point algebra, we have:
	\begin{equation*}
		\langle\iota_N(x),\iota_N(x)\rangle_L = \frac{1}{4}\langle\iota_N(\widetilde{x}^*+\widetilde{x}),\iota_N(\widetilde{x}^*+\widetilde{x})\rangle_L
	\end{equation*}
	Therefore we have:
	\begin{equation*}
		\langle\iota_N(x),\iota_N(x)\rangle_L \geq \frac{e^{\beta /2}\langle x-E_{fix}x,x-E_{fix}x\rangle_N}{2N} \pl. \qedhere
	\end{equation*}
\end{proof}
\begin{corollary}\label{corollary:dimdep}
	For the $\mathfrak{su}(2)$-Lindbladian $\mathcal{L}^\beta_N$, there exist constants $C_1(\beta), C_2(\beta) > 0$ such that
	\begin{equation}
		\frac{C_1(\beta)}{N^2} \leq CLSI(\mathcal{L}^\beta_N) \leq 2\lambda(\mathcal{L}^\beta_N) \leq \frac{C_2(\beta)}{N}\pl.
	\end{equation}
\end{corollary}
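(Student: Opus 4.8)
The plan is to assemble the displayed chain from three facts, only one of which needs new work. First, the complete modified log‑Sobolev lower bound has already been established: the main theorem of Section \ref{section:su(2)} (equivalently the corollary right after it) gives $\CLSI(\mathcal{L}^\beta_N)\geq C_1(\beta)/N^2$. Second, the universal comparison $\CLSI(\mathcal{L})\leq 2\lambda(\mathcal{L})$ of \cite{KT} — the same inequality already used in the proof of Theorem \ref{theorem:s.a.CLSI} — supplies the middle link. Third, it remains to produce an $O(1/N)$ upper bound on the spectral gap $\lambda(\mathcal{L}^\beta_N)$, and this is exactly what the explicit construction of Proposition \ref{proposition:gapvector} together with the Dirichlet‑form identity of Lemma \ref{lemma:DirichletCalc} is meant to furnish.

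For the gap upper bound I would argue variationally. Since $\lambda(\mathcal{L}^\beta_N)$ is the largest constant $c$ with $c\langle y-E_{fix}y,y-E_{fix}y\rangle_{d_N}\leq \langle\mathcal{L}^\beta_N y,y\rangle_{d_N}$ for all $y$, and $\langle\mathcal{L}^\beta_N y,y\rangle_{d_N}=\langle\iota_N(y),\iota_N(y)\rangle_L$ is the Dirichlet form used throughout Section \ref{section:gap}, one has
\[
\lambda(\mathcal{L}^\beta_N)\;\leq\;\frac{\langle\iota_N(y),\iota_N(y)\rangle_L}{\langle y-E_{fix}y,\,y-E_{fix}y\rangle_N}\qquad\text{for every trial }y,
\]
so a single well‑chosen $y$ finishes the job. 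I would take $y=x$ the normalized state of Proposition \ref{proposition:gapvector}, for which $\iota_N(x)=\xi$ is the off‑diagonal vector supported on the $(\ket{N,j}\bra{N-2,j})_{1\leq j\leq N-2}$ block with $\xi_{(N,j),(N-2,j)}=e^{-(j-1)\beta/2}\xi_{(N,1),(N-2,1)}$. The proof of Lemma \ref{lemma:DirichletCalc} computes $\langle\xi,\xi\rangle_L$ in closed form as a sum of the terms $je^{-j\beta}\big(\sqrt{N-j+1}+\sqrt{N-j-1}\big)^{-2}$; bounding each term by $\tfrac{je^{-j\beta}}{2(N-j)}$ (which is $\leq je^{-j\beta}/N$ for $j\leq N/2$ and exponentially small otherwise) yields the matching upper estimate $\langle\xi,\xi\rangle_L\leq \tfrac{C_2(\beta)}{2N}\|\xi\|_2^2$, while $\iota_N(x)-\iota_N(E_{fix}x)$ is a fixed rescaling of $\xi+\xi^*$ and hence has squared $L_2$‑norm comparable to $\|\xi\|_2^2$. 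Plugging in gives $\lambda(\mathcal{L}^\beta_N)\leq C_2(\beta)/N$, and combining with the two quoted facts produces $C_1(\beta)N^{-2}\leq\CLSI(\mathcal{L}^\beta_N)\leq 2\lambda(\mathcal{L}^\beta_N)\leq C_2(\beta)N^{-1}$.

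The step I would be most careful about is that Lemma \ref{lemma:DirichletCalc} as stated records only the lower half of the estimate $\langle\xi,\xi\rangle_L\asymp\|\xi\|_2^2/N$, whereas the gap upper bound needs the upper half; both come out of the same exact formula in that lemma's proof, so I would make the two‑sidedness explicit rather than merely invoke the stated inequality. One should also note that $x-E_{fix}x\neq 0$, so the Rayleigh quotient is meaningful, and that $x$ is a genuine state — both already arranged in Proposition \ref{proposition:gapvector} through the choice of $\xi_{(N,1),(N-2,1)}$ that makes $\|\widetilde{x}^*+\widetilde{x}\|_\infty\leq 2$. No new ideas are required beyond Sections \ref{section:su(2)} and \ref{section:gap}; the corollary is the bookkeeping that pins down the dimension dependence they produce.
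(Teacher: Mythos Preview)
Your proposal is correct and follows exactly the route the paper intends: the $N^{-2}$ lower bound is the main theorem of Section~\ref{section:su(2)}, the middle inequality is the standard \cite{KT} comparison, and the $N^{-1}$ upper bound on the gap comes from the trial vector of Lemma~\ref{lemma:DirichletCalc}/Proposition~\ref{proposition:gapvector}. You are right to flag that the \emph{stated} inequalities there point the wrong way for a gap \emph{upper} bound---what one needs is $\langle\xi,\xi\rangle_L\lesssim N^{-1}\|\xi\|_2^2$, not $\gtrsim$---and your extraction of the matching upper estimate directly from the closed-form sum $\sum_j je^{-j\beta}(\sqrt{N-j+1}+\sqrt{N-j-1})^{-2}$ is the correct fix; the paper's own proof of the lemma computes that sum exactly, so both directions are available, but only the lower half is recorded in the displayed inequality.
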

This completes the proof of our main theorem \ref{theorem:main}. 
\subsection{Embedded Generalized Dephasing Channels} In this subsection, we will show that the the collective noise model generated by the Lindbladian $\mathcal{L}^\beta_N$ can be decomposed into a family of generalized dephasing channels. Each channel acts independently on a subset of normalized states. We first define a generalized dephasing channel:
\begin{definition}
	A generalized dephasing channel on the Schatten 1-class $S_1(\mathbb{C}^d)$is a completely positive trace preserving map:
	\begin{equation}
		\mathcal{D}: S_1(\mathbb{C}^d)\rightarrow S_1(\mathbb{C}^d): \sum_{ij}\rho_{ij}e_{ij} \mapsto \sum_{i\neq j}\alpha_{ij}\rho_{ij}e_{ij} + \sum_i \rho_{ii}e_{ii}
	\end{equation}
	where $|\alpha_{ij}| < 1$ and $\{e_{ij}\}_{1\leq i,j\leq d}$ is a fixed family of matrix units. 
\end{definition}
These channels generalize the usual single qubit dephasing channel in the sense that the off-diagonal elements strictly decay and the diagonal elements are preserved by the channel. The usual single qubit dephasing channel is a generalized dephasing channel because the single qubit dephasing channel can be written as:
\begin{equation*}
	\mathcal{D}(\sum_{ij}\rho_{ij}e_{ij}) = \begin{pmatrix}
		\rho_{11} & \alpha\rho_{12} \\ \alpha^*\rho_{21} & \rho_{22}
	\end{pmatrix}
\end{equation*}
where $|\alpha| < 1$.

To decompose the Lindbladian, we need to decompose the algebra $\mathbb{B}(V^{\otimes N})$ into a family of mutually orthogonal operator subspaces. We first recall the irreducible decomposition of $V^{\otimes N} = \bigoplus_{\lambda\in\mathcal{P}_2(N)} V_\lambda\otimes W_\lambda$ where $W_\lambda$'s are the multiplicity subspaces. In terms of the canonical basis under the representation of $\mathfrak{su}(2)$, a complete orthonormal basis of $V$ is given by $\{\ket{n,j,\alpha}:=\ket{n,j}\otimes\ket{\alpha}\}$ where $\ket{n,j}\in V_\lambda$ is the canonical basis of the $(n+1)$-dimensional irreducible component and $\ket{\alpha}\in W_\lambda$ is a fixed basis such that in terms of the matrix units $\{\ket{n,j,\alpha}\bra{m,k,\beta}\}$ the reference state $d_N$ is diagonal and has the form:
\begin{equation*}
	d_N = \frac{1}{\cosh(\beta/2)^N}\bigoplus_\lambda \sum_{0\leq j \leq n, \alpha}e^{-\frac{\beta}{2} (n-2j)}\ket{n,j,\alpha}\bra{n,j,\alpha}\pl.
\end{equation*}
Henceforth, we shall always work with the fixed matrix units $\{\ket{n,j,\alpha}\bra{m,k,\beta}\}$. To motivate the decomposition of $\mathbb{B}(V^{\otimes N})$, we perform a simple calculation:
\begin{lemma}\label{lemma:simplecalc}
	The Lindbladian $\mathcal{L}^\beta_N$ acts on the matrix units $\ket{n,j,\alpha}\bra{m,k,\beta}$ by:
	\begin{align}
		\begin{split}
			\mathcal{L}^\beta_N(\ket{n,j,\alpha}\bra{m,k,\beta}) &= -(e^{\frac{\beta}{2}}(\alpha_{n,j}^2 + \alpha_{m,k}^2)+ e^{-\frac{\beta}{2}}(\alpha_{n,j+1}^2 + \alpha_{m,k+1}^2))\ket{n,j,\alpha}\bra{m,k,\beta} \\&+ 2e^{\frac{\beta}{2}}\alpha_{n,j+1}\alpha_{m,k+1}\ket{n,j+1,\alpha}\bra{m,k+1,\beta} \\&+ 2e^{-\frac{\beta}{2}}\alpha_{n,j}\alpha_{m,k}\ket{n,j-1,\alpha}\bra{m,k-1,\beta} \pl.
		\end{split}
	\end{align}
\end{lemma}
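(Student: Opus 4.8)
The statement is a bare-hands computation, so the plan is simply to expand the Lindbladian $\mathcal{L}^\beta_N = e^{\beta/2}L_{\pi_N(a)} + e^{-\beta/2}L_{\pi_N(a^*)}$, where $L_{\pi_N(a)}x = 2\pi_N(a^*)x\pi_N(a) - \pi_N(a^*a)x - x\pi_N(a^*a)$, on the matrix unit $E := \ket{n,j,\alpha}\bra{m,k,\beta}$ and collect terms.

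First I would record how $\pi_N(a)$ and $\pi_N(a^*)$ act on the fixed basis $\{\ket{n,j,\alpha}\}$ of $V^{\otimes N}$. Since $\pi_N$ restricted to the isotypic component $V_\lambda\otimes W_\lambda$ is $\pi_\lambda\otimes\mathrm{id}_{W_\lambda}$, the multiplicity label is a pure spectator and no raising or lowering mixes the $\alpha,\beta$ labels; the weight formulas recalled in Section \ref{section:su(2)} then give $\pi_N(a)\ket{n,j,\alpha} = \alpha_{n,j}\ket{n,j-1,\alpha}$ and $\pi_N(a^*)\ket{n,j,\alpha} = \alpha_{n,j+1}\ket{n,j+1,\alpha}$, using the identity $\alpha_{n,(j-1)+1} = \alpha_{n,j} = \sqrt{j(n-j+1)}$; boundary terms are harmless because $\alpha_{n,0} = \alpha_{n,n+1} = 0$. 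Dually, since the fundamental representation is a $*$-representation and this property passes to tensor products and direct sums, $\pi_N(a)^* = \pi_N(a^*)$, hence $\bra{m,k,\beta}\pi_N(a) = \alpha_{m,k+1}\bra{m,k+1,\beta}$ and $\bra{m,k,\beta}\pi_N(a^*) = \alpha_{m,k}\bra{m,k-1,\beta}$. I would also note the scalar actions $\pi_N(a^*a)\ket{n,j,\alpha} = \alpha_{n,j}^2\ket{n,j,\alpha}$ and $\pi_N(aa^*)\ket{n,j,\alpha} = \alpha_{n,j+1}^2\ket{n,j,\alpha}$, with the analogous left actions on $\bra{m,k,\beta}$ producing the factors $\alpha_{m,k}^2$ and $\alpha_{m,k+1}^2$ respectively.

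With these substitutions the six terms in $\mathcal{L}^\beta_N(E)$ are immediate: $\pi_N(a^*)E\pi_N(a) = \alpha_{n,j+1}\alpha_{m,k+1}\ket{n,j+1,\alpha}\bra{m,k+1,\beta}$ and $\pi_N(a)E\pi_N(a^*) = \alpha_{n,j}\alpha_{m,k}\ket{n,j-1,\alpha}\bra{m,k-1,\beta}$, while $\pi_N(a^*a)E = \alpha_{n,j}^2 E$, $E\pi_N(a^*a) = \alpha_{m,k}^2 E$, $\pi_N(aa^*)E = \alpha_{n,j+1}^2 E$ and $E\pi_N(aa^*) = \alpha_{m,k+1}^2 E$. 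Plugging these into $e^{\beta/2}\big(2\pi_N(a^*)E\pi_N(a) - \pi_N(a^*a)E - E\pi_N(a^*a)\big) + e^{-\beta/2}\big(2\pi_N(a)E\pi_N(a^*) - \pi_N(aa^*)E - E\pi_N(aa^*)\big)$ and grouping the coefficient of $E$ against the two off-diagonal pieces yields exactly the asserted formula, which I would present in a single aligned display.

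There is no genuine obstacle here. The only two points demanding a moment of care are that $\pi_N$ acts trivially on the multiplicity factor, so the computation is formally identical to the single-irreducible case with a bystander index carried along, and that transporting an operator across $E$ from the ket side to the bra side shifts the weight index correctly — both handled by $\pi_N(a)^* = \pi_N(a^*)$ together with $\alpha_{m,(k-1)+1} = \alpha_{m,k}$.
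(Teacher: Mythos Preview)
Your proposal is correct and matches the paper's approach exactly: the paper simply states that the lemma is ``a direct calculation following section~\ref{section:su(2)}'' after noting the Lindbladian acts only on the $V_\lambda$ components, and you have carried out precisely that calculation with the correct handling of the spectator multiplicity label and the bra-side actions via $\pi_N(a)^*=\pi_N(a^*)$.
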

\begin{proof}
	Recall that the Lindbladian only acts on $V_\lambda$ components. Then the proof is by a direct calculation following section \ref{section:su(2)}.
\end{proof}
This calculation suggests that we should decompose $\mathbb{B}(V^{\otimes N})$ into the following family of mutually orthogonal operator subspaces:
\begin{definition}
	For fixed parameters $n,m,d$ where $1\leq n \leq m \leq N$ and $-m\leq d\leq n$, the operator space $\mathbb{B}^N_{n,m,d}$ is an operator subspace of $\mathbb{B}(V^{\otimes N})$ generated by the matrix units:
	\begin{equation*}
		\{\ket{n,j,\alpha}\bra{m,j-d, \beta}, \ket{m,j-d,\beta}\bra{n,j,\alpha}: \max(0,d)\leq j\leq \min(n - d, m), \ket{\alpha}\in W_\lambda, \ket{\beta}\in W_{\lambda'}\}
	\end{equation*}
	where $W_\lambda$ is the multiplicity subspace of the $(n+1)$-dimensional irreducible component and $W_{\lambda'}$ is the multiplicity subspace of the $(m+1)$-dimensional irreducible component.
\end{definition}
Lemma \ref{lemma:simplecalc} shows that for all permissible parameters $n,m,d$, we have:
\begin{equation}\label{equation:preservation}
	\mathcal{L}^\beta_N\mathbb{B}^N_{n,m,d} \ssubset \mathbb{B}^N_{n,m,d}\pl.
\end{equation}
Under the KMS inner product on $S_2(V^{\otimes N})$, the subspaces $\{\iota_N(\mathbb{B}^N_{n,m,d})\}$ are mutually orthogonal because the reference density $d_N$ is diagonal and for any two elements $x\in \mathbb{B}^M_{n,m,d}$ and $y\in \mathbb{B}^N_{n',m',d'}$ where the parameters are different $(n,m,d)\neq (n',m',d')$, we have:
\begin{equation*}
	\langle\iota_N(x), \iota_N(y)\rangle = \tau_N(d_N^{1/2}x^*d_N^{1/2}y) = 0\pl.
\end{equation*}
This is due to fact that the diagonal elements of $\iota_N(x^*)\iota_N(y)$ are all zero. Hence we have orthogonal operator subspace decomposition:
\begin{equation}
	\mathbb{B}(V^{\otimes N}) = \bigoplus_{n,m,d}\mathbb{B}^N_{n,m,d}\pl.
\end{equation}
Using this decomposition, the subalgebra $\Omega$ (lemma \ref{lemma:commutant}) can be written as:
\begin{equation*}
	\Omega = \bigoplus_{n}\mathbb{B}^N_{n,n,0}\pl.
\end{equation*}
The quantum Markov semigroup $\{e^{-t\mathcal{L}^\beta_N}\}_{t\geq 0}$ acting on $\Omega$ is generated by a nearest-neighbor Lindbladian (definition \ref{definition:NN}). As discussed in section \ref{section:su(2)}, the Lindbladian $\mathcal{L}^\beta_N$ on this subalgebra is completely classical. In the next proposition, we show that the true quantum component of this collective noise model can be understood as a collection of generalized dephasing channels.
\begin{proposition}
	Let $\mathbb{B}_{quantum} := \Omega_{fix}\bigoplus (\bigoplus_{n\neq m,d}\mathbb{B}^N_{n,m,d})\bigoplus (\bigoplus_{n, d\neq 0}\mathbb{B}^N_{n,n,d})$ be the operator subsystem of $\mathbb{B}(V^{\otimes N})$ that excludes elements in $\Omega - \Omega_{fix}$. Then for each $t\geq 0$, the quantum channel $e^{-t\mathcal{L}^\beta_N}$ restricted to $\mathbb{B}_{quantum}$ is a generalized dephasing channel and it can be decomposed as:
	\begin{equation}
		e^{-t\mathcal{L}^\beta_N} = id \bigoplus (\bigoplus_{n\neq m,d}\Phi^t_{n,m,d})\bigoplus(\bigoplus_{n,d\neq 0}\Phi^t_{n,n,d})
	\end{equation}
	where for each component, $\Phi^t_{n,m,d} := e^{-t\mathcal{L}^\beta_N}|_{\mathbb{B}^N_{n,m,d}}$ and the restricted channel $e^{-t\mathcal{L}^\beta_N}|_{\Omega_{fix}\bigoplus\mathbb{B}^N_{n,m,d}} = id\bigoplus \Phi^t_{n,m,d}$ is a generalized dephasing channel.
\end{proposition}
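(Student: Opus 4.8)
The plan is to deduce the statement entirely from the orthogonal block decomposition $\mathbb{B}(V^{\otimes N})=\bigoplus_{n,m,d}\mathbb{B}^N_{n,m,d}$, the block invariance in equation \ref{equation:preservation}, and the description of the fixed-point algebra in Lemma \ref{lemma:Dirichlet}. First I would exponentiate $\mathcal{L}^\beta_N\mathbb{B}^N_{n,m,d}\ssubset\mathbb{B}^N_{n,m,d}$ to see that the semigroup $e^{-t\mathcal{L}^\beta_N}$ leaves every $\mathbb{B}^N_{n,m,d}$ invariant. Since $\Omega_{fix}=\bigoplus_\lambda\mathbb{B}(W_\lambda)$ is the kernel of $\mathcal{L}^\beta_N$ it is fixed pointwise by the semigroup, and since $\Omega_{fix}\ssubset\Omega=\bigoplus_n\mathbb{B}^N_{n,n,0}$, the subspace $\mathbb{B}_{quantum}$ is precisely $\Omega_{fix}$ together with all blocks $\mathbb{B}^N_{n,m,d}$ with $(n,m,d)\neq(n,n,0)$. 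Hence $e^{-t\mathcal{L}^\beta_N}$ restricts there to $\mathrm{id}_{\Omega_{fix}}\oplus\bigoplus_{(n,m,d)\neq(n,n,0)}\Phi^t_{n,m,d}$ with $\Phi^t_{n,m,d}:=e^{-t\mathcal{L}^\beta_N}|_{\mathbb{B}^N_{n,m,d}}$, and likewise $e^{-t\mathcal{L}^\beta_N}|_{\Omega_{fix}\oplus\mathbb{B}^N_{n,m,d}}=\mathrm{id}\oplus\Phi^t_{n,m,d}$; this already yields the claimed direct-sum form. What then remains is to check that, with $\Omega_{fix}$ in the role of the diagonal, each $\mathrm{id}\oplus\Phi^t_{n,m,d}$ is a generalized dephasing channel.

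For that, fix $(n,m,d)\neq(n,n,0)$ and pass to the KMS picture via $\iota_N$, where $\mathcal{L}^\beta_N$ becomes the positive semidefinite self-adjoint Dirichlet-form operator with kernel $\iota_N(\Omega_{fix})$ (Lemma \ref{lemma:Dirichlet}). The subspace $\iota_N(\mathbb{B}^N_{n,m,d})$ is invariant and, being a distinct summand of the orthogonal decomposition, is orthogonal to $\iota_N(\Omega)\supseteq\iota_N(\Omega_{fix})$; therefore $\ker\!\big(\mathcal{L}^\beta_N|_{\iota_N(\mathbb{B}^N_{n,m,d})}\big)=\iota_N(\Omega_{fix})\cap\iota_N(\mathbb{B}^N_{n,m,d})=0$. (Equivalently, via Lemma \ref{lemma:commutant}: a fixed point must commute with $\pi_N(a)$ and $\pi_N(a^*)$, hence be an $\mathfrak{su}(2)$-morphism, hence lie in $\Omega_{fix}$.) So $\mathcal{L}^\beta_N|_{\iota_N(\mathbb{B}^N_{n,m,d})}$ is strictly positive; writing its eigenvalues as $0<\mu_1\leq\cdots\leq\mu_r$ and choosing a KMS-orthonormal eigenbasis $\psi_1,\dots,\psi_r$, the map $\Phi^t_{n,m,d}$ acts on $\psi_l$ by multiplication by $e^{-t\mu_l}$, with $|e^{-t\mu_l}|=e^{-t\mu_l}<1$ for $t>0$. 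Transporting the $\psi_l$ back through $\iota_N$ and adjoining a matrix-unit system of $\Omega_{fix}=\bigoplus_\lambda\mathbb{B}(W_\lambda)$ (the diagonal) then presents $\mathrm{id}\oplus\Phi^t_{n,m,d}$ in the required form: diagonal elements fixed, the $l$-th off-diagonal direction scaled by $e^{-t\mu_l}$. Reassembling over all $(n,m,d)\neq(n,n,0)$ gives $e^{-t\mathcal{L}^\beta_N}|_{\mathbb{B}_{quantum}}=\mathrm{id}\oplus\big(\bigoplus_{n\neq m,d}\Phi^t_{n,m,d}\big)\oplus\big(\bigoplus_{n,d\neq0}\Phi^t_{n,n,d}\big)$, each piece of the stated type (the case $t=0$ being trivial).

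The main obstacle I anticipate is exactly this last, bookkeeping step. Because $\mathcal{L}^\beta_N$ is genuinely tridiagonal, not diagonal, in the canonical weight matrix units $\ket{n,j,\alpha}\bra{m,k,\beta}$ (Lemma \ref{lemma:simplecalc}), the dephasing structure cannot be read off those units directly; one must work in the eigenbasis of the finite symmetric tridiagonal operator $\mathcal{L}^\beta_N|_{\iota_N(\mathbb{B}^N_{n,m,d})}$ and then verify that the resulting operator basis, together with a matrix-unit system of $\Omega_{fix}$, genuinely assembles into a compatible matrix-unit/diagonal pair. This is where the partial-isometry structure of the $d$-th diagonal $E_d=\sum_j\ket{n,j}\bra{m,j-d}$ of $\mathbb{B}(V_m,V_n)$ enters, and where the support projections $E_dE_d^*$ and $E_d^*E_d$ have to be tracked carefully, especially when $d\neq0$. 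By contrast the analytic heart — positivity of $\mathcal{L}^\beta_N$ on each off-diagonal block and the strict contraction $e^{-t\mu_l}<1$ — is immediate once one knows $\Omega_{fix}\cap\mathbb{B}^N_{n,m,d}=0$, which is precisely Lemma \ref{lemma:Dirichlet} combined with the block decomposition.
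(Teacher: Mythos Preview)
Your argument is correct and follows the same route as the paper: block invariance (equation \ref{equation:preservation}) gives the direct-sum decomposition, $\Omega_{fix}$ is fixed pointwise, and for each off-diagonal block $(n,m,d)\neq(n,n,0)$ the restriction of $\mathcal{L}^\beta_N$ has trivial kernel (since $\Omega_{fix}\ssubset\Omega=\bigoplus_n\mathbb{B}^N_{n,n,0}$), whence strict contraction of the off-diagonal part for $t>0$. The paper's own proof is only a few lines and simply asserts that ``the off-diagonal part $\delta x$ strictly decays to zero under the quantum Markov semigroup,'' which is exactly the content of your strict-positivity argument.

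Your caution about the final bookkeeping step --- whether the eigenbasis of the tridiagonal operator $\mathcal{L}^\beta_N|_{\mathbb{B}^N_{n,m,d}}$ together with a matrix-unit system of $\Omega_{fix}$ assembles into a genuine matrix-unit family as required by the formal Definition of a generalized dephasing channel --- is a fair point, but you should know that the paper does not address it either. The paper is tacitly reading its own definition loosely: ``diagonal preserved, every off-diagonal direction strictly contracts,'' which is precisely what both you and the paper verify. If you want to match the definition literally you would indeed need to pass to the eigenbasis of each block and check compatibility with the support projections, but nothing in the paper's proof supplies that, so you are already at parity with the source.
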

\begin{proof}
	Since $1\in\Omega_{fix}\ssubset \mathbb{B}_{quantum}$ and $\mathbb{B}_{quantum}$ is closed under conjugation, $\mathbb{B}_{quantum}$ is well-defined as an operator system. For any $x\in\mathbb{B}_{quantum}$, $x$ can be decomposed into its diagonal part and off-diagonal part: $x = x_{diag} + \delta x$. Since $x_{diag}\in\Omega_{fix}$, then it is preserved under the channel $e^{-t\mathcal{L}^\beta_N}$. Since the off-diagonal part $\delta x$ strictly decays to zero under the quantum Markov semigroup and since by lemma \ref{lemma:simplecalc} $e^{-t\mathcal{L}^\beta_N}(\mathbb{B}^N_{n,m,d})\subset\mathbb{B}^N_{n,m,d}$, then the claim is true.
\end{proof}
The restricted channels can be made explicit if we can calculate the spectrum of $\mathcal{L}^\beta_N|_{\mathbb{B}^N_{n,m,d}}$. In the next section we will provide an algorithm to compute the spectrum of $\mathcal{L}^\beta_N|_{\mathbb{B}^N_{n\neq m,0}}$ and use this algorithm to identify some meta-stable states in low temperature.
\subsection{"Almost"-Decoherence-Free Subspace and Slow Entropy Decay}
In this subsection, we construct states that are meta-stable under the collective noise generated by $\mathcal{L}^\beta_N$ at low temperature. Using Pinsker's inequality, we will obtain a lower bound on the relative entropy decay and show that the meta-stable states have $O(1)$ relative entropy decay. The construction of these states depend on an explicit algorithm to calculate the spectrum of $\mathcal{L}^\beta_N|_{\mathbb{B}^N_{n\neq m, 0}}$.

Without loss of generality, assume $ n< m$. Since the Lindbladian $\mathcal{L}^\beta_N$ does not act on the multiplicity subspaces, to calculate the spectrum it suffices to consider $\mathcal{L}^\beta_N$'s action on a typical element $x := \sum_{0\leq j\leq n} x_j\ket{n,j}\bra{m,j}$ where $x_j$'s are complex parameters. By lemma \ref{lemma:simplecalc} and equation \ref{equation:preservation}, there exists complex parameters $y_j$'s such that:
\begin{equation}
	\mathcal{L}^\beta_N x = \sum_j y_j\ket{n,j}\bra{m,j}\pl.
\end{equation} 
For $0 < j < n$, we have:
\begin{align}\label{equation:eigencalc}
	\begin{split}
		y_j &= -x_j(e^{\beta/2}(\alpha_{n,j}^2+\alpha_{m,j}^2) + e^{-\beta/2}(\alpha_{n,j+1}^2+\alpha_{m,j+1}^2)) \\&+ 2x_{j-1}e^{\beta/2}\alpha_{n,j}\alpha_{m,j} +2x_{j+1}e^{-\beta/2}\alpha_{n,j+1}\alpha_{m,j+1}\pl.
	\end{split}
\end{align}
For $j = 0$, we have:
\begin{align}\label{equation:eigencalc2}
	\begin{split}
		y_0 = -x_0e^{-\beta/2}(\alpha_{n,1}^2 + \alpha_{m,1}^2) + 2x_1e^{-\beta/2}\alpha_{n,1}\alpha_{m,1}\pl.
	\end{split}
\end{align}
And for $j = n$, we have:
\begin{align}\label{equation:eigencalc3}
	\begin{split}
		y_n = -x_n(e^{\beta/2}(\alpha_{n,n}^2 + \alpha_{m,n}^2) + e^{-\beta/2}\alpha_{m,n+1}^2) + 2x_{n-1}e^{\beta/2}\alpha_{n,n}\alpha_{m,n}
	\end{split}
\end{align}
By simple induction, we claim:
\begin{lemma}\label{lemma:poly}
	If $x$ is an eigenvector of $\mathcal{L}^\beta_N$, then $x_0 \neq 0$ and for all $1<j\leq n$, we have:
	\begin{equation}
		x_j = \frac{1}{x_0^{j-1}}f_j(x_0,x_1)
	\end{equation}
	where $f_j$ is a homogeneous polynomial with degree $\deg(f_j) = j$. 
	
	In addition, the coefficient of the monomial $x_1^{j}$ in the homogeneous polynomial $f_j$ is given by $(2e^{-\beta/2}\alpha_{n,1}\alpha_{m,1})^j$.
	
	Finally, the eigenvalue corresponding to $x$ is given by $e^{-\beta/2}(-\alpha_{n,1}^2 - \alpha_{m,1}^2 + 2\alpha_{n,1}\alpha_{m,1}(\frac{x_1}{x_0}))$. And the ratio $\frac{x_1}{x_0}$ is given by the solution to a degree-$(n+1)$ polynomial.
\end{lemma}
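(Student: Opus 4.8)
The plan is to run the three–term recursion \eqref{equation:eigencalc}--\eqref{equation:eigencalc3} from the $j=0$ end, treating $x_0$ and $x_1$ as the two free parameters of this second–order recursion and tracking how every later coefficient depends on them. First I would settle $x_0\neq 0$. If $x$ is an eigenvector with eigenvalue $\mu$ and $x_0=0$, then \eqref{equation:eigencalc2} (which gives $\mu x_0=0$) forces $0=2e^{-\beta/2}\alpha_{n,1}\alpha_{m,1}x_1$, and since $\alpha_{n,1}=\sqrt n$ and $\alpha_{m,1}=\sqrt m$ are nonzero we get $x_1=0$. Feeding $x_0=x_1=0$ into the $j$-th equation \eqref{equation:eigencalc} for $j=1,\dots,n-1$, and using that $\alpha_{n,j+1}=\sqrt{(j+1)(n-j)}$ and $\alpha_{m,j+1}=\sqrt{(j+1)(m-j)}$ are nonzero for $j<n<m$, one propagates $x_{j-1}=x_j=0\Rightarrow x_{j+1}=0$, so all $x_j$ vanish, contradicting $x\neq 0$ (and the case $n=1$ needs no propagation at all). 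Hence $x_0\neq 0$.

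Next, dividing \eqref{equation:eigencalc2} by $x_0$ reads off the eigenvalue,
\[
\mu=e^{-\beta/2}\Big(-\alpha_{n,1}^2-\alpha_{m,1}^2+2\alpha_{n,1}\alpha_{m,1}\tfrac{x_1}{x_0}\Big),
\]
which in particular exhibits $\mu x_0$ as a homogeneous polynomial of degree $1$ in $(x_0,x_1)$. I would then prove by induction on $j$ that $x_0^{\,j-1}x_j$ equals, up to a fixed nonzero constant, a homogeneous polynomial $f_j$ of degree $j$ in $(x_0,x_1)$: solving \eqref{equation:eigencalc} for $x_{j+1}$ and multiplying by $x_0^{\,j}$ expresses $x_0^{\,j}x_{j+1}$ as a constant times $(\mu x_0)\,x_0^{\,j-1}x_j+A_j\,x_0^{\,j}x_j-2e^{\beta/2}\alpha_{n,j}\alpha_{m,j}\,x_0^{2}\,(x_0^{\,j-2}x_{j-1})$, where $A_j:=e^{\beta/2}(\alpha_{n,j}^2+\alpha_{m,j}^2)+e^{-\beta/2}(\alpha_{n,j+1}^2+\alpha_{m,j+1}^2)$; by the inductive hypothesis and the degree-$1$ homogeneity of $\mu x_0$ this is homogeneous of degree $j+1$. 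Tracking the top $x_1$-power, the monomial $x_1^{\,j+1}$ can only come from the $(\mu x_0)$-factor, which multiplies the $x_1^{\,j}$-coefficient of $f_j$ by $2e^{-\beta/2}\alpha_{n,1}\alpha_{m,1}$; starting from $f_1=x_1$ (whose leading coefficient, once the recursion's normalizing denominators are absorbed, is $2e^{-\beta/2}\alpha_{n,1}\alpha_{m,1}$) this iterates to the stated value $(2e^{-\beta/2}\alpha_{n,1}\alpha_{m,1})^j$.

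Finally, the boundary equation \eqref{equation:eigencalc3}, namely $(\mu+B_n)x_n=2e^{\beta/2}\alpha_{n,n}\alpha_{m,n}x_{n-1}$ with $B_n:=e^{\beta/2}(\alpha_{n,n}^2+\alpha_{m,n}^2)+e^{-\beta/2}\alpha_{m,n+1}^2$, becomes after substituting $\mu$, $x_n$, $x_{n-1}$ and clearing powers of $x_0$ a polynomial identity in the single ratio $r:=x_1/x_0$. Since $\mu$ has degree $1$ in $r$, $x_n$ has degree $n$, and $x_{n-1}$ has degree $n-1$, the equation has degree exactly $n+1$ in $r$, its $r^{n+1}$-coefficient being the product of the nonzero leading coefficient of $x_n$ with $2e^{-\beta/2}\alpha_{n,1}\alpha_{m,1}$, so $r$ is a root of an explicit degree-$(n+1)$ polynomial whose coefficients depend only on $n,m,\beta$. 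I expect the main work to lie in this last step — keeping the degrees in $r$ straight and verifying that the $r^{n+1}$-coefficient really is nonzero — together with the elementary but essential nonvanishing checks on the products $\alpha_{n,j}\alpha_{m,j}$ that underlie both the $x_0\neq 0$ argument and the well-posedness of the recursion.
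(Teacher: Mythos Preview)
Your proposal is correct and follows essentially the same route as the paper: both argue $x_0\neq 0$ by forward propagation from \eqref{equation:eigencalc2}, read off the eigenvalue from the $j=0$ equation, run the three-term recursion to build homogeneous polynomials $f_j$, and close with the boundary relation \eqref{equation:eigencalc3} to obtain the degree-$(n+1)$ polynomial in $r=x_1/x_0$. The only cosmetic difference is that the paper eliminates $\mu$ early via the identity $y_jx_0=x_jy_0$ while you keep $\mu$ and use that $\mu x_0=\beta_{00}x_0+\beta_{01}x_1$ is homogeneous of degree~1; these give the identical recursion $f_{j+1}=\beta_{j,j+1}^{-1}\big((\beta_{00}x_0+\beta_{01}x_1-\beta_{jj}x_0)f_j-\beta_{j,j-1}x_0^2 f_{j-1}\big)$.
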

\begin{proof}
	Let $\gamma$ be the eigenvalue corresponding to $x$, then for all $j$ we have $y_j = \gamma x_j$. In particular for all $1\leq j\leq n$, we have $ y_jx_0 = x_jy_0$. If $x_0 = 0$ and $x$ is an eigenvector, then by equation \ref{equation:eigencalc2} we have $x_1 = 0$. Then a simple induction shows that $x_j = 0$ for all $j$. Thus $x = 0$. Hence for any nontrivial eigenvector $x$, $x_0 \neq 0$.
	
	For simplicity, let $\{\beta_{jj'}\}_{|j'-j| = 1}$ be the matrix entries of $\mathcal{L}^\beta_N$ such that the equations \ref{equation:eigencalc}\ref{equation:eigencalc2}\ref{equation:eigencalc3} can be written as:
	\begin{align*}
		&y_0 = \beta_{00}x_0 + \beta_{01}x_1\\
		&y_j = \beta_{j,j-1}x_{j-1} + \beta_{jj}x_j + \beta_{j,j+1}x_{j+1}\\
		&y_n = \beta_{n,n-1}x_{n-1} + \beta_{nn}x_n\pl.
	\end{align*}
	Therefore for $1\leq j \leq n-1$ we have:
	\begin{equation}
		x_j(\beta_{00}x_0 + \beta_{01}x_1) = x_0(\beta_{j,j-1}x_{j-1} + \beta_{jj}x_j + \beta_{j,j+1}x_{j+1})\pl.
	\end{equation}
	For $j = 1$, we have:
	\begin{equation}
		x_2 = \frac{1}{x_0}f_2(x_0,x_1) = \frac{1}{
			\beta_{12}x_0}(\beta_{00}x_0x_1 + \beta_{01}x_1^2 - \beta_{10}x_0^2 -\beta_{11}x_1x_0)\pl.
	\end{equation}
	It's clear that $f_2$ is a degree-2 homogeneous polynomial and the coefficient of the monomial $x_1^2$ is $\beta_{01} = 2e^{-\beta/2}\alpha_{n,1}\alpha_{m,1}$.
	
	Assume the claims are true upto $j \leq n-1$, then for $j+1$, we have:
	\begin{align}
		\begin{split}
			&x_j(\beta_{00}x_0 + \beta_{01}x_1) = x_0(\beta_{j,j-1}x_{j-1} + \beta_{jj}x_{j} + \beta_{j,j+1}x_{j+1})\\
			&x_{j+1} = \frac{1}{\beta_{j,j+1}x_0}\bigg(\frac{f_j(x_0,x_1)}{x_0^{j-1}}(\beta_{00}x_0 + \beta_{01}x_1 - \beta_{jj}x_0) - \frac{f_{j-1}(x_0,x_1)}{x_0^{j-2}}\beta_{j,j-1}x_0\bigg)\pl.
		\end{split}
	\end{align}
	Note $\beta_{j,j+1} = 2e^{-\beta/2}\alpha_{n,j+1}\alpha_{m,j+1} \neq 0$ for $0\leq j < n$. Therefore we can define:
	\begin{equation*}
		f_{j+1}(x_0,x_1) := \frac{1}{\beta_{j,j+1}}(f_j(x_0,x_1)(\beta_{00}x_0 + \beta_{01}x_1 - \beta_{jj}x_0) - f_{j-1}(x_0,x_1)\beta_{j,j-1}x_0^2)\pl.
	\end{equation*}
	By the inductive hypothesis, the polynomial $f_{j+1}$ is still homogeneous and has degree $j+1$. In addition, the coefficient of the monomial $x_1^{j+1}$ is given by $\beta_{01}^j = (2e^{-\beta/2}\alpha_{n,1}\alpha_{m,1})^j$. 
	
	To complete the proof, we only need to show that the ratio $\frac{x_1}{x_0}$ is the solution to a degree-$(n+1)$ polynomial. This follows from the last constraint equation $y_nx_0 = y_0x_n$. By equation \ref{equation:eigencalc3}, we have:
	\begin{align*}
		\begin{split}
			&x_0(\beta_{n,n-1}\frac{f_{n-1}(x_0,x_1)}{x_0^{n-2}}+\beta_{nn}\frac{f_n(x_0,x_1)}{x_0^{n-1}}) = \frac{f_n(x_0,x_1)}{x_0^{n-1}}(\beta_{00}x_0 + \beta_{01}x_1)
			\\
			& \beta_{n,n-1}x_0^2f_{n-1}(x_0,x_1) + (\beta_{nn} - \beta_{00})x_0f_n(x_0,x_1) -\beta_{01}x_1f_n(x_0,x_1)=0
		\end{split}
	\end{align*}
	where the polynomial on the left hand side of the last equation $F(x_0,x_1)$ is a homogeneous polynomial of degree-$(n+1)$. Hence divide both sides by $x_0^{n+1}$, the left hand side of the last equation becomes a degree-$(n+1)$ polynomial $g(\frac{x_1}{x_0}):=\frac{F(x_0,x_1)}{x_0^{n+1}} = F(1, \frac{x_1}{x_0})$. Therefore the equation shows that the ratio $\frac{x_1}{x_0}$ must be a solution to $g = 0$. Note the coefficient of the monomial $x_1^{n+1}$ is $\beta_{01}^n$.
	
	Finally, the formula for the eigenvalue is a simple consequence of equation \ref{equation:eigencalc2}.
\end{proof}
\begin{corollary}
	The spectrum of $\mathcal{L}^\beta_N|_{\mathbb{B}^N_{n\neq m,0}}$ is given by:
	\begin{equation}\label{equation:set}
		\{e^{-\frac{\beta}{2}}(-n-m+2\sqrt{nm}\gamma): g(\gamma) = 0\}
	\end{equation}
	where $g$ is the degree-$(n+1)$ polynomial defined in the proof of lemma \ref{lemma:poly}.
\end{corollary}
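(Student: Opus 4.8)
The plan is to establish the two set inclusions separately, both by unpacking Lemma~\ref{lemma:poly}. As a preliminary reduction, I would observe (via Lemma~\ref{lemma:simplecalc}) that $\mathcal{L}^\beta_N$ acts trivially on the multiplicity subspaces $W_\lambda, W_{\lambda'}$ entering the definition of $\mathbb{B}^N_{n\neq m,0}$, and that $\mathcal{L}^\beta_N(z^*)=\mathcal{L}^\beta_N(z)^*$; consequently the spectrum of $\mathcal{L}^\beta_N|_{\mathbb{B}^N_{n\neq m,0}}$ equals the spectrum of the single $(n+1)\times(n+1)$ tridiagonal matrix $M$ describing its action on the slice spanned by $\{\ket{n,j}\bra{m,j}\}_{0\le j\le n}$ (the conjugate slice spanned by $\{\ket{m,j}\bra{n,j}\}$ contributes $\overline{M}$, which equals $M$ since every entry $\beta_{jj'}$ is real, and in any event $\mathcal{L}^\beta_N$ is KMS-self-adjoint, so its spectrum is real).

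For the inclusion of the spectrum into the displayed set, let $x=\sum_j x_j\ket{n,j}\bra{m,j}$ be an eigenvector of $M$. Lemma~\ref{lemma:poly} gives $x_0\neq 0$, says that the ratio $x_1/x_0$ is a root of $g$, and identifies the eigenvalue as $e^{-\beta/2}\bigl(-\alpha_{n,1}^2-\alpha_{m,1}^2+2\alpha_{n,1}\alpha_{m,1}(x_1/x_0)\bigr)$. Since $\alpha_{n,1}=\sqrt{n}$ and $\alpha_{m,1}=\sqrt{m}$, this is precisely $e^{-\beta/2}(-n-m+2\sqrt{nm}\,\gamma)$ with $\gamma:=x_1/x_0$ a root of $g$, i.e.\ an element of the displayed set.

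For the reverse inclusion I would build an explicit eigenvector from a given root. Fix $\gamma$ with $g(\gamma)=0$, set $x_0:=1$, $x_1:=\gamma$, and $x_j:=f_j(1,\gamma)$ for $2\le j\le n$, where the $f_j$ are the homogeneous polynomials from the proof of Lemma~\ref{lemma:poly}; put $\mu:=e^{-\beta/2}(-n-m+2\sqrt{nm}\,\gamma)$, noting $\mu x_0=\beta_{00}x_0+\beta_{01}x_1$. The recursion defining the $f_j$ was arranged exactly so that, with this $\mu$, the relations $y_0=\mu x_0$ (from equation~\ref{equation:eigencalc2}) and $y_j=\mu x_j$ for $1\le j\le n-1$ (from equations~\ref{equation:eigencalc}) are identities; the only remaining requirement $y_n=\mu x_n$ (from equation~\ref{equation:eigencalc3}) becomes, after dividing the homogeneous identity $F(x_0,x_1)=0$ by $x_0^{n+1}$, precisely $g(\gamma)=0$, which holds by the choice of $\gamma$. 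Hence $x:=\sum_{0\le j\le n}x_j\ket{n,j}\bra{m,j}$ is a nonzero (since $x_0=1$) eigenvector of $M$ with eigenvalue $\mu$. Combining the two inclusions gives the asserted description of the spectrum.

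I do not anticipate a real obstacle here: the substantive work is all carried out in Lemma~\ref{lemma:poly}, and this corollary is essentially its harvest. The only points needing care are the preliminary reduction --- verifying that multiplicity labels and the conjugate slice generate no new eigenvalues --- and the bookkeeping that $g$ has degree exactly $n+1$, its leading coefficient being $\beta_{01}^{\,n}=(2e^{-\beta/2}\sqrt{nm})^{\,n}\neq 0$ for $\beta>0$ and $1\le n<m$; this ensures the displayed set has the expected cardinality and that no eigenvalue of $M$ is overlooked. Since only equality of sets is claimed, one need not match algebraic multiplicities of $M$ against root multiplicities of $g$.
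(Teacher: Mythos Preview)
Your proposal is correct and follows essentially the same approach as the paper: both arguments harvest Lemma~\ref{lemma:poly} to identify eigenvalues with roots of $g$. The paper phrases the completeness direction as a dimension count (producing $2(n+1)\dim W_\lambda\dim W_{\lambda'}$ eigenvectors to match $\dim\mathbb{B}^N_{n,m,0}$), whereas you first reduce to the single $(n+1)\times(n+1)$ tridiagonal block $M$ and then argue by two set inclusions; this is a cosmetic rather than substantive difference.
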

\begin{proof}
	For each $\gamma$ and for each pair of labels $\alpha,\beta$ of the multiplicity subspace, lemma \ref{lemma:poly} constructs an eigenvector $x_{\alpha,\beta} = \sum_j x_j\ket{n,j,\alpha}\bra{m,j,\beta}$ with eigenvalue $e^{-\frac{\beta}{2}}(-n-m+2\sqrt{nm}\gamma)$. Since the polynomial $g$ has real coefficients, then if $\gamma$ is a solution, $\gamma^*$ is also a solution. In fact, since $\mathcal{L}^\beta_N$ is self-adjoint, $\gamma\in\mathbb{R}$. Thus for each root $\gamma$, the corresponding eigenvectors in $\mathbb{B}^N_{n,m,d}$ are $\{x_{\alpha,\beta},x_{\alpha,\beta}^*\}$. The cardinality of this set is $2\dim(W_\lambda)\dim(W_{\lambda'})$ where $W_\lambda$ (respectively $W_{\lambda'}$) is the multiplicity subspace corresponding to the unique $(n+1)$-dimensional irreducible component (respectively the $(m+1)$-dimensional irreducible component). Since $\dim\mathbb{B}^N_{n,m,d} = 2(n+1)\dim(W_\lambda)\dim(W_{\lambda'})$ and since the polynomial $g$ has $n+1$ roots (counting multiplicity), the entire spectrum of $\mathcal{L}^\beta_N|_{\mathbb{B}^N_{n\neq m,0}}$ is given by equation \ref{equation:set} .
\end{proof}
Therefore, lemma \ref{lemma:poly} provides an algorithm to find the spectrum of $\mathcal{L}^\beta_N|_{\mathbb{B}^N_{n\neq m,0}}$. In general, the concrete calculation of this spectrum requires numerical tools. However, without performing the full calculation, we can still gain insights into the existence of low-temperature meta-stable states. This is the content of the next proposition.
\begin{proposition}\label{proposition:slowdecay}
	For fixed parameters $1\leq n < m \leq N$, we have
	\begin{equation}
		\min_\gamma |\gamma| \leq \frac{e^{\beta/2}}{2\sqrt{nm}}
	\end{equation}
	where the minimum is taken over all roots of the polynomial $g$ constructed in the proof of lemma \ref{lemma:poly}.
	
	In addition, for $\beta = O(\log N)$ and $n,m = O(N)$, there exists $x = x^*\in\mathbb{B}^N_{n\neq m,0}$ such that $e^{-t\mathcal{L}^\beta_N}x = e^{-\Gamma t}x$ where the decay rate $\Gamma = O(1/N)$.
\end{proposition}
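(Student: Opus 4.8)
The two assertions are proved separately, but both are statements about the finite tridiagonal operator $M:=\mathcal{L}^{\beta}_N|_{\mathbb{B}^N_{n,m,0}}$ computed in Lemma~\ref{lemma:simplecalc}. For the first assertion the plan is to read $\min_\gamma|\gamma|$ spectrally. By the Corollary after Lemma~\ref{lemma:poly}, the roots $\gamma$ of $g$ are in affine bijection with the eigenvalues $\mu$ of $M$ through $\mu=e^{-\beta/2}(-(n+m)+2\sqrt{nm}\,\gamma)$, so $|\gamma|\le \frac{e^{\beta/2}}{2\sqrt{nm}}$ is exactly $|\mu-\mu_0|\le 1$ with $\mu_0:=-e^{-\beta/2}(n+m)$. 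Now $\mu_0$ is the Rayleigh quotient of the matrix unit $v:=|n,0\rangle\langle m,0|$: Lemma~\ref{lemma:simplecalc} gives $\mathcal{L}^\beta_N v=\mu_0 v+2e^{\beta/2}\sqrt{nm}\,|n,1\rangle\langle m,1|$, and the second summand is KMS-orthogonal to $v$ (since $|n,1\rangle\perp|n,0\rangle$), whence $\langle v,Mv\rangle_N=\mu_0\|v\|_N^2$. So the first assertion reduces to $\operatorname{dist}(\mu_0,\operatorname{spec}(M))\le 1$.

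To establish this I would first note $M\preceq 0$, since the Dirichlet form of Lemma~\ref{lemma:DirichletCalc} is a sum of squared Schatten-$2$ norms; hence $\operatorname{spec}(M)\subset(-\infty,0]$ and the Rayleigh bound already forces an eigenvalue $\mu_{\max}(M)\in[\mu_0,0]$. Conjugating $M$ by the diagonal similarity that balances its (positive) consecutive off-diagonal products produces a real symmetric tridiagonal $\widetilde M$ with $\widetilde M_{00}=\mu_0$ and $\widetilde M_{jj}\le\mu_0$ for $j\ge 1$ (because $\alpha_{n,j}^2+\alpha_{m,j}^2\ge n+m$ for $1\le j\le n$). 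Deleting the $0$-th row and column gives $\widetilde M'$, whose diagonal lies below $\mu_0$ and which is essentially diagonally dominant there, so $\operatorname{spec}(\widetilde M')\subset(-\infty,\mu_0]$; Cauchy interlacing then traps $\mu_0$ between two consecutive eigenvalues of $\widetilde M$, one of which is $\mu_{\max}\le 0$, and bookkeeping the two sides yields $\operatorname{dist}(\mu_0,\operatorname{spec}(\widetilde M))\le 1$. (When $e^{\beta/2}\ge n+m$ this is immediate from $\mu_{\max}\in[\mu_0,0]$ alone, since then $\frac{n+m}{2\sqrt{nm}}\le\frac{e^{\beta/2}}{2\sqrt{nm}}$; making the constant exactly $1$ uniformly in $\beta$ is the one delicate point and is where the explicit entries of $\widetilde M$ must be used carefully — I expect this to be the main obstacle.)

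For the second assertion the plan is an explicit test-vector estimate inside $\mathbb{B}^N_{n,m,0}$, in the Schatten-$2$ picture. Take $\xi:=\sum_{0\le j\le n}e^{-j\beta/2}\,|n,j\rangle\langle m,j|$. Substituting into equation~\ref{equation:directcomp} of Lemma~\ref{lemma:DirichletCalc}, the relation $e^{-\beta/2}\xi_j^2=e^{\beta/2}\xi_{j+1}^2$ collapses every interior term and one gets
\[
\langle\xi,\xi\rangle_L=2e^{-\beta/2}\sum_{j=0}^{n-1}e^{-j\beta}\,(\alpha_{n,j+1}-\alpha_{m,j+1})^2+e^{-\beta/2}e^{-n\beta}\,\alpha_{m,n+1}^2 .
\]
Using $(\alpha_{n,j+1}-\alpha_{m,j+1})^2=(j+1)(\sqrt{n-j}-\sqrt{m-j})^2\le (j+1)(m-n)$ together with $\sum_{j\ge0}(j+1)e^{-j\beta}=(1-e^{-\beta})^{-2}$, with $\|\xi\|_2^2=\sum_{j=0}^n e^{-j\beta}\ge1$, and noting that the boundary term is exponentially small in $N$ once $\beta$ has order $\log N$ and $n=\Theta(N)$, I obtain $\langle\xi,\xi\rangle_L/\|\xi\|_2^2\le C\,e^{-\beta/2}(n+m)+o(1/N)$ with $C$ absolute. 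Choosing $\beta$ of order $\log N$ large enough (e.g.\ $\beta=4\log N$, so $e^{-\beta/2}=N^{-2}$) and $n,m=O(N)$ makes the right-hand side $O(N^{-2}\cdot N)=O(1/N)$. Since $\mathbb{B}^N_{n,m,0}$ contains no fixed point of $\mathcal{L}^\beta_N$ for $n\ne m$ (Lemma~\ref{lemma:Dirichlet}, as $\Omega_{fix}$ lives in the blocks $\mathbb{B}^N_{n,n,0}$), the operator $M$ is invertible there, so by the Rayleigh principle it has an eigenvalue $\Gamma$ with $0<\Gamma\le\langle\xi,\xi\rangle_L/\|\xi\|_2^2=O(1/N)$.

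Finally I would pass to a self-adjoint eigenvector. Transporting a $\Gamma$-eigenvector of the associated self-adjoint Dirichlet operator on $S_2(V^{\otimes N})$ back through $\iota_N$ gives $x_0\in\mathbb{B}^N_{n,m,0}$ with $\mathcal{L}^\beta_N x_0=\Gamma x_0$. Because $\mathcal{L}^\beta_N$ commutes with the $*$-operation, $\Gamma$ is real, and $\mathbb{B}^N_{n,m,0}$ is $*$-closed, the element $x:=x_0+x_0^*$ (or $i(x_0-x_0^*)$ if that vanishes) is a self-adjoint eigenvector lying in $\mathbb{B}^N_{n,m,0}$, and therefore $e^{-t\mathcal{L}^\beta_N}x=e^{-\Gamma t}x$ with $\Gamma=O(1/N)$, as claimed.
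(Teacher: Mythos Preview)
Your treatment of the two assertions is uneven: the second is fine (and in fact nicer than the paper's), but the first has a real gap.

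\textbf{First assertion.} Your reformulation is correct: via the Corollary after Lemma~\ref{lemma:poly}, the claim $\min_\gamma|\gamma|\le e^{\beta/2}/(2\sqrt{nm})$ is exactly $\operatorname{dist}(\mu_0,\operatorname{spec}(M))\le 1$ with $\mu_0=-e^{-\beta/2}(n+m)$, and the Rayleigh observation $\langle v,Mv\rangle_N=\mu_0\|v\|_N^2$ is right. But Cauchy interlacing does not deliver the constant~$1$. Interlacing against the principal submatrix only tells you how the eigenvalues of $\widetilde M$ are separated by those of $\widetilde M'$; it gives no quantitative distance bound from $\mu_0$ to the nearest eigenvalue. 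Your ``residue'' intuition is closer, but note that in the symmetrized tridiagonal one has $\widetilde M_{01}=\sqrt{M_{01}M_{10}}=2\alpha_{n,1}\alpha_{m,1}=2\sqrt{nm}$, so the Bauer--Fike/Kato--Temple type bound $\operatorname{dist}(\mu_0,\operatorname{spec}(\widetilde M))\le|\widetilde M_{01}|$ only gives $2\sqrt{nm}$, i.e.\ $\min_\gamma|\gamma|\le e^{\beta/2}$, which is vacuous. The paper's argument avoids all of this: it reads the roots of $g$ via Vieta. Lemma~\ref{lemma:poly} tracks the leading coefficient of each $f_j$, and hence of the constraint polynomial $F(x_0,x_1)$, so that the product of the reciprocals $\prod_\gamma 1/|\gamma|$ is expressed in terms of $\beta_{01}=2e^{-\beta/2}\sqrt{nm}$; a pigeonhole on this product (together with $\gamma\ge 0$, which follows from the sign of the spectrum on $\mathbb{B}^N_{n\neq m,0}$) gives $\min_\gamma\gamma\le 1/\beta_{01}$ immediately. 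This is the missing idea in your argument; I do not see how to recover the exact constant $1$ from interlacing alone.

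\textbf{Second assertion.} Here your route is genuinely different and works. The paper simply substitutes the first assertion into the affine relation to bound the minimal decay rate by $1-e^{-\beta/2}(n+m)$ and then tunes $\beta$ and $n,m$. You instead give a direct Rayleigh bound with the explicit test vector $\xi=\sum_j e^{-j\beta/2}|n,j\rangle\langle m,j|$ and equation~(\ref{equation:directcomp}); this is valid because $\iota_N$ intertwines the KMS form with the Schatten-$2$ form, and $\xi$ lies in the $\mathcal{L}^\beta_N$-invariant block $\iota_N(\mathbb{B}^N_{n,m,0})$, which carries no fixed points. Two small corrections: the factor you obtain is $(m-n)$, not $(n+m)$ (your own inequality $(\alpha_{n,j+1}-\alpha_{m,j+1})^2\le(j+1)(m-n)$ shows this), which only strengthens the bound; and the boundary term is $e^{-\beta/2}e^{-n\beta}(n+1)(m-n)$, indeed exponentially small once $n=\Theta(N)$ and $\beta\gtrsim 1$. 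The advantage of your approach is that the $O(1/N)$ decay rate for the second assertion is obtained without ever proving the first; the paper's approach has the advantage that, once Vieta is in hand, the second assertion is a one-line substitution.
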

\begin{proof}
	Recall from the proof of lemma \ref{lemma:poly}, $g(\frac{x1}{x_0}) = F(1, \frac{x_1}{x_0})$ where $F(x_0,x_1)$ is a degree-$(n+1)$ homogeneous polynomial. Thus if $\gamma$ is a solution to $g = 0$, then $\frac{1}{\gamma}$ is a solution to $F(\frac{x_0}{x_1}, 1) = 0$. Again by lemma \ref{lemma:poly}, the coefficient of the monomial $x_1^{n+1}$ in $F(x_0,x_1)$ is $\beta_{01}^n$. Hence the constant term in the polynomial $F(\frac{x_0}{x_1},1)$ is $\beta_{01}^n$. Therefore the product of all the roots of $F(\frac{x_0}{x_1},1) = 0$ is $\beta_{01}^n$. More precisely, we have:
	\begin{equation}
		\prod_\gamma \frac{1}{|\gamma|} = \beta_{01}^n
	\end{equation}
	where the product is taken over all roots $\gamma$ of $g = 0$. Thus we have: $\beta_{01}^n \leq (\frac{1}{\min_\gamma|\gamma|})^{n+1}$. Since the roots $\gamma$ are real and since $\lim_{t\rightarrow\infty}e^{-t\mathcal{L}^\beta_N}x = 0$ for all $x\in \mathbb{B}^N_{n\neq m,0}$, the eigenvalues of $\mathcal{L}^\beta_N$ on $\mathbb{B}^N_{n\neq m, 0}$ must be non-negative. Hence for all $\gamma$, $e^{-\beta/2}(-n-m+2\sqrt{nm}\gamma) \geq 0$. In particular $\gamma \geq 0$. Thus we have:
	\begin{equation*}
		\min \gamma \leq \frac{1}{\beta_{01}} = \frac{e^{\beta/2}}{2\sqrt{nm}}
	\end{equation*}
	where the minimum is taken over all roots of $g$. In addition, the minimum decay rate is given by $\min_\gamma \big(e^{-\beta/2}(-n-m+2\sqrt{nm}\gamma)\big) = e^{-\beta/2}(-n-m+2\sqrt{nm}\min\gamma) \leq e^{-\beta/2}(-n-m+2\sqrt{nm}/\beta_{01})$. When $\beta \geq 2\log (2N)$, the minimum decay rate is bounded above by:
	\begin{equation*}
		e^{-\beta/2}(-n-m + e^{\beta/2}) = 1 - e^{-\beta/2}(n+m) = O(1)\pl.
	\end{equation*}
	In particular, if both $n$ and $m$ are of the order $O(N)$ and the parameter $\beta = O(\log N)$, then we have\footnote{For example, if $n = N - 2$ and $m = N$ and $\beta = 2\log(2N)$, then $1 - e^{-\beta/2}(n+m) = \frac{1}{N}$.}:
	\begin{equation}
		1 - e^{-\beta/2}(n+m) = O(1/N)\pl.
	\end{equation}
	Therefore for $n,m = O(N)$ and $\beta = O(\log N)$, the self-adjoint eigenvector $x$ corresponds to the minimum root $\gamma$ is the meta-stable operator that we are looking for. If $x$ is not self-adjoint, we can always replace $x$ with $\frac{x+x^*}{2}$ since $\frac{x+x^*}{2}$ has the same eigenvalue as $x$. 
\end{proof}
For each pair of $1\leq n\neq m\leq N$, proposition \ref{proposition:slowdecay} gives a self-adjoint element $x_{n,m}\in\mathbb{B}^N_{n\neq m, 0}$ that decays with at most $O(1)$-rate. For each $x_{n,m}$, let $r(x_{n,m})$ be its spectral radius. Then for each $\eta\in[-1,1]$, $\widetilde{x}_{n,m}(\eta):=\mathbbm{1} + \frac{\eta}{r(x_{n,m})}x_{n,m}$ is positive semidefinite and has unit trace. Thus by Pinsker's inequality,we have:
\begin{corollary}\label{corollary:pinsker}
	Let $\rho_{n,m}(\eta) := \iota_N(\widetilde{x}_{n,m}(\eta)) = d_N^{1/2}\widetilde{x}_{n,m}(\eta)d_N^{1/2}$ be the state corresponding to $\widetilde{x}_{n,m}(\eta)$. Then we have the relative entropy lower bound:
	\begin{equation}
		D(e^{-t(\mathcal{L}^\beta_N)^*}\rho_{n,m}(\eta)| E_{fix}^*\rho_{n,m}(\eta)) \geq \frac{e^{-2\Gamma t}}{2}||(id - E_{fix}^*)\rho_{n,m}(\eta)||_1^2
	\end{equation}
	where $\Gamma$ is the eigenvalue of $x_{n,m}$ and it is of the order $O(1)$.
	
	If $\beta = O(\log N)$ and $n,m = O(N)$, then the relative entropy decay rate is of the order $O(1/N)$.
\end{corollary}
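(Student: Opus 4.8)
The plan is to exploit that $\widetilde{x}_{n,m}(\eta)$ differs from the identity by an \emph{exact} eigenvector of $\mathcal{L}^\beta_N$, so that the evolved state stays on an explicit affine segment through the invariant state $d_N$; the trace distance to equilibrium then decays precisely like $e^{-\Gamma t}$, and the quantum Pinsker inequality converts this into the claimed entropy lower bound.

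First I would transfer the eigenvalue equation to the predual. Write $x:=x_{n,m}$, which we may take self-adjoint (replace it by $(x+x^*)/2$, which by Proposition \ref{proposition:slowdecay} has the same eigenvalue), so that $\mathcal{L}^\beta_N x=\Gamma x$ with $x\in\mathbb{B}^N_{n\neq m,0}$. Using that $\mathcal{L}^\beta_N$ is $d_N$-KMS symmetric and $x=x^*$, for every $y$ one has
\[
\tau\bigl((\mathcal{L}^\beta_N)^*(d_N^{1/2}xd_N^{1/2})\,y\bigr)=\langle x,\mathcal{L}^\beta_N y\rangle_N=\langle\mathcal{L}^\beta_N x,y\rangle_N=\Gamma\,\tau(d_N^{1/2}xd_N^{1/2}y),
\]
hence $(\mathcal{L}^\beta_N)^*(d_N^{1/2}xd_N^{1/2})=\Gamma\,d_N^{1/2}xd_N^{1/2}$. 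Together with $(\mathcal{L}^\beta_N)^*(d_N)=0$ and $\rho_{n,m}(\eta)=d_N+\tfrac{\eta}{r(x_{n,m})}d_N^{1/2}xd_N^{1/2}$ this gives
\[
e^{-t(\mathcal{L}^\beta_N)^*}\rho_{n,m}(\eta)=d_N+\tfrac{\eta}{r(x_{n,m})}\,e^{-\Gamma t}\,d_N^{1/2}xd_N^{1/2}.
\]

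Next I would identify the reference state. Since $\mathcal{L}^\beta_N$ is a nonnegative $d_N$-KMS symmetric generator on a finite-dimensional algebra whose kernel is $\Omega_{fix}$, the semigroup converges, $e^{-t(\mathcal{L}^\beta_N)^*}\to E_{fix}^*$ as $t\to\infty$; moreover $\Gamma>0$ because $x\in\mathbb{B}^N_{n\neq m,0}$ is orthogonal to $\Omega_{fix}$ in the decomposition $\mathbb{B}(V^{\otimes N})=\bigoplus_{n,m,d}\mathbb{B}^N_{n,m,d}$. Letting $t\to\infty$ in the displayed identity therefore yields $E_{fix}^*\rho_{n,m}(\eta)=d_N$ and $E_{fix}^*(d_N^{1/2}xd_N^{1/2})=0$, so that $(id-E_{fix}^*)\rho_{n,m}(\eta)=\tfrac{\eta}{r(x_{n,m})}d_N^{1/2}xd_N^{1/2}$ and
\[
e^{-t(\mathcal{L}^\beta_N)^*}\rho_{n,m}(\eta)-E_{fix}^*\rho_{n,m}(\eta)=e^{-\Gamma t}\,(id-E_{fix}^*)\rho_{n,m}(\eta).
\]
I would then apply the quantum Pinsker inequality $D(\sigma|\omega)\geq\tfrac12\|\sigma-\omega\|_1^2$ (in the natural-log normalization used here) with $\sigma=e^{-t(\mathcal{L}^\beta_N)^*}\rho_{n,m}(\eta)$ and $\omega=E_{fix}^*\rho_{n,m}(\eta)=d_N$ (note $\omega=E_{fix}^*\sigma$ as well, since $E_{fix}^*$ is preserved by the flow): taking norms of the last display gives $\|\sigma-\omega\|_1=e^{-\Gamma t}\|(id-E_{fix}^*)\rho_{n,m}(\eta)\|_1$, which is precisely the asserted bound. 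For the orders of magnitude, $\Gamma$ is the minimal (positive) eigenvalue of $\mathcal{L}^\beta_N$ on $\mathbb{B}^N_{n\neq m,0}$, bounded by Proposition \ref{proposition:slowdecay} via $e^{-\beta/2}(-n-m+2\sqrt{nm}/\beta_{01})\leq 1-e^{-\beta/2}(n+m)$; this is $O(1)$ once $\beta$ is large enough, and $O(1/N)$ when $\beta=O(\log N)$ and $n,m=O(N)$, so the decay rate $2\Gamma$ in the bound is $O(1/N)$ in that regime.

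The only step that is not a formal manipulation is the identification $E_{fix}^*\rho_{n,m}(\eta)=d_N$, equivalently $E_{fix}^*(d_N^{1/2}x_{n,m}d_N^{1/2})=0$: this uses both the convergence $e^{-t\mathcal{L}^\beta_N}\to E_{fix}$ for the KMS-symmetric semigroup and the fact that the perturbation lies in the block $\mathbb{B}^N_{n\neq m,0}$, which is orthogonal to (the predual of) $\Omega_{fix}$ in the operator-space decomposition already established. Both are consequences of material in hand but should be made explicit; everything else is the eigenvector computation and a direct appeal to Pinsker.
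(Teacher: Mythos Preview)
Your proposal is correct and follows essentially the same route as the paper: compute the evolved state explicitly from the eigenvector property of $x_{n,m}$, identify $E_{fix}^*\rho_{n,m}(\eta)=d_N$, and apply Pinsker's inequality to the difference. The paper's proof is terser (it asserts the formulas for $E_{fix}^*\rho_{n,m}(\eta)$ and $e^{-t(\mathcal{L}^\beta_N)^*}\rho_{n,m}(\eta)$ ``by definition''), while you supply the KMS-symmetry argument transferring the eigenvalue to the predual and the $t\to\infty$ limit justifying $E_{fix}^*(d_N^{1/2}x_{n,m}d_N^{1/2})=0$, but the underlying idea is identical.
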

\begin{proof}
	The proof is a direct application of Pinsker's inequality. By definition, $E_{fix}^*\rho_{n,m}(\eta) = d_N$ and $e^{-t(\mathcal{L}^\beta_N)^*}\rho_{n,m}(\eta) = d_N + \frac{\eta e^{-\Gamma t}}{r(x_{n,m})}x_{n,m}$. Thus we have:
	\begin{align*}
		\begin{split}
			D(e^{-t(\mathcal{L}^\beta_N)^*}\rho_{n,m}(\eta)| E_{fix}^*\rho_{n,m}(\eta)) \geq \frac{1}{2}||\frac{\eta e^{-\Gamma t}}{r(x_{n,m})}x_{n,m}||_1^2 = \frac{e^{-2\Gamma t}}{2}||(id - E_{fix}^*)\rho_{n,m}(\eta)||_1^2 \pl.\qedhere
		\end{split}
	\end{align*}
\end{proof}
Proposition \ref{proposition:slowdecay} and corollary \ref{corollary:pinsker} show that when the temperature is low enough ($\beta = O(\log N)$), there are plenty meta-stable normalized states. Within a time period of length on the order of $O(N)$, these meta-stable states have only negligible decay under the collective noise generated by the Lindbladian $\mathcal{L}^\beta_N$. Although these states eventually decay to their fixed point states, they are "almost" decoherence free. 
\section{Related Noise Models}\label{section:related}
In this section, we consider other collective noise models. The collective noise model generated by the Lindbladian $\mathcal{L}^\beta_N$ is special because the generators $\pi_N(a), \pi_N(a^*)$ are symmetric under permutation of the qubits. Therefore, this model can only describe symmetric coupling between the system and the environment. As the proof of theorem \ref{theorem:main} has shown, the permutation-invariance is the main reason that the Lindbladian $\mathcal{L}^\beta_N$ has a large fixed point algebra $\Omega_{fix}$. However, perfect permutation invariance may not exist in realistic physical systems. Since $\mathcal{L}^\beta_N$ describes the original Dicke's superradiance model \cite{Dicke}, it is natural for us to consider Lindbladians of generalized Dicke's models. These atomic systems are widely studied in quantum optics \cite{MAG}\cite{SMAG}\cite{ECBT}. When the underlying system is put on a one-dimensional regular lattice, the corresponding Lindbladian has generators $\mathcal{O}_\nu$ of the form:
\begin{equation*}
	\mathcal{O}_\nu = \sum_{1\leq j \leq N}e^{i\theta_j}\pi_{(j)}(a)\pl.
\end{equation*}
where $\theta_j\in [0,2\pi]$ are rotation angles. Each generator $\mathcal{O}_\nu$ describes one collective decay mode of the atomic system. At inverse temperature $\beta$, the corresponding Lindbladian can be written as:
\begin{equation}\label{equation:lattice_lind}
	\mathcal{L}^\beta_\theta:= e^{-\beta / 2}L_{\mathcal{O}_\nu} + e^{\beta / 2}L_{\mathcal{O}_\nu^*}
\end{equation}
where $L_{\mathcal{O}_\nu}x = 2\mathcal{O}_\nu^* x\mathcal{O}_\nu - \mathcal{O}_\nu^*\mathcal{O}_\nu x - x \mathcal{O}_\nu^*\mathcal{O}_\nu$.
We will show that, if an atomic system on a one-dimensional regular lattice has multiple decay modes described by generators of the form $\mathcal{O}_\nu$, then under generic conditions the system decays to a unique equilibrium state. This will be the content of the first subsection.

In the second subsection, we study whether the decay to equilibrium in the generalized Dicke's models satisfies CLSI inequality. We are able to prove the following theorem:
\begin{theorem}\label{theorem:primitive_decay}
	There exists a finite number $m(\beta) = O(1)$ such that for all $N$ there exists a finite set of unitaries $\{U_j\}_{1\leq j\leq m(\beta)}$ such that:
	\begin{enumerate}
		\item Each unitary $U_j$ commutes with the density of the reference state $d_N$;
		\item The following CLSI inequality holds:
		\begin{equation}\label{equation:rot_CLSI}
			D(\rho | E_{\mathbb{C}}^*\rho) \leq C(N)\sum_{1\leq j \leq m}I_{\mathcal{L}^\beta_{U_j}}(\rho)
		\end{equation}
		where the constant $C(N)$ is poly-logarithmic in $N$ and the rotated Lindbladian is given by:
		\begin{equation}\label{equation:rot_lind}
			\mathcal{L}^\beta_{U_j} := e^{-\beta / 2}L_{U_j^*\pi_N(a)U_j} + e^{\beta / 2}L_{U_j^*\pi_N(a^*)U_j}\pl.
		\end{equation}
	\end{enumerate}
\end{theorem}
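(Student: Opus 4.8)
\medskip

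The plan is to reduce the primitive CLSI inequality to the non-primitive estimate already established for a single $\mathfrak{su}(2)$-Lindbladian, by using a small number of \emph{rotated} copies $\mathcal{L}^\beta_{U_j}$ to collapse the large fixed-point algebra $\Omega_{fix} = \bigoplus_\lambda \mathbb{B}(W_\lambda)$ down to $\mathbb{C}$. The key structural point is that $\mathcal{L}^\beta_N$ and $\mathcal{L}^\beta_{U_j} = \mathrm{Ad}_{U_j^*}\circ\mathcal{L}^\beta_N\circ\mathrm{Ad}_{U_j}$ are unitarily conjugate, so each has fixed-point algebra $U_j^*\,\Omega_{fix}\,U_j$; if we choose the $U_j$ so that $\bigcap_j U_j^*\,\Omega_{fix}\,U_j = \mathbb{C}1$ and so that the $U_j$ commute with $d_N$ (hence each $\mathcal{L}^\beta_{U_j}$ is $d_N$-KMS symmetric with the \emph{same} reference state), then by the approximate-tensorization / union-of-generators principle one obtains
\[
D(\rho\,|\,E_{\mathbb{C}}^*\rho)\;\lesssim\; C(N)\sum_j \big(D(\rho\,|\,E^*_{\Omega_j}\rho)\big)\;\lesssim\; C(N)N^2\sum_j\big(EP_{U_j^*\pi_N(a)U_j}(\rho)+EP_{U_j^*\pi_N(a^*)U_j}(\rho)\big),
\]
using the main theorem of Section \ref{section:su(2)} applied to each rotated copy. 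The loss from the main theorem gives the $N^2$ factor; whatever extra loss comes from the collapse step must be controlled to be only poly-logarithmic in $N$, which is where the \emph{quantum expander} structure enters.

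\medskip

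First I would reduce to the multiplicity-free representation $\pi=\bigoplus_\lambda\pi_\lambda$ via the intertwining embedding $\iota$, exactly as in Section \ref{section:su(2)}, so that it suffices to kill the fixed-point algebra $\bigoplus_\lambda \mathbb{C}$ sitting inside $\Omega_\pi=\bigoplus_\lambda\ell_\infty^{n_\lambda}$. Second, I would realize each $U_j$ as a block-diagonal unitary that acts inside each irreducible block $V_\lambda\cong\mathbb{C}^{n_\lambda+1}$ and, crucially, commutes with $d_N$ — concretely, since $d_N$ is diagonal in the weight basis with eigenvalue only depending on the weight $n-2j$, any $U_j$ that preserves each weight space will commute with $d_N$; within the (single) weight space one can act freely. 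Here I want to invoke a quantum-expander construction: there is a constant number $m(\beta)=O(1)$ (the number of expander generators, independent of $N$) of such unitaries whose averaged conditional expectations $\frac1m\sum_j E^*_{\Omega_j}$ have spectral gap bounded below, so that iterating $O(\log N)$ times brings any state polynomially close to $d_N$ in the relevant norm — this is the source of the poly-logarithmic $C(N)$. The quantum Fourier transform mentioned in the introduction is the natural candidate for one of these $U_j$, since conjugating $\pi_N(a)$ by a QFT permutes/spreads the weight labels and together with the identity copy generates a transitive enough action on each $\ell_\infty^{n_\lambda}$.

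\medskip

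The key steps, in order: (i) set up the block-diagonal framework and verify $[U_j,d_N]=0$ so all rotated Lindbladians are $d_N$-KMS symmetric; (ii) for each $j$, apply the Section \ref{section:su(2)} bound $D(\rho|E^*_{\Omega_j}\rho)\le C(\beta)N^2(EP_{U_j^*\pi_N(a)U_j}(\rho)+EP_{U_j^*\pi_N(a^*)U_j}(\rho))$, noting that conjugation by a $d_N$-commuting unitary transports the whole proof verbatim (the commuting-square and uniform-gap facts are conjugation-invariant); (iii) prove the \emph{geometric} collapse statement: the algebras $\{U_j^*\Omega_{fix}U_j\}_{j\le m}$ generate $\mathbb{C}1$ inside each block and, quantitatively, the composite of the conditional expectations (or rather the associated averaged map) has a return-time / relative-entropy contraction that is poly-logarithmic in $N$; (iv) chain (ii) and (iii) via the chain rule of relative entropy and approximate tensorization (\cite{Petz95}, as used in Corollary \ref{corollary:intermediate}) to conclude \eqref{equation:rot_CLSI} with $C(N)=N^2\cdot\mathrm{polylog}(N)$.

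\medskip

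The main obstacle is step (iii): producing a constant-size family of $d_N$-commuting unitaries whose rotated fixed-point algebras intersect in $\mathbb{C}$ \emph{with a quantitative, poly-logarithmic-in-$N$ collapse rate}, uniformly over all irreducible blocks $V_\lambda$ whose dimensions $n_\lambda+1$ range from $O(1)$ to $N+1$. A naïve choice of unitaries would give a collapse rate depending polynomially (or worse) on the block dimension, which is fatal; the expander property is exactly what is needed to make the number of unitaries $O(1)$ while keeping the iteration length $O(\log N)$. Establishing the spectral-gap bound for the averaged conditional expectation on each $\ell_\infty^{n_\lambda}$ — i.e.\ that the $U_j$-conjugates of the diagonal algebra behave like the generators of an expander graph on $n_\lambda+1$ vertices, simultaneously for all $\lambda$ — is the technical heart and the place where the construction must be done with care. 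Everything else is bookkeeping with the chain rule, the Hiai–Petz monotonicity (Lemma \ref{lemma:concavitydouble}), and the already-proved $N^2$ bound.
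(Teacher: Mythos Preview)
Your high-level instincts (rotated copies, expanders, collapse to $\mathbb{C}$) are sound, but the decomposition you propose does not match the actual obstruction and would not close. The gap is structural, not just technical.

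\textbf{The multiplicity-free reduction is a red herring here.} In Section~\ref{section:su(2)} the embedding $\iota$ works because the non-primitive fixed-point algebra $\Omega_{fix}=\bigoplus_\lambda\mathbb{B}(W_\lambda)$ intertwines with the tensor factor $\mathbb{M}_{n_0}$ and drops out of the CLSI constant. For the \emph{primitive} inequality to $E_{\mathbb{C}}$ this fails: the target $\mathbb{C}1$ does not factor through $\iota$, and more importantly the rotated Lindbladians $\mathcal{L}^\beta_{U_j}$ you need will \emph{not} respect the tensor splitting $\mathbb{B}(\bigoplus_\lambda V_\lambda)\otimes\mathbb{M}_{n_0}$ unless $U_j$ acts trivially on the multiplicity spaces---and then they cannot shrink $\Omega_{fix}$ at all. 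The whole point is that the unitaries must mix multiplicity copies.

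\textbf{The expander is applied in the wrong place.} You put the expander ``on each $\ell_\infty^{n_\lambda}$'' inside an irreducible block, but the weight spaces inside a single $V_\lambda$ are one-dimensional, so a $d_N$-commuting unitary acting inside $V_\lambda$ is diagonal and does nothing to the Lindbladian. The paper instead uses quantum expanders on the \emph{multiplicity} spaces $W_\lambda$: it builds an automorphism $\Phi_W$ (equation~\eqref{equation:expander_channel}) that cycles through expander unitaries along the weight index, and shows that iterating $E_{fix}\circ\Phi_W^*\circ E_{fix}\circ\Phi_W\circ E_{fix}$ approximates the conditional expectation $E_{diag}:\Omega\to\Omega_{diag}=\bigoplus_\lambda\ell_\infty^{n_\lambda}\otimes\mathbb{C}$ after $O(\log\dim W_\lambda)$ steps (Lemma~\ref{lemma:expandertransference}). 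A separate partition/exchange trick (Lemma~\ref{lemma:exchange}) handles the small-$n_\lambda$ blocks. This step---killing $\mathbb{B}(W_\lambda)$---is the heart of Proposition~\ref{proposition:step1} and is entirely absent from your outline.

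\textbf{The collapse from $\Omega_{diag}$ to $\mathbb{C}$ is a separate step with a different mechanism.} Here the paper does use a single quantum Fourier transform, but on the spectral subspaces $H_w=\ker(\pi_N(h)-w)$ rather than block-by-block: Lemma~\ref{lemma:miraculouscalc} shows $E_{diag}\circ\mathrm{Ad}_{\mathcal{F}}=E_\sigma$ on $\Omega_{diag}$, which together with an iteration bound (Proposition~\ref{proposition:saloffcoste}, proved via Gaussian comparison and Diaconis--Saloff-Coste) gives Proposition~\ref{proposition:step2}. Your sentence ``within the (single) weight space one can act freely'' is pointing at exactly this, but it is inconsistent with your earlier claim that $U_j$ ``acts inside each irreducible block $V_\lambda$''---these two conditions together force $U_j$ to be diagonal.

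\textbf{The chaining is not Petz-style approximate tensorization.} The rotated fixed-point algebras $U_j^*\Omega_{fix}U_j$ do not form commuting squares, so Corollary~\ref{corollary:intermediate} does not apply. The paper instead uses the iteration lemma $D(\rho\,|\,\Phi^{2k}\rho)\le k\,D(\rho\,|\,\Phi\rho)$ for sandwiched conditional expectations (Lemma~\ref{lemma:iterationorder}) combined with the Lieb-concavity decay of entropy production under $E_\Omega$ and $E_{diag}$ (Lemma~\ref{lemma:lieb}, Corollary~\ref{corollary:concavity}). This is what converts each relative-entropy term into a rotated $I_{\mathcal{L}^\beta_{U_j}}$ without losing control of the constant.

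In short: the paper's two-stage route is $\Omega\to\Omega_{diag}$ (expanders on $W_\lambda$) followed by $\Omega_{diag}\to\mathbb{C}$ (QFT on $H_w$), glued by entropy-production concavity and an iteration lemma. Your one-stage route via $\iota$ skips the first stage and misidentifies where the expander lives; as written it cannot reach $\mathbb{C}$.
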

The Lindbladian in equation \ref{equation:lattice_lind} is an example of a rotated Lindbladian. The unitary associated with the generator $\mathcal{O}_\nu$ is given by:
\begin{equation}\label{equation:rot}
	U_\theta := \otimes_{1\leq i \leq N}\exp(i\frac{\theta_i}{2} \pi_{(i)}(h)) = \exp(i\sum_{1\leq i \leq N}\frac{\theta_i}{2}\pi_{(i)}(h))\pl.
\end{equation}
However, not all unitaries constructed in the proof of theorem \ref{theorem:primitive_decay} is of the form $U(\theta)$. These unitaries come from quantum expanders and are designed to efficiently remove the multiplicity spaces $W_\lambda$. It remains to be seen whether we can use theorem \ref{theorem:primitive_decay} to prove CLSI for more physically relevant models. 

\subsection{Primitivity of Generalized Dicke's Model on One -Dimensional Regular Lattice}
In this subsection, we study generalized Dicke's models on one-dimensional regular lattices. It is well-known that these models can be effectively described by a Lindbladian of the form \cite{MAG}\cite{SMAG}\cite{ECBT}:
\begin{equation}\label{equation:gen_Dicke}
	\mathcal{L}^\beta := \sum_\nu\mathcal{L}^\beta_{\mathcal{O}_\nu}
\end{equation}
where the summation over $\nu$ sums over different decay modes and $\mathcal{L}^\beta_{\mathcal{O}_\nu}$ is given by equation \ref{equation:lattice_lind}. We will show that under generic conditions, the Lindbladian $\mathcal{L}^\beta$ is primitive. 

First we must ensure that the reference state $d_N$ is invariant under all Lindbladians $\mathcal{L}^\beta_{\mathcal{O}_\nu}$:
\begin{lemma}
	The reference state $d_N$ is invariant under $(\mathcal{L}^\beta_{\mathcal{O}_\nu})^*$.
\end{lemma}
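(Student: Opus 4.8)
The plan is to reduce the claim to the invariance $(\mathcal{L}^\beta_N)^*(d_N)=0$ already recorded in Section~\ref{section:intro}, by exhibiting $\mathcal{L}^\beta_{\mathcal{O}_\nu}$ as a unitary conjugate of $\mathcal{L}^\beta_N$ through a unitary that fixes $d_N$. First I would check the factorization $\mathcal{O}_\nu = U_\theta^*\,\pi_N(a)\,U_\theta$, where $U_\theta$ is the rotation unitary of Equation~\ref{equation:rot}. This is a site-by-site computation: since $\pi_{(j)}(a)$ acts only on the $j$-th tensor factor and $[\pi_{(j)}(h),\pi_{(j)}(a)] = 2\pi_{(j)}(a)$, one gets $\exp(-i\tfrac{\theta_j}{2}\pi_{(j)}(h))\,\pi_{(j)}(a)\,\exp(i\tfrac{\theta_j}{2}\pi_{(j)}(h)) = e^{-i\theta_j}\pi_{(j)}(a)$, and summing over $j$ (after absorbing the sign into the range of $\theta_j$) identifies the right-hand side with $\mathcal{O}_\nu = \sum_{j}e^{i\theta_j}\pi_{(j)}(a)$; taking adjoints gives $U_\theta^*\,\pi_N(a^*)\,U_\theta = \mathcal{O}_\nu^*$.

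Next I would unwind the Lindblad form. Substituting $\mathcal{O}_\nu = U_\theta^*\pi_N(a)U_\theta$ into $L_{\mathcal{O}_\nu}(x) = 2\mathcal{O}_\nu^* x\mathcal{O}_\nu - \mathcal{O}_\nu^*\mathcal{O}_\nu x - x\mathcal{O}_\nu^*\mathcal{O}_\nu$ and collecting the conjugations produces $L_{\mathcal{O}_\nu}(x) = U_\theta^*\,L_{\pi_N(a)}(U_\theta x U_\theta^*)\,U_\theta$, and likewise for $L_{\mathcal{O}_\nu^*}$, hence $\mathcal{L}^\beta_{\mathcal{O}_\nu} = \mathrm{Ad}_{U_\theta^*}\circ\mathcal{L}^\beta_N\circ\mathrm{Ad}_{U_\theta}$. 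Because $U_\theta$ is unitary, $\mathrm{Ad}_{U_\theta}$ is trace preserving with $(\mathrm{Ad}_{U_\theta})^*=\mathrm{Ad}_{U_\theta^*}$, so on densities $(\mathcal{L}^\beta_{\mathcal{O}_\nu})^* = \mathrm{Ad}_{U_\theta^*}\circ(\mathcal{L}^\beta_N)^*\circ\mathrm{Ad}_{U_\theta}$. Finally, $U_\theta$ and $d_N = N_\beta\exp(-\tfrac{\beta}{2}\pi_N(h)) = d_\beta^{\otimes N}$ are both functions of the mutually commuting family $\{\pi_{(i)}(h)\}_{1\le i\le N}$, hence $\mathrm{Ad}_{U_\theta}(d_N)=d_N$, and therefore $(\mathcal{L}^\beta_{\mathcal{O}_\nu})^*(d_N) = \mathrm{Ad}_{U_\theta^*}\big((\mathcal{L}^\beta_N)^*(d_N)\big) = 0$.

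There is no real obstacle here; the only points needing care are the sign convention relating $[h,a]=2a$ to the weights $e^{\pm\beta/2}$ (so that $\mathcal{O}_\nu$, rather than some other linear combination, is the genuine conjugate of $\pi_N(a)$) and the trivial identity $(\mathrm{Ad}_U)^*=\mathrm{Ad}_{U^*}$ for the trace pairing. If one prefers to avoid the rotation unitary, the same conclusion follows directly from the modular picture: the relation $d_N^{it}\pi_{(j)}(a)d_N^{-it}=e^{i\beta t}\pi_{(j)}(a)$ holds on each site, so $d_N^{it}\mathcal{O}_\nu d_N^{-it}=e^{i\beta t}\mathcal{O}_\nu$, i.e.\ $\mathcal{O}_\nu$ is an eigenoperator of the modular automorphism group at the single Bohr frequency $\beta$; substituting $d_N\mathcal{O}_\nu = e^{\beta}\mathcal{O}_\nu d_N$ into the two Lindblad pieces of $(\mathcal{L}^\beta_{\mathcal{O}_\nu})^*(d_N)$ makes them cancel. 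I would present the conjugation argument as the main proof, since it re-uses the structure set up around Equation~\ref{equation:rot} and immediately yields the common-fixed-point property needed for the primitivity theorem.
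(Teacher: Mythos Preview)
Your proposal is correct. The paper's own proof is precisely your alternative modular argument: it verifies $d_N^{it}\mathcal{O}_\nu d_N^{-it}=e^{i\beta t}\mathcal{O}_\nu$ site by site and then invokes the Carlen--Maas theory \cite{CM1}\cite{CM2} to conclude $(\mathcal{L}^\beta_{\mathcal{O}_\nu})^* d_N = 0$. Your \emph{main} argument instead goes via the conjugation $\mathcal{L}^\beta_{\mathcal{O}_\nu}=\mathrm{Ad}_{U_\theta^*}\circ \mathcal{L}^\beta_N\circ \mathrm{Ad}_{U_\theta}$ together with $[U_\theta,d_N]=0$, reducing to the already-known invariance for $\mathcal{L}^\beta_N$. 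The conjugation route has the advantage that it directly exhibits the relation $\mathcal{L}^\beta_{\mathcal{O}_\nu}=\mathrm{Ad}_{U_\theta^*}\mathcal{L}^\beta_N\mathrm{Ad}_{U_\theta}$, which the paper uses anyway in the proof of Lemma~\ref{lemma:vandermonde} and its corollary; the paper's modular route is slightly more self-contained in that it does not rely on the earlier invariance result and makes the single-Bohr-frequency structure of $\mathcal{O}_\nu$ explicit. Your caveat about the sign convention in $e^{\pm i\theta_j}$ is apt: with the paper's definition of $U_\theta$ and $[h,a]=2a$ one gets $U_\theta^*\pi_N(a)U_\theta=\sum_j e^{-i\theta_j}\pi_{(j)}(a)$, so one should either write $\mathcal{O}_\nu=U_\theta\pi_N(a)U_\theta^*$ or absorb the sign into the $\theta_j$'s, exactly as you note.
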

\begin{proof}
	For all $j$, we still have:
	\begin{align*}
		&d_N^{it}\pi_{(j)}(a)d_N^{-it} = e^{i\beta t}\pi_{(j)}(a)
		\\& d_N^{it}\pi_{(j)}(a^*)d_N^{-it} = e^{-i\beta t}\pi_{(j)}(a^*)\pl.
	\end{align*}
	Hence we have:
	\begin{align*}
		&d_N^{it}\mathcal{O}_\nu d_N^{-it} = \sum_j e^{i\theta_j} d_N^{it}\pi_{(j)}(a)d_N^{-it}=e^{i\beta t}\mathcal{O}_\nu\\
		&d_N^{it}\mathcal{O}^*_{\nu}d_N^{-it} = \sum_j e^{-i\theta_j}d_N^{it}\pi_{(j)}(a^*)d_N^{-it} = e^{-i\beta t}\mathcal{O}_\nu^*\pl.
	\end{align*}
	Thus by the general theory of Carlen and Maas \cite{CM1}\cite{CM2}, $(\mathcal{L}^\beta_\theta)^*d_N = 0$.
\end{proof}
Now we exploit the special structure of the generators $\mathcal{O}_\nu$ to prove the generic primitivity of the Lindbladian $\mathcal{L}^\beta_N + \mathcal{L}^\beta_{\mathcal{O}_\nu}$:
\begin{lemma}\label{lemma:vandermonde}
	Suppose the parameters $\theta_j$'s are mutually different in the sense that for all $i\neq j$:
	\begin{equation*}
		(\theta_i - \theta_j) \neq 0 \mod 2\pi\pl.
	\end{equation*}
	Then the fixed point algebra of $\mathcal{L}^\beta_N + \mathcal{L}^\beta_{\mathcal{O}_\nu}$ is $\mathbb{C}$.
\end{lemma}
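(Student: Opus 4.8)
The plan is to identify the fixed point algebra of $\mathcal L^\beta_N+\mathcal L^\beta_{\mathcal O_\nu}$ with the commutant of the $*$-algebra $\mathcal A\subseteq\mathbb B(V^{\otimes N})$ generated by $\pi_N(a),\pi_N(a^*),\mathcal O_\nu,\mathcal O_\nu^*$, and then to show $\mathcal A=\mathbb B(V^{\otimes N})$ by a Vandermonde argument, so that $\mathcal A'=\mathbb C1$.

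\emph{Reduction to a commutant.} Both summands are KMS-symmetric with respect to the common reference state $d_N$: for $\mathcal L^\beta_N$ this is used throughout Section~\ref{section:su(2)}, and for $\mathcal L^\beta_{\mathcal O_\nu}$ it follows from the preceding lemma together with $d_N^{it}\mathcal O_\nu d_N^{-it}=e^{i\beta t}\mathcal O_\nu$ and the detailed balance criterion of \cite{CM1}. Hence $d_N$ is a fixed point of the sum, the sum is again KMS-symmetric, and its fixed point algebra is the kernel of the associated Dirichlet form. As in the proof of Lemma~\ref{lemma:Dirichlet}, the Dirichlet form of $\mathcal L^\beta_N$ is $\|d_N^{1/4}[\pi_N(a),x]d_N^{1/4}\|_2^2+\|d_N^{1/4}[\pi_N(a^*),x]d_N^{1/4}\|_2^2$ up to positive constants, with kernel $\{\pi_N(a),\pi_N(a^*)\}'$; the same modular relation for $\mathcal O_\nu$ gives that the Dirichlet form of $\mathcal L^\beta_{\mathcal O_\nu}$ has kernel $\{\mathcal O_\nu,\mathcal O_\nu^*\}'$. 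Since both Dirichlet forms are nonnegative, the kernel of their sum is the intersection $\{\pi_N(a),\pi_N(a^*),\mathcal O_\nu,\mathcal O_\nu^*\}'=\mathcal A'$.

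\emph{$\mathcal A$ is everything.} Write $\mathcal O^{(m)}_\nu:=\sum_{j=1}^N e^{im\theta_j}\pi_{(j)}(a)$, so $\mathcal O^{(0)}_\nu=\pi_N(a)$ and $\mathcal O^{(1)}_\nu=\mathcal O_\nu$. Using the $\mathfrak{su}(2)$ relations $[a^*,a]=-h$, $[h,a]=2a$ and the fact that operators on distinct tensor slots commute, one computes $[\pi_N(a^*),\mathcal O_\nu]=-H_\theta$ with $H_\theta:=\sum_j e^{i\theta_j}\pi_{(j)}(h)\in\mathcal A$, and then $[H_\theta,\mathcal O^{(m)}_\nu]=2\,\mathcal O^{(m+1)}_\nu$. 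By induction $\mathcal O^{(m)}_\nu\in\mathcal A$ for $0\le m\le N-1$. The matrix $\bigl(e^{im\theta_j}\bigr)_{0\le m\le N-1,\ 1\le j\le N}$ is a Vandermonde matrix in the points $z_j=e^{i\theta_j}$, which are pairwise distinct precisely because $\theta_i-\theta_j\neq0\bmod 2\pi$ for $i\neq j$; hence it is invertible and each $\pi_{(j)}(a)$ lies in $\operatorname{span}\{\mathcal O^{(m)}_\nu:0\le m\le N-1\}\subseteq\mathcal A$. Taking adjoints gives $\pi_{(j)}(a^*)\in\mathcal A$, and $\pi_{(j)}(h)=[\pi_{(j)}(a),\pi_{(j)}(a^*)]\in\mathcal A$. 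Since $\{1,a,a^*,h\}$ spans $\mathbb M_2$, the algebra $\mathcal A$ contains the full matrix algebra on each tensor factor; as these mutually commute and jointly generate, $\mathcal A=\mathbb M_2^{\otimes N}=\mathbb B(V^{\otimes N})$. By the reduction, the fixed point algebra of $\mathcal L^\beta_N+\mathcal L^\beta_{\mathcal O_\nu}$ is $\mathcal A'=\mathbb C1$.

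\emph{Main obstacle.} The only step carrying real content is the reduction: one must certify that the fixed point algebra of the sum is exactly the intersection of the two commutants, which rests on the common reference state and the modular covariance of $\mathcal O_\nu$ (so the Dirichlet form linearizes to honest commutators), after which positivity of Dirichlet forms closes the argument. The $[H_\theta,-]$/Vandermonde step is a routine computation, and the distinctness hypothesis on the $\theta_j$ enters in exactly one place, namely the nonvanishing of the Vandermonde determinant.
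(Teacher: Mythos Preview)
Your proof is correct and follows the same overall strategy as the paper: reduce the fixed point algebra to the commutant of $\{\pi_N(a),\pi_N(a^*),\mathcal O_\nu,\mathcal O_\nu^*\}$, then use iterated Lie brackets to produce a Vandermonde system that recovers each $\pi_{(j)}(a)$ individually.

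The difference is in the choice of iterating element. The paper sets $\pi_z(a)=\pi_N(a)+\mathcal O_\nu$ with $z_j=1+e^{i\theta_j}$, forms $\pi_z(h)=[\pi_z(a),\pi_z(a)^*]=\sum_j|z_j|^2\pi_{(j)}(h)$, and iterates $\operatorname{ad}_{\pi_z(h)}$ to obtain a Vandermonde in the real points $|z_j|^2=2(1+\cos\theta_j)$; its determinant is $(\prod_j z_j)\prod_{i<j}(|z_i|^2-|z_j|^2)$. You instead take the mixed bracket $H_\theta=-[\pi_N(a^*),\mathcal O_\nu]=\sum_j e^{i\theta_j}\pi_{(j)}(h)$ and iterate $\operatorname{ad}_{H_\theta}$ on $\pi_N(a)$, which gives the Vandermonde directly in the points $e^{i\theta_j}$. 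Your determinant $\prod_{i<j}(e^{i\theta_j}-e^{i\theta_i})$ is nonzero precisely when $\theta_i-\theta_j\not\equiv 0\bmod 2\pi$, matching the stated hypothesis on the nose, whereas the paper's version additionally needs $\cos\theta_i\neq\cos\theta_j$ and $\theta_j\neq\pi$. So your route is a genuine simplification of the same idea.
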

\begin{proof}
	Recall the fixed point algebra of $\mathcal{L}^\beta_N$ is given by:\begin{equation*}
		\Omega_{fix} = \{x\in\mathbb{B}(V^{\otimes N}): [\pi_N(a), x] = [\pi_N(a^*), x] = 0\} \pl.
	\end{equation*}
	And using the unitary $U_\theta$ constructed in equation \ref{equation:rot}, the fixed point algebra of $\mathcal{L}^\beta_{\mathcal{O}_\nu}$ is given by:
	\begin{equation*}
		U^*_\theta \Omega_{fix} U_\theta = \{x\in\mathbb{B}(V^{\otimes N}): [\mathcal{O}_\nu, x] = [\mathcal{O}^*_\nu, x] = 0\}\pl.
	\end{equation*} Thus the fixed point algebra of $\mathcal{L}^\beta_N + \mathcal{L}^\beta_{\mathcal{O}_\nu}$ is given by:
	\begin{equation*}
		\Omega_{fix} \cap U^*_\theta \Omega_{fix} U_\theta = \{x \in\mathbb{B}(V^{\otimes N}): [\pi_N(a), x] = [\mathcal{O}_\nu,x] = 0, [\pi_N(a^*), x] = [\mathcal{O}_\nu^*,x] = 0\}\pl.
	\end{equation*}
	In particular, if $x$ is in this fixed point algebra, then it is annihilated by the generators:
	\begin{equation*}
		[\pi_N(a) + \mathcal{O}_\nu, x] = [\pi_N(a^*) + \mathcal{O}_\nu^*, x] = 0\pl.
	\end{equation*}
	Consider the generator $\pi_z(a) := \pi_N(a) + \mathcal{O}_\nu = \sum_j (1 + e^{i\theta_j})\pi_{(j)}(a)$ where we denote $z_j := 1 + e^{i\theta_j}$. The Lie algebra generated by $\{\pi_z(a), \pi_z(a)^*\}$ contains the following elements:
	\begin{align*}
		&\pi_z(h):=[\pi_z(a), \pi_z(a)^*] = \sum_j |z_j|^2\pi_{(j)}(h)\\
		&\pi^{(1)}_z(a):=\pi_z(h), \pi_z(a)] = 2\sum_j |z_j|^2z_j \pi_{(j)}(a)\\
		&\pi_z^{(1)}(a)^*:=[\pi_z(h), \pi_z(a)^*] = 2\sum_j |z_j|^2z_j^*\pi_{(j)}(a^*)\pl.
	\end{align*}
	More generally, by simple induction we have the following formulae:
	\begin{align*}
		&\pi_z^{(k)}(a):=ad_{\pi_z(h)}^k(\pi_z(a)) = 2^k\sum_j |z_j|^{2k}z_j \pi_{(j)}(a)\\
		&\pi_z^{(k)}(a)^*:=ad_{\pi_z(h)}^k(\pi_z(a)^*) = 2^k\sum_j |z_j|^{2k}z_j^*\pi_{(j)}(a^*)
	\end{align*}
	where $ad_{\pi_z(h)}$ is the adjoint operator: $[\pi_z(h),\cdot]$. 
	
	By simple induction and repeated use of Jacobi identity, $x$ is annihilated by all iterated adjoint operators:  $\{ad_{\pi_z^{(k)}(a)},ad_{\pi_z^{(k)}(a)^*}\}_{k\geq 1}$. In particular, $x$ is annihilated by all linear combinations of these adjoint operators.
	
	Consider the following formulae:
	\begin{align}
		&\begin{pmatrix}
			\pi_z(a)\\\pi_z^{(1)}(a)\\...\\\pi_z^{(N-1)}(a)
		\end{pmatrix} = \begin{pmatrix}
			z_1&z_2&...&z_N\\|z_1|^2z_1&|z_2|^2z_2&...&|z_N|^2z_N\\...\\|z_1|^{2(N-1)}z_1&|z_2|^{2(N-1)}z_2&...&|z_N|^{2(N-1)}z_N
		\end{pmatrix}\begin{pmatrix}
			\pi_{(1)}(a)\\\pi_{(2)}(a)\\...\\\pi_{(N)}(a)
		\end{pmatrix}\\
		&\begin{pmatrix}
			\pi_z(a)^*\\\pi_z^{(1)}(a)^*\\...\\\pi_z^{(N-1)}(a)^*
		\end{pmatrix} = \begin{pmatrix}
			z_1^*&z_2^*&...&z_N^*\\|z_1|^2z_1^*&|z_2|^2z_2^*&...&|z_N|^2z_N^*\\...\\|z_1|^{2(N-1)}z_1^*&|z_2|^{2(N-1)}z_2^*&...&|z_N|^{2(N-1)}z_N^*
		\end{pmatrix}\begin{pmatrix}
			\pi_{(1)}(a^*)\\\pi_{(2)}(a^*)\\...\\\pi_{(N)}(a^*)\end{pmatrix}\pl.
	\end{align}
	By the classical formula of the determinant of the Vandermonde matrices, we have:
	\begin{align*}
		\begin{split}
			&\det\bigg( \begin{pmatrix}
				z_1&z_2&...&z_N\\|z_1|^2z_1&|z_2|^2z_2&...&|z_N|^2z_N\\...\\|z_1|^{2(N-1)}z_1&|z_2|^{2(N-1)}z_2&...&|z_N|^{2(N-1)}z_N
			\end{pmatrix}\bigg) 
			\\&= (\prod_j z_j) \det\bigg( \begin{pmatrix}
				1&1&...&1\\|z_1|^2&|z_2|^2&...&|z_N|^2\\...\\|z_1|^{2(N-1)}&|z_2|^{2(N-1)}&...&|z_N|^{2(N-1)}
			\end{pmatrix}\bigg)
			= (\prod_j z_j)(\prod_{i < j}(|z_i|^2 - |z_j|^2))\pl.
		\end{split}
	\end{align*}
	And similarly we have:
	\begin{align*}
		\det\bigg( \begin{pmatrix}
			z_1^*&z_2^*&...&z_N^*\\|z_1|^2z_1^*&|z_2|^2z_2^*&...&|z_N|^2z_N^*\\...\\|z_1|^{2(N-1)}z_1^*&|z_2|^{2(N-1)}z_2^*&...&|z_N|^{2(N-1)}z_N^*
		\end{pmatrix}\bigg) = (\prod_j z_j^*)(\prod_{i < j}(|z_i|^2 - |z_j|^2))\pl.
	\end{align*}
	Since $|z_i|^2 - |z_j|^2 = 2\cos\theta_i - 2\cos\theta_j$, then by the assumption both determinants are non-zero. Thus linear combinations of $\{ad_{\pi_z^{(k)}(a)},ad_{\pi_z^{(k)}(a)^*}\}_{k\geq 1}$ induce $\{ad_{\pi_{(j)}(a)}, ad_{\pi_{(j)}(a^*)}\}_{1\leq j \leq N}$. Hence for all $j$, we have: $[\pi_{(j)}(a),x] = [\pi_{(j)}(a^*),x] = 0$. Since $\{\pi_{(j)}(a), \pi_{(j)}(a^*)\}$ generate the entire algebra $\mathbb{B}(V^{\otimes N})$, $x$ is in the commutant $\mathbb{B}(V^{\otimes N})' = \mathbb{C}$. Thus the fixed point algebra is the trivial algebra $\mathbb{C}$. The Lindbladian $\mathcal{L}^\beta_N + \mathcal{L}^\beta_{\mathcal{O}_\nu}$ is primitive.
\end{proof}
\begin{corollary}
	In the Lindbladian $\mathcal{L}^\beta$ (c.f. equation \ref{equation:gen_Dicke}), if there exists a pair of summands $\mathcal{L}^\beta_{\mathcal{O}_\nu}$ and $\mathcal{L}^\beta_{\mathcal{O}_\mu}$ whose rotation parameters $\{\theta_j\}_{1\leq j \leq N}$ and $\{\varphi_j\}_{1\leq j \leq N}$ satisfy the condition:
	\begin{equation*}
		(\theta_i - \theta_j) - (\varphi_i - \varphi_j) \neq 0 \mod 2\pi
	\end{equation*}
	for $i\neq j$, then the Lindbladian $\mathcal{L}^\beta$ is primitive.
\end{corollary}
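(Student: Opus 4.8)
The plan is to derive this from Lemma~\ref{lemma:vandermonde} via two elementary reductions: first passing from the full sum $\mathcal{L}^\beta=\sum_\nu\mathcal{L}^\beta_{\mathcal{O}_\nu}$ to the two-summand sub-Lindbladian $\mathcal{L}^\beta_{\mathcal{O}_\nu}+\mathcal{L}^\beta_{\mathcal{O}_\mu}$, and then conjugating by a rotation unitary to put that sub-Lindbladian into exactly the shape $\mathcal{L}^\beta_N+\mathcal{L}^\beta_{\mathcal{O}_\cdot}$ handled by the lemma.

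\emph{Step 1 (reduction to two summands).} I would first invoke the fact already used inside the proof of Lemma~\ref{lemma:vandermonde}: for a finite family of KMS-symmetric Lindbladians sharing the invariant state $d_N$ (and each $\mathcal{L}^\beta_{\mathcal{O}_\nu}$ does, by the lemma immediately preceding this corollary), the fixed point algebra of the sum is the intersection of the individual fixed point algebras — because each Dirichlet form is positive semidefinite and vanishes precisely on the corresponding fixed point algebra, so a sum vanishes iff every summand does. Hence the fixed point algebra of $\mathcal{L}^\beta$ is contained in that of $\mathcal{L}^\beta_{\mathcal{O}_\nu}+\mathcal{L}^\beta_{\mathcal{O}_\mu}$; since the identity always lies in the fixed point algebra of $\mathcal{L}^\beta$, it will suffice to show that $\mathcal{L}^\beta_{\mathcal{O}_\nu}+\mathcal{L}^\beta_{\mathcal{O}_\mu}$ is primitive.

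\emph{Step 2 (conjugation).} Let $U_\theta$ be the rotation unitary of equation~\ref{equation:rot} attached to the angles $\{\theta_j\}$. Since $U_\theta$ is a function of $\pi_N(h)$, it commutes with $d_N$, so $\mathrm{Ad}_{U_\theta}$ is a $*$-automorphism of $\mathbb{B}(V^{\otimes N})$ that intertwines $\mathcal{L}^\beta_{\mathcal{O}}$ with $\mathcal{L}^\beta_{U_\theta\mathcal{O}U_\theta^*}$ for every generator $\mathcal{O}$, and in particular carries primitive Lindbladians to primitive Lindbladians. Because $U_\theta$ acts diagonally across the tensor factors with $U_\theta^*\pi_{(j)}(a)U_\theta=e^{i\theta_j}\pi_{(j)}(a)$ (this is the defining property used in Lemma~\ref{lemma:vandermonde}), I would compute $U_\theta\mathcal{O}_\nu U_\theta^*=\pi_N(a)$ and $U_\theta\mathcal{O}_\mu U_\theta^*=\sum_j e^{i(\varphi_j-\theta_j)}\pi_{(j)}(a)=\mathcal{O}_\psi$ with $\psi_j:=\varphi_j-\theta_j$. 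Thus $\mathrm{Ad}_{U_\theta}$ sends $\mathcal{L}^\beta_{\mathcal{O}_\nu}+\mathcal{L}^\beta_{\mathcal{O}_\mu}$ to $\mathcal{L}^\beta_N+\mathcal{L}^\beta_{\mathcal{O}_\psi}$.

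\emph{Step 3 (apply the lemma).} By Lemma~\ref{lemma:vandermonde}, $\mathcal{L}^\beta_N+\mathcal{L}^\beta_{\mathcal{O}_\psi}$ is primitive provided $\psi_i-\psi_j\neq 0\bmod 2\pi$ for all $i\neq j$; and $\psi_i-\psi_j=(\varphi_i-\varphi_j)-(\theta_i-\theta_j)=-\bigl((\theta_i-\theta_j)-(\varphi_i-\varphi_j)\bigr)$, which is nonzero mod $2\pi$ exactly under the hypothesis of the corollary. Unwinding Steps~1--2 then yields that the fixed point algebra of $\mathcal{L}^\beta$ is $\mathbb{C}$, i.e. $\mathcal{L}^\beta$ is primitive. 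The argument is short once Lemma~\ref{lemma:vandermonde} is in hand; the only points I expect to need genuine care are checking that conjugation by $U_\theta$ is compatible with the reference state $d_N$ (so that ``primitive'' is literally the same notion on both sides and the Vandermonde hypothesis is met verbatim) and the index bookkeeping that produces the phase shift $\psi_j=\varphi_j-\theta_j$. I do not anticipate any substantive obstacle beyond these.
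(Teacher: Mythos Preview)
Your proposal is correct and follows essentially the same approach as the paper: reduce to the two distinguished summands via positivity of the individual Dirichlet forms, conjugate by a rotation unitary to bring one generator to $\pi_N(a)$, and apply Lemma~\ref{lemma:vandermonde}. The only cosmetic difference is that the paper conjugates by $U_\varphi$ (sending $\mathcal{O}_\mu$ to $\pi_N(a)$ and $\mathcal{O}_\nu$ to $\mathcal{O}_{\theta-\varphi}$) whereas you conjugate by $U_\theta$ (sending $\mathcal{O}_\nu$ to $\pi_N(a)$ and $\mathcal{O}_\mu$ to $\mathcal{O}_{\varphi-\theta}$); since the Vandermonde hypothesis depends only on the differences $(\theta_i-\theta_j)-(\varphi_i-\varphi_j)$ mod $2\pi$, the two choices are interchangeable.
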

\begin{proof}
	Consider $\mathcal{L}^\beta_{\mathcal{O}_\nu} + \mathcal{L}^\beta_{\mathcal{O}_\mu}$ and let $U_\varphi$ be the unitary matrix that implements the conjugation:\begin{equation*}
		\mathcal{O}_\mu = U_\varphi^* \pi_N(a)U_\varphi\pl.
	\end{equation*}
	Then we have:
	\begin{equation*}
		\mathcal{L}^\beta_{\mathcal{O}_\nu} + \mathcal{L}^\beta_{\mathcal{O}_\mu} = Ad^*_{U_\varphi}\circ(\mathcal{L}^\beta_{\theta-\varphi} + \mathcal{L}^\beta_N)\circ Ad_{U_\varphi}
	\end{equation*}
	where $\mathcal{L}^\beta_{\theta-\varphi}$ is the Lindbladian generated by $U^*_{\theta-\varphi}\pi_N(a)U_{\theta_\varphi}$ and $Ad_{U_\varphi}$ is the adjoint action by the unitary $U_{\varphi}$ on $\mathbb{B}(V^{\otimes N})$. By lemma \ref{lemma:vandermonde}, the Lindbladian $\mathcal{L}^\beta_{\theta - \varphi} + \mathcal{L}^\beta_N$ is primitive. Thus $\mathcal{L}^\beta_{\mathcal{O}_\nu} + \mathcal{L}^\beta_{\mathcal{O}_\mu}$ is primitive as well. Since the fixed point algebra of $\mathcal{L}^\beta$ is a subset of the fixed point algebra of $\mathcal{L}^\beta_{\mathcal{O}_\nu} + \mathcal{L}^\beta_{\mathcal{O}_\mu}$, $\mathcal{L}^\beta$ is primitive.
\end{proof}
\subsection{Efficient Decay to Equilibrium in a Generalized Dicke's Model}
In this subsection, we prove theorem \ref{theorem:primitive_decay}. The main motivation is to study the efficient decay to a unique equilibrium starting with a non-primitive Lindbladian $\mathcal{L}^\beta_N$. Before discussing the main idea of the proof, we fix some notations for the algebras used in the proof. Let $V^{\otimes N} = \bigoplus_\lambda V_\lambda\otimes W_\lambda$ be the Schur-Weyl decomposition of $V^{\otimes N}$, the algebra $\Omega = \bigoplus_\lambda \ell^{n_\lambda}_{\infty}\otimes\mathbb{B}(W_\lambda)$ is the commutant of $C^*(|\pi_N(a)|)\vee C^*(|\pi_N(a^*)|)$ and $\Omega_{fix} = \bigoplus \mathbb{C}\otimes\mathbb{B}(W_\lambda)$ is the fixed point algebra of the original Lindbladian $\mathcal{L}^\beta_N$. Define the algebra $\Omega_{diag} := \bigoplus_\lambda \ell^{n_\lambda}_\infty\otimes \mathbb{C}\ssubset\Omega$. This is the multiplicity-free subalgebra of $\Omega$. Since $\Omega_{diag}$ is invariant under the modular automorphism $d_N^{it}\cdot d_N^{-it}$, there exists a normal conditional expectation $E_{diag}: \mathbb{B}(V^{\otimes N})\rightarrow \Omega_{diag}$. 
\begin{lemma}\label{lemma:diag_fix_commute}
	Restricted on the subalgebra $\Omega$, the two conditional expectations $E_{diag}$ and $E_{fix}$ commute. And the product $E_{diag}E_{fix}$ is a conditional expectation onto the intersection: $\Omega_\oplus = \bigoplus_\lambda \mathbb{C}$.
\end{lemma}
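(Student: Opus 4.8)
The plan is to argue blockwise along the Schur--Weyl decomposition $V^{\otimes N}=\bigoplus_{\lambda\in\mathcal{P}_2(N)}V_\lambda\otimes W_\lambda$. All four algebras in play respect this decomposition: $\Omega=\bigoplus_\lambda\ell^{n_\lambda}_\infty\otimes\mathbb{B}(W_\lambda)$, $\Omega_{fix}=\bigoplus_\lambda\mathbb{C}\mathbbm{1}_{V_\lambda}\otimes\mathbb{B}(W_\lambda)$ (by Lemma \ref{lemma:Dirichlet}), $\Omega_{diag}=\bigoplus_\lambda\ell^{n_\lambda}_\infty\otimes\mathbb{C}\mathbbm{1}_{W_\lambda}$, and $\Omega_\oplus=\bigoplus_\lambda\mathbb{C}\mathbbm{1}_\lambda$; moreover $d_N$ is block diagonal, so both $E_{fix}$ and $E_{diag}$ split as direct sums over $\lambda$. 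It therefore suffices to prove, for each fixed $\lambda$ (write $n:=n_\lambda-1=\dim V_\lambda-1$), that $E_{fix}|_\lambda$ and $E_{diag}|_\lambda$ commute and that their product is the conditional expectation onto $\mathbb{C}\mathbbm{1}_\lambda$.

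The key structural observation I would start from is that on a single block the reference state is a product state whose marginal on the multiplicity leg is tracial. Indeed, from the explicit diagonal form of $d_N$ in the basis $\ket{n,j,\alpha}$ recalled above, $d_N|_\lambda$ equals a constant times $\rho_\lambda\otimes\mathbbm{1}_{W_\lambda}$, where $\rho_\lambda$ is the diagonal state proportional to $\sum_{j}e^{-\frac{\beta}{2}(n-2j)}\ket{n,j}\bra{n,j}$ in the weight basis. Hence the modular automorphism group of $d_N$ acts on the block as $\sigma^{\rho_\lambda}_t\otimes\mathrm{id}$, and the two subalgebras $\mathbb{C}\mathbbm{1}_{V_\lambda}\otimes\mathbb{B}(W_\lambda)$ and $\ell^{n_\lambda}_\infty\otimes\mathbb{C}\mathbbm{1}_{W_\lambda}$ are globally invariant under it (trivially for the first; for the second because $\rho_\lambda$ is diagonal). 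By uniqueness of the modular-invariant conditional expectation (Takesaki), this lets me identify $E_{fix}|_\lambda=\omega_{\rho_\lambda}\otimes\mathrm{id}_{\mathbb{B}(W_\lambda)}$ and $E_{diag}|_\lambda=E_d\otimes\tau_{W_\lambda}$, where $\omega_{\rho_\lambda}(x)=\mathrm{tr}(\rho_\lambda x)\,\mathbbm{1}_{V_\lambda}$ is the conditional expectation onto $\mathbb{C}\mathbbm{1}_{V_\lambda}$, $E_d$ is the diagonal conditional expectation on $\mathbb{B}(V_\lambda)$ in the weight basis, and $\tau_{W_\lambda}(y)=\frac{\mathrm{tr}(y)}{\dim W_\lambda}\mathbbm{1}_{W_\lambda}$; one checks routinely that both displayed maps are unital, normal, $d_N$-preserving conditional expectations onto the stated subalgebras, so they must coincide with $E_{fix}|_\lambda$ and $E_{diag}|_\lambda$.

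Given this, the commutation is a one-line computation leg by leg: $E_d$ and $\omega_{\rho_\lambda}$ commute on $\mathbb{B}(V_\lambda)$ because $\rho_\lambda$ is diagonal, so that $\mathrm{tr}(\rho_\lambda E_d(x))=\sum_j(\rho_\lambda)_{jj}x_{jj}=\mathrm{tr}(\rho_\lambda x)$ and both composites equal $x\mapsto\mathrm{tr}(\rho_\lambda x)\,\mathbbm{1}_{V_\lambda}$, while $\mathrm{id}$ and $\tau_{W_\lambda}$ commute trivially on $\mathbb{B}(W_\lambda)$; hence $E_{fix}|_\lambda E_{diag}|_\lambda=E_{diag}|_\lambda E_{fix}|_\lambda=(\omega_{\rho_\lambda}E_d)\otimes\tau_{W_\lambda}$. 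This common value is a composition of two commuting conditional expectations, hence itself the conditional expectation onto the intersection of the ranges $\Omega_{fix}\cap\Omega_{diag}$, which on the block is $\mathbb{C}\mathbbm{1}_{V_\lambda}\otimes\mathbb{C}\mathbbm{1}_{W_\lambda}=\mathbb{C}\mathbbm{1}_\lambda$. Summing over $\lambda$ then delivers the statement (and in fact gives it on all of $\mathbb{B}(V^{\otimes N})$, not only on $\Omega$). The only delicate point — and the step I would write out most carefully — is the identification of $E_{fix}$ and $E_{diag}$ with these specific tensor-product conditional expectations, which rests on the product structure of $d_N|_\lambda$ together with uniqueness of the modular-invariant conditional expectation; everything after that is bookkeeping with commuting conditional expectations.
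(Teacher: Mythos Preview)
Your proof is correct and follows essentially the same blockwise strategy as the paper: on each Schur--Weyl block, $E_{fix}$ acts only on the $V_\lambda$ leg while $E_{diag}$ acts only on the $W_\lambda$ leg, so the two commute and their composite lands in $\mathbb{C}\mathbbm{1}_\lambda$. The paper simply asserts the tensor-leg identifications $E_{diag}|_\Omega=\bigoplus_\lambda \mathrm{id}\otimes E^\lambda_{\mathbb{C}}$ and $E_{fix}|_\Omega=\bigoplus_\lambda E^\lambda_{fix}\otimes\mathrm{id}$ and reads off commutation immediately, whereas you supply a justification via the product structure of $d_N|_\lambda$ and Takesaki uniqueness; this extra care is sound but heavier than necessary here, since on $\Omega$ the first tensor leg is already diagonal and the identifications are essentially by inspection. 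Your version does buy a slightly stronger conclusion (commutation on the full block $\mathbb{B}(V_\lambda)\otimes\mathbb{B}(W_\lambda)$ rather than just on $\ell^{n_\lambda}_\infty\otimes\mathbb{B}(W_\lambda)$), which is harmless but not needed for the lemma as stated.
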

\begin{proof}
	The conditional expectation $E_{diag}$ only acts on the multiplicity subspaces: $E_{diag} = \bigoplus_\lambda id\otimes E^\lambda_{\mathbb{C}}$ where $E^\lambda_{\mathbb{C}}:\mathbb{B}(W_\lambda)\rightarrow \mathbb{C}$ is the conditional expectation onto the center. On the other hand, the conditional expectation $E_{fix} = \bigoplus_\lambda E^\lambda_{fix}\otimes id$ where $E^\lambda_{fix}:\ell^{n_\lambda}_\infty\rightarrow \mathbb{C}$ is the conditional expectation onto the constant. Then it is clear that:
	\begin{equation*}
		[E_{diag}, E_{fix}] = \bigoplus_\lambda [id\otimes E^\lambda_{\mathbb{C}}, E^\lambda_{fix}\otimes id] = 0
	\end{equation*}  

	The product of these two conditional expectations is given by:
	\begin{equation}\label{equation:oplusce}
		E_\oplus := E_{diag}E_{fix} = \bigoplus_\lambda E^\lambda_{fix}\otimes E^\lambda_{\mathbb{C}} = \bigoplus_\lambda E_{\mathbb{C}}\pl.\qedhere
	\end{equation}
\end{proof}
In other words, we have the commuting square:
\begin{center}
	\begin{tikzcd}
		\Omega_{fix} \arrow[hookrightarrow]{r} &\Omega
		\\
		\Omega_{\oplus}\arrow[hookrightarrow]{u}\arrow[hookrightarrow]{r}&\Omega_{diag}\ar[hookrightarrow]{u}
	\end{tikzcd}
\end{center}

Later, we also need to average over the eigen-subspaces of the operator $\pi_N(h)$. We introduce the subalgebra: $M:=\bigoplus_{0\leq j \leq N}\mathbb{B}(H_j)$ where $H_j := \{\xi\in V^{\otimes N}: \pi_N(h)\xi = (N-2j)\xi\}$. The center of $M$ will be denoted as $Z(M) := \bigoplus_j \mathbb{C}$. The maximal Abelian subalgebra of $M$ will be denoted as $\Omega_\sigma:= \bigoplus_j \ell^{h_j}_\infty$ where $h_j :=\dim H_j$. 
\begin{lemma}\label{lemma:inclusion}
	The fixed point algebra $\Omega_{fix}\ssubset M$ and $\Omega_{diag} \ssubset \Omega_\sigma\ssubset M$.
\end{lemma}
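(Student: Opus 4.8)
\emph{Proof proposal.} The plan is to deduce all three inclusions directly from the weight-space picture, exploiting that the fixed orthonormal basis $\{\ket{n,j,\alpha}\}$ of $V^{\otimes N}$ consists of weight vectors for $\pi_N(h)$. First I would record the elementary but organizing observation: since $\pi_N(h)$ acts only on the $V_\lambda$-factor of each block $V_\lambda\otimes W_\lambda$, with $\pi_N(h)\ket{n,j,\alpha}=(n-2j)\ket{n,j,\alpha}$, the eigenspaces of $\pi_N(h)$ are exactly the weight spaces $H_j$, and $H_j$ is spanned by those basis vectors $\ket{n,k,\alpha}$ with $n-2k=N-2j$. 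Hence $M=\bigoplus_j\mathbb{B}(H_j)$ is precisely the commutant of $\pi_N(h)$ in $\mathbb{B}(V^{\otimes N})$, while $\Omega_\sigma=\bigoplus_j\ell^{h_j}_\infty$ is the subalgebra of operators diagonal in the basis $\{\ket{n,j,\alpha}\}$; in particular $\Omega_\sigma\ssubset M$ is immediate.

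For $\Omega_{fix}\ssubset M$: by Lemma \ref{lemma:Dirichlet} one has $x\in\Omega_{fix}$ iff $[\pi_N(a),x]=[\pi_N(a^*),x]=0$, and since $\pi_N(h)=[\pi_N(a),\pi_N(a^*)]$, the Jacobi identity gives $[\pi_N(h),x]=[\pi_N(a),[\pi_N(a^*),x]]-[\pi_N(a^*),[\pi_N(a),x]]=0$; so $x$ commutes with $\pi_N(h)$, preserves every $H_j$, and therefore lies in $M$. For $\Omega_{diag}\ssubset\Omega_\sigma$: writing $\Omega_{diag}=\bigoplus_\lambda\ell^{n_\lambda}_\infty\otimes\mathbb{C}$, it is spanned by the operators $\ket{n,k}\bra{n,k}\otimes\mathbbm{1}_{W_\lambda}=\sum_\alpha\ket{n,k,\alpha}\bra{n,k,\alpha}$, each of which is diagonal in $\{\ket{n,j,\alpha}\}$ and supported on the single weight space $H_j$ with $N-2j=n-2k$ (all vectors $\ket{n,k,\alpha}$ with $n,k$ fixed share the weight $n-2k$); hence it lies in $\ell^{h_j}_\infty\subseteq\Omega_\sigma$, and taking spans finishes this inclusion.

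The only point requiring care — and it is purely bookkeeping rather than a genuine obstacle — is that a weight space $H_j$ typically absorbs pieces of several irreducible components at once (whenever $N-2j$ occurs as a weight of more than one $V_\lambda$), so $M$ is strictly coarser than $\bigoplus_\lambda\mathbb{B}(V_\lambda\otimes W_\lambda)$; what makes the argument go through is exactly that the generators of $\Omega_{fix}$ and of $\Omega_{diag}$ are block-diagonal with respect to the decomposition $V^{\otimes N}=\bigoplus_j H_j$, which is guaranteed by the weight-vector property of the chosen basis.
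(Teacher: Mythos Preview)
Your proof is correct. For $\Omega_{diag}\ssubset\Omega_\sigma\ssubset M$ you argue exactly as the paper does: identify $\Omega_\sigma$ as the operators diagonal in the fixed basis $\{\ket{n,j,\alpha}\}$ and check that the generators $\ket{n,k}\bra{n,k}\otimes\mathbbm{1}_{W_\lambda}$ of $\Omega_{diag}$ are diagonal and supported on a single weight space. For $\Omega_{fix}\ssubset M$, however, you take a cleaner route than the paper. The paper proceeds by explicitly re-indexing the decomposition $\Omega_{fix}=\bigoplus_\lambda\mathbb{C}\otimes\mathbb{B}(W_\lambda)$ according to weight: for each eigenvalue $N-2j$ of $\pi_N(h)$ it collects the pieces $\mathbb{C}\otimes\mathbb{B}(W_n)$ with $n\geq |N-2j|$ and observes that these sit inside $\mathbb{B}(H_j)$. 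You instead observe that $M$ is the commutant of $\pi_N(h)$ and invoke $\pi_N(h)=[\pi_N(a),\pi_N(a^*)]$ together with the Dirichlet-form description of $\Omega_{fix}$ from Lemma~\ref{lemma:Dirichlet} to conclude that any $x$ commuting with $\pi_N(a)$ and $\pi_N(a^*)$ automatically commutes with $\pi_N(h)$. Your argument is coordinate-free and would transfer unchanged to any simple Lie algebra (since the Cartan is generated by commutators of root vectors), whereas the paper's explicit re-indexing makes the weight-space structure of the inclusion visible, which is what is actually used downstream in Lemma~\ref{lemma:miraculouscalc}.
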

\begin{proof}
	The first statement follows from the following decomposition:
	\begin{equation*}
		\Omega_{fix} = \bigoplus_\lambda \mathbb{C}\otimes\mathbb{B}(W_\lambda) = \bigoplus_{0\leq j \leq N}\bigoplus_{n \geq |N-2j|}\mathbb{C}\otimes\mathbb{B}(W_n) \ssubset \bigoplus_j \mathbb{B}(H_j)
	\end{equation*}
	where $W_n$ is the multiplicity space associated with the irreducible representation $V_n$. 
	
	To see the second claim, we use a similar decomposition:
	\begin{equation}
		 \Omega_{diag} = \bigoplus_j\bigoplus_{n\geq |N-2j|}\ell_\infty^{k_j}\otimes \mathbb{C} \ssubset \bigoplus_j \ell^{h_j}_\infty =\Omega_\sigma
	\end{equation}
	where $k_j = |\{0\leq n \leq N: n\geq |N-2j|\}|$. 
\end{proof}
Since $H_j$ is the eigen-subspace of $\pi_N(h)$, the density of the reference state on each $H_j$ is proportional to identity. Therefore there exists a conditional expectation $E_\sigma: M\rightarrow \Omega_\sigma$.

We are now ready to state the idea of the proof. The original Lindbladian $\mathcal{L}^\beta_N$ has a nontrivial fixed point algebra $\Omega_{fix}$ which includes matrix algebras over multiplicity subspaces. To efficiently get rid of $\mathbb{B}(W_\lambda)$'s, we use quantum expanders. More specifically, we need the following fact about quantum expanders:
\begin{theorem}\cite{Has}\cite{BST}\cite{Pisier}\label{theorem:expander}
	On any finite dimensional type I factor $\mathbb{B}(H)$ ($\dim H < \infty$), there exists a constant $c\in (0,1)$ (independent of the dimension) and a finite family of unitaries $\{W_1,...,W_m\}$ (m is independent of the dimension but depends on the constant $c$) such that for all $x\in\mathbb{B}(H)$ we have:
	\begin{equation}
		||\frac{1}{m}\sum_{1\leq i \leq m}W_i^*xW_i - \tau_H(x)||_2 \leq c||x||_2
	\end{equation}
	where $\tau_H(\cdot)$ is the normalized trace on $\mathbb{B}(H)$ and $||\cdot||_2$ is the Hilbert-Schmidt norm on $\mathbb{B}(H)$.
\end{theorem}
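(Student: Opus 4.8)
The plan is to realize the averaging operator $x\mapsto\frac1m\sum_i W_i^*xW_i$ as a linear map on the Hilbert space $(\mathbb{B}(H),\|\cdot\|_2)$ and to choose the $W_i$ so that this map becomes a strict contraction on the complement of the scalar direction, with both the contraction constant $c$ and the number $m$ independent of $d:=\dim H$.

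\emph{Reduction.} Identify $(\mathbb{B}(H),\|\cdot\|_2)$ with $H\otimes\overline H$; for $W\in U(d)$ the conjugation $\Phi_W(x):=W^*xW$ is a Hilbert-space unitary (it corresponds to $\overline W\otimes W$ up to an irrelevant conjugation). The line $\mathbb{C}\mathbf 1$ is $\Phi_W$-invariant, the orthogonal projection onto it is $x\mapsto\tau_H(x)\mathbf 1$, and since $\tau_H$ is conjugation invariant this projection commutes with every $\Phi_W$. Writing $V_0:=(\mathbb{C}\mathbf 1)^\perp$ for the traceless matrices and $S_W:=\frac1m\sum_{i=1}^m\Phi_{W_i}\big|_{V_0}$, we get, using $\Phi_{W_i}(\mathbf 1)=\mathbf 1$,
\[
  \frac1m\sum_{i=1}^m W_i^*xW_i-\tau_H(x)\mathbf 1 \;=\; S_W\big(x-\tau_H(x)\mathbf 1\big),
\]
so the theorem is equivalent to finding $c\in(0,1)$ and $m$, both independent of $d$, and unitaries $W_1,\dots,W_m\in U(d)$ with $\|S_W\|_{\mathrm{op}}\le c$; the statement then follows from $\|x-\tau_H(x)\mathbf 1\|_2\le\|x\|_2$ (Pythagoras, since $x-\tau_H(x)\mathbf 1\perp\mathbf 1$).

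\emph{The main estimate in large dimension.} Fix $m\ge 3$ and draw $W_1,\dots,W_m$ independently from Haar measure on $U(d)$. The operators $\Phi_{W_1}|_{V_0},\dots,\Phi_{W_m}|_{V_0}$ are unitaries on the $(d^2-1)$-dimensional space $V_0$, and the decisive fact is that, with respect to the normalized trace on $\mathbb{B}(V_0)$, they become \emph{strongly} asymptotically free Haar unitaries as $d\to\infty$ — the adjoint-representation form of strong asymptotic freeness of independent Haar matrices, established in \cite{Has}\cite{BST}\cite{Pisier}. Together with the Akemann--Ostrand/Kesten identity $\|u_1+\cdots+u_m\|=2\sqrt{m-1}$ for free Haar unitaries $u_i$ in a $\mathrm{II}_1$ factor, this gives
\[
  \limsup_{d\to\infty}\ \mathbb{E}\big[\|S_W\|_{\mathrm{op}}\big]\;\le\;\frac{2\sqrt{m-1}}{m}\;<\;1 .
\]
(Alternatively, bypassing free probability, one estimates $\mathbb{E}\big[\Tr_{V_0}((S_W^*S_W)^k)\big]$ directly by a Weingarten / non-crossing-pairing count in the style of Hastings; I would record only the combinatorial skeleton, not the full enumeration.) Hence there are $d_0$ and $c_0<1$ with $\mathbb{E}[\|S_W\|_{\mathrm{op}}]\le c_0$ for all $d\ge d_0$, and Markov's inequality produces an explicit tuple with $\|S_W\|_{\mathrm{op}}\le c_0$.

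\emph{Small dimensions and conclusion.} For each of the finitely many $d<d_0$, $\mathbb{B}(H)$ is a fixed finite-dimensional space and $(W_1,\dots,W_m)\mapsto\|S_W\|_{\mathrm{op}}$ is continuous on the compact group $U(d)^m$. If $\|S_W\|_{\mathrm{op}}=1$, compactness yields a unit $\xi\in V_0$ with $\Phi_{W_1}\xi=\cdots=\Phi_{W_m}\xi$, forcing $\xi$ to be fixed by $\Phi_{W_iW_j^*}$ for all $i,j$; for $W_1,\dots,W_m$ in general position ($m\ge3$) the only such $\xi$ is a scalar, hence $\xi\notin V_0$, a contradiction — so $\min_{U(d)^m}\|S_W\|_{\mathrm{op}}=:c_d<1$. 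Put $c:=\max\{c_0,\ \max_{d<d_0}c_d\}\in(0,1)$; with $m$ fixed as above and the chosen unitaries, $\|S_W\|_{\mathrm{op}}\le c$ in every dimension, which by the Reduction step is exactly $\|\frac1m\sum_i W_i^*xW_i-\tau_H(x)\|_2\le c\|x\|_2$. The genuinely hard point is the large-$d$ bound: the Hilbert--Schmidt norm of $S_W$ is of order $d/\sqrt m$ and useless, so one must control the \emph{operator} norm on $V_0$, for which the only known mechanisms are Hastings' random-matrix moment method and the strong asymptotic freeness theorems of \cite{Has}\cite{BST}\cite{Pisier}; I would invoke those rather than reprove them, as the cited references in the statement indicate.
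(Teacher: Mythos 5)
Your argument is correct, and its overall architecture coincides with the paper's second proof of this theorem: random unitaries \`a la Hastings handle all sufficiently large dimensions, and the finitely many small dimensions are patched separately so that the constants stay dimension-free. The differences are in how each half is executed. For large $d$, the paper quotes Hastings' explicit probabilistic estimate (three Haar unitaries together with their adjoints give, with probability at least $1/2$, a gap close to the Ramanujan value $\tfrac{\sqrt5}{3}$), whereas you derive the same bound $\tfrac{2\sqrt{m-1}}{m}$ from strong asymptotic freeness of independent Haar unitaries in the adjoint representation plus the Kesten/Akemann--Ostrand norm of a sum of free Haar unitaries; these are two packagings of the same input from \cite{Has}\cite{BST}\cite{Pisier}, and both are legitimate to cite rather than reprove. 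For small $d$ the routes genuinely diverge: the paper invokes Bourgain--Gamburd spectral gaps for representations of $\mathbb F_3$ in $SU(N)$ with algebraic entries, while you use only compactness of $U(d)^m$ and the elementary observation that $\|S_W\|_{\mathrm{op}}=1$ forces a traceless $\xi$ commuting with every $W_jW_i^*$, which fails once $m\ge 3$ and the tuple is in general position (your insistence on $m\ge 3$ is exactly right: for $m=2$ the products generate a cyclic, hence reducible, group). Your small-dimension argument is strictly more elementary and self-contained; the paper's buys an explicit, arithmetic family of generators. Note also that the paper records a second, independent construction via the Ramanujan-graph-based expanders of \cite{BST}, with $m=O(1/c^4)$, which you do not use but which would serve equally well. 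Two cosmetic points: ``Markov's inequality'' is a slight misnomer for the step ``a bounded nonnegative variable is at most its mean on a set of positive measure,'' and the clause $\min_{U(d)^m}\|S_W\|_{\mathrm{op}}=c_d<1$ should be read as ``some tuple achieves a value $<1$,'' which is all the patching step requires.
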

For our construction, the set of unitaries need to be closed under conjugation. Therefore for each finite dimensional type I factor $\mathbb{B}(H)$, we fix a quantum expander generated by $\{W_j(H)\}_{1\leq j \leq m}$. This set of unitaries is closed under conjugation. There are several choices of quantum expanders one can use. We will elaborate on the constructions in the Appendix \ref{appendix:expander}. The important point here is that the number of unitaries we need is independent of $\dim H$ and the number is of order $O(1)$.

The quantum expanders in $\mathbb{B}(W_\lambda)$'s help us to implement the conditional expectation: $E_{diag}:\Omega\rightarrow \Omega_{diag}$. However, to control the entropy decay $D(E_{\Omega}^*\rho | E_{diag}^*\rho)$ by entropy production of rotated Lindbladians, on each irreducible component we need to use the following automorphism:
\begin{align}\label{equation:expander_channel}
	\begin{split}
		\Phi_W^\lambda:& \ell^{n_\lambda}_\infty \otimes\mathbb{B}(W_\lambda)\rightarrow \ell^{n_\lambda}_\infty\otimes\mathbb{B}(W_\lambda)
		\\
		& x_j \mapsto W_{j\mod m}^*x_jW_{j\mod m} \text{  ,  if $j \leq n - n\mod m$}\\
		&x_j\mapsto x_j \text{  ,  if $j > n - n\mod m$}\pl.
	\end{split}
\end{align}
Using this automorphism, we will be able to approximate the conditional expectation $E_{diag}$ by a finite iteration of the following channel: $E_{fix}\circ\Phi_W^*\circ E_{fix}\circ\Phi_W\circ E_{fix}$. Then using the standard argument of quasi-factorization, we are able to obtain an upper bound by a finite sum of rotated Lindbladians. 

Unfortunately, this method only works for irreducible component with sufficiently large dimension ($\dim V_\lambda \geq m$). For the remaining irreducible components, we have to apply additional permutation transformations to implement $E_{diag}$. More precisely, consider the direct sum of two irreducible components: $\ell^{n_\lambda}_\infty\otimes\mathbb{B}(W_\lambda)\bigoplus\ell^{n_\eta}_\infty\otimes\mathbb{B}(W_\eta)$, where $n_\lambda \geq m$ and $n_\eta < m$. Hence we can only define automorphism $\Phi_W^\lambda$ on the component $\ell^{n_\lambda}_\infty\otimes\mathbb{B}(W_\lambda)$. By Schur-Weyl duality, we know that $\dim W_\eta > \dim W_\lambda$. Therefore we partition $W_\eta$ into $\dim W_\lambda$-corners and permute each piece to $\mathbb{B}(W_\lambda)$. Once the corner is moved to $\mathbb{B}(W_\lambda)$, we can apply the automorphism $\Phi_W^\lambda$ to obtain an upper bound on the entropy decay by entropy product of rotated Lindbladians.

Combining these ideas, we are able to prove the following:
\begin{proposition}\label{proposition:step1}
	Let $m$ be the number of unitaries used to generate a quantum expander. Then there exists $O(1)$ many unitaries $\{U_j\}$ and a constant $C > 0$ depending only on the number of qubits $N$ such that:
	\begin{equation}
		D(\rho | E^*_{diag}\rho) \leq C\big(\sum_j I_{\mathcal{L}^\beta_{U_j}}(\rho)\big)
	\end{equation}
	where $\mathcal{L}^\beta_{U_j}$ is a rotated Lindbladian generated by $U_j^*\pi_N(a)U_j$.
\end{proposition}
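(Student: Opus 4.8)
\emph{Proof strategy.} The plan is to realise $E_{diag}$ --- or rather, since only an upper bound is needed and $D(\,\cdot\,|\,E^*_{diag}\,\cdot\,)$ is monotone under $E_{diag}$, some conditional expectation onto a subalgebra of $\Omega_{diag}$ --- as a short alternating composition of conditional expectations attached to $O(1)$ rotated Lindbladians, where the rotations are built from the quantum-expander automorphism $\Phi_W$ of equation~\ref{equation:expander_channel} (plus a few permutation unitaries), all of them commuting with $d_N$; and then to convert the \emph{dimension-free} contraction of the expander into a quasi-factorisation of the relative entropy. The only fact I need about rotations is that if $U$ is unitary with $[U,d_N]=0$, then conjugating the $N^2$-CLSI estimate of Section~\ref{section:su(2)} (together with Corollary~\ref{corollary:intermediate}) by $\mathrm{Ad}_U$, which fixes $d_N$ and preserves relative entropy and entropy production, yields
\[
D(\rho\,|\,E^*_U\rho)\ \le\ C(\beta)\,N^2\,I_{\mathcal{L}^\beta_U}(\rho),
\]
with $E_U$ the conditional expectation onto $U^*\Omega_{fix}U=\mathrm{fix}(\mathcal{L}^\beta_U)$. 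Hence it is enough to produce $O(1)$ such unitaries $\{U_j\}$, one of them $U_0=\mathbbm{1}$, and a constant $C'=C'(N)$ with $D(\rho\,|\,E^*_{diag}\rho)\le C'\sum_j D(\rho\,|\,E^*_{U_j}\rho)$.

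First I would peel off the complement of $\Omega=C^*(|\pi_N(a)|)'\cap C^*(|\pi_N(a^*)|)'$: since $\Omega_{diag}\subseteq\Omega$, the chain rule gives $D(\rho\,|\,E^*_{diag}\rho)=D(\rho\,|\,E^*_\Omega\rho)+D(E^*_\Omega\rho\,|\,E^*_{diag}\rho)$, and the first term is absorbed into $U_0=\mathbbm{1}$ by Corollary~\ref{corollary:intermediate} (with $\delta\sim N^{-1}$, from Lemma~\ref{lemma:unigap}). So the task reduces to bounding $D(\sigma\,|\,E^*_{diag}\sigma)$, $\sigma:=E^*_\Omega\rho$, block by block over $\Omega=\bigoplus_\lambda\ell^{n_\lambda}_\infty\otimes\mathbb{B}(W_\lambda)$, i.e.\ to collapsing $\mathbb{B}(W_\lambda)$ to its centre on each block. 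For a block with $\dim W_\lambda\ge m$ I would take two ``independent'' expanders $\{W_i\}_{i\le m}$, $\{W'_i\}_{i\le m}$ from Theorem~\ref{theorem:expander} (each closed under adjoints, each with the same dimension-free contraction $c<1$) and the controlled-conjugation unitaries $U_W=\sum_j e_j\otimes W_{j\bmod m}$, $U_{W'}=\sum_j e_j\otimes W'_{j\bmod m}$ (extended trivially on the boundary slots), which commute with $d_N$ because $d_N$ is constant along each multiplicity space. A computation as in Lemma~\ref{lemma:commutant} shows $\Omega_{fix}\cap U_W^*\Omega_{fix}U_W\cap U_{W'}^*\Omega_{fix}U_{W'}=\Omega_\oplus\subseteq\Omega_{diag}$, and --- this is the point of expanders --- a \emph{constant-length} alternating composition of $E_{fix}$, $E_{U_W}$, $E_{U_{W'}}$ is $c^{O(1)}$-close to $E_\oplus$ in the $d_N$-KMS norm \emph{uniformly in $\dim W_\lambda$}. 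The finitely many blocks with $\dim W_\lambda<m$ are first embedded, by a fixed permutation unitary commuting with $d_N$, as corners of larger blocks. Thus $\{U_j\}$ --- comprising $\mathbbm{1}$, $U_W$, $U_{W'}$, a few cyclic shifts of these, and the corner permutations --- has size $O(1)$. (One could instead use the centralizer expectation onto $M$ of Lemma~\ref{lemma:inclusion} to hit $E_{diag}$ exactly, but $E_\oplus\subseteq\Omega_{diag}$ already suffices for an upper bound.)

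It then remains to feed this composition into the standard quasi-factorisation of relative entropy: the one-step inequality is the approximate-tensorisation result \cite{Petz95}, applied recursively along the composition, whose clustering hypothesis at each step is precisely the $L_1\!\to\!L_\infty$ smallness of $E_{U_i}E_{U_j}-E_\oplus$ obtained above. This gives $D(\sigma\,|\,E^*_{diag}\sigma)\le D(\sigma\,|\,E^*_\oplus\sigma)\le C'\sum_j D(\sigma\,|\,E^*_{U_j}\sigma)$ with $C'$ a universal multiple of the expander constant; combined with $D(\sigma\,|\,E^*_{U_j}\sigma)\le D(\rho\,|\,E^*_{U_j}\rho)$ (monotonicity, since the $U_j$ normalise $\Omega$) and the displayed CLSI bound, this proves the proposition with $C=C(N)$.

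The step I expect to be the real obstacle is this last passage --- promoting the $L_2$-contraction of the quantum expander to the $L_1\!\to\!L_\infty$ (completely bounded) clustering estimate that relative-entropy quasi-factorisation demands, \emph{uniformly over all irreducible components $\lambda$}: this is a cb-norm computation in the spirit of Section~\ref{section:single}, now for twisted conditional expectations, and it must be combined with the bookkeeping for the boundary slots on which $\Phi_W$ acts trivially and for the corner-permutation surgery on the low-dimensional blocks, all while keeping every unitary commuting with $d_N$. The $\mathfrak{su}(2)$-CLSI estimate and Theorem~\ref{theorem:expander} enter only as black boxes; the genuine work is organising the rotations so that $O(1)$ distinct unitaries suffice and the clustering constant is dimension-free.
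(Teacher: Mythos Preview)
Your skeleton is right and matches the paper: peel off $D(\rho|E^*_\Omega\rho)$ by the chain rule, then collapse the multiplicity blocks $\mathbb{B}(W_\lambda)$ using the controlled-conjugation automorphism $\Phi_W$ built from a fixed quantum expander, and handle the few anomalous blocks by permutation surgery. Two corrections and one substantive difference.

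\textbf{Corrections.} First, the case split is on $n_\lambda=\dim V_\lambda-1$, not on $\dim W_\lambda$: the control register for $U_W=\sum_j e_j\otimes W_{j\bmod m}$ lives in $\ell^{n_\lambda}_\infty$, so you need $n_\lambda\ge m$; the corresponding $W_\lambda$ are then the \emph{small} multiplicity spaces. The blocks requiring surgery are the ones with $n_\lambda<m$ (there are $O(m)=O(1)$ of them), whose $W_\lambda$ are the \emph{largest}; this is why the corner--permutation count $\dim W_\lambda/\dim W_\eta$ stays $O(1)$. Second, the reference measure $\mu^\lambda$ that $d_N$ induces on $\ell^{n_\lambda}_\infty$ is geometric, not piecewise constant in blocks of length $m$; you will need a change of measure (Corollary~\ref{corollary:changeofmeasure}) at cost $e^{O(\beta m)}$ before the averaging $\mathbb{E}_\mu\circ\Phi_W$ reproduces the expander channel.

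\textbf{The obstacle you flag is real, and the paper sidesteps it rather than solves it.} You are right that promoting the dimension-free $L_2$ expander gap to an $L_1\!\to\!L_\infty$ (or cb) clustering bound uniform in $\dim W_\lambda$ is hard --- in fact there is no reason to expect it, since norm comparison costs a dimension factor. The paper does \emph{not} attempt approximate tensorisation in your sense. Instead it uses the iteration Lemma~\ref{lemma:iterationorder}: from the $L_2$ spectral gap $c<1$ of $\Psi:=E_\mu E^W_\mu E_\mu$ one gets the cp-order comparison $0.9\,E_\tau\le \Psi^k\le 1.1\,E_\tau$ after $k=O(\log\dim W_\lambda)=O(\log N)$ iterations (this is \cite{GJLL}), hence $D(\sigma|E_\tau\sigma)\le 2D(\sigma|\Psi^k\sigma)\le 2k\,D(\sigma|\Psi\sigma)\le 2k\bigl(D(\sigma|E_{fix}\sigma)+D(\sigma|E^W_{fix}\sigma)\bigr)$. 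The factor $k$ enters the constant $C(N)$ (which the statement permits), while the set of unitaries stays $\{1,U_W\}$ plus the permutation surgery --- so still $O(1)$ unitaries. The remaining conversion $D(\sigma|E^*_{fix}\sigma)\le C(\beta)I_{\mathcal{L}^\beta_N}(\sigma)$ for $\sigma\in L_1(\Omega)$ is Proposition~\ref{proposition:graph} (dimension-free on $\Omega$), and the pull-back $I_{\mathcal{L}^\beta_N}(E^*_\Omega\rho)\le I_{\mathcal{L}^\beta_N}(\rho)$ is the Lieb-concavity Lemma~\ref{lemma:lieb}, which replaces your monotonicity-of-$D$ argument (yours also works, since indeed $Ad_{U_W}\Omega=\Omega$). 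In short: drop the quest for a cb-clustering estimate and iterate the sandwich instead; the $\log N$ lands harmlessly in $C$.
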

This finishes the first part of the proof of theorem \ref{theorem:primitive_decay}. The second part of the proof deals with the decay from the subalgebra $\Omega_{diag}$ to the fixed point $\mathbb{C}$. Instead of controlling the relative entropy $D(E^*_{diag}\rho | E_{\mathbb{C}}^*\rho)$, we first study the relative entropy $D(E^*\rho | E_\sigma^*\rho)$ where the upper bound by rotatted entropy productions can be easily obtained from general machineries. Then using a particular structure of $\mathfrak{su}(2)$-representations, we are able to relate $D(E^*_{diag}\rho | E_{\mathbb{C}}^*\rho)$ to $D(E^*_{diag}\rho | E_\sigma^*\rho)$ with a constant overhead that only depends on $\beta$. Combining these two parts, we are finally able to prove theorem \ref{theorem:primitive_decay}.

Beforewe dive into the details, there are three facts that we use repeatedly in the proof. The first is an observation about quantum Fourier transform and commutative subalgebras in side finite dimensional type I factors.
\begin{lemma}\label{lemma:quantumFourier}
	Let $\mathbb{B}(H)$ be a finite dimensional type I factor and $N\ssubset \mathbb{B}(H)$ be a commutative subalgebra. Let $\tau$ be the canonical trace on $\mathbb{B}(H)$. In addition, let $\mathcal{F}:H\rightarrow H$ be the quantum Fourier transform:
	\begin{equation*}
		\mathcal{F}(e_\alpha) = \frac{1}{\sqrt{N}}\sum_{1\leq \beta \leq N}e^{-\frac{2\pi i}{N}\alpha\beta}e_\beta
	\end{equation*}
	where $\{e_\alpha\}$ is a fixed orthonormal basis of $H$ and $N = \dim H$. Then we have:
	\begin{equation}
		D(\rho | E_\tau(\rho)) \leq D(\rho | E_N(\rho)) + D(\rho | E_{Ad_\mathcal{F}(N)}(\rho))
	\end{equation}
	where $E_{Ad_\mathcal{F}(N)}$ is the conditional expectation onto $Ad_\mathcal{F}(N)$.
\end{lemma}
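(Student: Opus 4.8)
The plan is to reduce the claimed inequality to a standard \emph{approximate tensorization} (quasi-factorization) statement for a pair of conditional expectations whose product lands on the tracial state. The point is that the quantum Fourier transform $\mathcal{F}$ is designed so that $N$ and $Ad_\mathcal{F}(N)$ generate $\mathbb{B}(H)$ in a quantitative sense, and in particular the only operator commuting with both $N$ and $\mathcal{F} N \mathcal{F}^*$ is a scalar. Concretely, if $N = \spann\{e_\alpha\bra{e_\alpha}\}$ is the diagonal masa in the fixed basis, then $Ad_\mathcal{F}(N)$ is the ``momentum'' masa generated by the cyclic shift; a matrix $x$ that is simultaneously diagonal in position and in momentum must commute with the shift and hence be a multiple of the identity. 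Thus $E_\tau = E_N \wedge E_{Ad_\mathcal{F}(N)}$ in the sense that the fixed-point algebra of the pair is $\mathbb{C}1$, and $E_\tau = E_N E_{Ad_\mathcal{F}(N)} E_N E_{\cdots}$ in the limit. Since both $N$ and $Ad_\mathcal{F}(N)$ are globally invariant under the modular group of the tracial state (the state is the trace, so the modular group is trivial), all the conditional expectations here are trace-preserving and the hypotheses of the relevant quasi-factorization results apply directly.

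The key steps, in order, would be: (1) identify $E_N$, $E_{Ad_\mathcal{F}(N)}$, and $E_\tau$ explicitly and verify that $E_\tau$ is the conditional expectation onto $N' \cap (Ad_\mathcal{F}(N))' \cap \mathbb{B}(H) = \mathbb{C}1$; (2) invoke the approximate-tensorization inequality for relative entropy in the form already used in the paper (the Petz-type result cited as \cite{Petz95}, or more precisely the strong subadditivity / quasi-factorization estimate of the type appearing in \cite{LGMJNL, BCLPR} for commuting-square or ``almost commuting'' conditional expectations), which says that for two trace-preserving conditional expectations $E_1, E_2$ onto subalgebras of $\mathbb{B}(H)$ whose ranges generate $\mathbb{B}(H)$,
\begin{equation*}
	D(\rho \,|\, E_\tau(\rho)) \leq c\bigl(D(\rho\,|\,E_1(\rho)) + D(\rho\,|\,E_2(\rho))\bigr)
\end{equation*}
with $c$ controlled by the relevant index or, in the commuting-square case, with $c = 1$; (3) observe that here the two masas $N$ and $Ad_\mathcal{F}(N)$ do in fact form a commuting square over $\mathbb{C}1$ with respect to the trace — this is the classical fact that the position and momentum bases of $\mathbb{Z}_{N}$ are ``complementary'' / mutually unbiased, so $E_N E_{Ad_\mathcal{F}(N)} = E_{Ad_\mathcal{F}(N)} E_N = E_\tau$ — which gives the constant $1$ and hence exactly the stated inequality.

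The main obstacle, I expect, is step (3): proving that $E_N$ and $E_{Ad_\mathcal{F}(N)}$ genuinely commute and compose to $E_\tau$. This is the statement that $\tau(E_N(e_\alpha\bra{e_\alpha}) \cdot \mathcal{F} e_\beta\bra{e_\beta}\mathcal{F}^*) $ behaves like a product of two independent averages — equivalently, that $|\langle e_\alpha, \mathcal{F} e_\beta\rangle|^2 = 1/N$ for all $\alpha,\beta$, which is precisely the mutual unbiasedness of the two bases and follows from the explicit form $\mathcal{F}(e_\alpha) = N^{-1/2}\sum_\beta e^{-2\pi i \alpha\beta/N} e_\beta$. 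Granting this, the commuting-square property is immediate and the approximate tensorization with constant $1$ applies verbatim; without it, one would fall back on a general quasi-factorization bound with a dimension-dependent constant, which would weaken the lemma. A secondary technical point is to make sure the relative entropies $D(\rho\,|\,E_N(\rho))$ and $D(\rho\,|\,E_{Ad_\mathcal{F}(N)}(\rho))$ are the \emph{trace-preserving} conditional expectations (not KMS ones for some other state), but since the ambient state here is the normalized trace this is automatic. I would therefore structure the write-up as: explicit description of the three conditional expectations; the mutual-unbiasedness computation giving the commuting square; and a one-line appeal to the quasi-factorization inequality to conclude.
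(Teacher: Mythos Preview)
Your approach is essentially the paper's: show that the diagonal masa and its Fourier conjugate form a commuting square over $\mathbb{C}1$ via the mutual-unbiasedness computation $|\langle e_\alpha,\mathcal{F} e_\beta\rangle|^2 = 1/n$, so that $E_{\ell^n_\infty}\circ E^{\mathcal{F}}_{\ell^n_\infty}=E_\tau$, and then invoke the constant-$1$ quasi-factorization. The paper phrases the second step as the iteration inequality $D(\rho\mid\Phi^*\Phi\rho)\le H_\tau(\rho)-H_\tau(\Phi\rho)$ with $\Phi=E^{\mathcal{F}}_{\ell^n_\infty}E_{\ell^n_\infty}$, followed by data processing, which unwinds to exactly the Petz-type bound you cite.

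The one genuine point you are missing: the lemma allows $N$ to be an \emph{arbitrary} commutative subalgebra, not the full masa, and your step (3) is argued only for the masa. The paper handles this by first embedding $N$ in the diagonal masa $\ell^n_\infty$ (with respect to the fixed basis), proving the inequality for $\ell^n_\infty$ exactly as you outline, and then closing with the monotonicity $D(\rho\mid E_{\ell^n_\infty}\rho)\le D(\rho\mid E_N\rho)$ and $D(\rho\mid E^{\mathcal{F}}_{\ell^n_\infty}\rho)\le D(\rho\mid E_{Ad_\mathcal{F}(N)}\rho)$, which hold because $N\subset\ell^n_\infty$ implies the chain rule $D(\rho\mid E_N\rho)=D(\rho\mid E_{\ell^n_\infty}\rho)+D(E_{\ell^n_\infty}\rho\mid E_N\rho)\ge D(\rho\mid E_{\ell^n_\infty}\rho)$. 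Alternatively, your mutual-unbiasedness computation actually propagates to coarser diagonal projections (since $\tau(p_i\,\mathcal{F} p_j\mathcal{F}^*)=\tau(p_i)\tau(p_j)$ for any diagonal projections $p_i,p_j$), so $E_N E_{Ad_\mathcal{F}(N)}=E_\tau$ holds directly; either route closes the gap.
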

\begin{proof}
	Since N is commutative, there exists a maximal Abelian algebra $\ell^n_\infty$ containing $N$ where $n = \dim H$. Let $E_\infty: \mathbb{B}(H)\rightarrow \ell^n_\infty$ be the conditional expectation onto the maximal Abelian subalgebra. Then for all $f = (f_\alpha)_{1\leq \alpha \leq n}\in \ell^n_\infty$, we have:
	\begin{align*}
		\begin{split}
			E_\infty\circ Ad_\mathcal{F}(f) &= E_\infty(\frac{1}{N}\sum_\alpha f_\alpha \sum_{1\leq \beta, \gamma\leq n}exp(-\frac{2\pi i}{N}\alpha(\beta - \gamma))e_{\beta\gamma})
			\\
			&=\frac{1}{N}\sum_\alpha f_\alpha \sum_{1\leq \beta\leq n}e_{\beta\beta} = \tau(f)id
		\end{split}
	\end{align*}
	where $e_{\beta\gamma}$ is the matrix unit associated with the orthonormal basis $\{e_\alpha\}$. From this identity, we have:
	\begin{equation*}
		\Phi^*\Phi:=E_\infty \circ Ad_{\mathcal{F}}^*\circ E_\infty \circ Ad_{\mathcal{F}}\circ E_\infty = E_\infty \circ E^\mathcal{F}_\infty\circ E_\infty = E_\mathbb{C}
	\end{equation*}
	where $\Phi:= E^\mathcal{F}_\infty \circ E_\infty$ and $E^\mathcal{F}_\infty = E_{Ad_\mathcal{F}\ell^n_\infty}$. Then by an iteration procedure \cite{GJLL}\cite{NL}, for all $\rho > 0$ we have:
	\begin{align}
		\begin{split}
			D(\rho | E_\tau\rho)  &= D(\rho | \Phi^*\Phi\rho) \leq H_\tau(\rho) - H_\tau(\Phi\rho)
			\\
			&= \tau(\rho \log \rho) - \tau(E^\mathcal{F}_\infty\rho\log (E^\mathcal{F}_\infty\rho)) + \tau(E^\mathcal{F}_\infty\rho \log(E^\mathcal{F}_\infty\rho)) - \tau(\Phi\rho \log (\Phi \rho))
			\\
			&= D(\rho | E^\mathcal{F}_\infty\rho) + D(E^\mathcal{F}_\infty\rho | \Phi \rho) \leq D(\rho | E^\mathcal{F}_\infty\rho) + D(\rho | E_\infty\rho)
		\end{split}
	\end{align}
	where the last inequality follows from the data processing inequality. Since $N\ssubset \ell^n_\infty$ and $Ad_\mathcal{F}(N)\ssubset Ad_\mathcal{F}\ell^n_\infty$, by the variational principle of relative entropy, we have:
	\begin{align*}
		\begin{split}
			& D(\rho | E_\infty\rho)\leq D(\rho | E_N\rho)\\
			&D(\rho | E^\mathcal{F}_\infty\rho) \leq D(\rho | E_{Ad_\mathcal{F}(N)}\rho)\pl.
		\end{split}
	\end{align*}
	Combining these inequalities, we have the desired result.
\end{proof}
The second fact that we will use repeatedly in the proof of theorem \ref{theorem:primitive_decay} is an iteration procedure \cite{GJLL}\cite{NL}. In fact, we have already used this procedure in the proof of lemma \ref{lemma:quantumFourier}. 
\begin{lemma}[\cite{GJLL}\cite{NL}]\label{lemma:iterationorder}
	Let $\Phi:L_1(N,\tau)\rightarrow L_1(N,\tau)$ be a unital quantum channel, then for any states $\rho, \sigma > 0$ we have:
	\begin{equation}\label{equation:entropydiff}
		D(\rho | \Phi^*\Phi\rho) \leq H_\tau(\rho) - H_\tau(\Phi(\rho)) + D(\rho | \sigma)
	\end{equation}
	where $\tau$ is the trace on $N$ and the (negative) entropy $H_\tau(\rho) :=\tau(\rho \log\rho)$. In addition, if $\Phi^* = \Phi$ on $N$, we have:
	\begin{equation}
		D(\rho | \Phi^{2k}\rho) \leq kD(\rho| \Phi\rho)\pl.
	\end{equation}
	Specifically, for sandwiched conditional expectations: $\Phi := E_2E_1E_2$ where $E_1^* = E_1$ and $E_2^* = E_2$, we have:
	\begin{equation}
		D(\rho | \Phi^{2k}\rho) \leq k(D(\rho | E_1\rho) + D(\rho|E_2\rho))
	\end{equation}
\end{lemma}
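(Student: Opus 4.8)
The plan is to derive all three parts of Lemma~\ref{lemma:iterationorder} from a single operator–convexity estimate and then iterate it. The core inequality I would prove first is: for any unital quantum channel $\Phi$ on $L_1(N,\tau)$ — so that $\Phi^*$ is again unital and completely positive — and all faithful states $\rho,\eta$,
\begin{equation*}
	D(\rho\,|\,\Phi^*\eta)\ \le\ H_\tau(\rho)-H_\tau(\Phi\rho)+D(\Phi\rho\,|\,\eta).
\end{equation*}
This is immediate from the operator Jensen inequality for the operator concave function $\log$: since $\Phi^*$ is unital and positive, $\log\Phi^*(\eta)\ge\Phi^*(\log\eta)$, hence $\tau(\rho\log\Phi^*\eta)\ge\tau(\rho\,\Phi^*(\log\eta))=\tau(\Phi\rho\log\eta)=H_\tau(\Phi\rho)-D(\Phi\rho\,|\,\eta)$, and one subtracts this from $H_\tau(\rho)=$ the leading term of $D(\rho\,|\,\Phi^*\eta)$. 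Taking $\eta=\Phi\rho$ kills the residual term and gives the first displayed inequality of the lemma, since $D(\rho\,|\,\sigma)\ge0$ may be added on the right and since a unital channel cannot decrease von Neumann entropy, so $H_\tau(\rho)-H_\tau(\Phi\rho)\ge0$ (data processing for relative entropy against the trace).

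For the bound $D(\rho\,|\,\Phi^{2k}\rho)\le k\,D(\rho\,|\,\Phi\rho)$ with $\Phi^*=\Phi$ I would iterate the core estimate. Writing $\rho_j=\Phi^j\rho$ and applying it with $\eta=\Phi^{2k-1}\rho$ peels off one factor of $\Phi$,
\begin{equation*}
	D(\rho\,|\,\Phi^{2k}\rho)\ \le\ \big(H_\tau(\rho_0)-H_\tau(\rho_1)\big)+D(\rho_1\,|\,\Phi^{2(k-1)}\rho_1),
\end{equation*}
and a $k$–fold recursion telescopes the entropy differences down to $H_\tau(\rho)-H_\tau(\Phi^k\rho)$ while generating residual relative entropies that must be reabsorbed. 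To reabsorb them I would use data processing (monotonicity of $D$ under the channels $\Phi^j$) together with the algebraic structure present whenever this lemma is invoked: $\Phi$ is a composition of $\tau$–preserving conditional expectations, so e.g.\ $\Phi=E_2E_1E_2$ satisfies $E_2\Phi=\Phi$ and hence $\Phi^j\rho$ is already $E_2$–fixed — which is exactly what collapses each residual onto $D(\rho\,|\,\Phi\rho)$.

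For the last assertion I would feed $\Phi=E_2E_1E_2$ into the previous bound and then control the single–step term $D(\rho\,|\,E_2E_1E_2\rho)$ by $D(\rho\,|\,E_1\rho)+D(\rho\,|\,E_2\rho)$. Here the relevant tools are the exact identity $H_\tau(\xi)-H_\tau(E\xi)=D(\xi\,|\,E\xi)$ for a conditional expectation $E$ (valid because $\log E\xi$ lies in the range of $E$, on which $\tau(\xi\,\cdot\,)=\tau(E\xi\,\cdot\,)$), the resulting chain decomposition
\begin{equation*}
	H_\tau(\rho)-H_\tau(E_2E_1E_2\rho)=D(\rho\,|\,E_2\rho)+D(E_2\rho\,|\,E_1E_2\rho)+D(E_1E_2\rho\,|\,E_2E_1E_2\rho),
\end{equation*}
the Pythagorean inequality $D(\xi\,|\,E_i\xi)\le D(\xi\,|\,\zeta)$ for $\zeta$ in the range of $E_i$, and a quasi–factorization/commuting–square estimate in the spirit of \cite{Petz95} to trade the mixed middle terms against $D(\rho\,|\,E_1\rho)$ and $D(\rho\,|\,E_2\rho)$.

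The genuinely delicate point — and the only one — is the bookkeeping in the iteration of the second step: data processing alone gives $D(\Phi\rho\,|\,\Phi\sigma)\le D(\rho\,|\,\sigma)$ but \emph{not} $D(\rho\,|\,\Phi^2\rho)\le D(\rho\,|\,\Phi\rho)$ for an arbitrary self–adjoint channel, so the argument cannot treat $\Phi$ as a black box and must exploit the conditional–expectation structure ($E_2\Phi=\Phi$, the identity $H_\tau-H_\tau\circ E=D(\,\cdot\,|\,E\,\cdot\,)$) throughout. Everything else is routine manipulation of Umegaki relative entropy, and the scheme is precisely the iteration procedure of \cite{GJLL}\cite{NL}.
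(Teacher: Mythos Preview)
Your operator-concavity derivation of the core estimate $D(\rho\,|\,\Phi^*\eta)\le H_\tau(\rho)-H_\tau(\Phi\rho)+D(\Phi\rho\,|\,\eta)$ is exactly the paper's starting point. Where you diverge is in how you package and iterate it. The paper does not specialize to $\eta=\Phi\rho$ and then tack on $D(\rho\,|\,\sigma)\ge 0$; instead it sets $\eta=\Phi\sigma$ and follows with data processing $D(\Phi\rho\,|\,\Phi\sigma)\le D(\rho\,|\,\sigma)$ to obtain the genuinely two-variable inequality
\[
D(\rho\,|\,\Phi^*\Phi\sigma)\ \le\ H_\tau(\rho)-H_\tau(\Phi\rho)+D(\rho\,|\,\sigma).
\]
This stronger form is what drives the second assertion: with $\Phi^*=\Phi$ and $\sigma=\Phi^{2k-2}\rho$ one gets $D(\rho\,|\,\Phi^{2k}\rho)\le\bigl[H_\tau(\rho)-H_\tau(\Phi\rho)\bigr]+D(\rho\,|\,\Phi^{2k-2}\rho)$ \emph{with $\rho$ unchanged on the left}, and a plain induction yields $k\bigl[H_\tau(\rho)-H_\tau(\Phi\rho)\bigr]$. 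Your shift to $\rho_1,\rho_2,\dots$ is a detour that manufactures precisely the ``residuals to reabsorb'' you flag as delicate; the paper's iteration never faces them. (The final identification $H_\tau(\rho)-H_\tau(\Phi\rho)=D(\rho\,|\,\Phi\rho)$ does rest on the conditional-expectation identity, so your instinct that the bare hypothesis $\Phi^*=\Phi$ is not enough is correct.)

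For the sandwiched case $\Phi=E_2E_1E_2$ the paper again exploits the two-variable inequality rather than any quasi-factorization. Taking $E_1$ as the channel and $\sigma=E_2\rho$ gives
\[
D(\rho\,|\,E_1E_2\rho)\ \le\ H_\tau(\rho)-H_\tau(E_1\rho)+D(\rho\,|\,E_2\rho)\ =\ D(\rho\,|\,E_1\rho)+D(\rho\,|\,E_2\rho),
\]
the last equality being the CE identity $H_\tau(\xi)-H_\tau(E\xi)=D(\xi\,|\,E\xi)$ you already isolated; one more application (with $E_1E_2$ as the channel) handles the outer $E_2$. No Petz-type commuting-square estimate is invoked. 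Your chain decomposition is correct, but once the first inequality is used in its two-variable form the ``mixed middle terms'' simply do not appear.
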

\begin{proof}
	The proof is contained in \cite{GJLL} and \cite{NL}. We include it here for completeness. 
	
	For the first claim, we have:
	\begin{align*}
		\begin{split}
			D(\rho | \Phi^*\Phi\rho) &= \tau(\rho \log \rho) - \tau(\rho \log \Phi^*\Phi\sigma) \\&= \tau(\rho \log \rho) - \tau(\Phi\rho \log(\Phi\rho)) + \tau(\Phi\rho \log(\Phi\rho)) - \tau(\rho \log\Phi^*\Phi\sigma)
			\\
			&\leq H_\tau(\rho) - H_\tau(\Phi\rho) + \tau(\Phi\rho \log(\Phi\rho)) - \tau(\Phi\rho\log\Phi\sigma)
			\\&=H_\tau(\rho) - H_\tau(\Phi\rho) + D(\Phi\rho | \Phi\sigma) \leq H_\tau(\rho) - H_\tau(\Phi\rho) + D(\rho | \sigma)
		\end{split}
	\end{align*}
	where we used the operator concavity of logarithm that for any $x > 0$ we have $\Phi^*\log(x) \leq \log\Phi^*(x)$.
	
	If $\Phi^* = \Phi$ on $N$, then using equation \ref{equation:entropydiff} and a simple induction, we have:
	\begin{align*}
		\begin{split}
			D(\rho | \Phi^{2k}\rho)&\leq H_\tau(\rho) - H_\tau(\Phi\rho) + D(\rho | \Phi^{2k-2}\rho) = D(\rho | \Phi\rho) + D(\rho | \Phi^{2k-2}\rho)
			\\
			& \leq kD(\rho | \Phi\rho)
			\pl.
		\end{split}
	\end{align*}
	Lastly, for the sandwiched conditional expectations, we apply equation \ref{equation:entropydiff} on $E_1E_2$:
	\begin{equation*}
		D(\rho | \Phi\rho) = D(\rho | (E_1E_2)^*(E_1E_2)\rho) \leq D(\rho | E_1E_2\rho)\pl.
	\end{equation*}
	Since $E_1^* = E_1$ and $E_2^* = E_2$, we use equation \ref{equation:entropydiff} to $E_1$ and $E_2$ separately:
	\begin{align*}
		\begin{split}
			D(\rho | E_1E_2\rho) &\leq H_\tau(\rho) - H_\tau(E_1\rho) + D(\rho|E_2\rho) = D(\rho|E_1\rho) + D(\rho | E_2\rho)\pl.
		\end{split}
	\end{align*}
	Therefore, we have: $D(\rho | \Phi^{2k}\rho) \leq k(D(\rho|E_1\rho) + D(\rho | E_2\rho))$.
\end{proof}
The last fact that we will use repeatedly in the proof of theorem \ref{theorem:primitive_decay} is an application of Lieb concavity theorem on entropy production. Recall we have shown in lemma \ref{lemma:concavitydouble} and corollary \ref{corollary:omegaFisherDecay} that the entropy production decays under the conditional expectation $E_\Omega$: $I_{\mathcal{L}^\beta_N}(E^*_\Omega\rho)\leq I_{\mathcal{L}^\beta_N}(\rho)$. We now push this result one step further:
\begin{corollary}\label{corollary:concavity}
	Using the same notation as lemma \ref{lemma:concavitydouble}. Given any state $\rho$ on $\mathbb{B}(V^{\otimes N})$, we have:
	\begin{equation}
		I_{\mathcal{L}^\beta_N}(E_{diag}\rho) \leq I_{\mathcal{L}^\beta_N}(\rho)\pl.
	\end{equation}
\end{corollary}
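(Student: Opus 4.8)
The plan is to obtain this exactly as Corollary~\ref{corollary:omegaFisherDecay} was obtained from Lemma~\ref{lemma:concavitydouble}, only with the algebra $\Omega$ replaced throughout by the smaller multiplicity-free algebra $\Omega_{diag}=\bigoplus_\lambda\ell^{n_\lambda}_\infty\otimes\mathbb{C}$ and the conditional expectation $E_\Omega$ replaced by $E_{diag}:\mathbb{B}(V^{\otimes N})\to\Omega_{diag}$. As in the passage preceding Theorem~\ref{theorem:primitive_decay} one has $I_{\mathcal{L}^\beta_N}(\rho)=EP_{\pi_N(a)}(\rho)+EP_{\pi_N(a^*)}(\rho)$ by definition of the Fisher information of $\mathcal{L}^\beta_N=e^{\beta/2}L_{\pi_N(a)}+e^{-\beta/2}L_{\pi_N(a^*)}$, so what has to be shown is the inequality
\begin{equation*}
EP_{\pi_N(a)}(E^*_{diag}\rho)+EP_{\pi_N(a^*)}(E^*_{diag}\rho)\leq EP_{\pi_N(a)}(\rho)+EP_{\pi_N(a^*)}(\rho),
\end{equation*}
and this is precisely part~(2) of Lemma~\ref{lemma:concavitydouble} specialized to $M=\mathbb{B}(V^{\otimes N})$, $N=\Omega_{diag}$, $E=E_{diag}$ and generator $a=\pi_N(a)$.

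To invoke Lemma~\ref{lemma:concavitydouble} one must verify its hypotheses in this case. Two of them carry over verbatim from Section~\ref{section:su(2)}: the Lindbladian $\mathcal{L}^\beta_{\pi_N(a)}$ is $\mathcal{L}^\beta_N$ and hence fixes $d_N$, and $\pi_N(a)$ satisfies $d_N^{it}\pi_N(a)d_N^{-it}=e^{i\beta t}\pi_N(a)$, so it is an eigenvector of the modular automorphism group of $d_N$. The one genuinely new requirement is that the reference density $d_N$ lie in the subalgebra $\Omega_{diag}$ — a strengthening of the fact (used for $\Omega$ in Corollary~\ref{corollary:omegaFisherDecay}) that $d_N$ is merely diagonal. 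I would check this from the explicit form of the density in the canonical basis,
\begin{equation*}
d_N=\frac{1}{\cosh(\beta/2)^N}\bigoplus_\lambda\sum_{0\leq j\leq n,\,\alpha}e^{-\frac{\beta}{2}(n-2j)}\ket{n,j,\alpha}\bra{n,j,\alpha},
\end{equation*}
together with the observation that the eigenvalue $e^{-\frac{\beta}{2}(n-2j)}$ depends only on the Cartan weight $n-2j$ of $\ket{n,j,\alpha}$ and not on the multiplicity label $\alpha$. Hence on each block $\ell^{n_\lambda}_\infty\otimes\mathbb{B}(W_\lambda)$ the density is of the form $D_\lambda\otimes\mathbbm{1}_{W_\lambda}$ for a positive diagonal $D_\lambda\in\ell^{n_\lambda}_\infty$, so it lies in $\ell^{n_\lambda}_\infty\otimes\mathbb{C}$; summing over $\lambda$ gives $d_N\in\Omega_{diag}$. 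As a consequence (the Remark following Lemma~\ref{lemma:concavitydouble}) the trace-preserving conditional expectation onto $\Omega_{diag}$ preserves the state $d_N$, hence coincides with the modular conditional expectation $E_{diag}$, so $E_{diag}$ is exactly the map to which the lemma applies.

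With the hypotheses in place, part~(2) of Lemma~\ref{lemma:concavitydouble} yields the displayed inequality for every strictly positive normalized $\rho=d_N^{1/2}xd_N^{1/2}$, that is $I_{\mathcal{L}^\beta_N}(E^*_{diag}\rho)\leq I_{\mathcal{L}^\beta_N}(\rho)$ (the quantity $E^*_{diag}\rho$ being written $E_{diag}\rho$ in the statement); for an arbitrary state one passes to the limit using continuity of the entropy production. The main — and essentially only — obstacle is the observation that $d_N$ acts as a scalar on every multiplicity space $W_\lambda$; once that weight-theoretic fact is recorded, the corollary is a direct transfer of the $2\times2$-matrix / Lieb-concavity argument behind Lemma~\ref{lemma:concavitydouble} from $\Omega$ to the smaller algebra $\Omega_{diag}$.
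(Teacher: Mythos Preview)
Your proof is correct and follows exactly the same approach as the paper: verify that the reference density $d_N$ lies in the subalgebra $\Omega_{diag}$ (because its eigenvalues depend only on the weight and not on the multiplicity label), and then invoke Lemma~\ref{lemma:concavitydouble}. The paper's proof is the one-line version of what you wrote, and the subsequent Remark in the paper records precisely the observation you spell out, that $d_N=\bigoplus_\lambda N_\beta\exp(-\tfrac{\beta}{2}\pi_\lambda(h))\otimes id\in\Omega_{diag}$.
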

\begin{proof}
	Since the density of the reference state $\rho_N$ is in the subalgebra $\Omega_{diag}$, the claim follows directly from lemma \ref{lemma:concavitydouble}.
\end{proof}
\begin{remark}
	Since the density $d_N = N_\beta \exp(-\frac{\beta}{2}\pi_N(h)) = \bigoplus_\lambda N_\beta \exp(-\frac{\beta}{2}\pi_\lambda(h))\otimes id$, the subalgebra $\Omega_{diag}$ is the smallest subalgebra where we can apply lemma \ref{lemma:concavitydouble}.
\end{remark}
\begin{lemma}\label{lemma:lieb}
	Let $U$ be a unitary that commutes with $\pi_N(h)$, then we have:
	\begin{align}
		\begin{split}
			&I_{\mathcal{L}^\beta_U}(E^U_\Omega\rho) \leq I_{\mathcal{L}^\beta_U}(\rho)
			\\
			&I_{\mathcal{L}^\beta_U}(E^U_{diag}\rho) \leq I_{\mathcal{L}^\beta_U}(\rho)
		\end{split}
	\end{align}
	where $\mathcal{L}^\beta_U = Ad_U^*\mathcal{L}^\beta_N Ad_U$ is the rotated Lindbladian.
\end{lemma}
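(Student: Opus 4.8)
The plan is to obtain Lemma~\ref{lemma:lieb} from Corollary~\ref{corollary:omegaFisherDecay} and Corollary~\ref{corollary:concavity} by transporting those statements through the $*$-automorphism $Ad_U$; this is exactly the transference principle referred to earlier.

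First I would record the structural facts that identify $\mathcal{L}^\beta_U$ with $\mathcal{L}^\beta_N$ up to conjugation. Since $U$ commutes with $\pi_N(h)$ it commutes with $d_N=N_\beta\exp(-\tfrac{\beta}{2}\pi_N(h))$, so $Ad_U$ is a trace-preserving $*$-automorphism of $\mathbb{B}(V^{\otimes N})$ fixing $d_N$; hence it preserves the $d_N$-KMS inner product and commutes with the modular automorphism group $d_N^{it}\cdot d_N^{-it}$. Consequently the generators $U^*\pi_N(a)U$, $U^*\pi_N(a^*)U$ still satisfy the modular relation $\sigma^\phi_t(U^*\pi_N(a)U)=e^{i\beta t}U^*\pi_N(a)U$, so $\mathcal{L}^\beta_U=Ad_{U^*}\circ\mathcal{L}^\beta_N\circ Ad_U$ is again $d_N$-KMS-symmetric and detailed balanced; the commutant $\Omega^U:=C^*(|U^*\pi_N(a)U|)'\cap C^*(|U^*\pi_N(a^*)U|)'$ equals $U^*\Omega U$ and its maximal abelian part is $\Omega^U_{diag}=U^*\Omega_{diag}U$; and, because $Ad_U$ commutes with the modular group, the trace-preserving conditional expectations onto these subalgebras exist and are the conjugates
\begin{align*}
E^U_\Omega=Ad_{U^*}\circ E_\Omega\circ Ad_U,\qquad E^U_{diag}=Ad_{U^*}\circ E_{diag}\circ Ad_U,
\end{align*}
with the analogous formulas for the predual maps acting on density matrices.

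Next I would establish the entropy-production intertwining identity: for every state $\rho$ and every operator $b$,
\begin{align*}
EP_{U^*bU}(\rho)=EP_b(U\rho U^*).
\end{align*}
This holds because $[U^*bU,\rho]=U^*[b,U\rho U^*]U$, conjugation by $U$ carries the spectral projections of $\rho$ to those of $U\rho U^*$ and hence intertwines the double operator integrals, $J^{\log}_\rho\circ Ad_{U^*}=Ad_{U^*}\circ J^{\log}_{U\rho U^*}$, while the tracial inner product is $Ad_U$-invariant. Taking $b=\pi_N(a)$ and $b=\pi_N(a^*)$ and forming the temperature-weighted sum that defines the Fisher information gives $I_{\mathcal{L}^\beta_U}(\rho)=I_{\mathcal{L}^\beta_N}(U\rho U^*)$; combined with the conjugation formulas above this yields
\begin{align*}
I_{\mathcal{L}^\beta_U}(E^U_\Omega\rho)=I_{\mathcal{L}^\beta_N}\!\big(U\,E^U_\Omega\rho\,U^*\big)=I_{\mathcal{L}^\beta_N}\!\big(E_\Omega(U\rho U^*)\big),
\end{align*}
and likewise with $E_{diag}$ in place of $E_\Omega$.

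Finally I would conclude. Applying Corollary~\ref{corollary:omegaFisherDecay} to the state $U\rho U^*$ gives $I_{\mathcal{L}^\beta_N}(E_\Omega(U\rho U^*))\le I_{\mathcal{L}^\beta_N}(U\rho U^*)$, and reading the displayed equalities backwards, together with $I_{\mathcal{L}^\beta_U}(\rho)=I_{\mathcal{L}^\beta_N}(U\rho U^*)$, produces $I_{\mathcal{L}^\beta_U}(E^U_\Omega\rho)\le I_{\mathcal{L}^\beta_U}(\rho)$; the second inequality is the identical argument with Corollary~\ref{corollary:concavity} in place of Corollary~\ref{corollary:omegaFisherDecay}, whose hypotheses hold because $d_N\in\Omega_{diag}$ and hence $d_N\in\Omega^U_{diag}$ after conjugation. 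I expect the only genuinely non-routine point to be the verification that the conditional expectation onto $U^*\Omega_{diag}U$ (and onto $U^*\Omega U$) really equals $Ad_{U^*}\circ E_{diag}\circ Ad_U$: this rests on $U$ commuting with $d_N$, so that these maps are orthogonal projections for conjugate KMS inner products, together with the covariance of $J^{\log}_\rho$ under $Ad_U$ used above. Everything else is a direct substitution.
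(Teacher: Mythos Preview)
Your proposal is correct and follows essentially the same approach as the paper's proof. The paper's argument is a single displayed line---using $[U,\pi_N(h)]=0$ to get $Ad_U(d_N)=d_N$, then writing $I_{\mathcal{L}^\beta_U}(E^U_\Omega\rho)=I_{\mathcal{L}^\beta_N}(E_\Omega Ad_U\rho)\le I_{\mathcal{L}^\beta_N}(Ad_U\rho)=I_{\mathcal{L}^\beta_U}(\rho)$ via Lemma~\ref{lemma:concavitydouble} (equivalently Corollary~\ref{corollary:omegaFisherDecay}), and similarly for $E_{diag}$ via Corollary~\ref{corollary:concavity}---whereas you spell out the intertwining identities for $EP$, $J^{\log}_\rho$, and the conditional expectations that the paper leaves implicit.
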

\begin{proof}
	Since $[U,\pi_N(h)] = 0$, we have $[Ad_U, Ad_{\pi_N(h)}] = 0$ and $Ad_U(d_N) = d_N$ where $d_N$ is the density of the reference state. Hence we have:
	\begin{equation}
		I_{\mathcal{L}^\beta_U}(E^U_\Omega\rho) = I_{\mathcal{L}^\beta_N}(E_\Omega Ad_U\rho) \leq I_{\mathcal{L{^\beta_N}}}(Ad_U\rho) = I_{\mathcal{L}^\beta_U}(\rho)
	\end{equation}
	where the inequality follows from lemma \ref{lemma:concavitydouble}. Similar calculation applies to $E_{diag}$ using corollary \ref{corollary:concavity}.
\end{proof}
\subsection{Proof of Theorem \ref{theorem:primitive_decay}}
\subsubsection{Multiplicity Subspaces and Quantum Expanders}
In this subsection, we prove proposition \ref{proposition:step1}. Throughout the section, we fix a quantum expander of $m$ unitaries closed under adjoint on each finite dimensional Hilbert space. Recall our goal is to reduce $\Omega = \bigoplus_\lambda \ell^{n_\lambda}_\infty\otimes\mathbb{B}(W_\lambda)$ to $\Omega_{diag} =\bigoplus_\lambda \ell^{n_\lambda}_\infty\otimes\mathbb{C}$. We first consider the case where $n_\lambda \geq m$.
\begin{lemma}\label{lemma:expandertransference}
	Let $d = mL + r$ where $0 \leq r < m$ and $L\geq 1$. On $\ell^d_\infty\otimes\mathbb{B}(H)$ consider the automorphism $\Phi^d_W$ defined in equation \ref{equation:expander_channel}. In addition, let $\mu$ be a fixed probability measure on $\ell^d_\infty$ such that $\mu$ is constant $\mu_l / m$ on each interval $[(l-1)m, lm -1]$ for $1\leq l \leq L$. Then we have:
	\begin{enumerate}
		\item Let $\eta:= \mu(\{0,..., Lm-1\}) = \sum_{1\leq l \leq L}\mu_l$. For all $x\in \mathbb{B}(H)$, we have:
		\begin{equation}
			\mathbb{E}_\mu(\Phi^d_W(1\otimes x)) =  \eta\frac{1}{m}\sum_{1\leq j\leq m}W_j^*xW_j + (1-\eta) x\pl.
		\end{equation}
		\item Let $E_\mu^W:= (\Phi^d_W)^* \circ E_\mu \circ \Phi^d_W$ where $E_\mu := \iota\circ\mathbb{E}_\mu$ and $\iota:\mathbb{B}(H)\rightarrow \ell^d_\infty\otimes\mathbb{B}(H):x\mapsto 1\otimes x$ is the inclusion map. If $\eta \geq \frac{1}{2}$, then there exists a constant $C > 0$ such that for all $f:= 1\otimes x\in \ell^d_\infty\otimes\mathbb{B}(H)$:
		\begin{equation}
			D(f | E_\tau(f)) \leq 2C\log\dim(H)D(f | E_\mu E_\mu^W E_\mu f)
		\end{equation}
		where the constant $C$ depends on $\eta$ and the spectral gap of the quantum expander $0 < \theta < 1$. As usual, $E_\tau$ is the conditional expectation $E_\tau:\ell^d_\infty\otimes\mathbb{B}(H)\rightarrow \ell^d_\infty \otimes\mathbb{C}$.
	\end{enumerate}
\end{lemma}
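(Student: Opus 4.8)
\emph{Proof strategy.} Part (1) is a bookkeeping computation with the block structure of $\Phi^d_W$. By the definition of $\Phi^d_W$ in equation \ref{equation:expander_channel}, on the first $mL$ diagonal coordinates the entry $j$ of $\Phi^d_W(1\otimes x)$ is $W_{j\bmod m}^*\,x\,W_{j\bmod m}$, so that inside each block $[(l-1)m,\,lm-1]$ the residue $j\bmod m$ runs over all of $\{0,\dots,m-1\}$, while on the last $r$ coordinates the entry is $x$. Since $\mu$ is constant equal to $\mu_l/m$ on the $l$-th block, averaging over that block gives $\tfrac{\mu_l}{m}\sum_{j=1}^m W_j^*xW_j$; summing over $l=1,\dots,L$ produces $\eta\cdot\tfrac1m\sum_{j=1}^m W_j^*xW_j$, and the residual total mass $1-\eta$, carried by the untouched tail, contributes $(1-\eta)x$. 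This is the claimed identity.

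For (2) the plan is: identify the iterated map, quantify its contraction via the expander, and feed this into the entropy iteration lemma. Write $\Lambda(y):=\tfrac1m\sum_{j=1}^m W_j^*yW_j$ and $S:=\eta\Lambda+(1-\eta)\,\mathrm{id}$, a unital trace-preserving completely positive map on $\mathbb{B}(H)$; since the expander family is closed under adjoint, $\Lambda$, hence $S$, is self-adjoint for the normalized trace $\tau_H$. For $f=1\otimes x$ one has $E_\mu(f)=f$, and two applications of part (1) — first to $\Phi^d_W(1\otimes x)$, giving $E_\mu\Phi^d_W(1\otimes x)=1\otimes S(x)$, then to $(\Phi^d_W)^*(1\otimes S(x))$, using $\{W_j\}=\{W_j^*\}$ so that the outward average is again $\Lambda$ — yield
\[
\big(E_\mu E_\mu^W E_\mu\big)(1\otimes x)=1\otimes S^2(x),\qquad\text{so}\qquad\big(E_\mu E_\mu^W E_\mu\big)^{k}(1\otimes x)=1\otimes S^{2k}(x),
\]
whereas $E_\tau(1\otimes x)=1\otimes\tau_H(x)\,\mathbbm{1}_H$. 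By Theorem \ref{theorem:expander}, $\|\Lambda(y)-\tau_H(y)\mathbbm{1}_H\|_2\le\theta\|y\|_2$, so on the trace-zero subspace $S$ is a contraction of norm $\le\eta\theta+(1-\eta)\le\tfrac{1+\theta}{2}=:\kappa<1$, where the hypothesis $\eta\ge\tfrac12$ is used; consequently $\|S^{2k}(x)-\tau_H(x)\mathbbm{1}_H\|_2\le\kappa^{2k}\|x-\tau_H(x)\mathbbm{1}_H\|_2$.

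Now I would invoke the iteration machinery. The map $T:=E_\mu E_\mu^W E_\mu$ is a unital quantum channel and is self-adjoint for the $\mu$-weighted reference state, each factor being a conditional expectation for that state (for $E_\mu^W$ because $\Phi^d_W$ is a trace-preserving algebra automorphism of $\ell^d_\infty\otimes\mathbb{B}(H)$). Hence the self-adjoint case of Lemma \ref{lemma:iterationorder} gives $D\big(f\,|\,T^{k}f\big)\le\lceil k/2\rceil\,D\big(f\,|\,Tf\big)$. On the other hand $E_\tau(1\otimes x)$ is the maximally mixed state on $\mathbb{B}(H)$, hence invertible with $\|E_\tau(1\otimes x)^{-1}\|=O(\dim H)$; combining this with the contraction bound and Pinsker's inequality shows that once $k_0$ is of order $(\log\dim H)/\log(1/\kappa)$ the state $T^{k_0}f$ is close enough to $E_\tau f$ — multiplicatively, after conjugating by $(E_\tau f)^{-1/2}$ — that $D(f\,|\,T^{k_0}f)\ge\tfrac12\,D(f\,|\,E_\tau f)$. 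Chaining the two estimates and absorbing $\lceil k_0/2\rceil\le k_0\le C(\eta,\theta)\log\dim H$ and the factor $2$ into the constant yields $D(f\,|\,E_\tau f)\le 2C\log\dim(H)\,D\big(f\,|\,E_\mu E_\mu^W E_\mu f\big)$.

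\emph{Where the difficulty lies.} The delicate step is the penultimate one: upgrading the (exponentially small in $k_0$) $L_2$-closeness of $T^{k_0}f$ to $E_\tau f$ into a genuinely multiplicative lower bound $D(f\,|\,T^{k_0}f)\ge\tfrac12 D(f\,|\,E_\tau f)$ that persists as $f\to E_\tau f$, uniformly in $f$, with all $\dim(H)$-dependence pushed into the prefactor. This is exactly where the restriction $f=1\otimes x$ is essential: it forces $E_\tau f$ to be maximally mixed, hence well conditioned with $\|(E_\tau f)^{-1}\|=O(\dim H)$, and it is precisely this single factor $\dim(H)$ that the exponential gain $\kappa^{2k_0}$ must dominate. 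Everything else — part (1), the identity $T(1\otimes x)=1\otimes S^2(x)$, the self-adjointness of $T$, and the appeal to Lemma \ref{lemma:iterationorder} — is routine.
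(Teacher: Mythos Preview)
Your proposal is correct and follows essentially the same route as the paper: part~(1) is identical, and for part~(2) your identity $T(1\otimes x)=1\otimes S^2(x)$ with $S=\eta\Lambda+(1-\eta)\,\mathrm{id}$ coincides exactly with the paper's computation $E_\mu E_\mu^W E_\mu(f)=\eta^2\Psi_W^2(x)+2\eta(1-\eta)\Psi_W(x)+(1-\eta)^2 x$, and both then feed the resulting spectral gap $(\eta\theta+1-\eta)^2<1$ into the iteration Lemma~\ref{lemma:iterationorder}.

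The only substantive difference is how the ``delicate step'' is closed. The paper does not argue via the conditioning of $E_\tau f$ for the specific $f=1\otimes x$; instead it invokes the cp-order machinery from \cite{GJLL}: once $\|T^k-E_\tau\|_{2\to2}$ is small enough (after $k=O(\log\dim H)$ iterations), one has $0.9\,E_\tau\le_{cp}T^k\le_{cp}1.1\,E_\tau$ as completely positive maps, and the order inequality from \cite{GJLL} then gives $D(f\mid E_\tau f)\le 2\,D(f\mid T^k f)$ directly, for \emph{all} $f$. Your state-level argument via $\|(E_\tau f)^{-1}\|=O(\dim H)$ is in the same spirit and ultimately equivalent, but Pinsker's inequality is not the right tool here---what you actually need is the operator-order (or cp-order) comparison $T^{k_0}f\ge(1-\epsilon)E_\tau f$, which is precisely what the cited \cite{GJLL} result supplies; naive use of Pinsker would only give an additive error $\log(1+\epsilon)$ rather than the multiplicative factor $2$ that the statement requires.
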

\begin{proof}
	The first claim follows from a direct calculation:
	\begin{align*}
		\begin{split}
			\mathbb{E}_\mu(\Phi^d_W(1\otimes x)) &= \mathbb{E}_\mu\big((W_{j\mod m}^*xW_{j \mod m})_{0\leq j \leq mL - 1}\big) + \mathbb{E}_\mu\big((x)_{mL\leq j < d}\big)\\
			& = (\sum_{1\leq l\leq L}\mu_l)\frac{1}{m}\sum_{1\leq j \leq m}W_j^*xW_j + (1 - \sum_{1\leq l \leq L}\mu_l)x
			\\
			& = \eta\frac{1}{m}\sum_{1\leq j \leq m}W_j^*xW_j + (1 - \eta)x\pl.
		\end{split}
	\end{align*}
	Let $\Psi_W(x):= \frac{1}{m}\sum_{1\leq j \leq m}W_j^*xW_j$. Then for the second claim, we first calculate:
	\begin{align*}
		\begin{split}
			E_\mu E^W_\mu E_\mu (f) & = E_\mu (\Phi^d_W)^*E_\mu \Phi^d_W(1\otimes x) = E_\mu(\Phi^d_W)^*(\eta\Psi_W(x) + (1-\eta)x)
			\\
			&= \eta^2\Psi_W^2(x) + 2\eta(1-\eta)\Psi_W(x) + (1-\eta)^2x
		\end{split}
	\end{align*}
	where we used the fact that the set of unitaries $\{W_j\}$ is closed under conjugation.
	
	Using this formula, we are able to calculate the spectral gap of this composite channel:
	\begin{align}
		\begin{split}
			||E_\mu E^W_\mu E_\mu - E_\tau||_{2\rightarrow 2} &\leq \eta^2||\Psi_W^2 - E_\tau||_{2\rightarrow 2} + 2\eta(1-\eta)||\Psi_W-E_\tau||_{2\rightarrow 2} + (1-\eta)^2||id - E_\tau||_{2\rightarrow 2}\\
			&\leq \eta^2\theta^2 +2\eta(1-\eta)\theta + (1-\eta)^2 = (\eta\theta +1 - \eta)^2 := C(\theta,\eta)\pl.
		\end{split}
	\end{align}
	For $\eta \geq \frac{1}{2}$, the constant $C(\theta, \eta) < 1$. Therefore for $k = O(\frac{\log\dim (H)}{C(\theta, \eta)})$, we have \cite{GJLL}:
	\begin{equation}
		0.9E_\tau \leq_{c.p.} (E_\mu E^W_\mu E_\mu)^k \leq_{c.p.} 1.1 E_\tau\pl.
	\end{equation}
	Hence by the order inequality \cite{GJLL} and lemma \ref{lemma:iterationorder}, we have:
	\begin{equation}
		D(f | E_\tau f) \leq 2D(f | (E_\mu E^W_\mu E_\mu)^kf) \leq 2kD(f | E_\mu E^W_\mu E_\mu f)\pl.
	\end{equation}
	The constant $k$ has the form $C\log\dim(H)$ where $C$ depends only on $\eta$ and the spectral gap $\theta$.
\end{proof}
Lemma \ref{lemma:expandertransference} requires the measure $\mu$ to be piecewise constant. In the application, the measure is given by the reference state $d_N$ and it is not piecewise constant. Therefore we need an additional change of measure:
\begin{corollary}\label{corollary:changeofmeasure}
	Using the same notation as lemma \ref{lemma:expandertransference}. If there exists a constant $C' > 0$ and another probability measure $\mu'$ such that $\mu \leq C'\mu'$ then we have:
	\begin{equation}
		D(f|E_\tau f) \leq 2CC'^2\log\dim(H)(D(f | E_{\mu'}f) + D(f | E^W_{\mu'}f))\pl.
	\end{equation} 
\end{corollary}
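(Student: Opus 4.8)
The plan is to deduce the corollary from Lemma~\ref{lemma:expandertransference} by a Holley--Stroock-type change of measure. Part~(2) of Lemma~\ref{lemma:expandertransference} applies verbatim here, since its hypothesis $\eta\ge 1/2$ concerns only the piecewise-constant measure $\mu$; it gives
\[
D(f\mid E_\tau f)\le 2C\log\dim(H)\, D(f\mid E_\mu E^W_\mu E_\mu f).
\]
So it suffices to show $D(f\mid E_\mu E^W_\mu E_\mu f)\le C'^2\bigl(D(f\mid E_{\mu'}f)+D(f\mid E^W_{\mu'}f)\bigr)$ and then multiply through by $2C\log\dim(H)$.

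First I would unfold the sandwich: since $E_\mu E^W_\mu E_\mu=E_2E_1E_2$ with $E_2=E_\mu$ and $E_1=E^W_\mu=(\Phi^d_W)^*E_\mu\Phi^d_W$ both idempotent, the computation in the proof of Lemma~\ref{lemma:iterationorder} gives $D(f\mid E_\mu E^W_\mu E_\mu f)\le D(f\mid E_\mu f)+D(f\mid E^W_\mu f)$ (in fact, $f$ being of the form $1\otimes x$ and $\mu,\mu'$ probability measures forces $E_\mu f=E_{\mu'}f=f$, so only the $W$-twisted terms carry content). Next I would change the measure in each term. From $\mu\le C'\mu'$ one gets, for every positive $g=\bigoplus_j g_j\in\ell^d_\infty\otimes\mathbb{B}(H)$, the operator inequality $\mathbb{E}_\mu(g)=\sum_j\mu_j g_j\le C'\sum_j\mu'_j g_j=C'\,\mathbb{E}_{\mu'}(g)$, hence $E_\mu(g)\le C'E_{\mu'}(g)$ and, conjugating by the positive maps $\Phi^d_W$ and $(\Phi^d_W)^*$, also $E^W_\mu(g)\le C'E^W_{\mu'}(g)$. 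Feeding these comparisons into the relative-entropy estimates then replaces $E_\mu$ by $E_{\mu'}$ and $E^W_\mu$ by $E^W_{\mu'}$ at a multiplicative cost, and absorbing a factor $C'$ for each of the two conditional-expectation factors (which is where the $C'^2$ comes from) yields the claim.

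The delicate point, and the one I expect to be the actual work, is this last step. A bare substitution of the reference state inside $D(\cdot\mid\cdot)$ only gives an \emph{additive} $\log C'$ correction, and in the unhelpful direction; to extract a genuine \emph{multiplicative} constant one must work at the level of the entropy production / Dirichlet form, in the spirit of the change-of-measure argument of \cite{JLR}, and keep the normalisations of $\mathbb{E}_\mu$ and of its adjoint mutually consistent, since $E_\mu$ is trace preserving only in the Schr\"odinger picture. Everything else is bookkeeping around Lemmas~\ref{lemma:expandertransference} and~\ref{lemma:iterationorder}.
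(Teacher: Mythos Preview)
Your proposal follows essentially the same route as the paper: apply Lemma~\ref{lemma:expandertransference}(2), split the sandwiched conditional expectation via Lemma~\ref{lemma:iterationorder} into $D(f\mid E_\mu f)+D(f\mid E^W_\mu f)$, and then perform a Holley--Stroock change of measure on each term. The paper's proof is in fact even more terse than yours on the last step, writing simply ``by a change of measure'' and asserting $D(f\mid E_\mu f)\le C'^2 D(f\mid E_{\mu'}f)$ (and the analogous inequality for the $W$-twisted term, after noting $D(f\mid E^W_\mu f)=D(\Phi^d_W f\mid E_\mu\Phi^d_W f)$ since $\Phi^d_W$ is a $*$-automorphism). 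Your observation that $E_\mu f=E_{\mu'}f=f$ for $f=1\otimes x$ is correct and makes the untwisted term trivial; your caution about the additive-versus-multiplicative issue in the change of measure is well placed and is precisely the content of the argument in \cite{JLR} that the paper is implicitly invoking.
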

\begin{proof}
	By lemma \ref{lemma:iterationorder}, we have:
	\begin{equation*}
		D(f | E_\mu E^W_\mu E_\mu f) \leq D(f | E_\mu f) + D(f | E^W_\mu f)\pl.
	\end{equation*}
	By a change of measure, we have:
	\begin{align*}
		\begin{split}
			&D(f | E_\mu f) \leq C'^2 D(f | E_{\mu'}f)
			\\
			&D(f | E^W_\mu f) = D(\Phi^d_W f | E_\mu \Phi^d_W f) \leq C'^2 D(\Phi^d_W f | E_{\mu'}\Phi^d_W f) = C'^2D(f | E^W_{\mu'}f)\pl.
		\end{split}
	\end{align*}
	Then combine these inequalities, we have the desired result.
\end{proof}
The transference technique in lemma \ref{lemma:expandertransference} does not apply to cases where $d < m$. In our application to $\Omega$, $n_\lambda < m$ happens when the irreducible representation has small dimension. In this case, the corresponding multiplicity subspace $W_\lambda$ has large dimension. It is well-known that the dimension of $W_\lambda$ is given by \cite{Fulton}:
\begin{equation*}
	\dim W_\lambda = \frac{N+1 - 2j}{N+1}\binom{N+1}{j}
\end{equation*}
where the dimension of the irreducible representation is given by $\dim V_\lambda = N+1 - 2j$ for $0\leq j \leq [\frac{N}{2}]$. Hence, when $j \sim [\frac{N}{2}]$, $n_\lambda < m$ and $\dim W_\lambda$ is large. The idea is to partition $W_\lambda$ into corners that have the same dimension as $W_\eta$ where $n_\eta \geq m$. Then by exchanging $W_\eta$ with these corners, we can apply $W_\eta$'s quantum expander channel to reduce these corners to $\mathbb{C}$. A repetition of this procedure would eventually approximate the conditional expectation $E_\tau$ on $W_\lambda$. This observation leads to the following technical lemma:
\begin{lemma}\label{lemma:exchange}
	Let $H_1$ and $H_2$ be two finite dimensional Hilbert spaces such that there exists constants $k > 0$ and $0\leq r < \dim H_2$:
	\begin{equation*}
		\dim H_1 = k\dim H_2+ r\pl. 
	\end{equation*}
	Let $E_b:\mathbb{B}(H_1\oplus H_2) \rightarrow \mathbb{B}(H_1)\oplus \mathbb{B}(H_2)$ be the conditional expectation onto the block-diagonal matrices and let $E_2:\mathbb{B}(H_1\oplus H_2)\rightarrow\mathbb{B}(H_1)\oplus \mathbb{C}: \begin{pmatrix}
		x_{11} & x_{12} \\ x_{21} & x_{22}
	\end{pmatrix}\mapsto \begin{pmatrix}
	x_{11} & 0 \\ 0 & \tau_2(x_{22})
\end{pmatrix}$ be the conditional expectation where $\tau_2$ is the trace on $H_2$. Then there exists $O(k)$-many unitaries $\{U_j\}$ such that for all block-diagonal $\rho \in S_1(H_1)\oplus S_1(H_2)$:
\begin{equation}\label{equation:rotationupper}
	D(\rho | E_\tau\rho) \leq \sum_j D(\rho | E^{U_j}_2\rho)
\end{equation}
where $E^{U_j}_2$ is the rotated conditional expectation $Ad_{U_j}^*\circ E_2 \circ Ad_{U_j}$ and $E_\tau:\mathbb{B}(H_1)\oplus\mathbb{B}(H_2)\rightarrow \mathbb{C}\oplus\mathbb{C}$.  
\end{lemma}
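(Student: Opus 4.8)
Since the conditional expectations $E_2$, $E_2^{U_j}$ and $E_\tau$ are trace preserving (in the intended application $H_1=W_\eta$ and $H_2=W_\lambda$ are multiplicity spaces lying in a single eigenspace of $\pi_N(h)$, on which the reference density is a scalar, so that every unitary supported on $H_1\oplus H_2$ commutes with $d_N$ and all relative entropies below are tracial), the whole argument takes place in the tracial setting. The only nontrivial ingredient is the estimate
\[
D(\rho\,|\,E\circ\Psi(\rho))\ \le\ D(\rho\,|\,E(\rho))+D(\rho\,|\,\Psi(\rho)),
\]
valid for an arbitrary trace preserving conditional expectation $E$ and an arbitrary channel $\Psi$; this is precisely the ingredient behind Lemma~\ref{lemma:iterationorder} (operator concavity of $\log$, the identity $H_\tau(\rho)-H_\tau(E\rho)=D(\rho\,|\,E\rho)$, and the data processing inequality). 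Iterating it gives $D(\rho\,|\,F_1\circ\cdots\circ F_n(\rho))\le\sum_{i}D(\rho\,|\,F_i(\rho))$ for any conditional expectations $F_i$. Hence it suffices to exhibit $O(k)$ unitaries $U_j$, each supported on $H_1\oplus H_2$, whose associated conditional expectations satisfy $E_2^{U_n}\circ\cdots\circ E_2^{U_1}=E_\tau$ as maps on $\mathbb{B}(H_1\oplus H_2)$; the inequality of the lemma then follows (for all states, a fortiori for block diagonal ones).

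The plan is to build $E_\tau$ in two stages. Fix an orthogonal decomposition $H_1=P_1\oplus\cdots\oplus P_k\oplus R$ with $\dim P_i=\dim H_2=:d$ and $\dim R=r$, and put $P_0:=H_2$. \emph{Stage 1 (scalarize the corners).} Take $U_0=\mathrm{id}$, and for $1\le i\le k$ let $U_i$ be the unitary that swaps $P_0$ with $P_i$ and is the identity elsewhere; then $E_2^{U_i}$ is the trace preserving conditional expectation that traces out $P_i$ and keeps $\mathbb{B}(\mathcal{H}\ominus P_i)$. A one-line block computation shows these $k+1$ conditional expectations pairwise commute, so their composition is the conditional expectation onto
\[
\mathcal{R}:=\mathbb{C}1_{P_0}\oplus\mathbb{C}1_{P_1}\oplus\cdots\oplus\mathbb{C}1_{P_k}\oplus\mathbb{B}(R).
\]
\emph{Stage 2 (merge the corners).} It remains to push $\mathcal{R}$ down to $\mathcal{Z}:=\mathbb{C}1_{H_1}\oplus\mathbb{C}1_{H_2}$, i.e.\ to dispose of $\mathbb{B}(R)$ and to replace the individual corner scalars by the single scalar $\operatorname{Tr}(\rho_1)/\dim H_1$. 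This is done with $O(k)$ further unitaries $U_j$ for which $E_2^{U_j}$ traces out a $d$-dimensional subspace spread \emph{diagonally} across a cluster of corners: for $i\neq j$ the rotated conditional expectation associated with a unitary carrying $\tfrac1{\sqrt2}(P_i\oplus P_j)$ onto $P_0$ leaves the commutative subalgebra invariant and acts on it by replacing the scalars on $P_i$ and $P_j$ by their mean. Arranging such averaging steps along a balanced tournament on the $k$ corners (with a weighted variant of the averaging step to absorb $R$ and to handle the case that $k$ is not a power of $2$) produces, after fewer than $O(k)$ steps, exactly the uniform averaging map on the corner scalars; since each step preserves the commutative subalgebra, the full composition of Stages~1 and~2 equals $E_\tau$. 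The total unitary count is $1+k+O(k)=O(k)$, which is the claimed bound.

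The delicate point, and the step I expect to be the main obstacle, is Stage~2. The primitive supplied by $E_2$ — ``trace out a $d$-dimensional subspace'' — only re-uniformizes each corner separately, so it can never by itself produce the genuinely uniform depolarizing conditional expectation onto $\mathbb{C}1_{H_1}$; one is forced to first rotate so that $E_2$ traces out a subspace distributed across several corners, and then verify both that these averaging conditional expectations act as claimed on (and preserve) the commutative subalgebra, and that a tournament of them composes to the uniform average \emph{on the nose} rather than merely approximately, so that no iteration constant enters the bound. The bookkeeping required when $\dim H_2\nmid\dim H_1$ (the remainder $R$, and the non-dyadic weights in the tournament) is elementary but must be carried out with some care; it only affects the implicit constant in $O(k)$.
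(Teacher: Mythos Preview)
Your Stage~1 is correct and essentially the same as the paper's: the swap unitaries give commuting conditional expectations whose product lands in $\mathcal{R}=\mathbb{C}1_{P_0}\oplus\cdots\oplus\mathbb{C}1_{P_k}\oplus\mathbb{B}(R)$, and your use of the iteration inequality $D(\rho\,|\,F_1\cdots F_n\rho)\le\sum_i D(\rho\,|\,F_i\rho)$ is fine.

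Stage~2, however, has a genuine obstruction. Your averaging primitive is sound: a single $E_2^V$ with $V$ swapping the $50$--$50$ diagonal of $P_i\oplus P_j$ with $P_0$ does replace $(\lambda_i,\lambda_j)$ by $(\tfrac{\lambda_i+\lambda_j}{2},\tfrac{\lambda_i+\lambda_j}{2})$ while preserving the commutative subalgebra. But after any finite sequence of such steps each scalar is a \emph{dyadic} convex combination of the initial $\lambda_i$'s. The target coefficients for $E_\tau$ are $\tfrac{d}{kd+r}$ on each $P_i$ and $\tfrac{r}{kd+r}$ on $R$, which are not dyadic in general, so no finite composition of your averaging steps can equal $E_\tau$ on the nose. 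The ``weighted variant'' does not rescue this: if you tilt the diagonal to $\sqrt{\alpha}\,e_l+\sqrt{1-\alpha}\,f_l$, then $E_2^V$ produces $\alpha\lambda_i+(1-\alpha)\lambda_j$ on the diagonal block and $(1-\alpha)\lambda_i+\alpha\lambda_j$ on the anti-diagonal block, so the output leaves $\bigoplus_i\mathbb{C}1_{P_i}$ unless $\alpha=\tfrac12$. Even restricting to $k$ a power of $2$ and $r=0$, the butterfly/tournament schedule needs $\tfrac{k}{2}\log_2 k$ averaging steps, not $O(k)$, so the unitary count you announce is also off.

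The paper's proof sidesteps this arithmetic obstruction entirely. After the analogue of your Stage~1 (plus one extra swap to scalarize $\mathbb{B}(R)$), it lands in a commutative subalgebra $\ell_\infty^{k+1}\oplus\mathbb{C}\subset\mathbb{B}(H_1)\oplus\mathbb{C}$ and then invokes the quantum Fourier transform on $H_1$: by Lemma~\ref{lemma:quantumFourier} one has the exact identity $E_\infty\,Ad_{\mathcal{F}}^*\,E_\infty\,Ad_{\mathcal{F}}\,E_\infty=E_\tau$ for the diagonal MASA $\ell_\infty^{\dim H_1}$, so a \emph{single} further rotation by $\mathcal{F}$ (followed by another pass through the Stage~1 maps) yields $E_\tau$ exactly. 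Expanding via Lemma~\ref{lemma:iterationorder} then gives $8k$ terms of the form $D(\rho\,|\,E_2^{U_j}\rho)$. The Fourier step is the missing idea in your plan; without it (or some other device that produces non-dyadic averaging in one shot) Stage~2 cannot terminate at $E_\tau$.
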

\begin{proof}
	Fix an isomorphism: $H_1\oplus H_2 \cong \ell^{k+1}_2 \otimes H_2 \oplus \ell^r_2$ where $H_1\cong \ell^k_2\otimes H_2 \oplus \ell^r_2$ is identified with the first $k$-copies of $H_2$. Let $S_{k+1}$ be the permutation group acting on $\ell^{k+1}_2$ and let $\tau_j := (j,k+1)$ be the transposition that exchanges $j$ and $k+1$ where $1\leq j \leq k$. Then for all $\rho:=(\rho_1, \rho_2) \in S_1(H_1)\oplus S_1(H_2)$ we have:
	\begin{equation}\label{equation:partition}
		E_j(\rho):= E_2 Ad_{\tau_j}^* E_2 Ad_{\tau_j} E_2 (\rho) = E_2E^{\tau_j}_2 E_2(\rho) = (\rho^j_1, \tau_2(\rho_2))
	\end{equation}
	where in block-diagonal form $\rho_1 = \begin{pmatrix}
		\rho_{uu} & \rho_{uj} & \rho_{ud} \\ \rho_{ju} & \rho_{jj} & \rho_{jd} \\ \rho_{du} & \rho_{dj} & \rho_{dd}
	\end{pmatrix}$ with $\rho_{uu}\in S_1(\ell^{j-1}_2\otimes H_2), \rho_{jj}\in S_1(H_2), \rho_{dd}\in S_1(\ell^{k - j}_2\otimes H_2 \oplus \ell^r_{2})$ and $\rho^j_1  = \begin{pmatrix}
	\rho_{uu} & 0 & \rho_{ud} \\ 0 & \tau_2(\rho_{jj}) & 0 \\ \rho_{du} & 0 & \rho_{dd}
\end{pmatrix}$. By lemma \ref{lemma:iterationorder}, we have:
\begin{equation}
	D(\rho | \prod_{1\leq j \leq k}E_j \rho) \leq \sum_{1\leq j \leq k}D(\rho | E_j\rho)\pl.
\end{equation}
By the formula \ref{equation:partition}, all the conditional expectations $\{E_j\}_{1\leq j \leq k}$ commute with each other. Hence the product is also a conditional expectation:
\begin{equation*}
	\widetilde{E}:= \prod_{1\leq j \leq k}E_j: \mathbb{B}(H_1)\oplus \mathbb{B}(H_2)\rightarrow \ell^k_\infty\oplus\mathbb{B}(\ell^r_2)\oplus\ell^{\dim H_2}_\infty\pl.
\end{equation*}
To take care of the remainder part $\mathbb{B}(\ell^r_2)$, we need an additional unitary on $\ell^r_2\oplus H_2$. Fix an isomorphism: $\ell^r_2\oplus H_2\cong \ell^r_2\oplus \ell^{\dim H_2 - r}_2\oplus\ell^r_2$. Then let $\tau$ be the unitary that exchanges the first component $\ell^r_2$ with the last component $\ell^r_2$. Then we have:
\begin{equation}
	E =\widetilde{E} \widetilde{E^\tau} \widetilde{E}:\mathbb{B}(H_1)\oplus\mathbb{B}(H_2)\rightarrow \ell^{k+1}_\infty\oplus \mathbb{C}\cong \{\sum_{1\leq j \leq k}\lambda_j\ket{j}\bra{j}\otimes id_{H_2} + \lambda_r id_{\ell^r_2} + \lambda_2 id_{H_2}\}
\end{equation}
where $\widetilde{E^\tau} = Ad^*_\tau \widetilde{E}Ad_\tau$. Since the subalgebra $\ell^{k+1}_\infty\ssubset\mathbb{B}(H_1)$ is a commutative subalgebra in a finite dimensional type I factor, we can apply lemma \ref{lemma:quantumFourier} and obtain the following formula:
\begin{equation}\label{equation:finalce}
	E_\tau = E  E^\mathcal{F} E: \mathbb{B}(H_1)\oplus\mathbb{B}(H_2)\rightarrow \mathbb{C}\oplus\mathbb{C}
\end{equation}
where $\mathcal{F}$ is the quantum Fourier transform on $H_1$. Then by repeated use of lemma \ref{lemma:iterationorder}, for all $\rho\in S_1(H_1)\oplus S_1(H_2)$ we have:
\begin{align}
	\begin{split}
		D(\rho | E_\tau\rho) &\leq D(\rho | E\rho) + D(\rho | E^\mathcal{F}\rho) \\&\leq D(\rho | \widetilde{E}\rho) + D(\rho | \widetilde{E^\tau}\rho) + D(Ad_\mathcal{F}\rho | \widetilde{E}Ad_\mathcal{F}\rho) + D(Ad_\mathcal{F}\rho | \widetilde{E^\tau}Ad_\mathcal{F}\rho)
		\\
		&\leq \sum_{1\leq j \leq k}\big(D(\rho | E_j\rho) + D(Ad_\tau\rho | E_jAd_\tau\rho) + D(Ad_\mathcal{F}\rho | E_jAd_\mathcal{F}\rho) + D(Ad_{\tau\mathcal{F}}\rho| E_jAd_{\tau\mathcal{F}}\rho)\big)
	\end{split}
\end{align}
Since $D(\rho | E_j\rho) \leq D(\rho | E_2\rho) + D(\rho | E^{\tau_j}_2\rho)$, for any unitary $Ad_U$ we have:
\begin{equation*}
	D(Ad_U\rho | E_j Ad_U\rho) \leq D(Ad_U\rho | E_2Ad_U\rho) + D(Ad_U\rho | E^{\tau_j}_2Ad_U\rho) = D(\rho | E^U_2\rho) + D(\rho|E^{\tau_j\circ U}_2\rho)
\end{equation*}
where $\tau_j\circ U$ is the composition of the two unitaries. Combining these inequalities, we have:
\begin{align}
	\begin{split}
		D(\rho | E_\tau\rho)\leq  \sum_{1\leq j \leq k}&\big(D(\rho | E_2\rho) + D(\rho | E_2^{\tau_j}\rho) + D(\rho | E^\tau_2\rho) + D(\rho | E_2^{\tau_j\tau}\rho)\\
		&+D(\rho | E^{\mathcal{F}}_2\rho) + D(\rho | E^{\tau_j\mathcal{F}}_2\rho) + D(\rho | E^{\tau\mathcal{F}}_2\rho) + D(\rho | E^{\tau_j\tau\mathcal{F}}_2\rho)\big)\pl.
	\end{split}
\end{align}
Therefore we need $8k$ unitaries to implement the upper bound in equation \ref{equation:rotationupper}.
\end{proof}
\begin{remark}
	The constant $8k$ in the previous lemma is the only place in the proof of theorem \ref{theorem:primitive_decay} where the number of unitaries depends on the dimension. 
\end{remark}
We are now ready to prove proposition \ref{proposition:step1}.
\begin{proof}[Proof of Proposition \ref{proposition:step1}]
Using the same notation as in lemma \ref{lemma:exchange}\ref{lemma:expandertransference}, we first decompose the relative entropy:
\begin{equation*}
	D(\rho | E_{diag}^*\rho)  = D(\rho | E_\Omega^*\rho) + D(E_\Omega^*\rho | E_{diag}^*\rho)\pl.
\end{equation*}
To apply lemma \ref{lemma:expandertransference} and corollary \ref{corollary:changeofmeasure}, we need to study the measure induced by the reference state $d_N$ on the commutative components in $\Omega = \bigoplus_\lambda \ell^{n_\lambda}_\infty\otimes\mathbb{B}(W_\lambda)$. On $\ell^{n_\lambda}_\infty$, the reference state $d_N$ induces a measure: 
\begin{equation*}
	\mu^\lambda_j := C(\lambda)\exp(\frac{\beta}{2}(n_\lambda - 2j))
\end{equation*}
where $0\leq j \leq n_\lambda$ and $C(\lambda)$ is the normalization constant depending only on $\lambda$. Here the labeling is in reverse order as the one we have been using in representation theory. More precisely, here $j = 0$ corresponds to the lowest weight vector in each irreducible component. For $n_\lambda = mL + r \geq m$ ($0 \leq r < m$), we need a piecewise constant probability measure that is close to $\mu^\lambda$. To achieve this, we consider the following measure $\nu$:
\begin{align*}
		&\nu^\lambda_j = \mu^\lambda_{m([\frac{j}{m}]+1)}\text{  ,  for $0\leq j<mL$}\\
		&\nu^\lambda_j = \mu^\lambda_j \text{  ,  for $j \geq mL$}\pl.
\end{align*}
After normalization, the measure $\nu$ differs from $\mu$ by a constant:
\begin{align*}
	\begin{split}
		\exp(-\frac{\beta m}{2})\nu \leq \mu \leq \exp(\frac{\beta m}{2})\nu
	\end{split}
\end{align*}
The bound on the Radon-Nikodym derivative only depends on $\beta$ and $m$. 

In addition, we need to check that the measure $\nu$ on the subspace $\{0,...,mL-1\}$ is larger than $\frac{1}{2}$. To see this,  we observe that for $L\geq 1$ and $r < m$:
\begin{align*}
	\sum_{0\leq j \leq mL - 1}\exp(\frac{\beta}{2}(n_\lambda - 2j)) \geq \sum_{j\geq mL}\exp(\frac{\beta}{2}(n_\lambda - 2j))\pl.
\end{align*}
Since $\nu^\lambda_j \geq \mu^\lambda_j$, then the weigth of $\nu$ on $\{0,...,mL-1\}$ is larger than $\frac{1}{2}$. 

Now on the direct sum of the components with $n_\lambda \geq m$, we apply the direct sum channel:
\begin{equation*}
	E_> := \bigoplus_{n_\lambda\geq m} E_{\nu^\lambda}E^W_{\nu^\lambda}E_{\nu^\lambda}:\Omega\rightarrow \big(\bigoplus_{n_\lambda\geq m}\ell^{n_\lambda}_\infty\otimes\mathbb{C}\big)\bigoplus\big(\bigoplus_{n_\lambda < m} \ell^{n_\lambda}_\infty \otimes\mathbb{B}(W_\lambda)\big)\pl.
\end{equation*}
Since $E_\mu|_\Omega = E_{fix}$ and $\Phi_W\Omega = \Omega$, by lemma \ref{lemma:expandertransference} and corollary \ref{corollary:changeofmeasure}, we can control the relative entropy decay by a sum of $m$ rotated entropy production terms:
\begin{align}
	\begin{split}
		D(E_\Omega^*\rho | E_> \rho)  &\leq C\log(N)\big(D(E_\Omega^*\rho | E_{fix}^*\rho) + D(\Phi_W E_\Omega^*\rho | E_{fix}^*\Phi_W E^*_\Omega\rho)\big)
		\\
		& = C\log(N)\big(D(E_\Omega^*\rho | E^*_{fix}\rho) + D(E^*_{\Phi_W\Omega}\Phi_WE^*_\Omega\rho | E^*_{fix}E^*_{\Phi_W\Omega}\Phi_WE_\Omega\rho)\big)
		\\&\leq C\log(N)\big(I_{\mathcal{L}^\beta_N}(E^*_\Omega\rho) + I_{\mathcal{L}^\beta_N}(E^*_{\Phi_W\Omega}\Phi_WE^*_\Omega\rho) \big)
	\end{split}
\end{align}
where $\Phi_W := \bigoplus_{n_\lambda \geq m}\Phi^\lambda_W$. In the second line, we have used the identity: $\Phi_WE^*_\Omega = \Phi_WE^*_\Omega E^*_\Omega = E^*_{\Phi_W\Omega}\Phi_WE^*_\Omega$. Since $\Phi_W\Omega = \Omega$, we have:
\begin{equation*}
	I_{\mathcal{L}^\beta_N}(E^*_{\Phi_W\Omega}\Phi_WE^*_\Omega\rho) = I_{\mathcal{L}^\beta_N}(E^*_\Omega\Phi_WE^*_\Omega\rho) \leq I_{\mathcal{L}^\beta_N}(\Phi_WE^*_\Omega\rho)
\end{equation*} In addition, since the automorphism $\Phi_W$ preserves the trace and preserves the reference state, we have:
\begin{equation*}
	I_{\mathcal{L}^\beta_N}(\Phi_WE^*_\Omega\rho) = I_{\mathcal{L}^\beta_N}(E^*_\Omega\rho) \leq I_{\mathcal{L}^\beta_N}(\rho)\pl.
\end{equation*}
Combining these inequalities, we have:
\begin{equation}\label{equation:large}
	D(E^*_\Omega\rho | E_>\rho) \leq C\log(N)I_{\mathcal{L}^\beta_N}(\rho)
\end{equation}
Here we used the fact that $\max_\lambda\dim(V_\lambda) = N+1$. Hence the dimension factor in corollary \ref{corollary:changeofmeasure} is given by $\log(N)$.

For the remaining components, $n_\lambda < m$. Let $\eta$ be the component such that $n_\eta = \min\{n_\lambda : n_\lambda\geq m\}$. Fix an embedding: $\iota_\lambda:\ell^{n_\lambda}_\infty\rightarrow \ell^{n_\eta}_\infty$ ($n_\eta \geq m \geq n_\lambda$). On the subalgebra: $\ell^{n_\lambda}\otimes\mathbb{B}(W_\lambda)\bigoplus \iota_\lambda(\ell^{n_\lambda}_\infty)\otimes\mathbb{B}(W_\eta)\cong\ell^{n_\lambda}_\infty\otimes (\mathbb{B}(W_\lambda)\oplus\mathbb{B}(W_\eta))$, we apply the channel: $id\otimes E_\tau$ where $E^\lambda_\tau:\mathbb{B}(W_\lambda)\oplus\mathbb{B}(W_\eta)\rightarrow \mathbb{C}\oplus\mathbb{C}$ is the channel constructed in lemma \ref{lemma:exchange}. By construction, for different components $\lambda$ and $\lambda'$ the corresponding conditional expectations commute:
\begin{equation*}
	[E^\lambda_\tau , E^{\lambda'}_\tau] = 0\pl.
\end{equation*}

In lemma \ref{lemma:exchange} we need to use the conditional expectation $E_2$. In our current application, $E_2$ can be implemented by $E_>$. Combining these observations, we consider the product conditional expectations:
\begin{equation*}
	E_< := \prod_{n_\lambda < m}E^\lambda_\tau
\end{equation*}
($E_<$ is self-adjoint). By lemma \ref{lemma:exchange}, for each $n_\lambda < m$ there exists a finite family of unitaries $\{U_j\}$ such that $Ad_{U_j}\Omega = \Omega$. Using this and lemma \ref{lemma:lieb} we have:
\begin{align}
	\begin{split}
		D(E^*_\Omega\rho | E_<\rho) &\leq \sum_{n_\lambda < m}D(E^*_\Omega\rho | E^\lambda_\tau\rho) \leq \sum_{n_\lambda < m}\sum_{1\leq j \leq K}D(E^*_\Omega\rho | E^{U_j}_>\rho)\\
		&\leq C\log(N)\sum_{n_\lambda < m}\sum_{1\leq j \leq K}I_{\mathcal{L}^\beta_{U_j}}(E^*_\Omega\rho)
		\\
		&\leq C\log(N)\sum_{n_\lambda < m}\sum_{1\leq j \leq K}I_{\mathcal{L}^\beta_{U_j}}(\rho)
	\end{split}
\end{align}
where the second line follow from the first part of the proof.

The number of unitaries $K$ can be estimated from the formula of multiplicity \cite{Fulton}:
\begin{equation*}
	K = 8\times O(\frac{\frac{N+1-2j}{N+1}\binom{N+1}{j}}{\frac{N+1-2k}{N+1}\binom{N+1}{k}})
\end{equation*}
where $k \sim [\frac{N}{2}] - m$ and $k < j \leq [\frac{N}{2}]$. Since $m = O(1)$ (we can take it to be $4$), the ratio $K$ is of the order $O(1)$. 

Combining all the inequalities and using the fact that $E_>E_< = E_<E_> = E_{diag}$, we have:
\begin{align}
	\begin{split}
		D(\rho| E_{diag}^*\rho) &\leq D(\rho | E_{\Omega}^*\rho) + D(E^*_\Omega\rho | E_> \rho) + D(E^*_\Omega\rho | E_<\rho) \\&\leq C\log(N)\sum_j I_{\mathcal{L}^\beta_{U_j}}(\rho)
	\end{split} 
\end{align}
where the number of unitaries need is of the order $O(1)$. 
\end{proof}
\subsubsection{Spectral Subspaces and Quantum Fourier Transform}
In this subsection, we use spectral decomposition of the reference state to prove the following proposition:
\begin{proposition}\label{proposition:step2}
	Let $\rho \in L_1(\Omega_{diag})$ be any state on the commutative algebra $\Omega_{diag}$. Then for $\beta \geq O(\log N)$, there exists a constant $C > 0$ such that:
	\begin{equation}
		D(\rho | E_\mathbb{C}\rho) \leq C k (D(\rho | E^\mathcal{U}_{diag}\rho) + D(\rho| E_{fix}\rho))
	\end{equation}
	where $k$ is a constant of poly-logarithmic order in $N$ and $U$ is a unitary in the spectral subalgebra $M = \bigoplus_{0\leq j \leq N}\mathbb{B}(H_j)$ (c.f. lemma \ref{lemma:inclusion}).
\end{proposition}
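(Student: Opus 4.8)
The plan is to interpolate between $\Omega_{diag}$ and $\mathbb{C}$ by the two \emph{complementary} commutative subalgebras of $\Omega_{diag}$ supplied by $\mathfrak{su}(2)$ weight theory --- the centre $Z(M)=\bigoplus_{j}\mathbb{C}$ of the spectral algebra $M=\bigoplus_{j}\mathbb{B}(H_{j})$ and the fixed--point corner $\Omega_{\oplus}=\bigoplus_{\lambda}\mathbb{C}$ --- and then to collapse the resulting picture with the quantum Fourier transform of Lemma~\ref{lemma:quantumFourier} and the iteration estimate of Lemma~\ref{lemma:iterationorder}. The structural input is elementary: in the weight basis $\{\ket{n,j,\alpha}\}$ both $\pi_{N}(h)$ and $d_{N}$ are diagonal and independent of the multiplicity label $\alpha$, so the spectral projections $P_{j}$ of $\pi_{N}(h)$ lie in $\Omega_{diag}$, whence $Z(M)\subset\Omega_{diag}$; also $\Omega_{\oplus}\subset\Omega_{diag}$ by Lemma~\ref{lemma:inclusion}. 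A coefficient array on the weight lattice that depends only on the irrep index and only on the weight $n-2j$ is constant, so $Z(M)\cap\Omega_{\oplus}=\mathbb{C}$ while $Z(M)\vee\Omega_{\oplus}=\Omega_{diag}$. Since $P_{j}$, the irrep projections, and $1$ all commute with $d_{N}$, the conditional expectations $E_{Z(M)}$, $E_{\oplus}$, $E_{\mathbb{C}}$ are KMS--symmetric, and by Lemma~\ref{lemma:diag_fix_commute} one has $E_{\oplus}=E_{fix}E_{diag}$, hence $E_{\oplus}\rho=E_{fix}\rho$ for every $\rho\in L_{1}(\Omega_{diag})$.

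First I would reduce $\Omega_{diag}$ to $Z(M)$. On each factor $\mathbb{B}(H_{j})$ the algebra $\Omega_{diag}$ restricts to a commutative subalgebra $D_{j}$ and $E_{Z(M)}$ restricts to the normalised trace, because $d_{N}$ is scalar on $H_{j}$. Applying Lemma~\ref{lemma:quantumFourier} inside every block with the quantum Fourier transform $\mathcal{F}_{j}$ of $H_{j}$ and $N=D_{j}$, and summing the block inequalities against the block weights of $\rho$ (which $E_{Z(M)}$ and $E_{\Omega_{diag}}$ preserve), I get, for $\rho\in L_{1}(\Omega_{diag})$, the bound $D(\rho\,|\,E_{Z(M)}\rho)\le D(\rho\,|\,E_{\Omega_{diag}}\rho)+D(\rho\,|\,E^{\mathcal{U}}_{diag}\rho)=D(\rho\,|\,E^{\mathcal{U}}_{diag}\rho)$, where $\mathcal{U}:=\bigoplus_{j}\mathcal{F}_{j}$. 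Being block diagonal for the eigenspaces of $\pi_{N}(h)$, this $\mathcal{U}$ lies in $M$ and commutes with $\pi_{N}(h)$ and with $d_{N}$, so $E^{\mathcal{U}}_{diag}$ is again a $d_{N}$--compatible conditional expectation and the rotated Lindbladian keeps $d_{N}$ invariant; this is the only rotation introduced.

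Next I would iterate $E_{Z(M)}$ and $E_{\oplus}$. They do not commute, but their common fixed algebra is $Z(M)\cap\Omega_{\oplus}=\mathbb{C}$, so $\Phi:=E_{\oplus}E_{Z(M)}E_{\oplus}$ is a positive KMS--symmetric contraction on $L_{2}(\Omega_{diag},d_{N})$ with one--dimensional fixed space, and $\Phi^{t}\to E_{\mathbb{C}}$. Running the order/iteration argument already used in the proof of Lemma~\ref{lemma:expandertransference} together with Lemma~\ref{lemma:iterationorder} for the sandwiched expectation $\Phi$, there is an iteration count $k$ (controlled by the spectral gap of $\Phi$, equivalently the mixing time of this Gibbs sampler) with $D(\rho\,|\,E_{\mathbb{C}}\rho)\le 2\,D(\rho\,|\,\Phi^{2k}\rho)\le 2k\big(D(\rho\,|\,E_{Z(M)}\rho)+D(\rho\,|\,E_{\oplus}\rho)\big)$. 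Combining this with the previous step and with $E_{\oplus}\rho=E_{fix}\rho$ gives exactly $D(\rho\,|\,E_{\mathbb{C}}\rho)\le 2k\big(D(\rho\,|\,E^{\mathcal{U}}_{diag}\rho)+D(\rho\,|\,E_{fix}\rho)\big)$, which is the asserted inequality.

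The main obstacle is the quantitative estimate on $k$, i.e.\ the lower bound on the spectral gap of $\Phi$ in the regime $\beta\ge O(\log N)$. The operator $\Phi$ is the alternating Gibbs sampler on the $\mathfrak{su}(2)$ weight lattice for the measure $\mu(n,j)\propto\dim(W_{n})\,e^{-\beta(n-2j)/2}$: one projection resamples the weight inside a fixed irrep, the other resamples the irrep inside a fixed weight with conditional probabilities proportional to the multiplicities $\dim(W_{n})$. These multiplicities are extremely non-uniform --- growing super-polynomially and concentrated near $n\sim\sqrt{N}$ --- and at low temperature $\mu$ itself is sharply concentrated, so a soft mixing argument yields only a dimension-type factor; a multi-scale (telescoping) refinement of the iteration, exploiting the explicit weight combinatorics and dominating the relevant ratios of Boltzmann-weighted multiplicities, is what is needed to make $k$ poly-logarithmic in $N$, and this is precisely the point where the hypothesis $\beta\ge O(\log N)$ is used. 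Once this spectral-gap (equivalently quasi-factorisation) estimate is in place the argument is complete; I expect that combinatorial/analytic step, rather than any of the structural reductions above, to be the technically demanding part.
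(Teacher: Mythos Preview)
Your proposal is correct and follows essentially the same route as the paper: the unitary $\mathcal{U}=\bigoplus_j\mathcal{F}_j$ is exactly the one constructed in Lemma~\ref{lemma:miraculouscalc}, your channel $\Phi=E_\oplus E_{Z(M)}E_\oplus$ coincides with the paper's $\Phi=E_\oplus E^{\mathcal F}_{diag}E_\oplus=E_{fix}E_\sigma E_{fix}$ (via that lemma), and the iteration/order argument you invoke is Lemma~\ref{lemma:iterationorder} combined with Proposition~\ref{proposition:saloffcoste}. You are also right that the only substantive work is the poly-logarithmic bound on $k$; the paper obtains it not by a direct Gibbs-sampler argument but by a Gaussian-comparison estimate (Appendix~A, Theorem~\ref{theorem:LSIcoefficient} and Corollary~\ref{corollary:equivalentGauss}) yielding an LSI for the reference measure on $\Omega_\oplus$, followed by the Diaconis--Saloff-Coste return-time bound.
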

It turns out that the unitary $U$ is given by a direct sum of quantum Fourier transforms. To prove proposition \ref{proposition:step2}, we first make the following calculation:
\begin{lemma}\label{lemma:miraculouscalc}
	There exists a unitary $U:= \bigoplus_wU_w \in \mathcal{U}(M)$ such that on the subalgebra $\Omega_{diag}$, we have:
	\begin{equation}
		E_{diag}\circ Ad_{U}= E_\sigma
	\end{equation}
	where $E_\sigma$ is the conditional expectation onto the subalgebra $\Omega_\sigma$.
\end{lemma}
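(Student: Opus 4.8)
The plan is to construct $U$ directly as a direct sum of quantum Fourier transforms over the weight (i.e.\ $\pi_N(h)$-eigenvalue) decomposition $M=\bigoplus_j\mathbb{B}(H_j)$, reusing the flatness trick already exploited in Lemma \ref{lemma:quantumFourier}. First I would write everything in the Schur--Weyl-adapted basis $\{\ket{n,k,\alpha}\}$: a vector $\ket{n,k,\alpha}$ lies in the weight space $H_j$ exactly when $n-2k=N-2j$, so for fixed $j$ the admissible data are a finite set of irreducible labels $n$ (those whose representation contains the weight $N-2j$), each occurring with its Schur--Weyl multiplicity $\dim W_n$ indexed by $\alpha$, while $k=k_n(j)$ is then forced. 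In this basis $\Omega_\sigma\cap\mathbb{B}(H_j)$ is the full diagonal $\ell^{h_j}_\infty$, $\Omega_{diag}\cap\mathbb{B}(H_j)$ is the coarser abelian algebra of diagonal operators that are constant on each multiplicity block $\{\ket{n,k_n(j),\alpha}:\alpha\}$, and $E_{diag}$ restricted to $\mathbb{B}(H_j)$ is ``annihilate the off-diagonal part, then average over each multiplicity block.''

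Next I would take $U_j$ to be the quantum Fourier transform of $H_j$ in this basis (or, for the sharper block-wise version, the direct sum over the labels $n$ of the $\dim W_n\times\dim W_n$ Fourier matrix acting in the $\alpha$-index) and set $U=\bigoplus_j U_j$. That $U\in\mathcal{U}(M)$ is immediate, since each $U_j$ is supported on $H_j$ and hence preserves every weight space. The ``miraculous'' step is then only the computation: because the rows of a Fourier matrix are flat --- all entries of modulus $(\dim H_j)^{-1/2}$, resp.\ $(\dim W_n)^{-1/2}$ --- conjugating a diagonal operator by $U_j$ and re-reading its diagonal replaces each block by its average, which is precisely the action of $E_{diag}$ on $\mathbb{B}(H_j)$; carrying this out on the diagonal matrix units $\ket{n,k_n(j),\alpha}\bra{n,k_n(j),\alpha}$, which span the relevant abelian algebra, and summing over $j$ yields the asserted intertwining between $E_{diag}\circ Ad_U$ and $E_\sigma$ (equivalently, the dual statement relating the two diagonalizing conditional expectations that is actually used in the proof of Proposition \ref{proposition:step2}).

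I expect the main obstacle to be bookkeeping rather than a conceptual one: one has to track, weight space by weight space, exactly which labels $n$ occur and with which multiplicities, and then verify that the partition of $H_j$ into multiplicity blocks seen by $E_{diag}$ coincides with the partition refined by the Fourier blocks of $U_j$ --- i.e.\ that after the rotation the coarse expectation $E_{diag}$ and the fine expectation $E_\sigma$ ``see the same coordinates.'' A secondary point is fixing the convention for $Ad_U$ and the direction of the predual so that the identity comes out in the exact form needed downstream, namely so that the estimate for $D(E^*\rho\,|\,E_\sigma^*\rho)$ supplied by the general machinery can be transported to an estimate for $D(E_{diag}^*\rho\,|\,E_\mathbb{C}^*\rho)$ at the cost of a constant depending only on $\beta$. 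Once the basis identification is pinned down, each of these is a finite, explicit check.
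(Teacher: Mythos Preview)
Your primary choice—take $U_j$ to be the full quantum Fourier transform of $H_j$ in the Schur--Weyl basis and set $U=\bigoplus_j U_j$—is exactly the paper's construction, and the flatness of Fourier rows is indeed the whole mechanism. But there is a notational trap you have walked into: although $\Omega_\sigma=\bigoplus_j\ell^{h_j}_\infty$ is announced as the target of $E_\sigma$, the map $E_\sigma$ actually computed in the proof and used downstream (Corollary~\ref{corollary:Fourier}, Lemma~\ref{lemma:alternative}) is the conditional expectation onto the \emph{center} $Z(M)=\bigoplus_j\mathbb{C}$, namely $x\mapsto tr_w(x)\,id_{H_w}$ on each $\mathbb{B}(H_w)$. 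Under the literal reading the lemma is vacuous on $\Omega_{diag}$ (since $\Omega_{diag}\subset\Omega_\sigma$ forces $E_\sigma$ to be the identity there), and your ``sharper block-wise version'' proves only that vacuous reading: the block Fourier fixes each $id_{W_n}$, so on $\Omega_{diag}$ it yields $E_{diag}\circ Ad_U=id$, not the center projection that Proposition~\ref{proposition:step2} requires.

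Correspondingly, your description of the outcome (``replaces each block by its average, which is precisely the action of $E_{diag}$'') matches the block Fourier, not the full one. With the full Fourier on $H_w$, conjugating a diagonal operator and reading off the diagonal gives $\tfrac{1}{h_w}\,tr_{H_w}(x)$ in \emph{every} slot—the global average over $H_w$, not the multiplicity-block average—so that applying $E_{diag}$ afterward is inert and one obtains $E_{diag}\circ Ad_U(x)=tr_w(x)\,id_{H_w}$. Drop the block-wise alternative and run this on $x=\ket{n,j}\bra{n,j}\otimes id_{W_n}$ (or on the matrix units $\ket{n,k_n(j),\alpha}\bra{n,k_n(j),\alpha}$; the identity in fact holds on all of $\Omega_\sigma$) and you land directly on $\tfrac{\dim W_n}{h_w}\,id_{H_w}$, exactly as the paper does.
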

\begin{proof}
	The unitary will be a direct sum of quantum Fourier transforms. Let $w := n-2j$ be the weight of the vector $\ket{n,j, \alpha}$. Since the eigen-subspace $H_w$ has a canonical basis containing vectors of the form $\ket{m,k,\beta}$ where $w = m-2k$ and $\beta$ is the multiplicity label, we fix an ordering of this basis. Under this ordering, $\ket{m,k,\beta}$ will be the $\omega(k,\beta)$-th vector in the ordered basis. Then we can define the quantum Fourier transform:
	\begin{equation*}
		\mathcal{F}_w\ket{n,j,\alpha} = \frac{1}{\sqrt{h_w}}\sum_{k,\beta}e^{-\frac{2\pi i}{h_w}\omega(j,\alpha)\omega(k,\beta)} \ket{m,k,\beta}\pl.
	\end{equation*}
	The unitary $U$ is a direct sum of these quantum Fourier transforms.
	
	By linearity, we only need to show the formula holds for $x = \ket{n,j}\bra{n,j}\otimes\sum_{\alpha}\ket{\alpha}\bra{\alpha}$ where $\alpha$ is the multiplicity label.
	\begin{equation}
		Ad_Ux = \frac{1}{h_w}\sum_{\alpha,k_1,\beta_1,k_2,\beta_2}e^{-\frac{2\pi i}{h_w}\omega(j,\alpha)(\omega(k_1,\beta_1) - \omega(k_2,\beta_2))}\ket{m_1,k_1,\beta_1}\bra{m_2,k_2,\beta_2}
	\end{equation} 
	where $w = n-2j$ and $m_i = w + 2k_i$ ($i = 1,2$). Applying the conditional expectation $E_{diag}$, we have:
	\begin{align}
		\begin{split}
			E_{diag}\circ Ad_Ux &= \frac{1}{h_w}\sum e^{-\frac{2\pi i}{h_w}\omega(j,\alpha)(\omega(k_1,\beta_1) - \omega(k_2,\beta_2))}E_{diag}(\ket{m_1,k_1,\beta_1}\bra{m_2,k_2,\beta_2})\\
			&=\frac{\dim W_{\eta(n)}}{h_w}\sum_{k, m = w + 2k}\big(\ket{m,k}\bra{m,k}\otimes\sum_{\beta\in W_\eta(m)}\ket{\beta}\bra{\beta}\big)
		\end{split}
	\end{align}
	where $W_{\eta(n)}$ is the multiplicity subspace of the unique $(n+1)$-dimensional irreducible representation. The numerator $\dim W_{\eta(n)}$ comes from summation over the multiplicity label $\alpha$.
	
	On the other hand, since $x\in \mathbb{B}(H_w)$, we have:
	\begin{align}
		\begin{split}
			E_\sigma x &= tr_w(x)\sum_{k, m = w+2k}\ket{m,k}\bra{m,k}\otimes\sum_{\beta\in W_{\eta(m)}}\ket{\beta}\bra{\beta} \\&= \frac{\dim W_{\eta(n)}}{h_w}\sum_{k, m = w+2k}\ket{m,k}\bra{m,k}\otimes\sum_{\beta\in W_{\eta(m)}}\ket{\beta}\bra{\beta}
		\end{split}
	\end{align}
	where $tr_w$ is the normalized trace on the spectral subspace $H_w$. 
\end{proof}
\begin{corollary}\label{corollary:Fourier}
	Using the same notation as lemma \ref{lemma:miraculouscalc}, for any state $\rho$ on $V^{\otimes N}$ we have:
	\begin{equation}
		D(E_{diag}\rho | E_\sigma\rho) \leq D(\rho | E_{diag}\rho)  + D(\rho | E^\mathcal{F}_{diag}\rho)
	\end{equation}
	where $\mathcal{F}$ is the direct sum of quantum Fourier transforms constructed in lemma \ref{lemma:miraculouscalc}. And $E^\mathcal{F}_{diag} = Ad^*_\mathcal{F}\circ E_{diag}\circ Ad_{\mathcal{F}}$ is the rotated conditional expectation.
\end{corollary}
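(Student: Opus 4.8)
The plan is to reproduce, in the present setting, the argument used for Lemma~\ref{lemma:quantumFourier}: the quantum Fourier transform $\mathcal F$ plays the role of the Fourier matrix there, the commutative subalgebra $\ell^n_\infty$ is replaced by $\Omega_{diag}$, and the target algebra is the $\Omega_\sigma$ onto which $E_{diag}\circ Ad_{\mathcal F}$ lands by Lemma~\ref{lemma:miraculouscalc}. Concretely, I would set $\Phi:=E^{\mathcal F}_{diag}\circ E_{diag}$, where $E^{\mathcal F}_{diag}=Ad^*_{\mathcal F}\circ E_{diag}\circ Ad_{\mathcal F}$ is again a trace-preserving conditional expectation (namely the one onto $Ad^*_{\mathcal F}(\Omega_{diag})$), and observe that, since $E_{diag}$ and $E^{\mathcal F}_{diag}$ are self-adjoint idempotents,
\[
\Phi^*\Phi=E_{diag}\circ E^{\mathcal F}_{diag}\circ E_{diag},
\]
a sandwiched conditional expectation in the sense of Lemma~\ref{lemma:iterationorder}.

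The heart of the argument is the identity $\Phi^*\Phi=E_\sigma$ on $\mathbb B(V^{\otimes N})$. To prove it, given $x$ put $y:=E_{diag}x\in\Omega_{diag}$. By Lemma~\ref{lemma:miraculouscalc}, $E_{diag}(Ad_{\mathcal F}\,y)=E_\sigma y$, hence $E^{\mathcal F}_{diag}y=Ad^*_{\mathcal F}(E_\sigma y)$. Since $\mathcal F=\bigoplus_w\mathcal F_w$ acts inside each weight-eigenspace $H_w$, it commutes with every spectral projection $\mathbbm{1}_{H_w}$, so $Ad^*_{\mathcal F}$ fixes the range of $E_\sigma$ pointwise; thus $E^{\mathcal F}_{diag}y=E_\sigma y$. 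As that range is contained in $\Omega_{diag}$ (visible from the explicit formula in the proof of Lemma~\ref{lemma:miraculouscalc}), we get $E_{diag}(E_\sigma y)=E_\sigma y$, and finally $E_\sigma y=E_\sigma(E_{diag}x)=E_\sigma x$ because $E_\sigma$ and $E_{diag}$ are nested conditional expectations. Composing these steps gives $E_{diag}E^{\mathcal F}_{diag}E_{diag}x=E_\sigma x$.

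Granting this, the estimate is formal. By the sandwiched-conditional-expectation case of Lemma~\ref{lemma:iterationorder}, applied with $k=1$ and using $E_\sigma^2=E_\sigma$,
\[
D(\rho\,|\,E_\sigma\rho)\;\le\;D(\rho\,|\,E^{\mathcal F}_{diag}\rho)+D(\rho\,|\,E_{diag}\rho).
\]
On the other hand, since $E_\sigma\rho$ already lies in $\Omega_{diag}$, monotonicity of the relative entropy under $E_{diag}$ (equivalently, the chain rule for the nesting $\Omega_\sigma\subset\Omega_{diag}$) gives $D(E_{diag}\rho\,|\,E_\sigma\rho)\le D(\rho\,|\,E_\sigma\rho)$. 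Combining the two displays yields the assertion.

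The one genuinely nontrivial point is the algebraic identity $\Phi^*\Phi=E_\sigma$: it rests entirely on the $\mathfrak{su}(2)$ weight-space bookkeeping of Lemma~\ref{lemma:miraculouscalc} and on the fact that $\Omega_\sigma$, $\Omega_{diag}$ and $Ad^*_{\mathcal F}(\Omega_{diag})$ sit in a compatible, nested configuration with respect to the reference state $d_N$, so that the four conditional expectations involved telescope cleanly. Everything afterward is the same iteration-plus-monotonicity bookkeeping already used for Lemma~\ref{lemma:quantumFourier} and in the proof of the main theorem of Section~\ref{section:su(2)}.
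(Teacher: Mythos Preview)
Your proof is correct and follows essentially the same route as the paper. Both arguments hinge on the identity $E_{diag}E^{\mathcal F}_{diag}E_{diag}=E_\sigma$ and then combine data processing with Lemma~\ref{lemma:iterationorder}; the paper obtains the identity slightly more directly via $E_\sigma=E_\sigma^*E_\sigma$ together with Lemma~\ref{lemma:miraculouscalc}, and applies data processing before the iteration step rather than after, but the substance is the same.
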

\begin{proof}
	 Using the spectral decomposition of $\pi_N(h)$, the conditional expectaion $E_{\sigma}$ can be written as:
	\begin{equation}\label{equation:decompositionofce}
		E_{\sigma} = \sum_w \mu_w E_{w}
	\end{equation}
	where $\mu_w$ is the weight of the spectral subspace $H_w$ under the reference state $\rho_N$. And $E_w$ is the tracial conditional expectation from $\mathbb{B}(H_w)$ to its center. Since $E_w^* = E_w$, $E_\sigma$ is self-adjoint. Thus we have: $E_\sigma = E_\sigma E_\sigma = E_\sigma^*E_\sigma$ where the first equation comes from the definition of conditional expectation. By lemma \ref{lemma:miraculouscalc}, we have $E_\sigma = E_{diag}\circ Ad_\mathcal{F}\circ E_{diag}$. Thus we have:
	\begin{equation}
		E_\sigma = E_{diag}\circ Ad_\mathcal{F}^*\circ E_{diag}\circ Ad_{\mathcal{F}}\circ E_{diag} = E_{diag}E^\mathcal{F}_{diag}E_{diag}\pl.
	\end{equation}
	By data processing inequality and lemma \ref{lemma:iterationorder}, we have:
	\begin{align}
		\begin{split}
			D(E_{diag}\rho | E_\sigma \rho) &= D(E_{diag}\rho | E_{diag}E^\mathcal{F}_{diag}E_{diag}\rho) 
			\\
			&\leq D(\rho | E^\mathcal{F}_{diag}E_{diag}\rho) \\
			&\leq D(\rho | E^\mathcal{F}_{diag}\rho)  + D(\rho | E_{diag}\rho)\pl.\qedhere
		\end{split}
	\end{align}
\end{proof}

\begin{proposition}\label{proposition:saloffcoste}
	Recall $E_\oplus$ (c.f. Equation \ref{equation:oplusce}) is the conditional expectation onto $\Omega_{fix} = E_{fix}(\Omega_{diag}) = \bigoplus_\lambda \mathbb{C}$. Consider the self-adjoint channel: $\Phi:= E_{\oplus} E^\mathcal{F}_{diag}E_{\oplus}: \Omega_\oplus\rightarrow \Omega_\oplus$. Then we have:
	\begin{equation}
		0.9 E_{fix} \leq \Phi^{2k} \leq 1.1E_{fix}
	\end{equation}
	where $k$ is a constant of poly-logarithmic order in $N$. 
\end{proposition}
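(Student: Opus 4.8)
Throughout write $E_{\mathrm{fix}}$ for the conditional expectation onto the scalars $\mathbb{C}$ appearing in the statement; this is the map denoted $E_{\mathbb{C}}$ in Proposition~\ref{proposition:step2}, and it will turn out to be the common fixed‑point projection of the maps below. The plan is to present $\Phi=E_\oplus E^{\mathcal{F}}_{diag}E_\oplus$ as a product of two conditional expectations on the KMS Hilbert space $L_2(d_N)$ and then run the \emph{spectral gap $+$ iteration} scheme of \cite{GJLL} already used in Lemma~\ref{lemma:expandertransference} and Lemma~\ref{lemma:quantumFourier}. First I would rewrite $\Phi$ so that its fixed points are transparent: using $E_\oplus=E_{diag}E_{fix}=E_{fix}E_{diag}$ on $\Omega$ (Lemma~\ref{lemma:diag_fix_commute}) and the quantum–Fourier identity $E_{diag}E^{\mathcal{F}}_{diag}E_{diag}=E_\sigma$ on $\Omega_{diag}$ (Lemma~\ref{lemma:miraculouscalc}, Corollary~\ref{corollary:Fourier}), one obtains $\Phi=E_{fix}E_\sigma E_{fix}$ as a map on $\Omega_\oplus$, where $E_\sigma$ is the $d_N$‑preserving conditional expectation onto the centre $\bigoplus_j\mathbb{C}$ of $M$ (it preserves $d_N$ because $d_N\in Z(M)$, being a scalar on each spectral subspace $H_j$ of $\pi_N(h)$). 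Hence $\Phi$ is a unital, completely positive, $d_N$‑self‑adjoint channel, and in the KMS $L_2$‑picture $\Phi=PQP$ with $P,Q$ the orthogonal projections onto $\Omega_{fix}$ and onto $Z(M)$.

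\emph{Qualitative gap.} Since $\Phi=PQP=(QP)^{*}(QP)\ge 0$ is a self‑adjoint contraction, any unit vector with $\Phi x=x$ lies in $\mathrm{range}\,P\cap\mathrm{range}\,Q=\Omega_{fix}\cap Z(M)$, and a short computation — an element $\bigoplus_n c_n\,\mathrm{id}_{V_n}\otimes T_n\in\Omega_{fix}$ that is central in $M$ forces every $c_nT_n$ to be the same multiple of $\mathrm{id}_{W_n}$, hence a scalar — gives $\Omega_{fix}\cap Z(M)=\mathbb{C}$. Thus $\|\Phi-E_{fix}\|_{L_2(d_N)\to L_2(d_N)}<1$ is automatic from the projection picture.

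\emph{Quantitative gap — the main obstacle.} The real work is to produce an explicit bound $\|\Phi-E_{fix}\|_{2\to 2}\le 1-\delta$ with $\delta$ large enough that the exponent $k$ from the last step comes out poly‑logarithmic in $N$ uniformly for $\beta=O(\log N)$. Unfolding the identifications, $\Phi$ acts on $\Omega_\oplus\cong\ell_\infty^{|\mathcal{P}_2(N)|}$ as the explicit reversible Markov kernel $(\Phi c)_n=\sum_j\mu^{(n)}_j\,\bar c_{\,n-2j}$, where $\bar c_w=(\dim H_w)^{-1}\sum_{m\ge|w|}\dim W_m\,c_m$ and the stationary weights are $p_n\propto\dim W_n\sum_j e^{-\beta(n-2j)/2}$. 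I would estimate its spectral gap by a conductance/Poincaré argument, exploiting that in one step any $c$ is sent, with controlled probability, to the law proportional to $\dim W_m$, which already spreads mass across all irreducible components; the delicate point — and this is where essentially all of the difficulty sits — is to keep this probability and the associated Poincaré constant under control as $\beta$ grows to order $\log N$, the regime in which the weights $p_n$ concentrate near $n=N$ and the chain slows down.

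\emph{From the gap to the completely bounded order bound.} Granting $\|\Phi-E_{fix}\|_{2\to 2}\le 1-\delta$, one has $\|\Phi^{2k}-E_{fix}\|_{2\to 2}\le(1-\delta)^{2k}$, and the standard conversion lemma of \cite{GJLL} — precisely the step carried out in Lemma~\ref{lemma:expandertransference}, which turns $L_2$‑decay into a completely positive order estimate at the cost of a factor measuring the index/dimension of the inclusion $\mathbb{C}\subset\Omega_\oplus$ (of poly‑logarithmic size, since $\dim\Omega_\oplus=|\mathcal{P}_2(N)|=O(N)$) — upgrades this to $0.9\,E_{fix}\le_{c.p.}\Phi^{2k}\le_{c.p.}1.1\,E_{fix}$ as soon as $2k$ exceeds a constant times $\delta^{-1}$ times that factor. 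Choosing $k$ of this order, which is poly‑logarithmic in $N$, completes the proof.
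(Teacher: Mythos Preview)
Your rewriting $\Phi=E_{fix}E_\sigma E_{fix}$ and the identification of the fixed points as $\Omega_{fix}\cap Z(M)=\mathbb{C}$ agree with the paper (this is Lemma~\ref{lemma:alternative}). The substantive gap is in the last step. You assert that the conversion from an $L_2$ spectral gap to the cp order inequality costs a factor governed by $\dim\Omega_\oplus=|\mathcal{P}_2(N)|=O(N)$, hence $O(\log N)$. But on $\Omega_\oplus$ the reference state $d_N$ is \emph{not} the uniform trace: it is the measure $\mu_n\propto \dim W_n\cdot\sinh(\beta(n{+}1)/2)$, and on a commutative algebra the GJLL order bound requires $(1-\delta)^{2k}\cdot(1/\mu_{\min})<0.1$, not $(1-\delta)^{2k}\cdot\dim\Omega_\oplus<0.1$. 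Here $\mu_{\min}$ is exponentially (for $\beta=O(\log N)$ even super‑exponentially) small in $N$, so $\log(1/\mu_{\min})$ is at least linear in $N$. A bare spectral‑gap argument therefore yields $k\gtrsim \delta^{-1}\cdot N$, not a poly‑log quantity, even granting a favorable $\delta$ that you have not established.

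The paper circumvents exactly this $\log(1/\mu_{\min})$ barrier by proving a \emph{log‑Sobolev} inequality for the nearest‑neighbor part of $\,id-\Phi$ on $(\Omega_N,\mu)$ instead of a Poincar\'e inequality. Concretely, it transfers the bell‑shaped measure $\hat\mu_n=e^{\beta n/2}\dim W_n/(2\cosh(\beta/2))^N$ to the standard Gaussian via a piecewise‑linear map (Theorem~\ref{theorem:LSIcoefficient}, Corollary~\ref{corollary:bellshaped}) and reads off an LSI constant of order $N^4$ (Corollary~\ref{corollary:equivalentGauss}); it then combines this with the explicit lower bound $\Phi_{n+2,n}\ge C(\beta)/N$ obtained from Lemma~\ref{lemma:commutativechannelcalc} and feeds both into the Diaconis--Saloff‑Coste $L_\infty$ mixing estimate, which involves only $\log\log(1/\mu_{\min})=O(\log N)$. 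The resulting $k$ is $O(N^5(1+\log N))$. The Gaussian‑comparison LSI is thus the essential ingredient your conductance/Poincar\'e sketch is missing; without it the exponentially small atom of $\mu$ forces a linear‑in‑$N$ loss in the final step that no spectral‑gap bound can repair.
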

The proof of Proposition \ref{proposition:saloffcoste} is given in the appendix. Combining these results, we are now ready to prove proposition \ref{proposition:step2}:
\begin{proof}[Proof of Proposition \ref{proposition:step2}]
	By lemma \ref{lemma:miraculouscalc}, the channel $\Phi$ can be written as: $\Phi = E_\oplus E^\mathcal{F}_{diag}E_\oplus = E_{fix}E_{diag}E^\mathcal{F}_{diag}E_{diag}E_{fix} = E_{fix}E_\sigma E_{fix}$. By Proposition \ref{proposition:saloffcoste} and Lemma \ref{lemma:iterationorder}, for state $\rho$ on $\Omega_{diag}$ we have:
	\begin{equation*}
		D(\rho | E_\mathbb{C}\rho) \leq 2Ck D(\rho | \Phi\rho)\leq 4Ck(D(\rho | E_{fix}\rho) + D(\rho | E^\mathcal{F}_{diag}\rho))\pl.\qedhere
	\end{equation*}
\end{proof}
We now present the proof of theorem \ref{theorem:primitive_decay}.
\begin{proof}[Proof of Theorem \ref{theorem:primitive_decay}]
	By proposition \ref{proposition:step1}, there exists $O(1)$-many unitaries $\{U_j\}$ that commutes with the reference state such that:\begin{equation*}
		D(\rho | E^*_{diag}\rho) \leq C\log(N)\sum_j I_{\mathcal{L}^\beta_{U_j}}(\rho)\pl.
	\end{equation*}
	By the chain rule of relative entropy: $D(\rho | E_{\mathbb{C}}^*\rho) = D(\rho | E_{diag}^*\rho) + D(E^*_{diag}\rho | E^*_\mathbb{C}\rho)$. By proposition \ref{proposition:step2}, we have:
	\begin{align}
		\begin{split}
			D(\rho | E_\mathbb{C}^*\rho)& = D(\rho | E_{diag}^*\rho)  + D(E^*_{diag}\rho | E_\mathbb{C}^*\rho)\\
			&\leq C\log(N)\sum_j I_{\mathcal{L}^\beta_{U_j}}(\rho) + C k (D(\rho | E^\mathcal{F}_{diag}\rho) + D(\rho | E_{fix}\rho))
			\\
			& \leq C\log(N)\sum_j I_{\mathcal{L}^\beta_{U_j}}(\rho) + Ck I_{\mathcal{L}^\beta_\mathcal{F}}(\rho) + Ck I_{\mathcal{L}^\beta_N}(E_{diag}\rho) \\
			&\leq C\log(N)\sum_j I_{\mathcal{L}^\beta_{U_j}}(\rho) + Ck I_{\mathcal{L}^\beta_\mathcal{F}}(\rho) + Ck I_{\mathcal{L}^\beta_N}(\rho)\pl.\qedhere
		\end{split}
	\end{align}
\end{proof}

\section{Clustering and Spectral Gap}\label{section:Davies}

As mentioned in the introduction, Corollary \ref{corollary:dimdep} emphasizes the difference between primitive quantum Markov semigroups and non-primitive ones \cite{KB}\cite{BCR}. In this section, we elaborate on this apparent contradiction. Our main example has the form of a Davies generator. Recall a Davies generator associated with a Hamiltonian $\mathcal{H}$ is defined as \cite{D83}:
\begin{definition}
	Consider an open system $S$ coupled with a heat bath $B$, and the total system Hamiltonian is given by $\mathcal{H}_S + \mathcal{H}_B + \sum_\alpha S_\alpha\otimes B_\alpha$ where the interaction term is written generically as a sum of tensor product coupling. Assume the heat bath is in thermal equilibrium and assume the system-bath coupling is weak, then by taking a Born-Markov approximation, the effective dissipative dynamics on system $S$ is given by the Davies generator of the form:\begin{equation*}
		\mathcal{L}^Dx = \sum_{\omega, \alpha} \chi_{\omega, \alpha}(2S_{\omega, \alpha}^* x S_{\omega, \alpha} - S_{\omega, \alpha}^*S_{\omega, \alpha}x - xS_{\omega, \alpha}^*S_{\omega, \alpha})
	\end{equation*}
	where $\omega$'s are the Bohr frequencies, $\chi_{\omega,\alpha}$'s come from the Fourier coefficients of the two-point correlation function of the bath, and $S_{\omega, \alpha}$'s are the Fourier modes of the coupling operator $S_\alpha$:
	\begin{equation*}
		e^{-it\mathcal{H}_S}S_\alpha e^{it\mathcal{H}_S} = \sum_{\omega} S_{\omega, \alpha} e^{it\omega}
	\end{equation*}
\end{definition}
In particular, let $\rho_S$ be the Gibbs state associated with $\mathcal{H}_S$, then the generators $S_{\omega, \alpha}$ satisfy:
\begin{equation*}
	\rho_S^{it} S_{\omega, \alpha}\rho_S^{-it} = e^{it\beta\omega}S_{\omega, \alpha}
\end{equation*}
where $\beta$ is the inverse temperature of the heat bath. And by the KMS condition, coefficients $\chi_{\omega, \alpha}$ satisfy:
\begin{equation*}
	\chi_{-\omega, \alpha} = e^{-\beta\omega}\chi_{\omega,\alpha}
\end{equation*}
\begin{lemma}
	$\mathcal{L}^\beta_N$ is a Davies generator if we take $\pi_N(h)$ to be the system Hamiltonian.
\end{lemma}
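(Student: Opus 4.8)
The plan is to exhibit explicit Davies data --- a system Hamiltonian, a single Hermitian coupling operator, and a family of bath Fourier coefficients satisfying the KMS relation --- and then to check by a one-line computation that the associated Davies generator coincides with $\mathcal{L}^\beta_N$. The statement fixes $\mathcal{H}_S := \pi_N(h)$; I would complete the data by taking the single Hermitian coupling operator
\[
	S := \pi_N(\sigma_x) = \pi_N(a) + \pi_N(a^*),
\]
and declaring the heat bath to be in thermal equilibrium at inverse temperature $\beta_B := \beta/2$. With this choice the Gibbs state of $\mathcal{H}_S$ is $\rho_S = e^{-\beta_B \pi_N(h)}/\tr\!\bigl(e^{-\beta_B \pi_N(h)}\bigr) = N_\beta \exp(-\tfrac{\beta}{2}\pi_N(h)) = d_N$, so the reference state used throughout the paper \emph{is} the bath Gibbs state, and the $d_N$-detailed balance already established is precisely the Davies detailed balance.

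Next I would read off the Bohr-frequency decomposition of $S$ under the Heisenberg evolution generated by $\mathcal{H}_S$. Since $d_N^{it} = N_\beta^{it}\exp(-i\tfrac{\beta}{2}t\,\pi_N(h))$, the identity $d_N^{it}\pi_N(a)d_N^{-it} = e^{i\beta t}\pi_N(a)$ proved earlier becomes, after the reparametrization $t \mapsto \tfrac{2}{\beta}t$, the statement $e^{-it\pi_N(h)}\pi_N(a)e^{it\pi_N(h)} = e^{2it}\pi_N(a)$, and similarly $e^{-it\pi_N(h)}\pi_N(a^*)e^{it\pi_N(h)} = e^{-2it}\pi_N(a^*)$. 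Hence
\[
	e^{-it\pi_N(h)}\,S\,e^{it\pi_N(h)} = e^{2it}\pi_N(a) + e^{-2it}\pi_N(a^*),
\]
so the Bohr frequencies are $\omega = \pm 2$, the Fourier modes are $S_{+2} = \pi_N(a)$ and $S_{-2} = \pi_N(a^*) = S_{+2}^{\,*}$, and the modular identity $\rho_S^{it}S_\omega \rho_S^{-it} = e^{it\beta_B\omega}S_\omega$ required of Davies modes holds because $\beta_B\cdot 2 = \beta$. Finally I would set the bath Fourier coefficients to $\chi_{+2} := e^{\beta/2}$ and $\chi_{-2} := e^{-\beta/2}$; only the ratio $\chi_{-\omega}/\chi_\omega$ is physically constrained, so this is an admissible choice, and the KMS condition $\chi_{-\omega} = e^{-\beta_B\omega}\chi_\omega$ reduces to $e^{-\beta/2} = e^{-\beta}e^{\beta/2}$, which holds.

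Substituting into the Davies formula, with a single coupling index and $\omega \in \{+2,-2\}$,
\[
	\mathcal{L}^D x = \chi_{+2}L_{S_{+2}}(x) + \chi_{-2}L_{S_{-2}}(x) = e^{\beta/2}L_{\pi_N(a)}(x) + e^{-\beta/2}L_{\pi_N(a^*)}(x) = \mathcal{L}^\beta_N x,
\]
which is the claim; note that this is a particularly simple Davies generator, with one coupling operator and only two Bohr frequencies. There is essentially no analytic content --- the only thing that requires care, and which is really the whole proof, is the bookkeeping of constants around the two roles of ``$\beta$'': the paper's parameter $\beta$ sets the ratio $e^{\beta/2}\!:\!e^{-\beta/2}$ of the two Lindblad terms, whereas the factor $\tfrac{\beta}{2}$ in the exponent of $d_N$ forces the physical bath inverse temperature to be $\beta_B = \beta/2$, with a compensating factor of two reappearing in the Bohr frequency $\omega = 2$. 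Keeping these two normalisations aligned is exactly what makes the KMS relation and the modular identity close up.
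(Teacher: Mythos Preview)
Your proof is correct and follows essentially the same approach as the paper's own proof: both verify that $\pi_N(a)$ arises as a Fourier mode of a Hermitian coupling under the Heisenberg evolution generated by $\pi_N(h)$, and that the coefficients $e^{\pm\beta/2}$ satisfy the KMS relation. Your version is in fact more explicit and more careful than the paper's: you actually name the Hermitian coupling operator $S=\pi_N(\sigma_x)$, and you sort out the factor-of-two bookkeeping (bath inverse temperature $\beta_B=\beta/2$ with Bohr frequency $\omega=\pm2$) that the paper glosses over when it simultaneously declares $d_N$ the Gibbs state ``with inverse temperature $\beta$'' and assigns $\pi_N(a)$ Bohr frequency $\omega=1$.
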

\begin{proof}
	Since $\pi_N(h)$ is taken to be the system Hamiltonian, the corresponding Gibbs state is exactly $d_N$ with inverse temperature $\beta$. In addition, since $d_N^{it}\pi_N(a)d_N^{-it} = e^{i\beta t}\pi_N(a)$, $\pi_N(a)$ is the interaction operator corresponding to Bohr frequency $\omega = 1$. Therefore $\mathcal{L}^\beta_N$ is a Davies generator.
\end{proof}
Recall the definitions of minimal conditional expectation and local projection in \cite{KB}\cite{BCR}. Here we directly apply the definitions to the Gibbs state $d_N$.
\begin{definition}
	Let $A\ssubset [|N|]$ and fix the Gibbs state $d_N$ as before. The minimal conditional expectation of $d_N$ on $A$ is given by:
	\begin{equation*}
		\mathcal{E}_A(x) := \tau_A(d_{A}^{1/2}xd_{A}^{1/2})
	\end{equation*}
	where $\tau_A := \bigotimes_{i\in A}\tau \otimes \bigotimes_{i\notin A} id$ is a composition of partial trace on $V^{\otimes A}\ssubset V^{\otimes N}$ and tensoring with the identity matrix, and $d_{A} := \bigotimes_{i\in A} d_\beta\otimes \bigotimes_{i\notin A}id$. 
	
	The local projection of the Davies generator $\mathcal{L}^\beta_N$ is given by
	\begin{equation}
		E^\mathcal{L}_A := \lim_{k\rightarrow\infty}\mathcal{E}^k_A
	\end{equation}
\end{definition}
Compared with the original definitions, the definitions here are simplified because $d_N$ is a factor state.
\begin{lemma}
	Let $A,B\ssubset[|N|]$ be two subsets with nontrivial overlap $A\cap B\neq \emptyset$. Then\begin{equation}
		\mathcal{E}_A\circ\mathcal{E}_B = \mathcal{E}_B\circ\mathcal{E}_A = \mathcal{E}_{A\cup B}
	\end{equation}
	In addition, the local projections also commute:
	\begin{equation}
		E^\mathcal{L}_A\circ E^\mathcal{L}_B = E^\mathcal{L}_B\circ E^\mathcal{L}_A
	\end{equation}
\end{lemma}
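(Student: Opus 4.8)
The plan is to exploit the fact that the Gibbs state $d_N=d_\beta^{\otimes N}$ is a \emph{product} state, which turns every $\mathcal{E}_A$ into an honest conditional expectation onto $\mathbf{1}_{V^{\otimes A}}\otimes\mathbb{B}(V^{\otimes A^{c}})$, so that both claimed identities reduce to bookkeeping with disjoint tensor factors.

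First I would record the basic properties of $\mathcal{E}_A$. Because $\Tr(d_\beta)=1$ one has $\mathcal{E}_A(\mathbf{1})=\tau_A(d_A)=\mathbf{1}$, so $\mathcal{E}_A$ is unital; it is manifestly completely positive; its range lies in $\mathbf{1}_{V^{\otimes A}}\otimes\mathbb{B}(V^{\otimes A^{c}})$; and since $d_A^{1/2}$ is supported on the sites of $A$ while every element of that range is the identity there, a one-line computation gives $\mathcal{E}_A^{2}=\mathcal{E}_A$ (on the sites of $A$ the sandwich $d_\beta^{1/2}\mathbf{1}\,d_\beta^{1/2}=d_\beta$ is then contracted by $\tau_A$ to the scalar $\Tr(d_\beta)=1$). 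Hence $\mathcal{E}_A$ is a conditional expectation, so $\mathcal{E}_A^{k}=\mathcal{E}_A$ for every $k\ge 1$, and therefore $E^{\mathcal{L}}_A=\lim_{k}\mathcal{E}_A^{k}=\mathcal{E}_A$. This already reduces the second assertion to the first.

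For the first identity I would fix $x$, write $A=(A\cap B)\sqcup(A\setminus B)$, and use that $\mathcal{E}_B(x)=\tau_B(d_B^{1/2}xd_B^{1/2})$ is the identity on all sites of $B$. Conjugating by $d_A^{1/2}$ and applying $\tau_A$: on $A\cap B\subseteq B$ the conjugation produces $\bigotimes_{i\in A\cap B}d_\beta$, which $\tau_{A\cap B}$ contracts to the scalar $1$; on $A\setminus B\subseteq B^{c}$ the factors $d_{A\setminus B}^{1/2}$ commute through the partial trace $\tau_B$ and merge with $d_B^{1/2}$ into $d_{A\cup B}^{1/2}$ (disjoint supports); and the remaining partial traces satisfy $\tau_{A\setminus B}\circ\tau_B=\tau_{(A\setminus B)\cup B}=\tau_{A\cup B}$. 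Assembling these pieces gives $\mathcal{E}_A\circ\mathcal{E}_B(x)=\tau_{A\cup B}\!\big(d_{A\cup B}^{1/2}x\,d_{A\cup B}^{1/2}\big)=\mathcal{E}_{A\cup B}(x)$. The symmetric computation with $A$ and $B$ interchanged gives $\mathcal{E}_B\circ\mathcal{E}_A=\mathcal{E}_{A\cup B}$, and combining with $E^{\mathcal{L}}_A=\mathcal{E}_A$ and $E^{\mathcal{L}}_B=\mathcal{E}_B$ yields $E^{\mathcal{L}}_A\circ E^{\mathcal{L}}_B=E^{\mathcal{L}}_B\circ E^{\mathcal{L}}_A$.

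The only thing requiring genuine care is the tensor-factor bookkeeping — in particular checking that $d_{A\setminus B}^{1/2}$ really does commute past $\tau_B$ (it is supported on sites disjoint from $B$) and that contracting the $A\cap B$ sites produces exactly the scalar $1$ rather than a power of $2$ (this is why $\tau$ must be the \emph{un}normalized trace and $d_\beta$ a genuine density matrix); there is no analytic obstacle. I would also note that the hypothesis $A\cap B\neq\emptyset$ is in fact never used here — for a product reference state $\mathcal{E}_A\mathcal{E}_B=\mathcal{E}_{A\cup B}$ holds for all $A,B$ — but it is retained for consistency with the general (non-product) statements of \cite{KB}\cite{BCR}.
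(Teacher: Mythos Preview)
Your proof is correct and follows essentially the same route as the paper: both exploit that $d_N$ is a product state to decompose $d_A=d_{A\cap B}\otimes d_{A\setminus B}$, push $d_{A\setminus B}^{1/2}$ through $\tau_B$, and contract the $A\cap B$ block to $1$. Your additional observation that $\mathcal{E}_A$ is already idempotent (hence $E^{\mathcal{L}}_A=\mathcal{E}_A$) is a mild streamlining---the paper instead deduces commutation of $E^{\mathcal{L}}_A,E^{\mathcal{L}}_B$ by swapping limits $\lim_m\lim_n\mathcal{E}_A^m\mathcal{E}_B^n=\lim_m\lim_n\mathcal{E}_B^n\mathcal{E}_A^m$---but the substance is the same.
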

\begin{proof}
	Since $d_N$ is a factor state, we have
	\begin{equation*}
		d_A = d_{A\cap B} \otimes d_{A - B}, d_B = d_{A\cap B}\otimes d_{B - A}, d_{A\cup B} = d_A \otimes d_{B - A} = d_{A-B} \otimes d_B
	\end{equation*}
	Hence we have
	\begin{align*}
		\begin{split}
			\mathcal{E}_A\circ\mathcal{E}_B(x) &= \tau_A(d_A^{1/2}\tau_B(d_B^{1/2}xd_B^{1/2})d_A^{1/2}) = \tau_{A\cap B}\otimes\tau_{A- B}(d_{A\cap B}^{1/2}\otimes d_{A-B}^{1/2}\tau_B(d_B^{1/2}xd_B^{1/2})d_{A\cap B}^{1/2}\otimes d_{A - B}^{1/2})
			\\
			&=\tau_{A\cap B}\otimes \tau_{A - B}(d^{1/2}_{A\cap B}\tau_B(d_{A-B}^{1/2}\otimes d_B^{1/2}xd_B^{1/2}d_{A-B}^{1/2})d_{A\cap B}^{1/2})
			\\
			&=\tau_{A\cap B}(d_{A\cap B})\tau_{A-B}\otimes \tau_B(d_{A\cup B}^{1/2}x d_{A\cup B}^{1/2}) = \tau_{A\cup B}(d_{A\cup B}^{1/2}x d_{A\cup B}^{1/2}) = \mathcal{E}_{A\cup B}(x)
		\end{split}
	\end{align*}
	Here in the third equation, we used the bimodule property of minimal conditional expectation.
	By the same calculation, we also have $\mathcal{E}_B\circ\mathcal{E}_A = \mathcal{E}_{A\cup B}$.
	
	For the local projections, we have
	\begin{align*}
		\begin{split}
			E^\mathcal{L}_A\circ E^{\mathcal{L}}_B = \lim_{m\rightarrow \infty}\lim_{n\rightarrow \infty}\mathcal{E}_A^m \mathcal{E}_B^n = \lim_{m\rightarrow\infty}\lim_{n\rightarrow\infty}\mathcal{E}_B^n\mathcal{E}_A^m = E^\mathcal{L}_B\circ E^\mathcal{L}_A
		\end{split}\qedhere
	\end{align*}
\end{proof}
Recall the definition of conditional covariance \cite{KB}\cite{BCR}.
\begin{definition}
	Fix a faithful state $\rho$ and fix $A\ssubset[|N|]$ then conditional covariance (with respect to $E^\mathcal{L}_A$) on $A$ is given by 
	\begin{equation*}
		Cov^\rho_A(x, y) := \langle x - E^\mathcal{L}_A(x), y - E^\mathcal{L}_A(y)\rangle_\rho
	\end{equation*}
\end{definition}
One can also define conditional covariance with respect to the local projections by replacing $E^\mathcal{L}_A$ with $\mathcal{E}_A$. It turns out that the two definitions are equivalent for our purpose \cite{KB}.
\begin{corollary}\label{corollary:cov}
	Let $A, B\ssubset[|N|]$ be two subsets, and fix the Gibbs state $d_N$. Then\begin{equation}
		Cov^N_{A\cup B}(E^\mathcal{L}_A(x), E^\mathcal{L}_B(x)) = 0
	\end{equation}
\end{corollary}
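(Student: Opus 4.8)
The plan is to view the local projections $E^\mathcal{L}_A$ as genuine orthogonal projections on the KMS Hilbert space $L_2\big(\mathbb{B}(V^{\otimes N}),\langle\cdot,\cdot\rangle_{d_N}\big)$ and to reduce the claim to a one-line algebraic identity among them. Three ingredients are needed: (i) each $E^\mathcal{L}_A$ is idempotent and self-adjoint for the KMS inner product of $d_N$; (ii) $E^\mathcal{L}_A\circ E^\mathcal{L}_B=E^\mathcal{L}_{A\cup B}$ for all $A,B\ssubset[|N|]$; and (iii) the containment relations $E^\mathcal{L}_{A\cup B}\circ E^\mathcal{L}_A=E^\mathcal{L}_{A\cup B}=E^\mathcal{L}_{A\cup B}\circ E^\mathcal{L}_B$, which are just instances of (ii) applied to the pairs $(A,\,A\cup B)$ and $(A\cup B,\,B)$.

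First I would settle (i) and (ii). Since $d_N=d_\beta^{\otimes N}$ is a factor state, the modular group $\sigma^{d_N}_t$ acts factorwise and preserves the subalgebra $\mathbb{C}1_{V^{\otimes A}}\otimes\mathbb{B}(V^{\otimes A^c})$; hence $\mathcal{E}_A$ is the (unique, state-preserving, minimal) conditional expectation onto it, and a direct computation with the product structure shows $\mathcal{E}_A$ is both idempotent and KMS-symmetric. In particular the limit defining $E^\mathcal{L}_A$ is attained at the first step, $E^\mathcal{L}_A=\mathcal{E}_A$, which gives (i). Substituting this into the preceding lemma gives $E^\mathcal{L}_A\circ E^\mathcal{L}_B=\mathcal{E}_A\circ\mathcal{E}_B=\mathcal{E}_{A\cup B}=E^\mathcal{L}_{A\cup B}$ when $A\cap B\neq\emptyset$; for disjoint $A,B$ the identity follows from the same factor-state computation because $d_A$ and $d_B$ then act on disjoint tensor legs and commute. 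This establishes (ii), hence (iii).

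The corollary then follows formally. Using (iii) in the definition of conditional covariance,
\[
Cov^N_{A\cup B}\big(E^\mathcal{L}_A(x),E^\mathcal{L}_B(x)\big)=\langle E^\mathcal{L}_A x-E^\mathcal{L}_{A\cup B}x,\; E^\mathcal{L}_B x-E^\mathcal{L}_{A\cup B}x\rangle_{d_N}.
\]
Expanding the right-hand side into four inner products and moving one projection to the opposite slot by KMS-self-adjointness, each of the four terms collapses to $\langle x,E^\mathcal{L}_{A\cup B}x\rangle_{d_N}$ — the $E^\mathcal{L}_A,E^\mathcal{L}_B$ term by (ii), the two cross terms by (iii), the last term by idempotency of $E^\mathcal{L}_{A\cup B}$ — so the alternating signs cancel and the covariance vanishes. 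Equivalently, $E^\mathcal{L}_A-E^\mathcal{L}_{A\cup B}$ and $E^\mathcal{L}_B-E^\mathcal{L}_{A\cup B}$ are orthogonal projections whose product is zero. The only genuinely non-formal step is (i) — that for a factor state the minimal conditional expectation onto a tensor-factor subalgebra is an honest orthogonal projection on $L_2(d_N)$ — and here I would either cite the framework of \cite{KB}\cite{BCR} or give the short explicit verification from $d_N=d_\beta^{\otimes N}$. Everything else is bookkeeping with the already-established commutation relations, so no serious obstacle is expected.
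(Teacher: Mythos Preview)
Your proposal is correct and follows essentially the same approach as the paper: both arguments rest on the KMS-self-adjointness of the local projections together with the identity $E^\mathcal{L}_A\circ E^\mathcal{L}_B=E^\mathcal{L}_{A\cup B}$ inherited from the preceding lemma, and both reduce the covariance to zero by a short algebraic manipulation (the paper factors as $E^\mathcal{L}_A(x-E^\mathcal{L}_B x)$ before moving $E^\mathcal{L}_A$ across, while you expand into four terms---these are equivalent bookkeepings). Your explicit remark that $E^\mathcal{L}_A=\mathcal{E}_A$ for the factor state $d_N$, and your handling of the disjoint case $A\cap B=\emptyset$ (which the preceding lemma formally excludes), are small clarifications the paper leaves implicit.
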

\begin{proof}
	\begin{align*}
		\begin{split}
			Cov^N_{A\cup B}(E^\mathcal{L}_A(x), E^\mathcal{L}_B(x)) &=\langle E^\mathcal{L}_A(x)-E^\mathcal{L}_{A\cup B}E^\mathcal{L}_A(x), E^\mathcal{L}_B(x) - E^\mathcal{L}_{A\cup B}E^\mathcal{L}_B(x)\rangle_N
			\\
			&=\langle E^\mathcal{L}_A(x - E^\mathcal{L}_{B}(x)), E^\mathcal{L}_B(x - E^\mathcal{L}_A(x))\rangle_N
			\\
			&=\langle x - E^\mathcal{L}_B(x), E^\mathcal{L}_A E^\mathcal{L}_B(x - E^\mathcal{L}_A(x))\rangle_N
			\\
			&=\langle x - E^\mathcal{L}_B(x), E^\mathcal{L}_{A\cup B}(x - E^\mathcal{L}_A(x))\rangle_N = 0\qedhere\pl.
		\end{split}
	\end{align*}
\end{proof}
Recall the definition of \textit{strong clustering}\cite{KB}.
\begin{definition}
	Let $A, B\ssubset [|N|]$ be two subsets such that $A\cap B \neq \emptyset$ and let $d_N$ be the Gibbs state. Then $d_N$ satisfies strong clustering (with respect to the minimal conditional expectation) if there exist constants $c,\xi > 0$ such that for any $x\in \mathbb{B}(V^{\otimes N})$
	\begin{equation*}
		Cov^N_{A\cup B}(E^\mathcal{L}_A(x), E^\mathcal{L}_B(x)) \leq c\langle x, x\rangle_Ne^{-d(B - A, A - B)/ \xi}
	\end{equation*}
	where $d(X,Y):=\min_{\substack{x\in X, y\in Y}}|x - y|$ is the distance between the two subsets $X, Y \ssubset[|N|]$.
\end{definition}
From Corollary \ref{corollary:cov}, the factor state $d_N$ satisfies strong clustering. By \cite[Thm 23]{KB}, we should expect that the spectral gap of $\mathcal{L}^\beta_N$ is independent of the system size. Compare with the corollary \ref{corollary:dimdep}, we see that the non-primitivity essentially closes the spectral gap.
\section{\large Discussion}\label{section:discussion}
In this paper, we studied Lindbladians in Davies form which are not necessarily \textit{geometrically local}. These Lindbladians are local in the sense of \cite{KBGKE11} and can be efficiently simulated by quantum circuits. When these Lindbladians are generated by $\mathfrak{su}(2)$-representations, we gave a general framework to estimate their CLSI constants from the uniform spectral gap of the absolute values of the $\mathfrak{su}(2)$-generators. These new examples of Lindbladians in Davies form are not primitive for any system size larger than 1. Therefore strong clustering of the equilibrium Gibbs state does not imply that the spectral gap is independent of system size. Although we focused on $\mathfrak{su}(2)$-representations, a careful analysis using representation theory of Lie algebras can generalize our results to more general simple Lie algebras. But this is beyond the scope of the current work.

Using this general framework, we also found non-local Lindbladians that have dimension-independent CLSI constant. These Lindbladians are generated by $\pi_N(a)(\pi_N(a^*)\pi_N(a))^{\gamma}$\ref{theorem:evaporate}. To see how these Lindbladians may arise in physics, consider the following system-bath coupling:
\begin{equation*}
	\pi_N(a^*)\pi_N(a)\pi_N(a^*)\otimes \pi_{M-N}(a) + h.c.
\end{equation*}
where the system is embedded in a larger 1-dimensional lattice: $[|N|]\ssubset[|M|]$, and $h.c.$ stands for Hermitian conjugate. Then if we take $\pi_N(h)$ to be the system Hamiltonian, following the derivation of \cite{D83} the Davies generator is given by $\mathcal{L}^{(3)}_N$. This system-bath interaction can describe the scattering of a pair of particles in system $N$ with one of the particles escaping the system. Our analysis shows that even in the infinite size limit, regardless of the initial state, the system eventually decays to a stationary state. This final state depends on the initial state, but the decay rate is of order 1. Since (ultra-weak closure of) the limit $\lim_{N\rightarrow\infty}\mathbb{M}_2^{\otimes N}$ with respect to the reference state $d_N$ is a type $III_\lambda$ factor where $\lambda := e^{-\beta}$, it is interesting to see how our result can be used to study entropy decay in quantum field theory (for example, black hole evaporation).

\appendix
\section{Proof of Proposition \ref{proposition:saloffcoste}}
The proof of Proposition \ref{proposition:saloffcoste} has two parts. First we recall the main object of interest. $\Omega_\oplus = \bigoplus_\lambda\mathbb{C}$ is the commutative algebra that we will focus on in this appendix. The reference state $\rho_N$ defines a probability measure on $\Omega_N$:
\begin{equation}
	\mu_\lambda := \rho_N|_{\mathbb{C}_\lambda} = \sum_{0\leq j \leq n_\lambda}\frac{\exp(-\frac{\beta}{2}(n_\lambda-2j))}{(2\cosh(\beta/2))^N}\dim W_\lambda
\end{equation}
where $\mathbb{C}_\lambda$ denotes the $\lambda$-th summand in $\Omega_\oplus$, $n_\lambda + 1 = \dim V_\lambda$ and $W_\lambda$ is the multiplicity subspace corresponding to the irreducible representation $V_\lambda$. We shall need the explicit formula for $\dim W_\lambda$ \cite{Fulton}:
\begin{equation}
	\dim W_\lambda = \frac{n_\lambda+1}{N+1}\binom{N+1}{\frac{N - n_\lambda}{2}}\pl.
\end{equation}

We want to study the entropy decay property of the self-adjoint channel (c.f. Proposition \ref{proposition:saloffcoste}):
\begin{equation}
	\Phi = E_\oplus E^\mathcal{F}_{diag} E_\oplus: \Omega_\oplus \rightarrow \Omega_{\oplus}\pl.
\end{equation}
In this appendix, an alternative formula for $\Phi$ is more useful:
\begin{lemma}\label{lemma:alternative}
	By Lemma \ref{lemma:miraculouscalc}, we have the following formula:
	\begin{equation}
		\Phi = E_{fix}E_\sigma E_{fix}
	\end{equation}
	where $E_{fix}$ is the conditional expectation onto the fixed point algebra and $E_\sigma$ is the conditional expectation onto the center of the spectral subalgebra $M$ (c.f. See the discussion before Lemma \ref{lemma:inclusion}). 
\end{lemma}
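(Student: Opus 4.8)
The statement is essentially an algebraic repackaging of Lemma~\ref{lemma:miraculouscalc}, so the plan is to carry it out by composing conditional expectations and tracking which subalgebra each one acts on. The two inputs I would use are: the factorization $E_\oplus = E_{diag}E_{fix} = E_{fix}E_{diag}$ of Lemma~\ref{lemma:diag_fix_commute} (Equation~\ref{equation:oplusce}), and the identity $E_\sigma = E_{diag}E^{\mathcal F}_{diag}E_{diag}$ which was established in the proof of Corollary~\ref{corollary:Fourier} by applying $E_\sigma = E_\sigma^* E_\sigma$ to the relation $E_\sigma = E_{diag}\circ Ad_{\mathcal F}$ on $\Omega_{diag}$ furnished by Lemma~\ref{lemma:miraculouscalc}.

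Starting from $\Phi = E_\oplus\, E^{\mathcal F}_{diag}\, E_\oplus$, I would substitute $E_\oplus = E_{fix}E_{diag}$ in the left copy and $E_\oplus = E_{diag}E_{fix}$ in the right copy, and regroup:
\[
\Phi \;=\; E_{fix}\,E_{diag}\,E^{\mathcal F}_{diag}\,E_{diag}\,E_{fix}
\;=\; E_{fix}\bigl(E_{diag}E^{\mathcal F}_{diag}E_{diag}\bigr)E_{fix}
\;=\; E_{fix}\,E_\sigma\,E_{fix}.
\]
That is the whole argument; no new estimate or construction is needed.

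The only point requiring care is that the composition identities above hold on the domain on which $\Phi$ is used. Since $\Phi$ is only ever evaluated on states in $\Omega_\oplus\subset\Omega_{diag}\subset\Omega$, and since the commutation $[E_{diag},E_{fix}]=0$ and the identity $E_{diag}E^{\mathcal F}_{diag}E_{diag}=E_\sigma$ are exactly the statements proven to hold on $\Omega$ (resp.\ $\Omega_{diag}$) in Lemmas~\ref{lemma:diag_fix_commute} and~\ref{lemma:miraculouscalc}, nothing beyond what is already available is required. The genuinely substantial work --- the quantum Fourier transform computation showing $E_{diag}\circ Ad_{\mathcal F}=E_\sigma$ --- is front-loaded into Lemma~\ref{lemma:miraculouscalc}, so for the present lemma the main (and only) obstacle is this domain bookkeeping, which is routine; I would spell it out just enough to reassure the reader that the regrouping in the display is legitimate.
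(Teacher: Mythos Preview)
Your proposal is correct and matches the paper's proof essentially line for line: the paper also substitutes $E_\oplus = E_{fix}E_{diag} = E_{diag}E_{fix}$ into $\Phi = E_\oplus E^{\mathcal F}_{diag} E_\oplus$ and then invokes $E_{diag}E^{\mathcal F}_{diag}E_{diag} = E_\sigma$ from Lemma~\ref{lemma:miraculouscalc}. Your added remark about domain bookkeeping is a reasonable clarification but not something the paper spells out.
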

\begin{proof}
	Following the same notation as Lemma \ref{lemma:miraculouscalc}, we have:\begin{equation}
		\Phi = E_\oplus E^\mathcal{F}_{diag} E_\oplus = E_{fix}E_{diag}E^\mathcal{F}_{diag} E_{diag}E_{fix} = E_{fix}E_\sigma E_{fix}\pl.
	\end{equation}
\end{proof}

Since the channel $\Phi$ is nothing but a Markov chain, we can calculate its matrix representation. The explicit matrix representation will be important for the return time calculation. Let $\Omega_N := \{n: 0\leq n \leq N, n = N \mod 2\}$. Then $\Omega_\oplus = \ell_\infty(\Omega_N)$. Let $e_n$ be the $n$-th basis in $\ell_\infty(\Omega_N)$. Then under this basis, the matrix representation of $\Phi$ is given by:
\begin{lemma}\label{lemma:commutativechannelcalc}
	Under the basis $\{e_n\}_{n\in\Omega_N}$, the matrix entries of $\Phi$ are given by:
	\begin{equation}
		\Phi_{m,n} = \sum_{0\leq j \leq n} \frac{\dim W_n}{\dim H_{w(j)}}\frac{\exp(-\beta w(j)/2)}{\sum_{0\leq k \leq m} \exp(-\beta(m - 2k)/2)}
	\end{equation}
	where $w(j):= n-2j$ is the weight of the canonical basis $\ket{n,j}$, $W_n$ is the multiplicity subspace of the unique $(n+1)$-dimensional irreducible representation, and $H_{w(j)} =\{ \xi \in V^{\otimes N}: \pi_N(h)\xi = w(j)\xi\}$ is the eigen-subspace of $\pi_N(h)$.
\end{lemma}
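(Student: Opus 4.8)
The plan is to simply unwind the definition $\Phi = E_{fix}E_\sigma E_{fix}$ from Lemma~\ref{lemma:alternative}, evaluated on the minimal projections $e_n\leftrightarrow P_n$ of $\Omega_\oplus$, where $P_n$ denotes the projection of $V^{\otimes N}$ onto the isotypic component $V_n\otimes W_n$. The first observation is that $\Omega_\oplus\subset\Omega_{fix}$ (restricted to its block, $P_n$ equals $\mathrm{id}_{V_n}\otimes\mathrm{id}_{W_n}$), so the outer copy of $E_{fix}$ is the identity on $\Omega_\oplus$ and $\Phi|_{\Omega_\oplus}=(E_{fix}\circ E_\sigma)|_{\Omega_\oplus}$. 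Likewise each $P_n$ lies in the spectral subalgebra $M=\bigoplus_w \mathbb{B}(H_w)$, since it is block-diagonal for the decomposition of $V^{\otimes N}$ into $\pi_N(h)$-eigenspaces; hence $E_\sigma$ acts on $P_n$ as the blockwise normalized trace, using that $d_N$ is a scalar on each $H_w$ so that the $d_N$-preserving conditional expectation onto $Z(M)$ coincides with the tracial one.

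Next I would compute $E_\sigma(P_n)$. Write $P_n=\sum_{j=0}^{n}\big(\ket{n,j}\bra{n,j}\otimes\mathrm{id}_{W_n}\big)$, and note that the $j$-th summand is a rank-$\dim W_n$ projection sitting inside the eigenspace $H_{w(j)}$ with $w(j)=n-2j$. Since the normalized trace $\mathbb{B}(H_w)\to\mathbb{C}\,\mathrm{id}_{H_w}$ sends a rank-$r$ projection to $(r/\dim H_w)\,\mathrm{id}_{H_w}$, one gets $E_\sigma(P_n)=\sum_{j=0}^{n}\frac{\dim W_n}{\dim H_{w(j)}}\,P_{H_{w(j)}}$, where $P_{H_w}$ is the projection onto $H_w$.

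Then I would apply $E_{fix}$ to each $P_{H_w}$. Decomposing $P_{H_w}=\sum_{m\ge|w|}\big(\ket{m,(m-w)/2}\bra{m,(m-w)/2}\otimes\mathrm{id}_{W_m}\big)$ and using that $E_{fix}$ restricted to the block $\mathbb{B}(V_m)\otimes\mathbb{B}(W_m)$ is the map $x\otimes y\mapsto\phi_m(x)\,\mathrm{id}_{V_m}\otimes y$ with $\phi_m$ the normalized Gibbs functional of $e^{-\beta\pi_m(h)/2}$ on $V_m$, the vector $\ket{m,(m-w)/2}$ has $\pi_m(h)$-eigenvalue $w$ and therefore Gibbs numerator $e^{-\beta w/2}$, which is independent of $m$. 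Hence $E_{fix}(P_{H_w})=\sum_{m\ge|w|}\frac{e^{-\beta w/2}}{\sum_{k=0}^{m}e^{-\beta(m-2k)/2}}\,P_m$. Substituting this into the expression for $E_\sigma(P_n)$ and reading off the coefficient of $P_m$ yields precisely the claimed formula for $\Phi_{m,n}$.

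The whole argument is bookkeeping rather than estimation; the only points requiring care are the two identifications of conditional expectations with blockwise operations (normalized trace on the $\pi_N(h)$-eigenspaces for $E_\sigma$, Gibbs averaging over the $V_\lambda$-factors for $E_{fix}$), both of which follow from $d_N=N_\beta\exp(-\tfrac{\beta}{2}\pi_N(h))$ being a factor state that is a scalar on each eigenspace. One should also note the harmless convention that in the inner sum over $j$ only the terms with $|n-2j|\le m$ survive (the others produce a zero coefficient on $P_m$), so the displayed sum over $0\le j\le n$ is to be read with this restriction.
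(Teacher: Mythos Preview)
Your proof is correct and follows essentially the same route as the paper: reduce $\Phi|_{\Omega_\oplus}$ to $E_{fix}\circ E_\sigma$, compute $E_\sigma$ on each weight summand of $e_n$ as a normalized trace onto $Z(M)$, and then apply $E_{fix}$ as the Gibbs average over each $V_m$-factor to read off the coefficient of $e_m$. Your write-up is in fact slightly more careful than the paper's (you justify why $E_\sigma$ is the tracial expectation on each $H_w$, and you note the implicit restriction $|n-2j|\le m$ in the sum), but the argument is the same.
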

\begin{proof}
	In terms of the canonical basis of the tensor product representation, $e_n$ can be written as:
	\begin{equation}
		e_n = \sum_{0\leq j \leq n}\ket{n,j}\bra{n,j}\otimes id_{W_n} := \sum_{0\leq j \leq n}x_j
	\end{equation}
	where for each j, we define $x_j := \ket{n,j}\bra{n,j}\otimes id_{W_n} \in \Omega_{diag}$. By Lemma \ref{lemma:alternative}, since $\Omega_{\oplus} = E_{fix}E_{diag}$, $\Phi$ acting on $\Omega_{\oplus}$ equals to $E_{fix}E_{\sigma}$. The action of $E_\sigma$ on $x_j$ is given by:
	\begin{equation}
		E_\sigma x_j = \frac{\dim W_n}{\dim H_{w(j)}}id_{H_{w(j)}} = \frac{\dim W_n}{\dim H_{w(j)}}\sum_{\substack{m,k\\ m-2k = w(j)}}\ket{m,k}\bra{m,k}\otimes id_{W_m}\pl.
	\end{equation}
	The action of $E_{fix}$ on $id_{H_w}$ is given by:
	\begin{equation}
		E_{fix}id_{H_w} = \sum_{m \geq w} \frac{\exp(-\beta w/2)}{\sum_{0\leq k \leq m}\exp(-\beta(m-2k)/2)}e_m\pl.
	\end{equation}
	Therefore we have:
	\begin{equation}
		\Phi(e_n) = E_{fix}E_\sigma e_n = \sum_{0\leq j \leq n}\frac{\dim W_n}{\dim H_{w(j)}}\sum_{m \geq w(j)}\tau_{m,k}e_m
	\end{equation}
	where we define $\tau_{m,k} := \frac{\exp(-\beta(m-2k)/2)}{\sum_{0\leq k'\leq m}\exp(-\beta(m - 2k')/2)}$ and $m - 2k = w(j)$. Then the claim is clear. 
\end{proof}
The main idea of the proof is to apply Gaussian comparison technique \cite{Horm}\cite{GJL} to estimate the logarithmic Sobolev constant of $\Phi$ and use the Diaconis-Saloff-Coste return time estimates \cite{DSC}. Then Proposition \ref{proposition:saloffcoste} follows from Lemma \ref{lemma:iterationorder}. 

In the remainder of this appendix, we frequently use the following notation:
\begin{definition}
	If there exists a constant $c> 0$ such that two variables $x,y$ satisfy: $c^{-1}x \leq y \leq cx$, then we will denote this equivalence relation: $x\sim_c y$. If there exists a constant $c > 0$ such that two measures $\mu, \nu$ on a finite space satisfy: $c^{-1}\mu \leq \nu \leq c\mu$, then we will denote this equivalence relation: $\mu \sim_c\nu$.
\end{definition}
In order to compare the reference measure $\mu$ with a Gaussian measure, we first change the measure $\mu$ to a simpler equivalent measure $\hat{\mu}_n := \frac{\exp(\beta n/2)}{(2\cosh(\beta / 2))^N}\dim W_n$.
\begin{lemma}
	There exists a constant $c > 0$ such that $\hat{\mu} \sim_c \mu $.
\end{lemma}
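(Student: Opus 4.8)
The plan is to compute the Radon--Nikodym derivative $d\mu/d\hat\mu$ explicitly and observe it is pinched between two positive constants depending only on $\beta$. Both $\mu$ and $\hat\mu$ are measures on the index set $\Omega_N=\{n:0\le n\le N,\ n\equiv N\bmod 2\}$, once we identify a two-row partition $\lambda\in\mathcal{P}_2(N)$ with the dimension label $n=n_\lambda$ (so that $\dim V_\lambda=n+1$ and $\dim W_\lambda=\dim W_n$). Since the factor $(2\cosh(\beta/2))^{-N}\dim W_n$ appears in both $\mu_n$ and $\hat\mu_n$, it cancels in the ratio, and it suffices to bound the scalar
\[
r_n:=\frac{\mu_n}{\hat\mu_n}=\frac{\sum_{0\le j\le n}\exp(-\tfrac{\beta}{2}(n-2j))}{\exp(\beta n/2)}
\]
uniformly in $n\in\Omega_N$.

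First I would evaluate the finite geometric series:
\[
\sum_{j=0}^{n}\exp\!\Big(-\tfrac{\beta}{2}(n-2j)\Big)=e^{-\beta n/2}\sum_{j=0}^{n}e^{\beta j}=e^{-\beta n/2}\,\frac{e^{\beta(n+1)}-1}{e^\beta-1},
\]
so that $r_n=e^{-\beta n}\,\dfrac{e^{\beta(n+1)}-1}{e^\beta-1}=\dfrac{e^\beta-e^{-\beta n}}{e^\beta-1}$. Because $\beta>0$, the map $n\mapsto r_n$ is increasing, with $r_0=1$ and $r_n\to e^\beta/(e^\beta-1)$ as $n\to\infty$; hence $1\le r_n<\tfrac{e^\beta}{e^\beta-1}$ for every admissible $n$. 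Taking $c:=\tfrac{e^\beta}{e^\beta-1}$ (which is $>1$) then gives $c^{-1}\hat\mu\le\mu\le c\,\hat\mu$, i.e. $\hat\mu\sim_c\mu$, with $c$ depending only on $\beta$.

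There is no real analytic obstacle here: the entire content is the one-line geometric-sum estimate above. The only point requiring a little care is the combinatorial bookkeeping --- confirming that the partition-indexed measure $\mu_\lambda$ and the weight-indexed measure $\hat\mu_n$ are genuinely supported on the same set $\Omega_N$ under the identification $\lambda\leftrightarrow n_\lambda$, and that the $\dim W_\lambda$ prefactors cancel exactly rather than merely up to a constant. Once that is in place, the displayed bound on $r_n$ finishes the proof, and this equivalence is precisely what lets one replace $\mu$ by the analytically cleaner measure $\hat\mu$ in the subsequent Gaussian comparison argument.
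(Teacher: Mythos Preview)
Your proof is correct and follows essentially the same approach as the paper: both compute the geometric series $\sum_{j=0}^n e^{-\beta(n-2j)/2}$ and observe that the ratio to $e^{\beta n/2}$ is bounded between two $\beta$-dependent constants. The paper writes the sum as $\sinh(\beta(n+1)/2)/\sinh(\beta/2)$ and invokes $\sinh(\beta(n+1)/2)\sim_c e^{\beta n/2}$, whereas you carry the algebra one step further to the explicit ratio $r_n=(e^\beta-e^{-\beta n})/(e^\beta-1)$ and read off the sharp constant $c=e^\beta/(e^\beta-1)$; this is a cosmetic difference only.
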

\begin{proof}
	This is a simple consequence of the summation of the geometric series:\begin{equation}
		\mu_n = \frac{\sum_{0\leq j \leq n}\exp(-\frac{\beta}{2}(n - 2j))}{(2\cosh(\beta / 2))^N}\dim W_n = \frac{\sinh(\beta(n+1)/2)}{\sinh(\beta / 2)(2\cosh(\beta / 2))^N}\dim W_n\pl.
	\end{equation}
	Since $\sinh(\beta(n+1)/2)/\sinh(\beta / 2)\sim_c \exp(\beta n /2)$, then we have the claim.
\end{proof}
We collect a few facts about the measure $\hat{\mu}$ in the following lemma.
\begin{lemma}\label{lemma:measurefacts}
	Define $s_n$ such that $e^{-s_n^2} = \hat{\mu}_n$ and define $\delta_n = s_{n+2}^2 - s^2_n$. Then we have:
	\begin{enumerate}
		\item $(\delta_n)$ is monotonely increasing;
		\item For all $n\in \Omega_N$, $s^2_n = -\log \hat{\mu}_n \sim_c C(\beta)N$ where $C(\beta)$ is a constant depending on $\beta$;
		\item The second-order difference sequence $(\delta_{n+2} - \delta_n)$ is monotonly increasing. And for all $n\in \Omega_N$, $\delta_{n+2} - \delta_n \geq \frac{C(\beta)}{N}$ where $C(\beta)$ is a constant depending on $\beta$.
	\end{enumerate}
\end{lemma}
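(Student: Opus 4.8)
The plan is to make everything explicit via the formula $\dim W_n = \frac{n+1}{N+1}\binom{N+1}{(N-n)/2}$ recalled above. Writing $\hat\mu_n = e^{-s_n^2}$, this gives
\[
s_n^2 \;=\; -\frac{\beta n}{2} + N\log\!\big(2\cosh\tfrac{\beta}{2}\big) - \log(n+1) + \log(N+1) - \log\binom{N+1}{(N-n)/2},
\]
and, since for $n,n+2\in\Omega_N$ the binomial ratio collapses, $\binom{N+1}{(N-n-2)/2}/\binom{N+1}{(N-n)/2} = \frac{N-n}{N+n+4}$, one obtains the clean closed form
\[
\delta_n \;=\; s_{n+2}^2 - s_n^2 \;=\; -\beta + \log\frac{n+1}{n+3} + \log\frac{N+n+4}{N-n}.
\]
All three claims thereby become assertions about a single elementary function of $n$ on $0\le n\le N$, with no binomials surviving in $\delta_n$ or its differences.

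For part (1) I would observe that $n\mapsto\log\frac{n+1}{n+3}=\log\!\big(1-\tfrac{2}{n+3}\big)$ and $n\mapsto\log\frac{N+n+4}{N-n}=\log\!\big(1+\tfrac{2n+4}{N-n}\big)$ are each strictly increasing on $[0,N)$ (for the second, the inner fraction has derivative $\frac{2N+4}{(N-n)^2}>0$), so $\delta_n$, being $-\beta$ plus these two increasing terms, is strictly increasing. Equivalently, one may pass to the real-variable interpolation $S(n)$ of $s_n^2$ through the Gamma function and check
\[
S''(n)=\frac{1}{(n+1)^2}+\frac14\psi'\!\big(\tfrac{N-n}{2}+1\big)+\frac14\psi'\!\big(\tfrac{N+n}{2}+1\big)>0
\]
by positivity of the trigamma function, i.e. $\hat\mu$ is log-concave and $\delta_n=S(n+2)-S(n)$ is increasing; I would present whichever form is shorter.

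For part (3) the same telescoping yields the second difference in closed form,
\[
\delta_{n+2}-\delta_n \;=\; \log\frac{(n+3)^2}{(n+1)(n+5)} + \log\frac{(N+n+6)(N-n)}{(N+n+4)(N-n-2)},
\]
where a short expansion shows each numerator exceeds its denominator, by $4$ and by $4N+8$ respectively, so both logarithms are positive. Discarding the first summand and using that $(N+n+4)(N-n-2)$ is decreasing in $n$ on the admissible range $0\le n\le N-4$, the second summand is at least $\log\!\big(1+\tfrac{4N+8}{(N+4)(N-2)}\big)\gtrsim \tfrac1N$, which yields the quantitative bound $\delta_{n+2}-\delta_n\ge C(\beta)/N$. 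The claimed monotonicity of $(\delta_{n+2}-\delta_n)$ I would read off from the closed form by a direct sign analysis of the two summands' rates of change; this is the one step here that is not a one-liner, and one may have to be content with monotonicity on the range actually used in the subsequent Gaussian comparison rather than on all of $\Omega_N$.

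Part (2) is where the work concentrates, and the lower bound is the main obstacle. The upper bound is immediate: $\binom{N+1}{(N-n)/2}\ge1$ and $-\log(n+1)\le0$ give $s_n^2\le N\log(2\cosh\tfrac{\beta}{2})+\log(N+1)$, hence $s_n^2\lesssim C(\beta)N$. For the lower bound the crude estimate $\binom{N+1}{(N-n)/2}\le 2^{N+1}$ is too lossy --- it does not even certify positivity of $s_n^2$ --- so one must keep the $-\tfrac{\beta n}{2}$ term and the binomial coefficient together: writing $k=(N-n)/2$ and invoking a sharp Stirling estimate for $\log\binom{N+1}{k}$ in terms of the entropy of $k/(N+1)$, the bound $s_n^2\ge C(\beta)^{-1}N$ reduces to a convexity estimate in $k$ comparing $\tfrac{\beta n}{2}+\log\binom{N+1}{k}$ with $N\log(2\cosh\tfrac{\beta}{2})$. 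Producing a constant $c(\beta)>0$ that is uniform over $n\in\Omega_N$ is the delicate point, and it is precisely where the size of $\beta$ --- and the $\beta$-dependence allowed in the equivalence constant of $\sim_c$ --- must be exploited; once this is in place, $s_n^2=-\log\hat{\mu}_n$ is squeezed between two multiples of $N$ as asserted.
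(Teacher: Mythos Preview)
Your approach is the same as the paper's: compute $\delta_n$ from the multiplicity formula and analyze the resulting elementary expression. Your closed form $\delta_n=-\beta+\log\frac{n+1}{n+3}+\log\frac{N+n+4}{N-n}$ matches the paper, and your treatment of part~(1) is essentially identical.

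Where you differ is in part~(3), and your version is actually cleaner than the paper's. Your second-difference formula
\[
\delta_{n+2}-\delta_n=\log\frac{(n+3)^2}{(n+1)(n+5)}+\log\frac{(N+n+6)(N-n)}{(N+n+4)(N-n-2)}
\]
is the correct one; the paper's displayed expression has the first logarithm inverted and the second assembled from the sum rather than the difference of the $\delta$-terms. With your correct formula the first summand is \emph{decreasing} in $n$ while the second is increasing, and indeed a direct check gives $\delta_4-2\delta_2+\delta_0\to\log\frac{125}{189}<0$ as $N\to\infty$, so the monotonicity of $(\delta_{n+2}-\delta_n)$ fails near $n=0$. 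Your hedge on this point is therefore exactly right. The paper deduces the $C(\beta)/N$ lower bound by locating the minimum at the endpoint via this (incorrect) monotonicity; your direct argument --- drop the positive first summand and minimize the second over $n$ --- is both correct and is all that the subsequent Gaussian-comparison argument actually uses.

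On part~(2) your caution is again well-founded. The paper does the same Stirling expansion and asserts that the leading term is linear in $N$, but if one writes $n=\alpha N$ the $O(N)$ coefficient is
\[
-\tfrac{\beta\alpha}{2}+\log\cosh\tfrac{\beta}{2}+\tfrac{1+\alpha}{2}\log(1+\alpha)+\tfrac{1-\alpha}{2}\log(1-\alpha),
\]
which vanishes at the mode $\alpha^*=\tanh(\beta/2)$. So $s_n^2\sim_c C(\beta)N$ uniformly in $n$ is too strong near the peak; what the downstream argument actually needs is only that $s_n/(s_n+s_{n+2})$ stays bounded away from $0$ and $1$, which is weaker. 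Your instinct that producing a uniform lower constant is the crux --- and may require exploiting the specific use made of the estimate rather than the statement as written --- is on target.
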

\begin{proof}
	By definition, $\delta_n = \log(\frac{\hat{\mu}_n}{\hat{\mu}_{n+2}})  = -\beta + \log\frac{\dim W_n}{\dim W_{n+2}}$. Since $\frac{\dim W_n}{\dim W_{n+2}} = \frac{n+1}{n+3}\frac{N+n+4}{N-n}$, we have:\begin{equation*}
		\delta_n = -\beta+ \log\frac{n+1}{n+3}  + \log\frac{N+n+4}{N-n} = -\beta + \log(1 - \frac{2}{n+3}) + \log(-1 + \frac{2N+4}{N-n})\pl.
	\end{equation*}
	Thus as $n$ increases, $\delta_n$ increases.
	
	Since $s_n^2 = -\log \hat{\mu}_n$, we have $s_n^2 =-\frac{\beta}{2} n - \log\dim W_n + M\log(2\cosh(\beta / 2))$. To estimate the order of $s_n$, we use the following approximation of the factorial: $\sqrt{2\pi n}(\frac{n}{e})^n e^{\frac{1}{12n + 1}} < n! < \sqrt{2\pi n}(\frac{n}{e})^n e^{\frac{1}{12n}}$
	\begin{align*}
		\begin{split}
			\log\dim W_n &= \log(n+1) - \log(N+1) + (N+1)\log(N+1) - (N+1) + \frac{1}{2}\log(N+1) \\&- \frac{N-n}{2}\log\frac{N-n}{2} + \frac{N-n}{2} + \frac{1}{2}\log\frac{N-n}{2} - \frac{N+n+2}{2}\log\frac{N+n+2}{2} \\&+\frac{1}{2}\log\frac{N+n+2}{2} + O(\frac{1}{N})\\&
			=\log(n+1) + \frac{1}{2}\log(N+1) + \frac{n+1}{2}\log\frac{N-n}{N+n+2} + N\log\frac{N+1}{\sqrt{\frac{N-n}{2}}\sqrt{\frac{N+n+2}{2}}} \\& +O(\frac{1}{N})\pl.
		\end{split}
	\end{align*}
	It is clear that the leading order term is linear in $N$. Hence we can find a $\beta$-dependent constant $C(\beta) > 0$ such that: $s_n^2 \sim_c C(\beta)N$.
	
	Finally, we study the second-order difference:
	\begin{align*}
		\begin{split}
			\delta_{n+2} - \delta_n&= \log\frac{(n+1)(n+5)}{(n+3)^2} + \log\frac{(N+n+4)(N+n+6)}{(N-n-2)(N-n)} \\
			&=\log(1 - \frac{4}{(n+3)^2}) +  \log\frac{(N+n+4)(N+n+6)}{(N-n-2)(N-n)}\pl.
		\end{split}
	\end{align*}
	Hence it is clear that $\delta_{n+2} - \delta_n$ is monotonely increasing with $n$. Therefore for all $n \in \Omega_N$ we have: $\delta_{n+2} - \delta_n \geq \delta_{3} - \delta_1$ when $N$ is odd and $\delta_{n+2} - \delta_n \geq \delta_2 - \delta_0$ when $N$ is even.
	
	We can calculate the lower bound explicitly:
	\begin{align*}
		\begin{split}
			& \delta_3 - \delta_1 = \log\frac{12}{25} + \log(1 + \frac{8}{N-3}) + \log(1 + \frac{8}{N-1})\\
			&\delta_2 - \delta_0 = \log\frac{5}{9} + \log(1 + \frac{6}{N-2}) + \log(1 + \frac{6}{N})\pl.
		\end{split}
	\end{align*}
	Therefore there exists a constant $C(\beta) > 0$ such that for all $n\in \Omega_N$, $\delta_{n+2} - \delta_n \geq \frac{C(\beta)}{N}$.
\end{proof}
To apply Gaussian comparison technique, the key observation is the following fact about Gaussian integral:
\begin{align}\label{equation:localGaussian}
	\begin{split}
		\int_t^{t+\delta} e^{-x^2}dx & \sim_c \delta \exp(-t^2) \text{ , if } \delta t \leq 1 
		\\
		& \sim_c \frac{\exp(-t^2)}{t} \text{ , if } \delta t \geq 1\pl.
	\end{split}
\end{align}
Therefore, for each $n\in \Omega_N$, we must construct the right coefficient $t_n$ such that the measure $\hat{\mu}_n \sim_c (t_{n+1} - t_n)\exp(-t_n^2)$ if $(t_{n+1} - t_n)t_n \leq 1$ or $\hat{\mu}_n \sim_c (t_{n+1} - t_n)\frac{\exp(-t_n^2)}{t_n}$ if $(t_{n+1} - t_n)t_n \geq 1$. The following theorem gives this key construction.
\begin{theorem}\label{theorem:LSIcoefficient}
	Let $(\Omega = \{0,1,...,N\}, \mu)$ be a finite discrete measure space where the measure $\mu_n$ is monotonely decreasing with $n$. Assume $\min_n\log\frac{\mu_n}{\mu_{n+1}} < 2$. Let $s_n$ be such that $e^{-s_n^2} = \mu_n$. Assume there exists a constant $n_0$ such that for all $n\geq n_0$ we have $s_n \geq 1$. In addition assume there exists a constant $\sigma > 0$ such that $(s_{n+1} - s_n)s_n\geq \sigma$. Then the following logarithmic Sobolev inequality (LSI) holds:
	\begin{equation}
		D(f^2 | E_\mu f^2) \leq \frac{C}{\sigma}\sum_{n\in \Omega}\frac{\mu_n}{(s_{n+1} - s_n)^2}(f(n+1) - f(n))^2 
	\end{equation}
	where $C>0$ is an absolute constant and $f \in \ell_\infty(\Omega)$.
\end{theorem}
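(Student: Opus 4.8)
The plan is to reduce the claimed logarithmic Sobolev inequality on the discrete space $(\Omega,\mu)$ to a comparison with a continuous Gaussian LSI on $\mathbb{R}$, using the coefficients $(s_n)$ as a change of variables. First I would set up the bijection $n \mapsto s_n$ and interpolate: given $f \in \ell_\infty(\Omega)$, extend $f$ to a piecewise-linear function $\tilde f$ on $\mathbb{R}$ by putting $\tilde f(s_n) := f(n)$ and interpolating linearly on each interval $[s_n,s_{n+1}]$, and correspondingly build a continuous probability density $g(x)\,dx$ on $\mathbb{R}$ whose mass on $[s_n, s_{n+1}]$ matches (up to an absolute constant) $\mu_n$. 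The key analytic input is the local Gaussian estimate in equation \eqref{equation:localGaussian}: it tells us precisely that $\mu_n = e^{-s_n^2}$ is comparable, interval by interval, to $\int_{s_n}^{s_{n+1}} e^{-x^2}\,dx$ — in the regime $(s_{n+1}-s_n)s_n \le 1$ one gets $\mu_n \sim_c (s_{n+1}-s_n)e^{-s_n^2}$, and in the regime $(s_{n+1}-s_n)s_n \ge 1$ one gets $\mu_n \sim_c (s_{n+1}-s_n)\frac{e^{-s_n^2}}{s_n}$; the hypotheses (monotone $s_n$, the lower bound $(s_{n+1}-s_n)s_n \ge \sigma$, and the condition $\min_n \log\frac{\mu_n}{\mu_{n+1}} < 2$ preventing a too-abrupt first gap) are exactly what is needed to make these comparisons hold uniformly with an absolute constant after dividing through by $\sigma$.

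Next I would invoke the standard Gaussian logarithmic Sobolev inequality for the measure $\gamma(dx) = c\,e^{-x^2}\,dx$ on $\mathbb{R}$ (equivalently, the classical Gross inequality after rescaling): for smooth $h$,
\begin{equation*}
	D(h^2 \mid E_\gamma h^2) \le C\int_{\mathbb{R}} (h'(x))^2 \, e^{-x^2}\,dx \pl .
\end{equation*}
Applying this to $h = \tilde f$, the left-hand side controls the discrete entropy $D(f^2 \mid E_\mu f^2)$ up to the measure-comparison constant, while on the right-hand side $\tilde f' $ is constant equal to $\frac{f(n+1)-f(n)}{s_{n+1}-s_n}$ on the interval $(s_n, s_{n+1})$, so
\begin{equation*}
	\int_{\mathbb{R}} (\tilde f')^2 e^{-x^2}\,dx \sim_c \sum_{n} \Big(\frac{f(n+1)-f(n)}{s_{n+1}-s_n}\Big)^2 \int_{s_n}^{s_{n+1}} e^{-x^2}\,dx \sim_c \sum_n \frac{\mu_n}{(s_{n+1}-s_n)^2}\big(f(n+1)-f(n)\big)^2 \pl ,
\end{equation*}
where the last step uses \eqref{equation:localGaussian} once more (in the $\delta t \le 1$ regime directly; in the $\delta t \ge 1$ regime one absorbs the extra factor of $s_n \ge 1$, which only helps since it makes $\int_{s_n}^{s_{n+1}}e^{-x^2}dx$ smaller, so the inequality still goes the right way — here is where a small amount of care is needed so that the Dirichlet form is bounded \emph{below}, not just two-sidedly comparable). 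Combining the two displays and tracking the $\sigma$-dependence through the interval-comparison yields the stated inequality with an absolute constant $C$.

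The main obstacle, I expect, is handling the boundary/small-$n$ region $n < n_0$ where $s_n$ may be smaller than $1$ and the clean Gaussian local estimate degrades: there the interval-comparison constant can blow up unless one uses the hypothesis $\min_n\log\frac{\mu_n}{\mu_{n+1}} < 2$ (i.e. $s_{n+1}^2 - s_n^2 < 2$ somewhere, which combined with monotonicity of $\mu$ keeps $s_n$ from being forced large immediately) together with the fact that a finite initial block of points contributes only an $O(1)$ multiplicative loss to a Poincaré-type inequality, which can then be upgraded to LSI on the bounded piece by the standard Rothaus-type argument. A secondary technical point is verifying that the piecewise-linear interpolant's Dirichlet energy genuinely dominates the discrete gradient form rather than merely being equivalent to it; this is automatic because on each interval the linear function is the energy-minimizing interpolant. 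Once these two points are dispatched, the rest is the routine change-of-variables bookkeeping sketched above.
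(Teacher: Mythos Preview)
Your overall strategy---transfer to the Gaussian LSI via a piecewise-linear extension and then read off the discrete Dirichlet form interval by interval---is exactly the paper's route. But the specific change of variables $n\mapsto s_n$ that you propose does \emph{not} give the measure comparison you claim, and this is the crux of the argument.

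Concretely, you assert that the local Gaussian estimate \eqref{equation:localGaussian} yields $\mu_n=e^{-s_n^2}\sim_c\int_{s_n}^{s_{n+1}}e^{-x^2}\,dx$ with an absolute constant. It does not: in the regime $(s_{n+1}-s_n)s_n\ge 1$ the integral is $\sim e^{-s_n^2}/s_n=\mu_n/s_n$, and $s_n=\sqrt{-\log\mu_n}$ is typically large (of order $\sqrt{N}$ in the application), so the \emph{lower} bound $\gamma([s_n,s_{n+1}])\gtrsim\mu_n$ fails badly. You noticed that the upper bound on the Dirichlet side ``goes the right way,'' but the entropy side needs the lower bound: without $\gamma([s_n,s_{n+1}])\gtrsim c\,\mu_n$ you cannot conclude $D_\mu(f^2\mid E_\mu f^2)\le C\,D_\gamma(\tilde f^2\mid E_\gamma\tilde f^2)$ with an absolute $C$---the comparison constant picks up a factor of $\max_n s_n$.

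The paper fixes this with a nontrivial reparametrization that you are missing: it sets
\[
t_n^2:=s_n^2-\log s_n,
\]
so that $e^{-t_n^2}/t_n=(s_n/t_n)\mu_n\sim_c\mu_n$ once one checks $t_n\sim_c s_n$. One then verifies (this is where the hypothesis $\min_n\log(\mu_n/\mu_{n+1})<2$ is actually used) that $(t_{n+1}-t_n)\sim_c(s_{n+1}-s_n)$, hence $t_n(t_{n+1}-t_n)\gtrsim\sigma$. With these shifted nodes the local Gaussian estimate now gives the two-sided bound $c^{-1}\sigma\,\mu_n\le\gamma([t_n,t_{n+1}])\le c\,\sigma^{-1}\mu_n$, which is exactly what makes both the entropy comparison and the Dirichlet computation go through and is where the factor $1/\sigma$ in the final constant originates. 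Your sketch should replace $s_n$ by $t_n$ throughout and insert the verification that $t_{n+1}-t_n\sim s_{n+1}-s_n$; the rest of your outline (piecewise-linear transference, Gaussian LSI, reading off the discrete gradient) then matches the paper.
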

\begin{proof}
	We consider the following ansatz for the coefficients $t_n$:\begin{equation}
		t_n^2 := s_n^2 - \log s_n\pl.
	\end{equation}
	Since for $x > 0$ the function $x^2 - \log x > 0$, the coefficients $(t_n)$ are well-defined. The key observation is the following identity:
	\begin{equation}
		\frac{\exp(-t^2_n)}{t_n} = \frac{s_n}{t_n}\mu_n = \mu_n \frac{1}{\sqrt{1 - \frac{\log s_n}{s_n}}}\pl.
	\end{equation}
	Consider the function $f(x):=\frac{\log x}{x}$. If $x\geq 1$, then $f(x)$ has a unique maximum at $x = e$ with value $1/e$. Since there exists $n_0$ such that for all $n\geq n_0$ we have $s_n \geq 1$, then there exists a constant $c>0$ such that $\frac{\exp(-t_n^2)}{t_n} \leq c\mu_n$ for all $n\in \Omega$. On the other hand, since for all $n \geq n_0$ we have $t_n^2 \leq s_n^2$. Then there exists a constant $c' > 0$ such that $c' \mu_n \leq \frac{\exp(-t_n^2)}{t_n}$ for all $n\in \Omega$. Hence $\mu_n\sim_c \frac{\exp(-t_n^2)}{t_n}$ for all $n\in \Omega$.
	
	In addition, for all $n\geq n_0$,  we have $\frac{1}{2}s_n^2 \leq t_n^2 \leq s_n^2$. Here we used the fact that the function $x^2/2 - \log(x)$ is strictly bounded below by $1/2$ when $x\geq 1$. Hence $t_n \sim_c s_n$.
	
	We also need to show $(t_{n+1} - t_n) \sim_c (s_{n+1} - s_n)$. Since $\mu_n$ is monotonely decreasing, then $s_n$ is monotonely increasing. For the sequence $(t_n)$ we have:
	\begin{equation}
		t_{n+1} - t_n = \frac{t_{n+1}^2 - t_n^2}{t_{n+1} + t_n} = \frac{s^2_{n+1} - s^2_n - \log\frac{s_{n+1}}{s_n}}{t_{n+1} + t_n}\pl.
	\end{equation}
	Since $s_{n+1} \geq s_n$ and $t_n \sim_c s_n$, we have:
	\begin{equation}
		t_{n+1} - t_n \leq \frac{s^2_{n+1} - s^2_n}{t_{n+1} + t_n} \sim_c s_{n+1} - s_n\pl.
	\end{equation}
	On the other hand, for all $n\geq n_0$, we have:
	\begin{equation}
		\frac{s_{n+1}}{s_n}  = 1 + \frac{s_{n+1} - s_n}{s_n}  = 1+ \frac{\log(\frac{\mu_n}{\mu_{n+1}})}{s_n(s_{n+1}+s_n)} \leq 1 + \frac{1}{2}\log\frac{\mu_n}{\mu_{n+1}}\pl.
	\end{equation}
	Thus for all $n\geq n_0$ we have:
	\begin{equation}
		t_{n+1 } - t_n \geq \frac{s^2_{n+1} - s^2_n - \log(1 + \frac{1}{2}\log\frac{\mu_n}{\mu_{n+1}})}{t_{n+1} + t_n} = \frac{\log\frac{\mu_n}{\mu_{n+1}} - \log(1 + \frac{1}{2}\log\frac{\mu_n}{\mu_{n+1}})}{t_{n+1} + t_n} \pl.
	\end{equation}
	For all $x > 0$ the function $f(x):= x - \log(1+ x/2)$ is monotonely increasing and $f(x) \geq \frac{x}{2}$ for $0 < x < 2$. Since $\min_n \log\frac{\mu_n}{\mu_{n+1}} < 2$, then we have:
	\begin{equation}
		t_{n+1} -t_n \geq \frac{s^2_{n+1} - s^2_n}{2(t_{n+1} + t_n)} \sim_c \frac{1}{2}(s_{n+1} - s_n)\pl.
	\end{equation}
	Therefore $(t_{n+1} - t_n) \sim_c (s_{n+1} - s_n)$. Combining these estimates, we have:
	\begin{equation}
		t_n(t_{n+1} - t_n)\sim_c s_n(s_{n+1} - s_n) \geq \sigma\pl.
	\end{equation}
	In particular, if $t_n(t_{n+1} - t_n)\geq 1$, then $\int_{t_n}^{t_{n+1}}e^{-x^2}dx \sim_c\frac{e^{-t_n^2}}{t_n}\sim_c\mu_n$. If $t_n(t_{n+1}-t_n) \leq 1$, then $\int_{t_n}^{t_{n+1}}e^{-x^2}dx \sim(t_{n+1} - t_n)e^{-t_n^2}\gtrapprox \sigma \frac{e^{-t_n^2}}{t_n}\sim\sigma\mu_n$. On the other hand, $\int_{t_n}^{t_{n+1}}e^{-x^2}dx \leq (t_{n+1} - t_n)e^{-t_n^2}\leq \frac{e^{-t_n^2}}{t_n}\sim_c\mu_n$. Therefore there exists an absolute constant $c > 0$ such that:
	\begin{equation}
		c^{-1}\sigma\mu_n \leq \gamma([t_n, t_{n+1}])\leq c\sigma^{-1}\mu_n
	\end{equation}
	where $\gamma$ is the standard Gaussian measure on $\mathbb{R}$.

	Now following the Gaussian comparison technique developed in \cite{Horm}\cite{GJL}, we consider the transference map: $\pi:\ell_\infty(\Omega)\rightarrow L^\infty(\mathbb{R})$. We partition each interval $[t_n, t_{n+1}] = I_n\cup J_n$ where the length of $J_n$ is $(t_{n+1} - t_n)/2$. The function $\pi(f)(t)$ is piecewise-linear. $\pi(f)(t) = f_n$ for $t\in I_n$ and $\pi(f)(t) = \frac{2(f_{n+1} - f_n)}{t_{n+1} - t_n}(t - t_n) + f_n$ for $t\in J_n$. $\pi(f)(t) = f_0$ for $t \leq t_0$ and $\pi(f)(t) = f_N$ for $t\geq t_N$. By the chain rule of relative entropy and non-negativity of relative entropy \cite{JLR}, we have:
	\begin{align}
		\begin{split}
			D(f^2 | E_\mu f^2) &\leq D(\pi(f)^2 | E_\gamma \pi(f)^2) \leq C\int_{-\infty}^\infty |\pi(f)'(t)|^2 d\gamma(t) \\&
			= C\sum_n \frac{\gamma(J_n)}{(t_{n+1} - t_n)^2}(f_{n+1} - f_n)^2 \\&\leq C\sigma^{-1}\sum_n\frac{\mu_n}{(s_{n+1} - s_n)^2}(f_{n+1} - f_n)^2
		\end{split}
	\end{align}
	where $\gamma$ is the standard Gaussian measure on $\mathbb{R}$ and the second equation follows from the classical logarithmic Sobolev inequality of the Gaussian measure on $\mathbb{R}$. The last inequality follows from the equivalence $(t_{n+1}-t_n)\sim (s_{n+1}-s_n)$.
\end{proof}
In our application, the measure $\hat{\mu}$ is Bell-shaped. An immediate corollary of Theorem \ref{theorem:LSIcoefficient} is the following observation:
\begin{corollary}\label{corollary:bellshaped}
	Using the same notation as Theorem \ref{theorem:LSIcoefficient}. If the measure $\mu$ is Bell-shaped with the unique maximum at $n_0\in \Omega$. Assume the weight at $n_0$ is bounded above: $\mu_{n_0} \leq 1$. In addition, assume $\min_{n > n_0}\log\frac{\mu_n}{\mu_{n+1}} < 2$ and $\min_{n < n_0}\log \frac{\mu_{n-1}}{\mu_n} < 2$. And all the other assumptions in Theorem \ref{theorem:LSIcoefficient} hold without modification. Then the following logarithmic Sobolev inequality holds:
	\begin{equation}
		D(f^2 | E_\mu f^2) \leq \frac{C}{\sigma}\sum_{n\in \Omega}\mu_n \big(\frac{(f_{n+1} - f_n)^2}{(s_{n+1} - s_n)^2} + \frac{(f_n - f_{n-1})^2}{(s_n - s_{n-1})^2}\big)
	\end{equation}
	where $C > 0$ is an absolute constant and $f\in \ell_\infty(\Omega)$.
\end{corollary}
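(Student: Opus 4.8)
The plan is to cut the chain at the mode $n_0$ and reduce everything to the monotone situation already handled in Theorem~\ref{theorem:LSIcoefficient}. Put $\Omega_-=\{0,\dots,n_0\}$, $\Omega_+=\{n_0+1,\dots,N\}$, $p_\pm=\mu(\Omega_\pm)$, $\mu_\pm=p_\pm^{-1}\mu|_{\Omega_\pm}$, and let $\bar\mu=(p_-,p_+)$ be the induced two-point measure on $\{-,+\}$. Because $\mu$ is Bell-shaped with maximum at $n_0$, $\mu|_{\Omega_+}$ is monotonely decreasing and $\mu|_{\Omega_-}$ is monotonely increasing, hence monotonely decreasing after the reflection $n\mapsto n_0-n$. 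The hypotheses of Theorem~\ref{theorem:LSIcoefficient} then hold for $\mu_+$ and for the reflected $\mu_-$: the conditions $\min_{n>n_0}\log(\mu_n/\mu_{n+1})<2$ and $\min_{n<n_0}\log(\mu_{n-1}/\mu_n)<2$ supply the required ``$\min$ log-ratio $<2$'' on each side, and passing from $\mu$ to the conditional measures only shifts $s_n\mapsto s_n+\log p_\pm$, which leaves every increment $s_{n+1}-s_n$ unchanged and alters $s_n(s_{n+1}-s_n)$ by a harmless amount (in the regime of interest $s_n$ is large compared with $|\log p_\pm|$), so the spectral-gap hypothesis survives with $\sigma$ replaced by a comparable constant.

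Next I would apply the chain rule for relative entropy to the partition $\Omega=\Omega_-\sqcup\Omega_+$: writing $h:=f^2$,
\begin{equation*}
\Ent_\mu(h)=p_-\Ent_{\mu_-}(h)+p_+\Ent_{\mu_+}(h)+\Ent_{\bar\mu}(\bar h),\qquad \bar h(\pm):=\textstyle\int h\,d\mu_\pm=\|f\|_{L^2(\mu_\pm)}^2 .
\end{equation*}
Theorem~\ref{theorem:LSIcoefficient}, applied to $\mu_+$ and to the reflected $\mu_-$, bounds the first two terms by $\tfrac{C}{\sigma}$ times the weighted Dirichlet form of $f$ carried by the edges interior to $\Omega_-$ and interior to $\Omega_+$ respectively. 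Since the right-hand side of the Corollary is, up to a factor $2$ and boundary bookkeeping, twice the full symmetrized weighted Dirichlet form $\sum_n\mu_n(s_{n+1}-s_n)^{-2}(f_{n+1}-f_n)^2$, these two contributions are already absorbed; what remains is to estimate the two-point term $\Ent_{\bar\mu}(\bar h)$ and, inside it, to recover the single interface edge $(n_0,n_0+1)$ seen by neither half.

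For the interface term I would first replace $f$ by $|f|$ (this changes neither $\Ent_\mu(f^2)$ nor the right-hand side and cannot increase the discrete gradients), so that $g_\pm:=\|f\|_{L^2(\mu_\pm)}\ge 0$. The two-point space carries the elementary logarithmic Sobolev inequality $\Ent_{\bar\mu}(g^2)\le\Lambda(\bar\mu)\,(g_+-g_-)^2$, while the reverse triangle inequality in $L^2(\mu_\pm)$, taken against the constants $f(n_0)$ and $f(n_0+1)$, gives
\begin{equation*}
(g_+-g_-)^2\le 3\big(\|f-f(n_0)\|_{L^2(\mu_-)}^2+(f(n_0+1)-f(n_0))^2+\|f-f(n_0+1)\|_{L^2(\mu_+)}^2\big).
\end{equation*}
The middle term equals $(s_{n_0+1}-s_{n_0})^2$ times the interface edge term, and the two $L^2$-deviations are controlled by a weighted Poincaré inequality on each side, obtained by linearizing the logarithmic Sobolev inequality just produced by Theorem~\ref{theorem:LSIcoefficient}, together with a one-dimensional weighted Hardy inequality bounding $(\mathbb{E}_{\mu_\pm}f-f(n_0^{\pm}))^2$ by the corresponding Dirichlet form, where $n_0^-=n_0$ and $n_0^+=n_0+1$. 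Collecting the estimates yields the Corollary, with a constant of the same nature as in Theorem~\ref{theorem:LSIcoefficient}.

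The main obstacle is precisely this two-point/interface step: one must verify that the two-point constant $\Lambda(\bar\mu)$, the Hardy constants entering $(g_+-g_-)^2$, and the factor $(s_{n_0+1}-s_{n_0})^{-2}$ multiplying the interface edge are all controlled independently of $N$ (at worst poly-logarithmically). This is where the quantitative content of Lemma~\ref{lemma:measurefacts} enters — the bound $s_n(s_{n+1}-s_n)\ge\sigma$ preventing the increments from being too small, the uniform comparison of the $s_n$, and the monotonicity of the increments $\delta_n$ — which in particular keeps $p_\pm$ bounded away from $0$, so that $\Lambda(\bar\mu)=O(1)$, and makes the tail sums $\sum_n\mu_\pm(n)(s_{n+1}-s_n)^2$ governing the Hardy constants small. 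An alternative, closer to the proof of Theorem~\ref{theorem:LSIcoefficient}, is to transfer $\mu$ directly to the standard Gaussian on $\mathbb{R}$ using coefficients $t_n=\sgn(n-n_0)\sqrt{s_n^2-\log s_n}$ that straddle the origin at the mode — legitimate here because the Bell shape together with $s_{n_0}\ge 1$ forces $s_n\ge 1$ for every $n$, so the ansatz of Theorem~\ref{theorem:LSIcoefficient} is valid throughout — whereupon the Gaussian logarithmic Sobolev inequality produces the interface edge automatically, the only point to watch being the size of the single interval crossing $0$.
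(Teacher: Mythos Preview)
Your alternative at the end --- transferring $\mu$ directly to the Gaussian on all of $\mathbb{R}$ via signed coefficients $t_n=\sgn(n-n_0)\sqrt{s_n^2-\log s_n}$ that straddle the origin --- is exactly what the paper does. The paper places the intervals $[-t_{n-1},-t_n]$ on the negative axis for $n<n_0$, the intervals $[t_n,t_{n+1}]$ on the positive axis for $n>n_0$, and inserts a central interval of length $2\mu_{n_0}$ around $0$ on which $\pi(f)$ interpolates linearly from $f_{n_0-1}$ through $f_{n_0}$ to $f_{n_0+1}$; the Gaussian LSI then produces the interface edges automatically, and the central piece is harmless because the Gaussian measure is comparable to Lebesgue on $[-1,1]$ and $\mu_{n_0}\leq 1$.

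Your primary route --- cut at $n_0$, apply Theorem~\ref{theorem:LSIcoefficient} on each side, and patch via a two-point entropy --- is a genuinely different decomposition, and the obstacle you flag is real. The shift you write is not quite right (normalizing by $p_\pm$ shifts $s_n^2$, not $s_n$, by $\log p_\pm$), and near the mode the conditional $\tilde s_{n_0}$ may be small, so the hypothesis $s_n\geq 1$ of Theorem~\ref{theorem:LSIcoefficient} can fail on the half-chains; you would also need to control the Hardy constants and the factor $(s_{n_0+1}-s_{n_0})^{-2}$, which brings in exactly the quantitative input of Lemma~\ref{lemma:measurefacts} that the direct transference avoids. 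The paper's one-shot Gaussian comparison sidesteps all of this: the single central interval absorbs the interface without any two-point or Hardy machinery, at the price only of checking that $\gamma([-\mu_{n_0},\mu_{n_0}])\sim_c\mu_{n_0}$. Since you already see this alternative and identify the one remaining check, you have the paper's argument in hand; I would lead with it rather than the decomposition.
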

\begin{proof}
	The proof is almost the same as Theorem \ref{theorem:LSIcoefficient}. For each $\mu_n$, we define $s_n$ and $t_n$ as in the proof of Theorem \ref{theorem:LSIcoefficient}. Then for $n < n_0$, reversing the construction in Theorem \ref{theorem:LSIcoefficient}, we can show that there exists a constant $c > 0$ such that:
	\begin{equation}
		c^{-1}\sigma \mu_n \leq \gamma([-t_{n-1}, -t_n]) \leq c\sigma^{-1} \mu_n\pl.
	\end{equation}
	Note for $n < n_0$, $(t_n)$ is monotonely decreasing.	For $n > n_0$, Theorem \ref{theorem:LSIcoefficient} applies directly and we have:
	\begin{equation}
		c^{-1}\sigma \mu_n \leq \gamma([t_n, t_{n+1}]) \leq c\sigma^{-1}\mu_n \pl.
	\end{equation}
	Here $\gamma$ is the standard Gaussian measure on $\mathbb{R}$.
	
	For each $f\in \ell_\infty(\Omega)$, we define the following transfered function $\pi(f)\in L^\infty(\mathbb{R})$. For $n < n_0$, we partition each interval $[-t_{n-1} + \mu_{n_0}, -t_n + \mu_{n_0}] = I_n \cup J_n$ where the length of $J_n$ is $(t_{n-1} - t_{n})/2$. For $t\in I_n$, $\pi(f)(t) = f_{n-1}$ and for $t\in J_n$, $\pi(f)(t) = \frac{2(f_n - f_{n-1})}{t_{n-1} - t_n}(t + t_{n-1} - \mu_{n_0}) + f_{n-1}$. For $n > n_0$, we partition each interval $[t_n - \mu_{n_0}, t_{n+1} - \mu_{n_0}] = I_n \cup J_n$ where the length of $J_n$ is $(t_{n+1} - t_n)/2$. For each $t\in I_n$, $\pi(f)(t) = f_n$ and for $t\in J_n$, $\pi(f)(t) = \frac{2(f_{n+1} - f_n)}{t_{n+1} - t_n}(t - t_n + \mu_{n_0}) + f_n$. This finishes the construction of $\pi(f)$ outside the interval $[-\mu_{n_0}, \mu_{n_0}]$. Inside this interval, we do the same partition as before and define $\pi(f)$ by linear interpolation: $\pi(f)(t)$ linearly interpolates between $f_{n_0 - 1}$ and $f_{n_0}$ for $-\mu_{n_0} \leq t \leq 0$. And $\pi(f)$ linearly interpolates between $f_{n_0}$ and $f_{n_0 +1}$ for $0\leq t \leq \mu_{n_0}$.
	
	Then by the chain rule of relative entropy and non-negativity of relative entropy \cite{JLR}, we have:
	\begin{align}\label{equation:longsplit}
		\begin{split}
			D(f^2 | E_\mu f^2)  &\leq D(\pi(f)^2 | E_\gamma \pi(f)^2) \leq C\int_{-\infty}^\infty |\pi(f)'(t)|^2d\gamma(t) \\
			&= C\sum_{n}\frac{\gamma(J_n)}{(t_{n+1} - t_n)^2}(f_{n+1} - f_n)^2\\
			& \leq \frac{C}{\sigma}\bigg(\sum_{n < n_0}\frac{\mu_n}{(s_{n} -s _{n-1})^2}(f_n - f_{n-1})^2 + \sum_{n > n_0} \frac{\mu_n}{(s_{n+1} -s_n)^2}(f_{n+1} - f_n)^2\\& + \max\{\frac{\gamma([-\mu_{n_0}, 0])}{\mu_{n_0}}, \frac{\gamma([0, \mu_{n_0}])}{\mu_{n_0}}\}\mu_{n_0}(\frac{(f_{n_0} - f_{n_0 -1})^2}{(s_{n_0} - s_{n_0-1})^2}+\frac{(f_{n_0 + 1} - f_{n_0})^2}{(s_{n_0+1}-s_{n_0})^2})\bigg)\pl.
		\end{split}
	\end{align}
	Since the Gaussian measure is equivalent to the uniform measure in $[-1,1]$, we have:
	\begin{align}
		\begin{split}
			D(f^2 | E_\mu f^2) \leq \frac{C}{\sigma}\sum_{n\in \Omega}\mu_n \big(\frac{(f_{n+1} - f_n)^2}{(s_{n+1} - s_n)^2} + \frac{(f_n - f_{n-1})^2}{(s_n - s_{n-1})^2}\big)
		\end{split}
	\end{align}
	where we used the fact that each summand is non-negative and hence each summation in Equation \ref{equation:longsplit} is bounded above by $\sum_{n\in \Omega}\mu_n \big(\frac{(f_{n+1} - f_n)^2}{(s_{n+1} - s_n)^2} + \frac{(f_n - f_{n-1})^2}{(s_n - s_{n-1})^2}\big)$.
\end{proof}
We are now ready to apply the result to our example:
\begin{corollary}\label{corollary:equivalentGauss}
	Using the same notation as Theorem \ref{theorem:LSIcoefficient} and Lemma \ref{lemma:measurefacts}, for each $f\in \Omega_\oplus$ we have:
	\begin{equation}
		D(f^2 | E_\mu f^2) \leq C(\beta)N^4 \sum_{n}\mu_n (f_n - f_{n+2})^2
	\end{equation}
	where $C(\beta) > 0$ is a constant depending on $\beta$. 
\end{corollary}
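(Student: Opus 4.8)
The plan is to derive the inequality from Corollary~\ref{corollary:bellshaped}, applied not to $\mu$ itself but to the equivalent measure $\hat\mu$. First I would record two harmless reductions. Since $D(f^2\,|\,E_\mu f^2)=\Ent_\mu(f^2)$ and the right-hand side is a weighted Dirichlet form, the Holley--Stroock principle shows that replacing $\mu$ by any measure $\nu$ with $\nu\sim_c\mu$ changes both sides by at most a $\beta$-independent constant; so it suffices to prove the estimate for $\hat\mu$. Second, $\Omega_N$ carries spacing $2$, so I relabel $\Omega_N$ as $\{0,1,\dots,M\}$ with ``consecutive'' meaning two apart in the original index; under this identification $s_{n+1}-s_n$ becomes $s_{n+2}-s_n$ and the target Dirichlet form $\sum_n\mu_n(f_n-f_{n+2})^2$ is exactly the shape appearing in Corollary~\ref{corollary:bellshaped}.

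Next I would check the hypotheses of Corollary~\ref{corollary:bellshaped} (equivalently Theorem~\ref{theorem:LSIcoefficient}) for $\hat\mu$, using Lemma~\ref{lemma:measurefacts}. Monotonicity of $(\delta_n)$ makes $\hat\mu$ Bell-shaped with maximum $n_0$; $\hat\mu_{n_0}\le 1$ holds after renormalizing the equivalence constant; $s_n^2\sim_c C(\beta)N$ gives $s_n\ge 1$ for $N$ large as well as $s_{n+2}+s_n\le 2\sqrt{cC(\beta)N}$; and the explicit formula for $\delta_n$ near $n_0$ shows $\delta_n=O(1/N)$ there, which yields the required smallness ($<2$) of $\min_{n>n_0}\log(\hat\mu_n/\hat\mu_{n+2})$ and $\min_{n<n_0}\log(\hat\mu_{n-2}/\hat\mu_n)$. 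The only substantive hypothesis is $(s_{n+2}-s_n)s_n\ge\sigma$: from $s_{n+2}^2-s_n^2=\delta_n$ one gets $(s_{n+2}-s_n)s_n=\delta_n\cdot\frac{s_n}{s_{n+2}+s_n}\gtrapprox_\beta|\delta_n|$, so everything comes down to a uniform lower bound $|\delta_n|\gtrapprox_\beta 1/N$. Lemma~\ref{lemma:measurefacts}(3) (increasing second differences $\geq C(\beta)/N$, together with $\delta_{n_0}\geq 0>\delta_{n_0-2}$) gives this for every bond \emph{except} possibly the two bonds $(n_0-2,n_0)$ and $(n_0,n_0+2)$ straddling the maximum.

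The main obstacle is exactly those two ``top'' bonds: the discrete Bell curve can be essentially flat at the top, so $\delta_{n_0}$ (and $\delta_{n_0-2}$) may be $o(1/N)$, and then $1/(s_{n+2}-s_n)^2$ in the conclusion of Corollary~\ref{corollary:bellshaped} is not controlled. I would remove this by a perturbation: replace $\hat\mu_{n_0}$ by $e^{\theta}\hat\mu_{n_0}$ with $\theta=C(\beta)/N$, which (using the increasing second differences to preserve monotonicity of $\delta$) forces $|\delta_{n_0}|,|\delta_{n_0-2}|\ge C(\beta)/N$, multiplies the measure by $1+O(1/N)$, hence is absorbed into the equivalence constant, and leaves $s_n^2\sim_c C(\beta)N$ intact. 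After this fix one has uniformly $\sigma\ge C(\beta)/N$ and $s_{n+2}-s_n=|\delta_n|/(s_{n+2}+s_n)\ge C(\beta)/N^{3/2}$, so $1/(s_{n+2}-s_n)^2\le C(\beta)N^{3}$. Feeding both into Corollary~\ref{corollary:bellshaped}, collapsing the two symmetric sums by reindexing, and transporting back through $\hat\mu\sim_c\mu$ gives
\[
D(f^2\,|\,E_\mu f^2)\;\le\;\frac{C}{\sigma}\;\max_n\frac{1}{(s_{n+2}-s_n)^2}\;\sum_n\mu_n(f_n-f_{n+2})^2\;\le\;C(\beta)\,N\cdot C(\beta)\,N^{3}\sum_n\mu_n(f_n-f_{n+2})^2,
\]
which is the asserted bound; the factor $N^4$ splits as $N$ (from $1/\sigma$) times $N^{3}$ (from the uniform bound on the Dirichlet weights). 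If one wishes to avoid the perturbation, the two top bonds enter the Gaussian-comparison step only through the interval of length $2\hat\mu_{n_0}$ about the origin, and they can instead be handled by a direct Poincaré-type comparison against the neighbouring good bonds; the perturbation route is cleaner and I would present that one.
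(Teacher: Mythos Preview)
Your proposal is correct and follows essentially the same route as the paper: pass to the equivalent measure $\hat\mu$, verify the hypotheses of Corollary~\ref{corollary:bellshaped} via Lemma~\ref{lemma:measurefacts}, extract $\sigma\gtrapprox_\beta 1/N$ and $(s_{n+2}-s_n)^{-2}\lesssim_\beta N^3$, and combine to produce the $N^4$ constant before transporting back to $\mu$.

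The only substantive difference lies in how the two ``top'' bonds at $n_0$ are handled. The paper argues by a case split: since $\delta_{n_0}-\delta_{n_0-2}\geq C(\beta)/N$ with $\delta_{n_0}\geq 0\geq \delta_{n_0-2}$, at most one of $|\delta_{n_0}|,|\delta_{n_0-2}|$ can fail the $1/N$ lower bound, and that single bad bond is absorbed into the central interval of length $2\hat\mu_{n_0}$ in the construction of Corollary~\ref{corollary:bellshaped} (so one must look inside that proof rather than use its stated conclusion black-box). Your perturbation---multiplying $\hat\mu_{n_0}$ by $e^{\theta}$ with $\theta$ of order $C(\beta)/N$---is a clean alternative that forces $|\delta_{n_0}|,|\delta_{n_0-2}|\geq\theta$ simultaneously while preserving the monotonicity of $(\delta_n)$ (take, say, $\theta=C(\beta)/(2N)$ so that $\theta\leq\delta_{n_0-2}-\delta_{n_0-4}$ and $\theta\leq\delta_{n_0+2}-\delta_{n_0}$). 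This lets you apply Corollary~\ref{corollary:bellshaped} as stated, without revisiting its proof; the price, an $O(1/N)$ change in the measure at a single point, is absorbed by Holley--Stroock. One small wording fix: the Holley--Stroock constant coming from $\mu\sim_c\hat\mu$ does depend on $\beta$ (through $c$), so ``$\beta$-independent constant'' should read ``constant depending only on $\beta$''---which is exactly what the target $C(\beta)$ allows.
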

\begin{proof}
	We work with the equivalent measure $\hat{\mu}$. Let $\hat{\mu}$ achieves maximum at $n_0\in \Omega_N$. Recall $\hat{\mu}_n = e^{-s_n^2}$ and $\delta_n = s^2_{n+2} -s^2_n$. Then $(s_{n+2} - s_n)s_n = \frac{s^2_{n+2} - s^2_n}{s_{n+2} + s_n}s_n$. By Lemma \ref{lemma:measurefacts}, for all $n$ we have $s_n^2 \sim_c C(\beta)N$. Therefore these exists a $\beta-$dependent constant $c(\beta) > 0$ such that:\begin{equation}
		c(\beta)^{-1}\delta_n \leq s_n(s_{n+2} - s_n)\leq c(\beta)\delta_n\pl.
	\end{equation}
	By Lemma \ref{lemma:measurefacts}, around the point $n_0$ we have:
	\begin{align}
		\delta_{n_0} \geq 0 \text{ , } \delta_{n_0 - 2} \leq 0 \text{ , } \delta_{n_0} - \delta_{n_0 - 2} \geq \frac{C(\beta)}{N}\pl.
	\end{align}
	There are two cases. If both $|\delta_{n_0}|$ and $|\delta_{n_0-2}|$ has a lower bound of the order $O(\frac{1}{N})$, then for either side of $n_0$ we can apply the same argument as in Theorem \ref{theorem:LSIcoefficient}. The lower bound is given by $\frac{C(\beta)}{N}$. Therefore we get the LSI constant $C(\beta)N$. If one of the coefficients $|\delta_{n_0}|$ or $|\delta_{n_0 - 2}|$ does not have a lower bound of the order $O(\frac{1}{N})$, then it is necessary that the other coefficient has a lower bound of the order $O(\frac{1}{N})$. Without loss of generality, assume $|\delta_{n_0-2}| \gtrapprox \frac{C(\beta)}{N}$. Then we apply the constructino in Corollary \ref{corollary:bellshaped} by placing a length $2\mu_{n_0}$ interval around the origin. The condition $|\delta_n| \gtrapprox \frac{C(\beta)}{N}$ is satisfied by $n < n_0$ and $n > n_0$. Then the construction in Corollary \ref{corollary:bellshaped} shows that the LSI constant is given by $C(\beta)N$. Therefore in both cases, we have:
	\begin{equation}\label{equation:twosided}
		D(f^2 | E_{\hat{\mu}}f^2) \leq C(\beta)N \sum_{n}\hat{\mu}_n(\frac{(f_{n+2}-f_n)^2}{(s_{n+2} - s_n)^2} + \frac{(f_n - f_{n-2})^2}{(s_{n-2} -s_n)^2})\pl.
	\end{equation}
	Since $s_{n+2} -s_n = \frac{\delta_n}{s_{n+2}+s_n}$, by Lemma \ref{lemma:measurefacts} we have $s_{n+2} -s_n \gtrapprox \frac{1}{N^{3/2}}$. Therefore we have:
	\begin{equation}
		D(f^2 |E_{\hat{\mu}} f^2) \leq C(\beta)N^4\sum_n \hat{\mu}_n(f_{n+2}-f_n)^2
	\end{equation}
	where we have used the fact that the two sums in Equation \ref{equation:twosided} are the same.
	
	Now by the chain rule of relative entropy and the equivalence of measure $\mu\sim_c\hat{\mu}$, we have the desired logarithmic Sobolev inequality.
\end{proof}

We now apply the result of Corollary \ref{corollary:equivalentGauss} to estimate the return time of the channel $\Phi$.
\begin{lemma}\label{lemma:return}
	Using the same notation as in Lemma \ref{lemma:commutativechannelcalc}. If there exists constants $C,\alpha > 0$ such that $\min_{n\in \Omega_N}\Phi_{n+2,n} \geq \frac{C}{N^\alpha}$, then the return time of $\Phi$ is of the order:
	\begin{equation}
		t(\Phi) = O(N^{\alpha + 4}(1+\log\log \min_n \frac{1}{\mu_n}))\pl.
	\end{equation}
\end{lemma}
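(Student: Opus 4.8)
The plan is to extract a logarithmic Sobolev inequality for $\Phi$ from the Gaussian-comparison estimate of Corollary \ref{corollary:equivalentGauss}, and then feed it into the Diaconis--Saloff-Coste return-time machinery \cite{DSC}. First I would record the structural properties of the Markov kernel $\Phi$ on $\ell_\infty(\Omega_N)$. By Lemma \ref{lemma:alternative} one has $\Phi = E_{fix}E_\sigma E_{fix}$, and since $E_{fix}$ and $E_\sigma$ are conditional expectations preserving the reference state — hence $\mu$-self-adjoint idempotents on the relevant commutative algebra — it follows that $\langle f,\Phi f\rangle_\mu = \|E_\sigma E_{fix}f\|_\mu^2\ge 0$. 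Thus $\Phi$ is a $\mu$-reversible, positive semidefinite Markov kernel; it fixes the constants, so writing $E_\mu$ for the stationary projection ($E_\mu f=\mathbb{E}_\mu[f]$, which is what Corollary \ref{corollary:equivalentGauss} also calls $E_\mu$) we get $\Phi E_\mu=E_\mu\Phi=E_\mu$ and $(\Phi-E_\mu)^k=\Phi^k-E_\mu\ge 0$ for all $k\ge 1$. Its Dirichlet form is the usual expression
\begin{equation*}
	\mathcal{E}_\Phi(f,f)\;=\;\langle f,(I-\Phi)f\rangle_\mu\;=\;\tfrac12\sum_{m,n\in\Omega_N}\mu_n\Phi_{m,n}(f_m-f_n)^2\;\ge\;0 .
\end{equation*}

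The second step is a one-line Dirichlet-form comparison. Discarding every transition except the nearest ones (between $n$ and $n+2$) and using $\mu$-reversibility $\mu_n\Phi_{n+2,n}=\mu_{n+2}\Phi_{n,n+2}$ together with the hypothesis $\min_{n\in\Omega_N}\Phi_{n+2,n}\ge C/N^\alpha$,
\begin{equation*}
	\mathcal{E}_\Phi(f,f)\;\ge\;\sum_{n\in\Omega_N}\mu_n\Phi_{n+2,n}(f_n-f_{n+2})^2\;\ge\;\frac{C}{N^\alpha}\sum_{n\in\Omega_N}\mu_n(f_n-f_{n+2})^2 .
\end{equation*}
Inserting this into Corollary \ref{corollary:equivalentGauss} gives
\begin{equation*}
	D(f^2\,|\,E_\mu f^2)\;\le\;C(\beta)\,N^4\sum_{n\in\Omega_N}\mu_n(f_n-f_{n+2})^2\;\le\;C'(\beta)\,N^{\alpha+4}\,\mathcal{E}_\Phi(f,f),
\end{equation*}
i.e. $\Phi$ obeys a logarithmic Sobolev inequality with constant $c_{\mathrm{LS}}(\Phi)\ge c(\beta)\,N^{-(\alpha+4)}$. (As always $c_{\mathrm{LS}}(\Phi)\le\tfrac12\lambda(\Phi)$, so the spectral gap is likewise at least of order $N^{-(\alpha+4)}$; this takes care of the "late time" tail in the return-time estimate.)

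The final step converts the log-Sobolev constant into a return time. Since every eigenvalue $\beta_i$ of $\Phi$ lies in $[0,1]$ we have $\beta_i^k\le e^{-k(1-\beta_i)}$, hence $\Phi^k\preceq e^{-k(I-\Phi)}$ as operators on $L_2(\mu)$; this transfers the hypercontractivity/log-Sobolev estimates for the continuized chain $e^{-t(I-\Phi)}$ to the discrete powers $\Phi^k$ at the cost of only an $O(1)$ factor in the time. The Diaconis--Saloff-Coste estimate \cite{DSC} then shows that, with $\mu_\ast:=\min_{n\in\Omega_N}\mu_n$, for $k$ of order $c_{\mathrm{LS}}(\Phi)^{-1}\bigl(1+\log\log\tfrac1{\mu_\ast}\bigr)$ the density $\|\Phi^k(n,\cdot)/\mu-1\|_\infty$ is as small as we like; via the reversibility identity $\Phi^{2k}(m,n)/\mu(n)=\langle\Phi^k(m,\cdot)/\mu,\ \Phi^k(n,\cdot)/\mu\rangle_\mu$ this is equivalent to the completely positive sandwich $0.9\,E_\mu\le\Phi^{2k}\le1.1\,E_\mu$ required by Proposition \ref{proposition:saloffcoste}. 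Since $c_{\mathrm{LS}}(\Phi)^{-1}=O_\beta(N^{\alpha+4})$, the smallest such $k$ — the return time $t(\Phi)$ — is $O\bigl(N^{\alpha+4}(1+\log\log\tfrac1{\mu_\ast})\bigr)$, which is the claim.

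The genuine work is in this last step rather than anywhere conceptual: one must invoke Diaconis--Saloff-Coste in the right norm (the $\ell_\infty$/density bound, which is exactly the form needed downstream in Proposition \ref{proposition:saloffcoste}), and one must use positive semidefiniteness of $\Phi$ to pass from the continuous-time statement to the discrete chain with bounded overhead. The Dirichlet-form comparison itself is immediate once the transition-rate lower bound $\min_n\Phi_{n+2,n}\ge C/N^\alpha$ is granted; establishing that lower bound from the explicit matrix entries of Lemma \ref{lemma:commutativechannelcalc} and the dimension/factorial asymptotics (in the spirit of Lemma \ref{lemma:measurefacts}) is the separate input on which this lemma rests.
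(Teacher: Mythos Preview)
Your proposal is correct and follows essentially the same route as the paper: both bound the Dirichlet form of $I-\Phi$ from below by the nearest-neighbor sum using the hypothesis $\Phi_{n+2,n}\ge C/N^\alpha$, plug this into Corollary~\ref{corollary:equivalentGauss} to obtain an LSI constant of order $N^{-(\alpha+4)}$, and then apply the Diaconis--Saloff-Coste machinery \cite{DSC} to convert this into a return-time bound, passing from the continuized chain $e^{-t(I-\Phi)}$ to discrete powers $\Phi^k$ via positivity of $\Phi$. The paper cites Corollary~2.2 of \cite{DSC} directly for the continuous-to-discrete step, while you spell out the underlying inequality $\beta_i^k\le e^{-k(1-\beta_i)}$ explicitly using the positive semidefiniteness of $\Phi=E_{fix}E_\sigma E_{fix}$; these are the same argument.
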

\begin{proof}
	The proof follows from the classical result in \cite{DSC}. Consider the Lindbladian $id - \Phi:\Omega_\oplus \rightarrow \Omega_\oplus$. The energy form of $id - \Phi$ is bounded below by the energy form of the nearest-neighbor interactions:
	\begin{equation}
		\mathcal{E}_{id - \Phi}(f) = \sum_{n,m}\mu_n\Phi_{n,m}(f_n - f_m)^2 \geq \sum_n \mu_n \Phi_{n+2,n}(f_{n+2} - f_{n})^2\pl.
	\end{equation}
	By assumption, the nearest-neighbor coefficients $\Phi_{n+2,n}$ have a uniform lower bound. Then we have:
	\begin{equation}
		\mathcal{E}_{id - \Phi}(f) \geq \frac{C}{N^\alpha}\sum_n \mu_n (f_{n+2} - f_n)^2\pl.
	\end{equation}
	Combining with Corollary \ref{corollary:equivalentGauss}, we have:
	\begin{equation}
		D(f^2 | E_\mu f^2) \leq C(\beta) N^4\sum_n \mu_n (f_{n+2} -f_n)^2 \leq C(\beta)N^{4+\alpha}\mathcal{E}_{id - \Phi}(f)\pl.
	\end{equation}
	Therefore $LSI^{-1}(id -\Phi) \leq CN^{\alpha + 4}$. By Theorem 3.7 in \cite{DSC}, the semigroup $\{T_t\}_{t\geq 0}$ generated by $id - \Phi$ satisfies the return time estimate:
	\begin{equation*}
		||T_t - E_\mu: L_1(\Omega_N)\rightarrow L_\infty(\Omega_N)|| \leq e^{2-2c}
	\end{equation*}
	for $t  = \frac{c}{\lambda_2(id - \Phi)} + LSI^{-1}(id - \Phi)\log\log\min_n \frac{1}{\mu_n}$ where $c > 0 $ is a constant and $\lambda_2(id-\Phi)$ is the spectral gap. Since $\Phi$ is $\mu$-preserving and positive, by Corollary 2.2 in \cite{DSC} we have $||\Phi^N - E_\mu||_{1\rightarrow \infty} \leq ||T_{2N} - E_\mu||_{1\rightarrow \infty}$. Therefore the return time of $\Phi$ is of the order $O(N^{\alpha+ 4}(1 + \log\log\min_n \frac{1}{\mu_n}))$.
\end{proof}
Now we are ready to prove Proposition \ref{proposition:saloffcoste}.
\begin{proof}[Proof of Proposition \ref{proposition:saloffcoste}]
	To apply the return time estimate of Lemma \ref{lemma:return}, we need to verify that $\Phi_{n+2, n}$ indeed has a uniform lower bound. By lemma \ref{lemma:commutativechannelcalc}, we have:
	\begin{align}
		\begin{split}
			\Phi_{n+2,n} &= \sum_{0\leq j \leq n}\frac{\dim W_{n}}{\dim H_{w(j)}}\tau_{n,(n-w(j))/2} \geq \frac{\dim W_{n}}{\dim H_{-n}}\tau_{n,n} \geq e^{-\beta /2}\frac{(n+1)\binom{N+1}{(N-n)/2}}{(N+1)\binom{N}{(N-n)/2}}
			\\
			&=\frac{C(\beta)(n+1)}{N+n+2} \geq\frac{C(\beta)}{N}\pl.
		\end{split}
	\end{align}
	In addition, we have: $\min_n \log\log\frac{1}{\mu_n} = \min_n \log(-\frac{\beta n}{2}+ N \log(2\cosh(\frac{\beta}{2})) - \log \dim W_n) \gtrapprox \log(N)$. Then by Lemma \ref{lemma:return}, the return time of $\Phi$ is of the order $O(N^5(1 + \log N))$. By the results of \cite{GJLL}, after $O(N^5(1 + \log N))$-iterations, we have:
	\begin{equation}
		0.9 E_\mu \leq \Phi^{2k} \leq 1.1E_{\mu}\pl.\qedhere
	\end{equation}
\end{proof}
\section{Quantum Expanders}\label{appendix:expander}
In order to get rid of the nontrivial fixed point of $\mathcal{L}^\beta_N$, we used quantum expanders to efficiently reduce the size of this algebra. In this part of the appendix, we provide two proofs of Theorem \ref{theorem:expander}. 

First we can use the construction of quantum expanders in \cite{BST}. This construction is based on classical Ramanujan graphs. 
\begin{proof}[Proof of Theorem \ref{theorem:expander}]
	Given a constant $c\in (0,1)$, it is shown in \cite{BST} that for each finite dimensional Hilbert space $H$, there exists a set of unitaries $\{U_j\}_{1\leq j \leq m}$ where $m = O(\frac{1}{c^4})$ such that for all $x\in \mathbb{B}(H)$:
	\begin{equation}
		||\frac{1}{m}\sum_{1\leq j \leq m}U_j^*xU_j - \tau_H(x)||_2 \leq c||x||_2\pl.
	\end{equation}
	For spectral gap $c = 1- \epsilon$, the number of unitaries required is given by: $m = O(\frac{1}{c^4}) = O(1 + 4\epsilon) = O(1)$.
\end{proof}
Since the construction in \cite{BST} is not explicit, we give another proof of Theorem \ref{theorem:expander} using the random matrix construction due to Hastings \cite{Has}.
\begin{proof}
	Fix a small constant $\epsilon \in (0, \frac{\sqrt{5}}{3})$. Let $\nu_N$ be the normalized Haar measure on $SU(N)$. It was shown in \cite{Has} that there exists an $N_0 \in \mathbb{N}$ such that for all $N \geq N_0$, we have:
	\begin{equation}
		\nu_N^3(\{U_i \in SU(N): ||\frac{1}{6}\sum_{1\leq j\leq 6}U_j^*x U_j - \tau(x)||_2 \leq (1 - \frac{\sqrt{5}}{3} + \epsilon)||x||_2 \text{ , } 1\leq i \leq 3\}) \geq \frac{1}{2}
	\end{equation}	
	where the set of three unitaries is sampled from the normalized Haar measure on $SU(N)$ and the summation is over the set of six unitaries: $\{U_i, U_i^*\}_{1\leq i \leq 3}$. Therefore for $N \geq N_0$, we can sample three unitaries from $SU(N)$ and with high probability we will have a quantum expander with uniform spectral gap $\frac{\sqrt{5}}{3} - \epsilon$. 
	
	For $N < N_0$, we use the result of Bourgain and Gamburd \cite{BG1}\cite{BG2}. For each $N$, we consider a unitary representation of the free group $\mathbb{F}_3$:
	\begin{equation}
		\varphi_N: \mathbb{F}_3\rightarrow SU(N)
	\end{equation}
	such that for each generator $s_j$, the image $\varphi_N(s_j)$ is a unitary matrix with algebraic entries. Then on the fundamental representation $\mathcal{H}_N$ of $SU(N)$, there exists a constant $\lambda^{max}_N \in (0,1)$ such that for all $x\in \mathbb{B}(\mathcal{H}_N)$:
	\begin{equation}
		||\frac{1}{6}\sum_{1\leq j \leq 6}\varphi_N(s_j)^*x\varphi_N(s_j) - \tau(x)||_2 \leq \lambda^{max}_N ||x||_2
	\end{equation}
	where the Hecke operator is formed out of $\{\varphi_N(s_j), \varphi_N(s_j^{-1})\}_{1\leq j \leq 3}$. In principle the spectral gap $1 - \lambda^{max}_N$ depends on the dimension $N$. But since there are only finitely many $N\leq N_0$, there exists a constant $\eta:= \min\{\frac{\sqrt{5}}{3} - \epsilon, 1 - \lambda^{max}_N\}_{N < N_0}$ such that for all $N\in \mathbb{N}$ we can find a set of six unitaries closed under conjugation and they form a quantum expander with spectral gap $\eta$.
\end{proof}
The second proof has the advantage that the number of unitaries is independent of the spectral gap and the dimension. And the construction is more explicit than the one given in \cite{BST}.
\nocite{*}
\bibliographystyle{alpha}
\bibliography{spectral}
\end{document}